\newcommand{\cA}{\mathcal{A}}
\newcommand{\cT}{\mathcal{T}}
\newcommand{\cH}{{\mathcal H}}
\DeclareMathOperator{\polylog}{polylog}
\DeclareMathOperator{\poly}{poly}
\theoremstyle{plain}
\newtheorem{theorem}{Theorem}[section]
\newtheorem{claim}{Claim}
\newtheorem{lemma}[theorem]{Lemma}
\newtheorem{observation}[theorem]{Observation}
\newtheorem{corollary}[theorem]{Corollary}
\newtheorem{question}[theorem]{Question}
\newtheorem{definition}{Definition}
\newtheorem{invariant}{Invariant}
\newcommand{\congest}{\textsf{CONGEST}\xspace}
\newcommand{\congc}{\textsf{CONGESTED CLIQUE}\xspace}
\newcommand{\Ebad}{E_{\mathrm{bad}}}
\newcommand{\Egood}{E_{\mathrm{good}}}
\newcommand{\Pgood}{P^{\mathrm{good}}}
\newcommand{\Pbad}{P^{\mathrm{bad}}}
\newcommand{\ID}{\mathrm{id}}
\newcommand{\load}{\mathrm{load}}
\newcommand{\cj}{\ensuremath{c}\xspace}
\newcommand{\congestion}{\mathrm{cong}}
\newcommand{\dilation}{\mathrm{dilation}}
\newcommand{\alphagood}{\alpha^{\mathrm{good}}}
\newcommand{\alphabad}{\alpha^{\mathrm{bad}}}
\newcommand{\nat}{\ensuremath{\mathbb{N}}}
\newcommand{\Message}[3]{\pi_{#2,#1}(#3_{#1})}
\newcommand{\MessageStatic}[3]{\pi_{#2,#1}(#3)}
\newcommand{\MessageEvent}[5]{Y_{#2,#3}(#1,#4)}
\newcommand{\MessageEventStatic}[5]{Y_{#2,#3}(#1,#4)}
\newcommand\code[2]{$\left[ #1 \right]_{#2}$\xspace}
\DeclareMathOperator{\Hamm}{Hamm}
\newcommand{\abs}[1]{\left|#1\right|}
\newcommand{\sett}[2]{\left\{ #1 \left| \; \vphantom{#1 #2} \right. #2  \right\}} 
\DeclareMathOperator{\View}{View}
\DeclareMathOperator{\supp}{supp}
\DeclareMathOperator{\DeleteLast}{\mathrm{DeleteLast}}
\newcommand{\good}{\mathrm{good}}
\newcommand{\GoodState}{\mathrm{GoodState}}
\newcommand{\prefix}{\mathrm{prefix}}
\newcommand{\parent}{\mathrm{parent}}
\newcommand{\argmax}{\mathrm{argmax}}
\newcommand{\Col}{\mathrm{Col}}
\newcommand{\stt}{\medspace | \medspace}
\newcommand{\ImprovedMobileByznatineSim}{\mathsf{ImprovedMobileByznatineSim}}
\newcommand{\StaticSecureUnicast}{\mathsf{StaticSecureUnicast}}
\newcommand{\MobileSecureUnicast}{\mathsf{MobileSecureUnicast}}
\newcommand{\MobileSecureMulticast}{\mathsf{MobileSecureMulticast}}
\newcommand{\MobileBroadcast}{\mathsf{MobileBroadcast}}
\newcommand{\ModifiedECCSafeBroadcast}{\mathsf{ModifiedECCSafeBroadcast}}
\newcommand{\RSScheduler}{\mathsf{RSScheduler}}
\newcommand{\ECCSafeBroadcast}{\mathsf{ECCSafeBroadcast}}
\newcommand{\MAJ}{\mathsf{MAJ}}
\newcommand{\DM}{\mathsf{DM}}
\newcommand{\kDiam}{\mathsf{D}_{\mathsf{TP}}}
\newcommand{\RSthresh}{c_\mathsf{RS}}
\newcommand{\RStime}{t_\mathsf{RS}}
\newcommand{\RS}{\mathsf{RS}}
\newcommand{\Merge}{\mathrm{Merge}}
\newcommand{\Query}{\mathrm{Query}}
\begin{document}
	
	
	

\date{}
\author{
Orr Fischer \\
	\small Weizmann Institute of Science \\
	\small orr.fischer@weizmann.ac.il \\
	\and
	Merav Parter \\ 
	\small Weizmann Institute of Science \\
	\small merav.parter@weizmann.ac.il				
}
	
\title{Distributed CONGEST Algorithms against Mobile Adversaries}
	
	
	\maketitle
	\thispagestyle{empty}
\begin{abstract}
In their seminal PODC 1991 paper, Ostrovsky and Yung introduced the study of distributed computation in the presence of mobile adversaries which can dynamically appear throughout the network, analogous to a spread of a virus. Over the years, this setting has been studied mostly under the assumption that the communication graph is fully-connected. Resilient \congest\ algorithms for \emph{general} graphs, on the other hand, are currently known only for the classical \emph{static} setting, i.e., where the set of corrupted edges (or nodes) is fixed throughout the entire computation.

We fill this missing gap by providing round-efficient simulations that translate given \congest\ algorithms into equivalent algorithms that are resilient against $f$-mobile edge adversaries, i.e.,  where the adversary controls a (possibly distinct) subset of $f$ edges $F_i$ in each round $i$. Our main results are:
%

\begin{itemize}
\item{\textbf{Perfect-Security with Mobile Eavesdroppers.}} A translation of any $r$-round $f$-\emph{static}-secure algorithm into an equivalent $\Theta(f)$-\emph{mobile}-secure algorithm with $\Theta(r)$ rounds. We also show that the $f$-static-secure algorithms of [Hitron, Parter and Yogev, DISC 2022 \& ITCS 2023] can be modified into $f$-\emph{mobile}-secure algorithms with the \emph{same} number of rounds.

\item{\textbf{Resilience with Mobile Byzantine Adversaries.}} An $f$-mobile-byzantine simulation which is based on a decomposition of the graph into low-diameter edge-disjoint spanning trees. This provides us with near-optimal \congest\ compilers for expander graphs. It also leads to near-optimal compilers in the congested-clique model against $\Theta(n)$-mobile adversaries. For general $(2f+1)$ edge-connected graphs with $f$-mobile adversary, we almost match the bounds known for the $f$-static setting, when provided a trusted pre-processing phase.
%
\end{itemize}

Our results are based on a collection of tools borrowed from the area of interactive coding [Gelles, Found. Trends Theor. Comput. Sci. 2017], linear sketches and low-congestion graph decomposition. The introduced toolkit might have further applications for resilient computation. 
\end{abstract}
	\pagebreak
	\thispagestyle{empty}
	\tableofcontents
	\pagebreak
	\setcounter{page}{1}
	\section{Introduction}
Following our increased dependence on distributed infrastructures, protecting the correctness and the privacy of users' information in the presence of faults has become an imperative mission in distributed network design. 
The inherent vulnerability of these systems seems inevitable as in distributed algorithms the output of one node is used in the computation of another. Modern network instantiations, e.g., the Blockchain, call for new kinds of distributed algorithms. 

The study of \emph{resilient} and \emph{secure} distributed computation has evolved along two lines of research. The line on resilient byzantine computation has been initiated by the work of Pease et al. \cite{pease1980reaching} and Lamport et al. \cite{LamportSP82, pease1980reaching} on the \emph{byzantine agreement problem}. The second line which focuses on \emph{information-theoretic security} dates back to the work of Yao \cite{Yao86}, and has been extensively addressed by the Cryptographic community under the Multi-Party-Communication (MPC) model \cite{BenorGW88}. While earlier work assumed \emph{static} adversaries (in which the set of corruptions is fixed), the arguably more realistic \emph{mobile} (or dynamic) faulty setting has attracted a lot of attention as well, in both of these communities. In this mobile setting, faults might be introduced in a dynamic and adaptive manner, similarly to a spread of a computer virus. A key limitation of many of these existing algorithms, however, is their restriction to fully-connected communication graphs. 

A recent line of works \cite{ParterYPODC19,ParterY19s,ParterYSODA19a,HitronP21,HitronP21a,EavesBroad2022,EavesMST2023} mitigated this gap, by providing resilient and secure algorithms, for any graph topology, in the \congest\ model of distributed computing \cite{Peleg:2000}. These algorithms have been limited, so far, to \emph{static} adversaries that control a fixed number of edges (or nodes) in the graph\footnote{Many of these works can handle adaptive adversaries, a stronger variant of the static setting, which allows the adversary to place the total of $f$ corruptions in an adaptive manner.}. The primary objective of this paper is in providing a new algorithmic approach for handling \emph{mobile adversaries} while keeping the round overhead as close as possible to the static counterparts. We focus on the following fundamental question that has been addressed so-far mainly in the complete graph setting:
\vspace{-3pt}
\begin{question}\label{ques:cost}
\emph{What is the cost (in terms of the number of \congest\ rounds) for providing resilience against \emph{mobile} vs. \emph{static} adversaries in general distributed networks?}
\end{question}
\vspace{-3pt}
In terms of feasibility, earlier work, e.g., by Srinathan et al. \cite{SRR07} has demonstrated that the graph connectivity requirements are the same for both static and mobile adversaries. Our main contribution is in providing new algorithms for mobile adversaries that almost match the state-of-the-art results for their static counterparts. An additional benefit of our approach is that in some cases it leads to improved bounds (and new results) already for the static setting. 

\smallskip 
\noindent \textbf{Line 1: Resilient Computation, in Complete Graphs.}  In the classical (static) byzantine setting, an all-powerful adversary controls a fixed subset of edges (or nodes) by sending malicious messages through these edges. Time-efficient and communication-efficient algorithms have been devised for various of distributed tasks that can tolerate up to a \emph{constant} fraction of corrupted edges and nodes in complete network topologies. Examples include: broadcast and consensus \cite{Dolev82, DolevFFLS82, fischer1983consensus,BrachaT85,toueg1987fast,santoro1989time,BermanGP89,SantoroW90,BermanG93,FeldmanM97,GarayM98,fitzi2000partial,Koo04, PelcP05,katz2006expected,DolevH08, MaurerT12, imbs2015simple,CohenHMOS19,KhanNV19}, gossiping \cite{blough1993optimal,bagchi1994information,censor2017fast}, and agreement \cite{DolevFFLS82,pease1980reaching,bracha87,CoanW92,GarayM98}. 

Mobile byzantine (node) faults have been addressed by Garay \cite{Garay94} in the context of the byzantine agreement problem. Tight bounds for this problem, in terms of the allowed number of faults per round, have been provided by Bonnet et al. \cite{BonnetDNP16}. See Yung \cite{Yung15} for an overview on mobile adversaries.

\smallskip 
\noindent \textbf{Line 2: Secure Computation, in Complete Graphs.} The notion of information-theoretic security is among the most fundamental and long-studied concepts in the area of secure MPC. Starting with the earlier work of Yao \cite{Yao86} for $n=2$, Goldreich, Micali and Wigderson \cite{GoldreichMW87} for general $n$, to the well-known Ben-Or, Goldwasser and Widgderson (BGW) protocol \cite{BenorGW88} that provides information-theoretic security against semi-honest adversaries controlling almost half of the parties. 

Inspired by the mobility of computer viruses and swarms, Ostrovsky and Yung \cite{OstrovskyY91} initiated the study of mobile adversarial settings, where corruptions are introduced, and removed, in a dynamic manner, throughout the course of execution. The extensive line of \emph{mobile} secure algorithms has developed into the well-established topic of \emph{proactive-security} \cite{CanettiH94,SRR07,BaronDLO13,DolevDLOY16,EldefrawyOY18}.

 As in the static setting, most of these algorithms are usually designed for complete networks, and relatively little is known on the complexity of such computations in general graphs.

\smallskip
\noindent\textbf{Line 3: Resilient and Secure Computation for Any Graph, Static Adversaries.} Throughout, an algorithm is denoted as $f$-static-secure (resp., $f$-static-resilient) if it guarantees information-theoretic security (resp., correctness) in the presence of (static) adversaries controlling at most $f$ edges in the graph\footnote{The precise definition of the adversarial settings are elaborated later on.}.  It is well-known that handling $f$-static eavesdroppers requires an edge-connectivity of $f+1$. In contrast, $f$-static byzantine adversaries require an edge-connectivity of $2f+1$ \cite{Dolev82,DolevDWY90,Pelc92}.  In a sequence of works, Parter and Yogev \cite{ParterYPODC19,ParterY19s,ParterYSODA19a} introduced a graph-theoretic paradigm for round-efficient \congest\ algorithms that are $f$-static-secure and $f$-static-resilient, for sufficiently connected graphs. Their approach is based on providing low-congestion \emph{reliable} paths between every pair of neighboring nodes in the graph. This yields a general compilation of any fault-free \congest\ algorithm into an equivalent $f$-secure (or resilient) algorithm. The round overhead depends on the length of the $\Theta(f)$ edge-disjoint paths between neighbors, which might be bounded by $O(\min\{n,(D/f)^{\Theta(f)}\})$, where $D$ is the graph diameter \cite{HitronP21a}.

In a sequence of two very recent works, Hitron, Parter and Yogev \cite{EavesBroad2022,EavesMST2023} bypassed this $D^f$ barrier for the adversarial setting of eavesdroppers. By employing the secure unicast algorithm of Jain \cite{Jain04}, they provide $f$-static-secure broadcast algorithms \cite{EavesBroad2022} with round complexity\footnote{As usual, $\widetilde{O}()$ hides factors poly logarithmic in $n$.} of $\widetilde{O}(D+\sqrt{f n})$, for $D$-diameter $n$-node graphs with edge-connectivity of $\Theta(f)$. \cite{EavesMST2023} provided $f$-static-secure compilers for low-congestion \congest\ algorithms, along with near-optimal $f$-static-secure algorithms for Minimum-Spanning-Tree (MST). 

\smallskip
\noindent \textbf{New: \congest\ Algorithms with Mobile Adversaries.} We provide $f$-mobile-secure (resp., $f$-mobile-resilient) whose privacy and resilience guarantees hold in the presence of mobile adversary that controls distinct subsets of at most $f$ edges, in each round. Srinathan et al. \cite{SRR07} show that the connectivity requirements are the same for both static and mobile adversaries (either eavesdroppers or byzantine). Providing round-efficient algorithms for this dynamic and adaptive adversarial behaviors calls for a new algorithmic paradigm that borrows useful techniques from streaming algorithms, graph decomposition and \emph{interactive coding}. While mobile secure algorithms can be provided quite readily, our major struggles go into mobile resilient algorithms against mobile byzantine adversaries. This is based on a completely different approach than that taken in the prior (static) work of \cite{HitronP21a,HitronP21}. We note that a special attention in the literature has been devoted to the unicast problem (a.k.a., the \emph{Secure Message Transmission} problem) \cite{FranklinGY00,SRR07,PatraCRSR09}. In our general compilation scheme for a given $m$-edge graph, it is required to solve $m$ many unicast instances, i.e., for every $(u,v)\in E$. 

\smallskip
\noindent \textbf{Related Setting: Interactive Coding.} While there is no general machinery for providing mobile resilience in the \congest\ model, the closest setting to ours is that of interactive coding, in which the adversary is allowed to corrupt a bounded \emph{fraction} of the \emph{total} communication bits (i.e., bounded \emph{communication-error-rate}). Rajagopalan and Schulman \cite{RajagopalanS94} provided a network analog of Shannon's coding theorem against stochastic noise. Computationally efficient protocols for this setting were subsequently provided by Gelles, Moitra and Sahai \cite{GellesMS11}. Hoza and Schulman \cite{HozaS16} provided the first network protocols against adversarial noise, that also fit the bandwidth limitation of the \congest\ model\footnote{In fact, their algorithms send one bit of information on each of the graph edges, in a given round.}. 
Censor{-}Hillel, Gelles and Haeupler \cite{Censor-HillelGH18} presented the first \emph{fully} distributed interactive coding scheme in which the topology of the communication network is not assumed to be known in advance, as in prior works in this setting. See \cite{G17} for an excellent review. 

Our $f$-mobile setting is in some sense incomparable to that of interactive coding. Assume an $n$-node graph with $m=\Theta(n^2)$ edges. Then, in the case where the protocol sends $O(n)$ messages per round, our adversary is \emph{stronger} as the interactive coding adversary is limited to an error rate of $O(1/n)$, and therefore cannot corrupt even a single edge in each and every round. On the other hand, if the $r$-round protocol sends $\Omega(m)$ messages per round, the interactive coding setting allows for $\Omega(m/n)$ corruptions, while our $f$-mobile setting allows for a total of $f r$ corruptions. 

	\vspace{-5pt}\subsection{New Results}\vspace{-3pt}

We present a new algorithmic framework for distributed computation in the presence of mobile edge adversaries, in which the set of corrupted edges changes dynamically and adaptively throughout the execution. We investigate two main adversarial settings: (i) mobile eavesdroppers where the key objective is \emph{security} of information and (ii) mobile byzantine adversaries where we strive for maintaining the correctness of the computation.

\smallskip
\noindent \textbf{Security against Mobile Eavesdroppers.} We present a general simulation result that translates any given static-secure algorithm into a mobile-secure algorithm while keeping the same asymptotic bound on the number of controlled edges and round complexity. We show:

\begin{theorem}\label{thm:simulation}
Let $\cA$ be an $r$-round $f$-static-secure algorithm for $r\leq \poly(n)$. Then for any positive integer $t$, there exists an equivalent $r'$-round $f'$-mobile-secure algorithm $\cA'$ such that: 
$r'=2r+t \mbox{~and~} f'=\lfloor (f \cdot (t+1))/(r+t) \rfloor~.$ Moreover, an equivalent protocol exists for any $t \geq 2fr$, $r'=2r+t \mbox{~and~} f'=f$. Consequently, any $r$-round $f$-static-secure algorithm $\cA$ can be turned into $r'$-round $f'$-mobile-secure algorithm with: (i) $r'=O(r)$ and $f'=\Theta(f)$ and (ii) $r'=O(fr)$ and $f'=f$. 
\end{theorem}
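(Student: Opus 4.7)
The plan is to construct $\cA'$ by combining a shared-randomness preprocessing step with a two-phase secret-sharing scheme that reduces every mobile-adversary corruption pattern to one simulatable by a static $f$-adversary for $\cA$.

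\textbf{Construction.} I split the $r' = 2r + t$ rounds into a ``share'' phase of $r+t$ rounds and a ``reveal'' phase of $r$ rounds. Using shared randomness (established via a secure preprocessing protocol, or via the secure channels already guaranteed by the $f$-static setting), the honest parties agree on a uniformly random injection $\pi : [r] \to [r+t]$. For each $\cA$-round $i$ and each directed edge $e$, the sender samples a fresh uniform mask $s_{e,i}$. In round $\pi(i)$ of the share phase the sender transmits $s_{e,i}$ on $e$; in every other share-phase round it transmits fresh uniform noise. In round $r+t+i$ of the reveal phase the sender transmits $m_{e,i} \oplus s_{e,i}$, where $m_{e,i}$ is the corresponding $\cA$-message.

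\textbf{Reduction to static security.} Fix an adaptive $f'$-mobile adversary $M$ with corrupted-edge sets $F_j \subseteq E$, $|F_j| \leq f'$. Every share-phase transmission is i.i.d.\ uniform, so $M$'s share-phase view is information-theoretically independent of $\pi$ and of the honest inputs. In the reveal phase, the transmitted bit $m_{e,i} \oplus s_{e,i}$ on a corrupted edge is uniform unless $M$ also observed $s_{e,i}$, i.e., unless $e \in F_{\pi(i)}$. Define the ``effective static set'' $\tilde F := \bigcup_{i \in [r]} (F_{\pi(i)} \cap F_{r+t+i})$. Then $M$'s view on $\cA'$ can be perfectly simulated from the view of an $f$-static adversary on $\cA$ with corruption set $\tilde F$, provided $|\tilde F| \leq f$.

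\textbf{Combinatorial bound on $|\tilde F|$.} The technical heart is showing $|\tilde F| \leq f$ when $f' \leq f(t+1)/(r+t)$. Since the share-phase view is independent of $\pi$, the sets $\{F_j\}_{j \in [r+t]}$ may be treated as fixed with respect to $\pi$. Writing $q_e := |\{j \in [r+t] : e \in F_j\}|$, we have $\sum_e q_e \leq f'(r+t)$; and an edge $e$ contributes to $\tilde F$ only if some $\cA$-round $i$ satisfies $e \in F_{\pi(i)} \cap F_{r+t+i}$. A double-counting argument over $\{(e,i) : e \in F_{r+t+i}\}$, combined with $|F_{r+t+i}| \leq f'$ and the density $r/(r+t)$ of the schedule $\pi$, yields the claimed bound; the second part of the theorem (the case $t \geq 2fr$, $f' = f$) instead follows from the simpler $(f{+}1)$-out-of-$(f{+}1)$ XOR secret sharing of each $\cA$-message across $f+1$ consecutive rounds of $\cA'$, which gives $r' = (f+1)r \leq 2r + t$ and forces $\tilde F$ to equal the fixed set of edges corrupted throughout any single share window.

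\textbf{Main obstacle.} The hard part will be handling the \emph{adaptivity} of $M$ in the reveal phase: although the share-phase transmissions are uniform, the reveal-phase transmission $m_{e,i} \oplus s_{e,i}$ leaks $m_{e,i}$ whenever $e$ was already in $F_{\pi(i)}$, and $M$ may use this leakage to bias its choice of $F_{r+t+i'}$ for later $i' > i$. I plan to resolve this by induction on the reveal rounds $i = 1, \ldots, r$, maintaining the invariant that after reveal round $r+t+i$ the adversary's accumulated view is exactly the view of a static $\cA$-adversary on the $i$-round prefix of $\cA$ with corruption set $\tilde F_i := \bigcup_{j \le i}(F_{\pi(j)} \cap F_{r+t+j})$. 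Since the inductive step only adds one new intersection of sets of size $\leq f'$, and the random schedule $\pi$ remains uniform on the unused slots conditioned on the view, the size of $\tilde F_i$ grows by at most a controlled amount per round, yielding the final bound $|\tilde F| \leq f$ for the stated $f'$.
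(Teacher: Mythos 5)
Your high-level architecture matches the paper's: a key-establishment phase of $r+t$ rounds followed by an $r$-round one-time-pad simulation of $\cA$, with security reduced to the $f$-static security of $\cA$ on a small set of ``bad'' edges. However, the ingredient you substitute for the paper's key step does not work, and the gap is fatal rather than cosmetic. The paper derives the $r$ keys on each edge by applying a \emph{fixed, public} $(t,k)$-resilient function (Theorem~\ref{thm:extract}, a Vandermonde matrix over $\mathbb{F}_{2^k}$, due to Chor et al.) to the $r+t$ exchanged random elements. This has two properties your random-schedule construction lacks: (i) no shared secret is needed --- both endpoints compute the same public linear combination, whereas your injection $\pi$ must be secret (if $\pi$ is public the adversary simply eavesdrops edge $e$ in rounds $\pi(i)$ and $r+t+i$ and learns $f'$ messages of \emph{every} round of $\cA$), and establishing a shared secret $\pi$ against a mobile eavesdropper is exactly the problem being solved; there are no ``secure channels already guaranteed by the $f$-static setting.'' Moreover the receiver must know $\pi$ even to decode. (ii) The resilient function gives a \emph{deterministic} guarantee: an edge eavesdropped in at most $t$ of the $r+t$ rounds has all $r$ keys perfectly uniform, so the bad set is exactly the edges eavesdropped more than $t$ times, and a counting argument bounds it by $\lfloor f'(r+t)/(t+1)\rfloor\le f$ with probability $1$. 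Your bound on $\tilde F=\bigcup_i(F_{\pi(i)}\cap F_{r+t+i})$ is only probabilistic over $\pi$, which is incompatible with \emph{perfect} security; worse, it actually fails with constant probability. Concretely, let the adversary watch pairwise disjoint sets $S_1,\dots,S_{r+t}$ of size $f'$ in the share phase and watch $S_i$ again in reveal round $r+t+i$; then $\tilde F=\bigcup_{i:\pi(i)=i}S_i$, and a uniform injection $\pi:[r]\to[r+t]$ has $3$ or more fixed points with constant probability, giving $|\tilde F|\ge 3f'>f$ for, e.g., $t=r$.

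The second part of your argument is also broken: XOR-sharing each round-$i$ message over $f+1$ consecutive rounds only forces the adversary to persist on an edge \emph{within that window} to learn that message, but the adversary may persist on a \emph{different} set of up to $f$ edges in each of the $r$ disjoint windows, so the union of leaked edges can reach $fr$, not $f$; the claim that $\tilde F$ equals a single fixed set is unjustified. The paper's handling of the $t\ge 2fr$ case is instead a one-line refinement of the same averaging argument: $\lfloor f(r+t)/(t+1)\rfloor=f$ once $t\ge 2fr$. To repair your proof you essentially need to replace the secret schedule by the bit-extraction function; I recommend reworking the argument around Theorem~\ref{thm:extract}.
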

To avoid the extra $f$ factor in the round overhead (when insisting on $f'=f$), we also provide a white-box modification of the existing $f$-static-secure algorithms of \cite{EavesBroad2022} and \cite{EavesMST2023}. A notable tool introduced in \cite{EavesMST2023} is a general \emph{congestion-sensitive} compiler whose performances are optimized for (fault-free) distributed algorithms with low-congestion. A distributed algorithm is said to have \emph{$\congestion$-congestion} for an integer $\congestion$ if the maximum number of messages that the algorithm sends over any given edge in the graph throughout its entire execution is bounded by $\congestion$. 
We show:

\begin{theorem}[Congestion-Sensitive Compiler with Perfect Mobile Security]\label{thm:mobile-compilers}
For every $(2f+3)(1+o(1))$ edge-connected $D$-diameter $n$-vertex graph $G$, any $r$-round $\congestion$-congestion algorithm $\cA$ for $G$ in the fault-free setting can be compiled into an equivalent $f$-mobile-secure algorithm $\cA'$ that runs in $\widetilde{O}(r +D + f \cdot \sqrt{\congestion \cdot n} + f \cdot \congestion)$ \congest\-rounds. The correctness of the simulation holds w.h.p.\footnote{As usual, w.h.p. refers to a success guarantee of $1-1/n^c$ for any desired constant $c\geq 1$.}
\end{theorem}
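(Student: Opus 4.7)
My plan is to perform a white-box modification of the congestion-sensitive $f$-static-secure compiler underlying \cite{EavesMST2023}, rather than invoke the black-box simulation of Theorem~\ref{thm:simulation}, which would cost an extra factor of $f$. Recall that this static compiler, for each message $m$ of the simulated fault-free algorithm sent from $u$ to $v$, constructs $f+1$ edge-disjoint low-congestion paths $P_1,\dots,P_{f+1}$ between $u$ and $v$, splits $m$ into $f+1$ perfect shares via an $f$-out-of-$(f+1)$ secret-sharing scheme, and routes the shares jointly using a global random-delay packet-routing schedule of total length $\widetilde{O}(f\sqrt{\congestion \cdot n}+f\congestion)$; the remaining $\widetilde{O}(r+D)$ factor covers the fault-free simulation and one-time preprocessing (path discovery, schedule broadcast, etc.). Its static security proof reduces to the information-theoretic per-message guarantee that observing at most $f$ of $m$'s $f+1$ shares ``in the clear'' leaves $m$ uniformly distributed from the adversary's view.

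My first step is to re-establish this per-message guarantee in the mobile corruption model, in which the adversary may pick a distinct set $F_i$ of at most $f$ corrupted edges in each round $i$. Because sharings of distinct messages use fresh and independent sender randomness, multi-message security reduces to the per-message case. The main obstacle is that a share $m_j$ traverses the $|P_j|$ edges of $P_j$ across $|P_j|$ routing rounds, so a mobile adversary could in principle observe one share per path by corrupting a different edge of a different $P_j$ in each round, and thereby assemble all $f+1$ shares of $m$ over the course of the routing phase. To block this, I will augment the routing layer with a per-hop re-randomization: each transmission across an edge $e=(x,y)$ is additively masked by a fresh one-time pad pre-shared between $x$ and $y$. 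Under this modification, the adversary's view of any single corrupted edge is a uniformly random string, independent of the underlying share. Consequently, share $m_j$ is information-theoretically determined only on the final edge of $P_j$, during the single round when it is delivered to $v$, and learning $f+1$ distinct shares of $m$ requires $f+1$ simultaneous edge corruptions in the same round -- exceeding the mobile budget of $f$.

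The one-time pads between adjacent edge-endpoints constitute the main technical subtask: I plan to establish them once, during preprocessing, by running a static-secure key-exchange subroutine between $x$ and $y$ over $f+1$ edge-disjoint detour paths in $G\setminus\{e\}$ (these exist by Menger's theorem together with the $(2f+3)(1+o(1))$-edge-connectivity hypothesis, which leaves the residual graph $(2f+2)$-edge-connected), and then wrapping it in the static-to-mobile simulation of Theorem~\ref{thm:simulation}. This adds only a $\polylog(n)$ multiplicative overhead that is absorbed into the $\widetilde{O}(r+D)$ term. The routing layer itself is unchanged up to a constant-factor increase in per-edge bandwidth, so it still runs within $\widetilde{O}(f\sqrt{\congestion \cdot n}+f\congestion)$ \congest\-rounds. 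Combining these pieces yields an $f$-mobile-secure algorithm $\cA'$ with the claimed round complexity, and high-probability correctness follows from a union bound over the routing scheduler's randomness together with the correctness of the preprocessing step.
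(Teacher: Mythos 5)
Your proposal diverges fundamentally from the paper's construction, and the departure is not a matter of style—it breaks the round bound. You describe the static compiler of \cite{EavesMST2023} as secret-sharing each message $m_i(u,v)$ into $f+1$ shares and routing them over $f+1$ edge-disjoint $u$--$v$ paths per edge. That is not how this compiler works, and for good reason: in a general $(2f+3)(1+o(1))$ edge-connected graph the shortest $f+1$ edge-disjoint $u$--$v$ paths can have length $D^{\Theta(f)}$ (this is precisely the bottleneck that confines the FT-cycle-cover compiler of \Cref{cor:CC-mobile-simulation-final} to that bound). Routing a share along such a path costs at least its length in rounds, so a per-message multi-path scheme cannot achieve $\widetilde{O}(r+D+f\sqrt{\congestion n}+f\congestion)$; the $\sqrt{\congestion n}$ term you assert for the routing schedule has no source in your construction. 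Moreover, your scheme never exploits the hypothesis that $\cA$ has congestion $\congestion$—the share count and routing cost are the same whether $\congestion$ is $1$ or $r$—so the resulting compiler is not congestion-sensitive in the relevant sense.

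The paper's actual proof (\Cref{sec:fault_free_to_f_secure}) never routes shares along per-edge disjoint paths. It sends each $m_i(u,v)$ directly over the edge $(u,v)$, encrypted as $h^*(m_i(u,v)\circ 0^{B'-B})+K_i(u,v)$, and pads empty slots with fresh randomness. The keys $K_i(u,v)$ come from a $\Theta(r)$-round local exchange plus the bit-extraction theorem of \cite{ChorGHFRS85} (\Cref{thm:extract}), guaranteeing that the adversary learns nothing about the keys of all but $4f$ edges $F^*$. The seed for the $(4f\congestion)$-wise independent hash $h^*$ is the only thing that is broadcast securely—via the fractional-tree-packing/landmark broadcast of \Cref{thm:bc-sqrt-mobile}—and that single broadcast is where the $\widetilde{O}(D+f\sqrt{\congestion n}+f\congestion)$ term comes from. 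Security then reduces to: the $\le 4f\congestion$ non-empty messages crossing $F^*$ are distinct, so $(4f\congestion)$-wise independence of $h^*$ makes them look jointly uniform even given the adversary's keys on $F^*$. This is the step where congestion-sensitivity enters, and it is entirely absent from your design.

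Two further soundness issues in your plan. First, your ``per-hop one-time pad'' is said to be established once per edge in preprocessing, but then it is reused across every share and every round traversing that edge, which destroys OTP security; you would need fresh pads per transmission, and that unaccounted cost is at least as large as the routing cost. Second, you establish pads by static key exchange over $f+1$ detour paths ``wrapped in'' \Cref{thm:simulation} and claim this is polylog overhead. But \Cref{thm:simulation} either dilates rounds by $\Theta(f)$ when keeping $f'=f$, or keeps rounds linear but only gives $f'=\Theta(f)$ security (requiring the detour-path protocol to be secure against a constant-factor-larger static adversary, hence more connectivity)—neither option is a polylog multiplicative overhead on top of a detour-path subroutine whose own cost you never bound.
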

This matches the $f$-static, statistically-secure compilers of \cite{EavesMST2023}. Our compilers have the benefit of achieving perfect security. This is obtained by replacing the implicit balls-into-bins ingredient of \cite{EavesMST2023} with bounded-independence hash functions.  To prove Theorem \ref{thm:mobile-compilers}, we also provide matching bounds for the mobile variants of the secure broadcast and unicast problems, studied by \cite{EavesBroad2022}. 

\smallskip
\noindent \textbf{Resilience against Mobile Byzantine (Edge) Adversaries.} An $f$-mobile byzantine adversary can maliciously corrupt the messages exchanged over at most $f$ edges $F_i$ in each round $i$. We first provide a brute-force extension of the cycle-cover based solution of \cite{HitronP21a} to the mobile setting. 
%

\begin{theorem}[$f$-Mobile-Resilient Compilers for General Graphs]\label{cor:CC-mobile-simulation-final}
Given any $n$-node $D$-diameter graph $G$ with edge-connectivity $2f+1$, any $r$-round algorithm $\cA$ for $G$ can be compiled into equivalent $r'$-round algorithm $\cA'$ that is $f$-mobile resilient and $r'=D^{\Theta(f)}\cdot \log n$. This holds provided that either (i) all nodes know the graph topology (a.k.a., the supported-\congest\ model), or (ii) there is a fault-free preprocessing step of $D^{\Theta(f)}$ rounds. 
\end{theorem}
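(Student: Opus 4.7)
The plan is to lift the static $f$-resilient cycle-cover compilation of \cite{HitronP21a} to the mobile adversarial model, paying only a poly-logarithmic overhead on top of the static $D^{\Theta(f)}$ bound. First, I would invoke the cycle-cover construction of \cite{HitronP21a} to obtain, for every edge $(u,v) \in E(G)$, a collection $\{P^1_{uv}, \ldots, P^{2f+1}_{uv}\}$ of $2f+1$ edge-disjoint $u$-$v$ paths, each of length at most $L = O((D/f)^{\Theta(f)}) = D^{\Theta(f)}$. In Case~(i) each node locally recomputes these paths from its knowledge of the topology with no communication, while in Case~(ii) the families are built once in the fault-free preprocessing phase, which takes $D^{\Theta(f)}$ rounds via the standard cycle-cover pipeline.

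Second, I would simulate each round of $\cA$ by a transmission sub-protocol of length $T = D^{\Theta(f)} \log n$ rounds, which is responsible for reliably delivering the intended \congest-message $m_{uv}^{(i)}$ from $u$ to $v$ for every edge $(u,v) \in E$. In the static model $u$ simply pipelines $m_{uv}^{(i)}$ once along each of the $2f+1$ paths and $v$ returns the majority of the $2f+1$ received copies; this works because the fixed $f$ adversarial edges touch at most $f$ of the edge-disjoint paths. That argument collapses in the mobile model: the adversary has $fL \gg 2f+1$ corruption slots within the $L$ rounds needed to traverse the paths and, by hitting one edge per path at the appropriate round, can corrupt \emph{every} one of the $2f+1$ paths. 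My sub-protocol sidesteps this obstruction by interleaving forward transmissions from $u$ with backward echoes from $v$ along the same $2f+1$ paths, repeated $\Theta(\log n)$ times with fresh pairwise-independent fingerprint seeds supplied by $u$; after the exchange, $u$ and $v$ jointly identify a set of at least $f+1$ paths whose forward message and backward echo match on a majority of repetitions, and $v$ takes the majority vote over this identified set.

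The main obstacle will be the correctness of the sub-protocol, which must rule out the adversary simultaneously fooling $u$ and $v$ into accepting a common false majority on some $f+1$ chosen paths. The key invariant to establish is that within any single sub-round the mobile adversary tampers with at most $f$ of the $2f+1$ paths, so to sustain a bad majority across the $\Theta(\log n)$ repetitions the adversary must consistently hit, in both the forward and backward direction, an identical subset of $f+1$ paths while also forging matching fingerprints for each of them; a union bound over the pairwise-independent fingerprints together with the per-round budget constraint forces this event to have probability $\leq n^{-c}$ for any desired constant $c$. Once the sub-protocol is proven correct, the overall round complexity follows by summing its cost over the $r$ simulated rounds and by a union bound over the $O(m)$ transmitting edges, giving the stated $r' = D^{\Theta(f)} \log n$ bound (with the $r$ factor absorbed into the exponent), plus the $D^{\Theta(f)}$ preprocessing term in Case~(ii).
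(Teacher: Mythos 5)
Your high-level framing (start from the FT cycle-cover / $2f+1$ edge-disjoint paths, simulate $\cA$ round by round, and beat the mobile adversary by sending enough copies) is the same as the paper's, and you correctly identify the core obstruction: since a path traversal takes $L = D^{\Theta(f)}$ rounds, the adversary has $fL$ corruption slots per traversal and can hit \emph{every} one of the $2f+1$ paths, so the static ``at most $f$ paths are bad'' argument collapses. But the transmission sub-protocol you substitute does not close this gap, and it differs materially from what the paper actually does.

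The concrete problem is the invariant you rely on: ``within any single sub-round the mobile adversary tampers with at most $f$ of the $2f+1$ paths.'' This is only true if a ``sub-round'' is a single \congest\ round, in which case it is irrelevant, because a single forward-plus-echo repetition spans $\Theta(L)$ rounds and the adversary then has $\Theta(fL)\gg 2f+1$ corruptions inside it. With only $\Theta(\log n)$ repetitions the total number of delivered copies per edge is $\Theta(f\log n)$, versus an adversary budget of $\Theta(fL\log n)$, so a counting/majority argument cannot work and you are forced to lean entirely on the fingerprints. But the byzantine adversary in this model sees all traffic (it is only oblivious to private coins \emph{before} they are transmitted), so once the seed $R_k$ leaves $u$ on the first hop the adversary learns it and can forge consistent fingerprints on the paths it corrupts; the pairwise-independence union bound then has nothing to bite on. Your proposal also never addresses congestion: running the sub-protocol simultaneously for all $m$ edges overloads shared edges by a $\congestion = D^{\Theta(f)}$ factor, which the paper handles via the path-conflict coloring of \Cref{lem:coloring-ft-cycle-cover} (iterating over $O(f\cdot\dilation\cdot\congestion)$ color classes in which all active paths are mutually edge-disjoint).

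The paper's resolution is simpler and deterministic: rather than adding a randomized identification layer, it pipelines each message $m_i(u,v)$ \emph{continuously} for $t = 2f\cdot\dilation + \dilation + 1$ rounds over the $2f+1$ edge-disjoint paths in the current color class, and the receiver takes the raw majority over all $(t-\dilation)(2f+1)$ received copies. Since each of the at most $ft$ edge-corruptions in the iteration flips at most one received copy, and $(t-\dilation)(2f+1) \geq 2ft+1$ by the choice of $t$, the majority is preserved against the all-powerful adversary with no fingerprints, no bidirectional echoes, and no probabilistic argument at all. That is the step missing from your proposal: you need the number of repetitions to scale as $\Theta(f\cdot\dilation)$, not $\Theta(\log n)$, precisely so that the received-copy count outpaces the mobile corruption budget.
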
 
This extends the $1$-mobile-resilient compilation of \cite{ParterY19s}. It also matches the state-of-the-art of \cite{HitronP21a} for the $f$-static setting. While \cite{ParterY19s} also requires a fault-free preprocessing, \cite{HitronP21a} does not.
%
%
%

To handle $f=\Omega(\log n)$ faults, our key technical contribution is in providing a new compilation scheme which is based on \emph{low-diameter tree packing}. For a graph with edge connectivity $k$, the \emph{tree-packing-diameter} of the graph is measured by the minimum diameter $\kDiam$ such that one can decompose the graph into $\Omega(k/\log n)$ near\footnote{In this context, near edge-disjoint means that each edge $e \in E$ appears in at most $\tilde{O}(1)$ many trees in the packing} edge-disjoint spanning trees (a.k.a tree-packing) of diameter at most $\kDiam$. We show that given a $\kDiam$-diameter tree-packing for $k=\Theta(f\log n)$, any (fault-free) algorithm can become $f$-mobile-resilient with a round overhead of $\widetilde{O}(\kDiam)$.
\begin{theorem}
\label{thm:tree-packing-comp}
Given a $\kDiam$-diameter tree-packing with $k=\Theta(f\log n)$ trees, any $r$-round algorithm $\cA$ can be compiled into an $r'$-round $f$-mobile-resilient algorithm $\cA'$ where $r'=\widetilde{O}(\kDiam)$.
\end{theorem}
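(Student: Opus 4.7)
The plan is to simulate each round of $\cA$ in $\widetilde{O}(\kDiam)$ rounds by using the tree packing as a redundant communication backbone, yielding the claimed overall complexity. Let $T_1,\ldots,T_k$ denote the given packing, with $k=\Theta(f\log n)$, each $T_j$ of diameter at most $\kDiam$ and each edge of $G$ appearing in $\widetilde{O}(1)$ trees. To simulate round $i$ of $\cA$, every node $u$ forms, for each $G$-neighbor $v$, its intended message $m_{u,v}^{(i)}$, annotates it with the triple $(u,v,i)$, and for every tree $T_j$ injects a copy to be routed along the unique $T_j$-path from $u$ to $v$. By the near edge-disjointness of the packing, each tree edge carries $\widetilde{O}(1)$ distinct packets per simulated round of $\cA$, so pipelined tree-routing (e.g., via standard scheduling on trees of diameter $\kDiam$) delivers all copies in $\widetilde{O}(\kDiam)$ rounds.

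At the end of this routing subroutine, every receiver $v$ holds $k$ candidate copies of each expected incoming message. I would decode via plurality vote, or for robustness via Reed--Solomon decoding of a codeword spread across the $k$ trees. Correctness hinges on three quantitative facts: (i) the adversary corrupts at most $f$ edges of $G$ per round; (ii) by the near edge-disjointness of the packing, each corrupted edge affects $\widetilde{O}(1)$ tree-transmissions in that round; and (iii) the subroutine lasts $\widetilde{O}(\kDiam)$ rounds, so the total number of corrupted tree-transmissions is $\widetilde{O}(f\cdot \kDiam)$. Because an adaptive adversary may concentrate its budget on a single target, I plan to stagger the $k$ injections of each message using a private random schedule at the sender, so that at any round only a random $\Theta(k/\kDiam)$-subset of the trees actively carries the copies of any given message. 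A Chernoff bound then gives that, with high probability over the schedule, strictly fewer than $k/2$ copies of any fixed message are corrupted, so plurality recovers the true value; a union bound over the $\poly(n)$ messages simulated throughout the compilation finishes the correctness argument.

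The main obstacle is bridging the gap between the per-round adversarial budget $f$ and the cumulative corruption potential $f\cdot \kDiam$, which can exceed $k$ once $\kDiam \gg \log n$; this is precisely what the randomized staggering resolves, since without it a worst-case deterministic adversary could focus $f$ corruptions per round onto a single message's in-flight copies and overwhelm a majority. A secondary difficulty is handling a rushing adversary that observes the random schedule as it unfolds: I would address this either by hiding each round's short random seed via a safe-broadcast primitive of the preceding sections, or by upgrading plurality to an ECC-based decoder tolerating a constant corruption fraction and boosting through $O(\log n)$ independent repetitions. Finally, for compilers in the preprocessing model, the tree packing itself is constructed once in a fault-free phase and then assumed known; in the supported-\congest\ variant the packing is computed locally from the known topology, so neither construction step enters the $\widetilde{O}(\kDiam)$ per-round overhead.
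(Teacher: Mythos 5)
There is a genuine gap at the very first step: the claim that ``each tree edge carries $\widetilde{O}(1)$ distinct packets per simulated round'' conflates the load of the packing with the congestion induced by tree routing. The near edge-disjointness guarantee (each $G$-edge lies in $\widetilde{O}(1)$ trees) bounds how many trees a \emph{$G$-edge} belongs to; it says nothing about how many $(u,v)$-pairs of $G$-neighbors have their $T_j$-path passing through a fixed \emph{tree edge} $e'\in T_j$. Removing $e'$ from $T_j$ induces a cut $(A,B)$, and the number of packets traversing $e'$ equals $|E_G(A,B)|$, which can be $\Omega(n)$ (or even $\Omega(n^2)$ in dense graphs). So pipelined tree-routing of all $|E(G)|$ per-round messages does not terminate in $\widetilde{O}(\kDiam)$ rounds; it can take $\Omega(n)$ rounds per simulated round of $\cA$. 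This is precisely the congestion barrier the paper calls out when explaining why a naive Patra-et-al.-style ``send along many disjoint paths'' approach cannot yield the claimed overhead, and it is the reason the paper does \emph{not} route messages over trees at all. Instead, round $i$'s messages are exchanged directly over $G$-edges in a single round (creating only $O(f)$ mismatches), and the trees are used only to aggregate $\widetilde{O}(1)$-bit $\ell_0$-sampling sketches of the local sent/received multisets, giving $\widetilde{O}(1)$ congestion on every tree edge by construction and hence $\widetilde{O}(\kDiam)$ rounds per phase. The recovered mismatches are then broadcast down the trees via a Reed--Solomon-coded safe broadcast, and each per-tree procedure is wrapped in an RS interactive-coding compiler so that a majority of trees succeed against the mobile adversary; this repeats $O(\log f)$ times.

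A secondary issue is the randomized staggering. In the paper's byzantine model the adversary sees every message sent over every edge in every round (it is oblivious only to the \emph{a priori} private coins), so once staggered copies start moving, the adversary observes which trees are currently carrying a given message and can concentrate its $f$ per-round corruptions on exactly those edges. Hiding a ``schedule seed'' via safe broadcast does not help because the schedule is revealed by the traffic itself, and there is no secrecy to exploit against a byzantine adversary. The paper sidesteps the entire scheduling question because the sketch-aggregation protocols send $\widetilde{O}(1)$-bit messages over \emph{all} tree edges in \emph{every} round (as the RS compiler requires), so there is nothing to stagger; resilience against the moving corruptions comes from the RS compiler's tolerance to a small fraction of corrupted bits on each tree, combined with the averaging argument in the scheduling lemma (Lemma~\ref{lem:scheduler_security}) that bounds how many of the $k$ tree-protocols can exceed the RS threshold.
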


\noindent \textbf{Useful Applications.}  Theorem \ref{thm:tree-packing-comp} leads to several applications of interest. Most notably a general $\Theta(n)$-mobile compiler in the classical \congc\ model \cite{LotkerPPP05} where the underlying communication graph is a clique.

\begin{theorem}[Mobile-Resilient Compilers in the Congested Clique]\label{thm:CC-compiler}
Any $r$-algorithm in the \congc\ model can be compiled against $\Theta(n)$-mobile adversaries using $\widetilde{O}(r)$ rounds. 
\end{theorem}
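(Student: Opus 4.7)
The plan is to reduce the statement to Theorem \ref{thm:tree-packing-comp} applied on the clique $K_n$ with $f=\Theta(n)$. It suffices to exhibit a $\kDiam$-diameter tree packing of $K_n$ with $k=\Theta(f\log n)=\Theta(n\log n)$ near-edge-disjoint spanning trees such that $\kDiam=\widetilde{O}(1)$; then Theorem \ref{thm:tree-packing-comp} would immediately yield the desired compiler, with a round blow-up of $r\cdot \widetilde{O}(\kDiam)=\widetilde{O}(r)$.

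For the construction I would take the packing $\cT$ of \emph{spanning stars}. For each $v\in V$, let $S_v$ be the spanning star rooted at $v$, whose edge set is $\{\{v,u\}:u\in V\setminus\{v\}\}$; its diameter is exactly $2$. The packing $\cT$ consists of $cn\log n$ such stars for a sufficiently large constant $c$, with centers chosen independently and uniformly at random from $V$. A routine Chernoff argument shows that w.h.p.\ each vertex is the center of $\Theta(c\log n)$ stars in $\cT$, and consequently each edge $(u,v)\in E(K_n)$ appears in at most $O(c\log n)=\widetilde{O}(1)$ stars of the packing---namely those centered at $u$ or at $v$. Hence $\cT$ is a valid tree packing meeting the hypothesis of Theorem \ref{thm:tree-packing-comp} with $k=\Theta(n\log n)$ and $\kDiam=2$. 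Note also that since every node in the \congc\ model already knows the entire communication topology $K_n$ and the IDs of all other nodes, the packing can be computed by each node locally using shared randomness (or, equivalently, a predetermined deterministic schedule of centers), with no communication whatsoever; hence no fault-free preprocessing phase is required.

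Plugging this packing into Theorem \ref{thm:tree-packing-comp} transforms any $r$-round congested-clique algorithm $\cA$ into an equivalent $\Theta(n)$-mobile-resilient algorithm $\cA'$ running in $r'=\widetilde{O}(r\cdot \kDiam)=\widetilde{O}(r)$ rounds, as claimed. Conceptually the argument is almost immediate from the theorem, so the only real content---and the one place where I would slow down and check constants---is that the extreme connectivity of $K_n$ (edge-connectivity $n-1$) admits a diameter-$2$ packing of $\Theta(n\log n)$ near-edge-disjoint spanning trees. The spanning-star construction achieves this trivially, so the clique falls into the most favorable regime of Theorem \ref{thm:tree-packing-comp} and there is no further obstacle to overcome.
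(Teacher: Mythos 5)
Your proof is correct and uses the same key idea as the paper: exploit the fact that the clique admits a trivially computable packing of spanning stars of diameter~$2$, which makes Theorem~\ref{thm:tree-packing-comp} (equivalently, Theorem~\ref{thm:res-mobile-improved}) applicable with $\kDiam,\eta=\widetilde{O}(1)$ and no preprocessing. The one place you deviate is cosmetic but worth noting: the paper simply takes the $n$ deterministic stars $\{S_v\}_{v\in V}$, one centered at each vertex, which gives $k=n$, $\eta=2$, $\kDiam=2$, and $f=\Theta(k/\eta)=\Theta(n)$ directly. Your randomized selection of $cn\log n$ centers (with repetitions) instead produces $k=\Theta(n\log n)$ and $\eta=O(\log n)$, which reaches the same $f=\Theta(n)$ by Theorem~\ref{thm:res-mobile-improved} but is strictly less sharp and invokes a Chernoff bound you never needed---the deterministic packing you mention parenthetically at the end is the right choice, and the whole Chernoff discussion can be dropped.
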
 
This theorem requires no preprocessing step, as the clique configuration trivially defines a tree packing of diameter $2$. Our second application is for expander graphs, where we compute in the $f$-mobile setting, a (weaker variant) of tree packing, which provides the following:

\begin{theorem}[Mobile-Resilient Compilers for Expander Graphs]\label{thm:expander-compiler}
Assume $G$ is a $\phi$-expander with minimum degree $k=\widetilde{\Omega}(1/\phi^2)$. Then any $r$-round algorithm $\cA$ can be compiled into an $f$-mobile-resilient algorithm $\cA'$ for $f=\widetilde{O}(k\phi)$ that runs in $\widetilde{O}(r/\phi)$ \congest\-rounds. 
\end{theorem}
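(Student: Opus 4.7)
The plan is to reduce to Theorem \ref{thm:tree-packing-comp}. I would construct---within the $f$-mobile setting itself---a tree packing of $\widetilde{\Theta}(k\phi) = \Theta(f \log n)$ near edge-disjoint spanning trees of diameter $\widetilde{O}(1/\phi)$, and then invoke Theorem \ref{thm:tree-packing-comp}, which yields an $f$-mobile-resilient compiler with overhead $\widetilde{O}(\kDiam)=\widetilde{O}(1/\phi)$ per simulated round, for a total of $\widetilde{O}(r/\phi)$ rounds.

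For the existence of such a packing, note that a $\phi$-expander with minimum degree $k$ has edge-connectivity $\Omega(k\phi)$, so by Nash--Williams/Tutte the graph already admits $\widetilde{\Omega}(k\phi)$ edge-disjoint spanning trees, and the assumption $k=\widetilde{\Omega}(1/\phi^2)$ gives a comfortable edge budget per tree. The delicate requirement is the $\widetilde{O}(1/\phi)$ diameter bound. I would build the trees as randomized BFS trees from a common root $s$: partition the edges into $t=\widetilde{\Theta}(k\phi)$ random groups, each receiving $\widetilde{O}(1)$ edges per vertex. A standard expander-mixing / Chernoff argument should show that each group retains conductance $\widetilde{\Omega}(\phi)$, hence its BFS tree from $s$ has depth $O(\log n/\phi) = \widetilde{O}(1/\phi)$. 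The resulting packing is near edge-disjoint, with each edge appearing in $\widetilde{O}(1)$ trees---exactly the \emph{weaker variant} of tree packing referred to in the excerpt.

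The second step is to carry out this construction against the $f$-mobile adversary, without any preprocessing. The key leverage is that $k \gg f = \widetilde{O}(k\phi)$, so each vertex has $(1-o(1))k$ uncorrupted neighbors in every round. I would bootstrap a coarse mobile-resilient broadcast/routing primitive directly from the expander's structure---for instance, multi-path dissemination along $\widetilde{\Omega}(1/\phi)$ edge-disjoint short paths followed by majority decoding---and use it, in $\widetilde{O}(1/\phi)$ rounds, to jointly compute, distribute and locally verify the random grouping that defines the packing. Once the packing is fixed, Theorem \ref{thm:tree-packing-comp} handles the rest of the simulation.

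The principal obstacle is precisely this bootstrapping step: Theorem \ref{thm:tree-packing-comp} is the tool that would give us mobile resilience, but it needs the tree packing as input, and that packing must be produced against the very same adversary. I expect the cleanest resolution to lean on the expander's raw connectivity (rather than any preprocessed structure) for the one-shot construction phase---so that the upfront mobile-resilient primitive is as lightweight as possible---after which the stronger compiler of Theorem \ref{thm:tree-packing-comp} takes over for the remainder of the $r$-round simulation.
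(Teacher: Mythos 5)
Your high-level plan (reduce to \Cref{thm:tree-packing-comp}, partition edges into $\widetilde{\Theta}(k\phi)$ random color classes, use expander sampling to argue each class has conductance $\widetilde{\Omega}(\phi)$ and hence BFS depth $O(\log n/\phi)$) matches the paper exactly, and you correctly identify the bootstrapping circularity as the crux. But your proposed resolution — building a separate coarse mobile-resilient broadcast/routing primitive from edge-disjoint short paths and majority decoding in order to compute, distribute, and verify the packing — is not what the paper does, and it misses the key idea that makes the circularity disappear.

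The paper never makes the packing construction mobile-resilient. It runs the color-assignment-plus-BFS procedure naively, with no protection at all, and relies instead on the \emph{weak tree packing} notion (Definition~\ref{def:weakTP}), which \Cref{thm:res-mobile-improved} is stated for: it suffices that $0.9k$ of the subgraphs be low-diameter spanning trees rooted at a common root; the remaining $0.1k$ can be arbitrary garbage. The construction phase lasts $L=O(\log n/\phi)$ rounds, so an $f$-mobile adversary corrupts at most $2fL$ directed edges over its entire duration. Since each directed edge belongs to exactly one color class (the nodes commit to colors locally in round one), at most $2fL$ color classes are ever touched. Setting $k=20fL$ gives at least $0.9k$ color classes whose edges the adversary never sees, and for those the BFS runs exactly as in the fault-free setting, producing spanning trees of depth $O(\log n/\phi)$ by the conductance-preservation lemma, all rooted at the globally maximum-ID vertex. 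That counting argument — corrupt edges $\Rightarrow$ bad colors, bounded per round $\Rightarrow$ bounded total — is the entire resolution; no interim resilient primitive, no verification step, and no majority decoding are needed during the construction phase. Without the weak-packing relaxation and that edge-to-color counting, your bootstrapping sketch is a genuinely different (and substantially harder) route, and it is not clear it closes within the stated $\widetilde{O}(r/\phi)$ budget.
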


Finally, Theorem \ref{thm:tree-packing-comp} also provides compilers for general graphs, in which the round overhead depends (up to poly-log factors) on the (instance) optimal length of $k$ edge-disjoint paths between neighboring pairs, in the given graph. This is in contrast to prior work (e.g., \cite{HitronP21a}) which competes with the worst-case bound on this length. 

\smallskip
\noindent \textbf{Even Stronger Adversaries: Resilience with Bounded Round-Error-Rate.} Finally, we extend our $f$-mobile compilation scheme to the stronger setting in which the adversary is allowed to corrupt a total of $f r$ edges in an $r$-round algorithm. That is, corrupting at most $f$ edges, per round, on \emph{average}. By using the rewind-if-error technique from interactive coding \cite{S92}, we match the round overhead provided for the $f$-mobile setting. This also provides stronger formulations of Theorem \ref{thm:CC-compiler} and \ref{thm:expander-compiler}. For example, for the \congc\ model, one can compile an $r$-round algorithm in $\widetilde{O}(r)$ rounds, while tolerating a total of $\widetilde{\Theta}(r\cdot n)$ corruptions.

	\vspace{-5pt}\subsection{Technical Overview}

\subsubsection{Perfect-Security with Mobile Adversaries} 

Simulating a given $f$-static-secure $r$-round algorithm $\cA$ securely in the $f$-mobile setting is based on the following observation: Assume that all but a subset of $f$ edges, denoted as $F^*$, hold $r$ secret random messages, that are hidden from the adversary. That is, assume that for every $(u,v)\in E \setminus F^*$, $u$ and $v$ hold $R_1(u,v),\ldots, R_r(u,v)$ random messages, which the adversary does not know. Then, one can simulate $\cA$ in a round-by-round manner, where in round $i$, each $u$ sends $m_i(u,v) \oplus R_i(u,v)$ to each neighbor $v$, where $m_{i}(u,v)$ is the message that $u$ sends to $v$ in round $i$ of Alg. $\cA$. We then claim that the resulting compiled algorithm, $\cA'$, is $f$-mobile secure. Observe that all the messages of $\cA$ exchanged over the edges of $E \setminus F^*$ are distributed uniformly at random, in the eyes of the adversary. We then use the $f$-static security guarantees of $\cA$ to show that the information exchanged over $F^*$, where $|F^*|\leq f$, leaks no information as well. 

We therefore conclude that our key task is in providing all, but at most $f$ neighboring pairs, a sufficiently large pool of secret keys, in the presence of the $f$-mobile adversary. This task is captured by the neat formulation of the \emph{Bit-Extraction} problem introduced by Chor et al. \cite{ChorGHFRS85}. In this problem, it is desired to extract random bits from several bits, where a bounded number of these bits are controlled by an adversary and the rest are uniformly distributed. 
%

To improve upon the extra $f$ factor overhead (when insisting on $f$-mobility, see Thm. \ref{thm:simulation}), we show that a white-box combination of the Bit-Extraction procedure of Chor et al. \cite{ChorGHFRS85} with the framework of \cite{EavesBroad2022,EavesMST2023} yields $f$-mobile algorithms with the same number of rounds. As an appetizer, we provide the following very simple, yet at first glance, surprising observation which serves as the basis for adapting \cite{EavesBroad2022,EavesMST2023} to the $f$-mobile setting. 

\smallskip
\noindent\textbf{Key Observation: Mobile-Secure Unicast is Easy.} At the heart of the algorithms of \cite{EavesBroad2022,EavesMST2023} lies a (static) secure unicast procedure of Jain \cite{Jain04}. This procedure allows a given pair of nodes $s,t$ to exchange a secret message in $O(D)$ rounds, provided that the set of edges $F$ controlled by the \emph{static} adversary does not disconnect $s$ and $t$. A remarkable property of this algorithm is its \emph{lightness}: exactly one message is exchanged along each of the graph edges (throughout the algorithm). This leads to a very simple mobile compilation: Let all neighbors $u,v$ exchange 
a random message $R(u,v)$, within a single round. Then, simulate Jain's algorithm in a round-by-round-manner where the messages are encrypted with the $\{R(u,v)\}$ keys.
It is then easy to prove perfect security provided that the following minimal condition holds. Let $F_i$ be the edges controlled by the adversary in round $i$. Then, security holds if $F_1$ does not disconnect $s$ and $t$ and $F_i=E$ for every $i \geq 2$. 

This exercise illustrates that mobile security is \emph{easy} when the given static-secure algorithm has low \emph{congestion}. Thm. \ref{thm:simulation} allows one to handle the general case of \emph{arbitrary} congestion. 
%

\subsubsection{Resilience with Mobile Byzantine Adversaries} \vspace{-5pt}

We now turn to consider the considerably more challenging task of providing resilience in the presence of $f$-mobile byzantine adversary. 
Unlike the mobile security setting, our simulation translates any fault-free algorithm into an $f$-mobile-resilient algorithm. This provides also an alternative approach for the $f$-static setting, in the regime where $f=\Omega(\log n)$. The main application of our technique is a general compiler in the \congc\ model, which can handle $\Theta(n)$ mobile byzantine faults (in every round!) while paying only a poly-logarithmic overhead in the number of rounds. To illustrate our ideas, we take a gradual approach in terms of the delta w.r.t prior work. 

\smallskip
\noindent\textbf{Handling $f=O(1)$ Mobile Faults with Fault-Tolerant (FT) Cycle Covers.} Patra et al. \cite{PatraCRSR09} presented an $f$-mobile-resilient algorithm that allows a node pair $s,t$ to exchange a message $m$. Their algorithm is based on sending $m$, in a pipeline manner, along $2f+1$ edge-connected $s$-$t$ paths, for a sufficient number of rounds. Note that the length of these paths can be bounded by $D^{\Theta(f)}$ where $D$ is the diameter of the graph, see e.g., \cite{parter2019small}. To simulate a fault-free algorithm $\cA$ in the $f$-mobile-byzantine setting, it is desired to employ the solution of \cite{PatraCRSR09} for all neighboring pairs $u,v$. A naive application leads to a super-linear round complexity, in the worst case, as a single edge might appear on the $u$-$v$ path collection of potentially $\Omega(n)$ many $u,v$ pairs. This \emph{congestion} barrier is mitigated by the notion of fault-tolerant (FT) cycle-covers \cite{ParterY19s,HitronP21a}. 

Informally, a $k$-FT cycle cover is a collection of cycles such that each edge $(u,v)$ is \emph{covered} by $(k-1)$ edge-disjoint cycles (except for the common edge $(u,v)$). \cite{ParterY19s} and \cite{HitronP21a} showed that any $k$ edge-connected $D$-diameter graph admits a $k$-FT cycle cover such that: (i) the largest cycle length is $D^{\Theta(k})$ and (ii) the largest (edge) overlap between the cycle is $D^{\Theta(k})$. Employing the algorithm of \cite{PatraCRSR09} for each neighboring pair $u,v$ on top of $k$-FT cycle cover for $k=2f+1$, allows us to compile any round of a given fault-free algorithm within $D^{\Theta(f)}$ rounds. 

The key limitation of this technique is in handling a larger number of faults. It is easy to show (by an averaging argument) that for any graph, any $k$-FT cycle cover induces a cycle overlap of $\Omega(k)$. Therefore, providing $\Theta(n)$-mobile \congc\ compilers with $\widetilde{O}(1)$ overhead calls for a new approach.

\smallskip
\noindent\textbf{Handling $f=\Omega(\log n)$ Mobile Faults with Low-Depth Tree Packing.}
The notion of low-diameter tree packing, introduced by Chuzhoy, Parter and Tan \cite{ChuzhoyPT20}, decomposes the graph into multiple near edge-disjoint trees of bounded depth. A graph $G$ is $(k,\kDiam)$-connected if for every pair $u,v$ there is a collection of $k$ edge-disjoint paths of length at most $\kDiam$. \cite{ChuzhoyPT20} presented a centralized construction that decomposes every $(k,\kDiam)$-connected graph into $O(k/\log n)$ near edge-disjoint spanning-trees of depth $O(\kDiam\cdot \log n)$. Our key result provides an $f$-mobile-resilient compilation of any fault-free algorithm while paying a round overhead of $\widetilde{O}(\kDiam)$, given a distributed knowledge\footnote{By distributed knowledge we mean that each node knows its parent in each of the trees.} of a $\kDiam$-diameter tree packing with $k=\widetilde{O}(f)$. 
We first explain a strategy for obtaining a round overhead of $\widetilde{O}(\kDiam+f)$.

\smallskip
\noindent\textbf{Compilation with a Round Overhead of $\widetilde{O}(\kDiam+f)$.} It is instructive to explain first the simulation in the $f$-static setting when given a collection of $k=\Omega(f \log n)$ nearly edge-disjoint spanning trees of depth $\widetilde{O}(\kDiam)$. We root all trees at some root node $v_r$.  Consider round $i$ of the given fault-free algorithm $\cA$, and let $m_i(u,v)$ be the message sent by $u$ to $v$ on that round, for every (directed) edge $(u,v)\in E$. At the start of the simulated round $i$, we let all nodes exchange the $\{m_i(u,v)\}_{(u,v)\in E}$ messages, as in Alg. $\cA$. Let $m'_i(u,v)$ be the message received by $v$ from $u$, on that round. As the adversary corrupts at most $f$ bidirectional edges, it might be that $m'_{i}(u,v)\neq m_i(u,v)$ for at most $2f$ ordered pairs $(u,v)$. We call a message $m_i(u,v)$ a \emph{mismatch} if $m'_{i}(u,v)\neq m_i(u,v)$, hence we have at most $2f$ mismatches, that we need to ``correct".  

We introduce a message-correction-procedure which is based on the powerful tool of \emph{sparse recovery sketches}, commonly employed in the context of the \emph{turnstile streaming} model \cite{CF14}. In that setting, we are given a stream of elements which arrive with some (positive or negative) frequency, and our goal at the end of the stream is to output all elements with non-zero frequency. Detecting $s$ elements can be done with a memory of $\widetilde{O}(s)$ bits. 

Consider a (turnstile) stream $S$ formed by adding each of the \emph{sent} messages $m_i(u,v)$ with frequency $1$, and each of the received messages $m'_i(u,v)$ with frequency $(-1)$. Since all the messages such that $m_i(u,v) = m'_i(u,v)$ cancel-out, we are left with only the sent and received copies of the mismatches. We utilize the mergeability property of the sparse recovery sketches, and aggregate local sketch information on each of the trees, in parallel. Initially, each node $v$ locally computes a sketch $\sigma(v)$ of its incoming and outgoing messages\footnote{I.e., it computes a stream in which $m_i(v,u)$ is added with a frequency $1$, and $m'_i(u,v)$ with frequency $-1$, for every $u \in N(v)$.}. By aggregating these $\sigma(v)$ values over the trees, the root $v_r$ obtains the final sparse recovery sketch, and detects the mismatches. Since the majority of trees do not have a corrupted edge (in the $f$-static setting), the sketch returned by the majority of the trees contains the correct list of mismatches, and we can broadcast this list through the trees to have all nodes correct their received messages. Since the sparse recovery sketches are implemented with a sparsity parameter of $s=\Theta(f)$, this computation can be implemented in $O(f+\kDiam)$ rounds using a pipelining argument. 

If implemented naively in a mobile setting, the adversary may alter the result of $f$ trees \emph{per} round, and eventually corrupt, at least a single edge, in each of the given $\Theta(f\log n)$ trees\footnote{It can corrupt $f$ edges in each round, and each edge appears on $O(\log n)$ many trees.}. This is the critical point where \emph{interactive-coding} comes to rescue. We use the compiler of \cite{RajagopalanS94,HozaS16}, denoted hereafter by \emph{RS-compiler}, to compile the sketch aggregation procedure in each of the trees. The RS-compiler is designed for a setting in which the adversary can maliciously corrupt an $O(1/m)$ fraction of the total communication, where $m$ is the number of graph edges. In our context, this compiler is applied on a tree subgraph, hence tolerating $O(1/n)$ fraction of corrupted messages, with a round-overhead $O(1)$. Since the $f$-mobile adversary may only corrupt $O(f\log n)$ many trees in any given round, for most of the RS-compiled protocols, the total fraction of corrupted communication is $o(1/n)$. Consequently, the majority of the RS-compiled protocols are \emph{successful}. 

On the conceptual level, the RS-compilers allow us to utilize the collection of $\Omega(f\log n)$ near edge-disjoint trees in an $f$-mobile setting in an \emph{almost} analogous manner to the $f$-static setting.
We cannot guarantee that a majority of the trees are fault-free (as we could, in the static case), but we can still guarantee that a majority of RS-compiled algorithms over these trees end successfully. This comes with a cost of increasing the edge-connectivity requirement by a constant factor which depends on the hidden constants of the RS-compilers.

\smallskip

\noindent\textbf{Omitting the Dependency in $f$.} The improved bound is obtained by replacing the sparse recovery sketches by $\ell_0$-sampling sketches which have only $\widetilde{O}(1)$ bits. 
The basic intuition for this procedure is the following: Given that we have $\Omega(f\log n)$ many spanning trees, if each tree propagates $O(\log{n})$ uniformly random real mismatches (obtained by independent $\ell_0$-sampling sketches), then the root observes all real mismatches w.h.p. Note, however, that some of these observed mismatches might be \emph{fake}, as the adversary might control some trees and introduce mismatches that are not obtained from the $\ell_0$-sketches. To overcome this, we set a minimal threshold $\Delta$, and make the root node $v_r$ ignore observed mismatches that are sampled by less than $\Delta$ trees. The threshold $\Delta$ should be set with care: high enough to filter-out unreal mismatches, but also sufficiently low to detect \emph{many} real mismatches. As we cannot expect to capture all $2f$ (real) mismatches at once while producing no new mismatches, we have $\ell=O(\log f)$ repetitions.

At the beginning of each phase $j \in [\ell]$, each nodes $v$ recomputes its local $\ell_0$-sketches, based on the current estimate $m'_{i,j}(u,v)$ of its received messages, for each $u \in N(v)$. Our goal is to reduce the number of mismatches by a constant factor in each phase, hence eventually correcting all mismatches within $O(\log f)$ phases.  For phase $j$, we define a threshold $\Delta_j=\widetilde{O}(2^j)$ and the root node $v_r$ only considers the mismatches that received a support by at least $\Delta_j$ trees, and ignores the rest. These highly-supported mismatches are downcast from $v_r$ to all the nodes, on each of the trees, in parallel. Assume, for now, that all the nodes correctly received this information from $v_r$. Then, one can show by induction, that the number of unfixed (real) mismatches drops by a constant factor in a phase. As the number of real mismatches decreases, each real mismatch will be sampled more times by the good trees which allows us to increase the supported threshold $\Delta_j$ accordingly. 

The remaining caveat is the assumption on correctly receiving the root's information. This might not hold, in general, as the adversary may introduce $f$ incorrect mismatches in each phase when downcasting the sketch information from the root. To overcome this last hurdle, we combine error correction codes with the RS-compilers. The root $v_r$ encodes the $O(f)$ detected mismatches by a codeword $w$. It splits $w$ into $\Theta(f)$ shares and broadcasts each share on some tree using the RS-compiler. This guarantees that a large fraction of the trees broadcasts their share correctly, and each node can locally recover the list of observed mismatches.   

\smallskip
\noindent\textbf{Handling Adversaries with Bounded Round-Error-Rates.} 
In \Cref{thm:budgeted_main_intro}, we consider a stronger setting in which the adversary can corrupt  ``on average'' $f$ messages in each round. In particular, the adversary might corrupt a large number of messages in given rounds. The compiler is based on the \emph{rewind-if-error} technique \cite{S92}, originally introduced for the two-party setting. On a high level in this paradigm parties keep on verifying whether or not errors have occurred so-far in the protocol execution. If there is no indication of errors, the parties continue to simulate the next round. Otherwise, they \emph{rewind} by omitting the possibly incorrect last messages, and repeat. The key challenges is in detecting the errors and deciding simultaneously whether to rewind or not. 

We provide a network extension to this paradigm, that is somewhat different than the approach taken in prior works, e.g., in \cite{HozaS16}. Recall that the classical interactive coding setting allows a communication-error-rate, while we account for round-error-rate. In each given point of our compilation, each node $u$ simulates a round $t_u$ of Alg. $\cA$, where possibly $t_u \neq t_v$ for distinct nodes $u,v$. Once errors are detected, only the nodes of largest $t_u$ values apply a rewind step. The analysis is based on defining a potential function which provides a global progress guarantees for the entire network, over time. 
	\subsection{Preliminaries}

For an integer $a$, we denote by $[a] = \{0,1,\dots,a-1\}$. For a matrix $M$, let $M_{ij}$ its value in index $(i,j)$. 

\begin{definition}[Vandermonde Matrix]
Given a field $\mathbb{F}$, an $k \times n$ matrix $A$ is called a \emph{Vandermonde matrix} if there exist some $k$ distinct non-zero field elements $\alpha_1,\dots,\alpha_k \in \mathbb{F}$, such that the value $A_{ij} = \alpha_i^{j-1}$, where the multiplication is defined by the multiplication operator of the field.
\end{definition}

\noindent \textbf{Error-Correcting Codes.} We recall the definition of error correcting codes and the standard Reed-Solomon code construction. We use the notion of \emph{Hamming} distance used in coding theory and then define error correcting codes with its various parameters. 

\begin{definition}[Distance]
Let $\Sigma$ be a finite set and $\ell\in\mathbb{N}$, then the distance between $x,y\in \Sigma^\ell$ is defined by
$\Hamm(x,y) = \abs{\sett{i\in[\ell]}{x_i\neq y_i}}$.
\end{definition}

\begin{definition}[Error Correcting Code]\label{def:ECC}
Let $\Sigma$ be a finite set. For every $k\in\mathbb{N}$, a subset $C\subseteq \Sigma^k$ is said to be an error correcting code with block length $k$, message length $\ell$, and relative distance $\delta$ if $|C|\ge |\Sigma|^\ell$ and for every $x,y\in C$, $\Hamm(x,y)\geq \delta \cdot k$. We denote then $\Hamm(C)=\delta$. Moreover, we say that $C$ is a \code{\ell,k,\delta}{q} code to mean that $C$ is a code defined over alphabet set of size $q$ and is of message length $\ell$, block length $k$ and relative distance $\delta$. The elements of $C$ are denoted as codewords. 
\end{definition}


\begin{theorem}[Reed-Solomon Codes \cite{RS60}]\label{thm:rs}
For every prime power $p$, message length $\ell$ and block size $k\leq p^m=q$, there exists a \code{\ell, k, \delta_C}{q} code for $\delta_C=(k-\ell+1)/k$.
\end{theorem}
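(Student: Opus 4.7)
The plan is to use the classical polynomial-evaluation construction over the finite field $\mathbb{F}_q$ with $q=p^m$ elements. Since the theorem assumes $k\leq q$, we can fix $k$ distinct elements $\alpha_1,\ldots,\alpha_k\in\mathbb{F}_q$ (these will serve as the evaluation points). For each message $(m_0,\ldots,m_{\ell-1})\in\mathbb{F}_q^{\ell}$, I would associate the polynomial $P(x)=\sum_{i=0}^{\ell-1}m_i x^{i}\in\mathbb{F}_q[x]$ of formal degree less than $\ell$, and define the encoding map $\mathrm{Enc}: \mathbb{F}_q^{\ell}\to\mathbb{F}_q^{k}$ by $\mathrm{Enc}(m_0,\ldots,m_{\ell-1}) = (P(\alpha_1),\ldots,P(\alpha_k))$. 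The code $C$ is taken to be the image of this map.

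Next I would verify the size requirement: since a polynomial of degree less than $\ell$ is uniquely determined by its $\ell$ coefficients, the encoding is an injective $\mathbb{F}_q$-linear map, and hence $|C|=q^{\ell}=|\Sigma|^{\ell}$ as required by Definition~\ref{def:ECC}. Equivalently, one can note that the encoding is given by multiplication by a Vandermonde matrix of shape $k\times\ell$ (in the sense of the definition recalled just before the theorem), whose columns are linearly independent because the $\alpha_i$ are distinct, so the map has trivial kernel.

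For the distance bound, I would argue as follows. Let $x\neq y$ be two distinct messages, giving rise to two distinct polynomials $P\neq Q$ of degree less than $\ell$. Their difference $R:=P-Q$ is a nonzero polynomial of degree at most $\ell-1$, and so by the fundamental theorem of algebra for polynomials over a field, $R$ has at most $\ell-1$ roots in $\mathbb{F}_q$. Therefore the number of indices $i\in[k]$ for which $P(\alpha_i)=Q(\alpha_i)$ is at most $\ell-1$, which implies $\Hamm(\mathrm{Enc}(x),\mathrm{Enc}(y))\geq k-\ell+1$. Dividing by the block length $k$ yields the claimed relative distance $\delta_C=(k-\ell+1)/k$, completing the proof.

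There is no real obstacle here; this is the textbook Reed--Solomon construction. The only mild point to double-check is the existence of $k$ distinct evaluation points in $\mathbb{F}_q$, which is immediate from $k\leq q$, and that the bound on the number of roots of a nonzero polynomial holds over any field, which is the only algebraic fact used.
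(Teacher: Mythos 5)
The paper cites Theorem~\ref{thm:rs} from~\cite{RS60} as a standard result and does not include its own proof. Your argument is the textbook polynomial-evaluation construction of Reed--Solomon codes and is correct: the $k\le q$ hypothesis gives the evaluation points, injectivity (equivalently, full column rank of the $k\times\ell$ Vandermonde matrix) gives $|C|=q^{\ell}$, and the degree-$<\ell$ bound on the difference polynomial gives $\Hamm(x,y)\ge k-\ell+1$, hence $\delta_C=(k-\ell+1)/k$.
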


\noindent\textbf{Graph Notations.} For a graph $G=(V,E)$ and $u \in V$, let $N(u)$ denote the neighbors of $u$ in $G$. For a given $G$-subgraph family $\mathcal{G}=\{G_1,\ldots, G_k\}$, let $\load(e)=|\{ G_i \in \mathcal{G} ~\mid~ e \in G_i\}|$ for every $e \in E(G)$ and $\load(\mathcal{G})=\max_{e \in E(G)}\load(e)$. We say that a subgraph family $\mathcal{G}$ is \emph{known in a distributed manner} if each $u \in V(G_i)$ knows an ID of $G_i$ and its incident edges in $G_i$, for every $G_i \in \mathcal{G}$. 

For a rooted spanning tree $T$ and node $v \in V$, we denote by $\mathrm{Children}(v,T) \subseteq V$ the set of child nodes of $v$ in $T$.

For vertex set $A \subseteq V$ we denote by $E(A) \subseteq E$ the set of edges incident to $A$ in $G$. For vertex sets $A,B \subseteq V$, we denote by $E(A,B) \subseteq E$ the set of edges in $G$ with one endpoint in $A$ and one endpoint in $B$. We say that a graph $G$ is a $\phi$-\emph{expander} if for any set $S \subseteq V$ it holds that $(|E(S,V \setminus S)|)/\min(|E(S)|,|E(V\setminus S)|) \geq \phi$. This is also known as the graph $G$ having \emph{conductance} $\geq \phi$.

\subsection{Model, Security and Resilience Notions} 

\noindent \textbf{The Adversarial Communication Model.}	Throughout, we consider the adversarial \congest\ model introduced by \cite{HitronP21a,HitronP21}. The synchronous communication follows the standard $B$-\congest\ model, where initially, each node knows the IDs of its neighbors in the the graph $G$. This is usually referred to as the KT1 setting \cite{AGPV90}. In each round, nodes can exchange $B$-bit messages on all graph edges for $B=O(\log n)$. Some of our results hold in the \congc\ model \cite{LotkerPPP05}, in which \emph{each} pair of nodes (even non-neighboring) can exchange $O(\log n)$ bits, in every round. 

We study two main (edge) adversarial settings, namely, eavesdroppers and byzantine adversaries. 
All adversarial settings considered in this paper assume an \emph{all-powerful} adversary that controls subsets of \emph{edges} whose identity is not known to the nodes. The adversary is allowed to know the topology of the graph $G$ and the algorithm description run by the nodes. It is oblivious, however, to the randomness of the nodes.  In the case of \emph{active} adversaries (e.g., byzantine), the adversary is allowed to send $B$-bit messages, on each of the edges it controls (in each direction), in every round of the computation.  In the static settings, the adversary controls a fixed set of at most $f$ edges, while in the mobile setting, it is allowed to control a distinct set of $f$ edges in each round. In the context of (passive) eavesdroppers, we aim at providing perfect-security guarantees. For byzantine adversaries, we strive for correctness. It will be interesting to extend our techniques to provide both correctness and security guarantees against a byzantine adversaries. We next formally define the desired security and resilience guarantees under these adversarial settings, respectively.

\smallskip
\noindent\textbf{Perfect Security with Eavesdroppers.}	In the \emph{static} adversarial setting, a (computational unbounded) eavesdropper adversary controls a fixed set of edges $F^*$ in the graph. The nodes do not know the identity of $F^*$, but rather only a bound on $|F^*|$. In the \emph{mobile} eavesdropper setting, the adversary is allowed to control a distinct subset of edges $F_i$ in each round $i$. 

Let $\cA$ be a randomized algorithm running on a graph $G$. Denote the input domain of the algorithm $\cA$ by $\mathcal{X}$. We say that an eavesdropper is \emph{listening} over an edge $(u,v)$ in round $i$ of Alg. $\cA$ if the eavesdropper observes the message that $u$ sent to $v$ and the message that $v$ sent to $u$ in round $i$.  For a subset of edges $F^* = \{e_1,\dots,e_k\} \subseteq E$, and input $x \in \mathcal{X}$, let $\View_{A}(F^*,x)$ be a random variable vector indicating the messages of the edges of $F^*$ throughout the execution of $\cA$ given input $x$.  

Algorithm $\cA$ is said to be $f$-\emph{static-secure} against an eavesdropper adversary, if for every choice of $|F^*| \leq k$, and every possible configuration of input values $x_1,x_2 \in \mathcal{X}$, it holds that the following two are equivalent distributions: $\View_{G,A}(F^*,x_1) \equiv \View_{G,A}(F^*,x_2)$. This notion is known as \emph{perfect security}. For an $r$-round algorithm $\cA$, input $x \in \mathcal{X}$ and a collection of $r$ subsets of edges, $F_1,\dots,F_r \subseteq E$ let $\View_{A}((F_1,\dots,F_r),x)$ be a random variable vector for the messages exchanged over each edge $e \in F_i$ at round $i$ given the input $x$, for all $1\leq i \leq r$. Alg. $\cA$ is $f$-\emph{mobile-secure} in a graph $G$ if for any $F_1,\dots,F_r \subseteq E$, of size $|F_i| \leq f$ and for any inputs $x_1,x_2 \in \mathcal{X}$ it holds that $\View_{G,A}((F_1,\dots,F_r),x_1) \equiv \View_{G,A}((F_1,\dots,F_r),x_2)$.
	
\smallskip
\noindent \textbf{Resilience with Byzantine Adversaries.} The graph edges are controlled by a computationally unbounded \emph{byzantine} adversary. Unlike the eavesdropper setting, the adversary is allowed to see the messages sent through \emph{all} graph edges in each round, but can manipulate the messages exchanged over a bounded subset of controlled edges. 
%
An $f$-\emph{static} byzantine adversary can manipulate the messages sent through a fixed $F^* \subseteq E$ where $|F^*|\leq f$.  An $f$-\emph{mobile} byzantine adversary can manipulate at most $f$ edges $F^*_i$ in \emph{each} round $i$, where possibly $F^*_i\neq F^*_{j}$ for $i\neq j$. 

We say that an algorithm is $f$-\emph{static} (resp., \emph{mobile}) \emph{resilient} if its correctness holds in the presence of $f$-static (resp., mobile) byzantine adversary. In the stronger setting of $f$-\emph{round-error-rate}, the adversary is allowed to corrupt at most $f$ edges per round, on \emph{average}. That is, for an $r$-round algorithm the adversary is allowed to corrupt a total of $f \cdot r$ edges. 
%

\smallskip
\noindent \textbf{Distributed Scheduling.} The congestion of an algorithm $\cA$ is defined by the worst-case upper bound on the number of messages exchanged through a given graph edge when simulating $\cA$.  Throughout, we make an extensive use of the following random delay approach of \cite{leighton1994packet}, adapted to the \congest\ model. 
\begin{theorem}[{\cite[Theorem 1.3]{Ghaffari15}}]\label{thm:delay}
	Let $G$ be a graph and let $\cA_1,\ldots,\cA_m$ be $m$ distributed algorithms, each algorithm takes at most $\dilation$ rounds, and where for each edge of $G$, at most $\congestion$ messages need to go through it, in total 
	\emph{over all} these algorithms. Then, there is a randomized distributed
	algorithm that w.h.p.\ runs all the algorithms in $\widetilde{O}(\congestion +\dilation)$ rounds.
\end{theorem}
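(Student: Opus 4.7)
The plan is to prove this via the classical random-delay scheduling technique of Leighton, Maggs, and Rao~\cite{leighton1994packet}, adapted to the distributed \congest\ model.

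At setup, for each algorithm $\cA_i$ a canonical participant (e.g.\ the minimum-ID node inside the scope of $\cA_i$) draws an integer delay $d_i$ uniformly from $[0, K\cdot\congestion]$ for a sufficiently large constant $K$, and broadcasts $d_i$ to the other participants of $\cA_i$. Since the footprint of $\cA_i$ has radius at most $\dilation$, this dissemination can be piggybacked onto a BFS inside each scope in an additive $O(\dilation)$ rounds, which will be absorbed by the final $\widetilde{O}(\congestion+\dilation)$ bound. We then schedule a \emph{logical} execution of length $T = K\cdot\congestion + \dilation = O(\congestion+\dilation)$ rounds, in which $\cA_i$ performs its $j$-th original round during logical round $d_i+j$.

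Fix any edge $e\in E(G)$ and any logical round $t\in[T]$. Let $k_i$ denote the number of messages $\cA_i$ sends across $e$ in total; the hypothesis gives $\sum_i k_i \le \congestion$. Conditional only on $d_i$, the probability that some message of $\cA_i$ lands on slot $t$ is at most $k_i/(K\cdot\congestion)$. Since $d_1,\dots,d_m$ are mutually independent, the load $L_{e,t}$ is a sum of independent $\{0,1\}$-valued indicators with $\mathbb{E}[L_{e,t}]\le 1/K$. A Chernoff bound then yields $\Pr[L_{e,t}\ge c\log n]\le n^{-\Omega(c)}$, and a union bound over the at most $\poly(n)$ pairs $(e,t)$ guarantees $L_{e,t}=O(\log n)$ simultaneously, w.h.p.

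Finally, each logical round is realized by a block of $O(\log n)$ actual \congest\ rounds in which each edge flushes its (w.h.p.\ at most $O(\log n)$) queued messages sequentially; because all delays and internal schedules are fixed at setup, every node locally knows which of its outgoing messages to transmit in which real round, and no further coordination is needed. The total round complexity is thus $T\cdot O(\log n) = \widetilde{O}(\congestion+\dilation)$. The main technical point is the concentration argument for $L_{e,t}$; the only subtlety is that messages belonging to the \emph{same} algorithm are correlated through the shared $d_i$, which is handled cleanly by aggregating them into a single per-algorithm indicator so that independence across distinct $\cA_i$'s (which is what Chernoff requires) is preserved.
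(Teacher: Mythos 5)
This statement is imported verbatim from \cite{Ghaffari15} and the paper offers no proof of its own; your argument is precisely the standard Leighton--Maggs--Rao random-delay proof underlying that citation, and it is correct --- in particular, aggregating each algorithm's messages into a single per-algorithm indicator is the right way to restore the independence Chernoff needs, and realizing each logical slot by an $O(\log n)$-round block yields $\widetilde{O}(\congestion+\dilation)$. The only loose step is the dissemination of $d_i$: the scope of $\cA_i$ need not be connected or have radius $O(\dilation)$, and running $m$ parallel broadcasts is itself a scheduling problem; the standard fix (used in the cited source and in this paper's applications of the theorem, e.g.\ the multi-unicast of Lemma~\ref{lemma:unicast-mobile}) is to assume each $\cA_i$ has a single initiator who privately draws $d_i$ and simply delays its start, so no dissemination is needed.
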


\subsection{Useful Tools}

\begin{lemma}[Chernoff Bound]
	\label{lem:chernoff}
	Let $X_1,\dots,X_n$ be i.i.d random variables over the values $\{0,1\}$. Let $X = \sum^n_i X_i$ and $\mu = E(X)$. Then for any $0 < \delta < 1$, $\Pr(X \leq (1-\delta) \mu) \leq e^{-\mu \delta^2/2}$.
\end{lemma}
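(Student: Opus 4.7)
The plan is to prove this standard lower-tail Chernoff bound by the usual moment generating function method: pass to an exponential, apply Markov's inequality, exploit independence to factorize, and optimize over the free parameter.

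First, I would fix some $t>0$ (to be chosen later). Since $e^{-tx}$ is strictly decreasing in $x$, the event $\{X \leq (1-\delta)\mu\}$ coincides with $\{e^{-tX} \geq e^{-t(1-\delta)\mu}\}$, so Markov's inequality yields
\[
\Pr[X \leq (1-\delta)\mu] \leq e^{t(1-\delta)\mu}\cdot E[e^{-tX}].
\]
By independence $E[e^{-tX}] = \prod_{i=1}^n E[e^{-tX_i}]$. Letting $p = E[X_i]$ (common to all $i$), each factor equals $1-p+pe^{-t} = 1+p(e^{-t}-1)$, and the standard inequality $1+x \leq e^x$ upgrades this to $\exp(p(e^{-t}-1))$. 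Since $\mu = np$, I would conclude $E[e^{-tX}] \leq \exp(\mu(e^{-t}-1))$, which combined with the Markov step gives
\[
\Pr[X \leq (1-\delta)\mu] \;\leq\; \exp\bigl(\mu(e^{-t}-1) + t(1-\delta)\mu\bigr).
\]

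Next, I would choose the optimizing value $t = -\ln(1-\delta) > 0$ (valid since $0 < \delta < 1$), which yields
\[
\Pr[X \leq (1-\delta)\mu] \;\leq\; \left(\frac{e^{-\delta}}{(1-\delta)^{1-\delta}}\right)^{\mu}.
\]
The remaining task is the purely analytic inequality $e^{-\delta}/(1-\delta)^{1-\delta} \leq e^{-\delta^2/2}$ for $\delta \in (0,1)$; equivalently, after taking logarithms, $(1-\delta)\ln(1-\delta) \geq -\delta + \delta^2/2$. I would prove this from the Taylor expansion $\ln(1-\delta) = -\sum_{k\geq 1}\delta^k/k$: multiplying by $(1-\delta)$ and rearranging gives
\[
(1-\delta)\ln(1-\delta) \;=\; -\delta + \sum_{k\geq 2}\frac{\delta^k}{k(k-1)} \;\geq\; -\delta + \frac{\delta^2}{2},
\]
since every summand with $k \geq 2$ is non-negative and the $k=2$ term contributes exactly $\delta^2/2$. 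Plugging this back yields the stated bound $e^{-\mu\delta^2/2}$.

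The only mildly delicate step is the final analytic inequality, which is where the constant $1/2$ in the exponent is pinned down; everything preceding it (Markov, independence, the $1+x \leq e^x$ bound, and the choice of optimal $t$) is completely mechanical. Since this is textbook material I would keep the presentation short, essentially at the level above, and not belabor the one-variable calculus.
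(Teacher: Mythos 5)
Your proof is correct; it is the standard Chernoff-bound argument via the moment generating function, Markov's inequality, the bound $1+x\le e^x$, optimization of the free parameter $t$, and the elementary series estimate $(1-\delta)\ln(1-\delta)\ge -\delta+\delta^2/2$. The paper states this lemma as a known textbook fact and gives no proof of its own, so there is nothing to contrast against; your derivation is the canonical one and every step (including the final analytic inequality, where the constant $1/2$ is extracted from the $k=2$ term of the Taylor series) checks out.
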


\smallskip
\noindent \textbf{Families of bounded-independence hash functions.} Some of our algorithms are based on generating $\cj$-wise independent random variables from a short random seed. For that purpose, we use the concept of families of bounded-independence hash functions:
\begin{definition}
	For $N, L, \cj \in \nat$ such that $\cj \le N$, a family of functions $\mathcal{H} = \{h : [N] \rightarrow [L]\}$ is \emph{$\cj$-wise independent} if for all distinct $x_1, \dots, x_\cj \in [N]$, the random variables $h(x_1), \dots, h(x_\cj)$ are independent and uniformly distributed in $[L]$ when $h$ is chosen uniformly at random from~$\mathcal{H}$.
\end{definition}
\begin{lemma}
	\label{lem:hash}[Corollary~3.34 in \cite{Vadhan12}]
	For every $a$, $b$, $\cj$, there is a family of $\cj$-wise independent hash functions $\mathcal{H} = \{h : \{0,1\}^a \rightarrow \{0,1\}^b\}$ such that choosing a random function from $\mathcal{H}$ takes $\cj \cdot \max\{a,b\}$ random bits, and evaluating a function from $\mathcal{H}$ takes $\poly(a,b,\cj)$ computation.
\end{lemma}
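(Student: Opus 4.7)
The plan is to give the standard polynomial-based construction of bounded-independence hash families and verify the claimed resource bounds. Let $k = \max\{a,b\}$ and work in the finite field $\mathbb{F} = \mathbb{F}_{2^k}$, which can be represented via a fixed irreducible polynomial of degree $k$ over $\mathbb{F}_2$; this lets me embed $\{0,1\}^a$ injectively into $\mathbb{F}$ via some map $\iota$ (since $k \ge a$) and project $\mathbb{F}$ onto $\{0,1\}^b$ via an $\mathbb{F}_2$-linear surjection $\tau$ (e.g.\ truncating to the first $b$ coordinates in the $\mathbb{F}_2$-basis, since $k \ge b$). For each coefficient vector $\alpha = (\alpha_0, \dots, \alpha_{\cj-1}) \in \mathbb{F}^{\cj}$ define
\[
h_{\alpha}(x) \;=\; \tau\!\left(\sum_{i=0}^{\cj-1} \alpha_i \cdot \iota(x)^{\,i}\right),
\]
and let $\mathcal{H}$ be the family of all such $h_{\alpha}$. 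Sampling a uniformly random $h \in \mathcal{H}$ reduces to sampling $\alpha$ uniformly in $\mathbb{F}^{\cj}$, which costs exactly $\cj \cdot k = \cj \cdot \max\{a,b\}$ random bits, matching the seed-length claim.

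Next I would establish $\cj$-wise independence. Fix any distinct $x_1, \dots, x_{\cj} \in \{0,1\}^a$; the embedded elements $\iota(x_1), \dots, \iota(x_{\cj}) \in \mathbb{F}$ are also distinct, so the corresponding $\cj \times \cj$ Vandermonde matrix over $\mathbb{F}$ is invertible. Hence the evaluation map $\alpha \mapsto (p_\alpha(\iota(x_1)), \dots, p_\alpha(\iota(x_{\cj})))$, where $p_\alpha$ is the degree-$(\cj-1)$ polynomial with coefficients $\alpha$, is an $\mathbb{F}$-linear bijection on $\mathbb{F}^{\cj}$. A uniform $\alpha$ therefore induces the uniform distribution on $\mathbb{F}^{\cj}$ at the evaluation points, i.e.\ the raw evaluations are independent and uniform in $\mathbb{F}$. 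Applying the same deterministic $\mathbb{F}_2$-linear surjection $\tau$ coordinate-wise preserves independence, and because $\tau$ is a coordinate projection of an $\mathbb{F}_2$-vector space, it sends the uniform distribution on $\mathbb{F}$ to the uniform distribution on $\{0,1\}^b$. This delivers $\cj$-wise independence of $h(x_1), \dots, h(x_{\cj})$ in $\{0,1\}^b$, as required.

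For the efficiency claim I would invoke standard facts on finite-field arithmetic: addition and multiplication in $\mathbb{F}_{2^k}$ take $\poly(k)$ time once the representation is fixed (and an irreducible polynomial of degree $k$ can be precomputed in $\poly(k)$ time, or baked into the construction). Evaluating $h_{\alpha}(x)$ is then one embedding, $\cj - 1$ field multiplications and additions via Horner's rule, and one linear projection, totaling $\poly(a,b,\cj)$ operations. The only delicate point I want to flag is the range-reduction step: if one instead tried to map $\mathbb{F}$ to $\{0,1\}^b$ by an arbitrary surjection (e.g.\ modular reduction after interpreting $\mathbb{F}$ as integers), uniformity and independence could be broken; using an $\mathbb{F}_2$-linear coordinate projection avoids this entirely, and is the main (mild) obstacle in turning the polynomial construction into a clean hash family with range exactly $\{0,1\}^b$.
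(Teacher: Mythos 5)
Your construction is correct and is exactly the one behind Vadhan's Corollary~3.34 (a degree-$(\cj-1)$ polynomial over $\mathbb{F}_{2^{\max\{a,b\}}}$, embedding the domain and projecting the range by an $\mathbb{F}_2$-linear truncation), so the proof matches the cited source's approach; the paper itself offers no proof and simply cites this result. Your flagged caveat about needing a regular (here, $\mathbb{F}_2$-linear) range-reduction rather than an arbitrary surjection is exactly the right point to be careful about, and your argument handles it correctly.
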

	
\smallskip
\noindent \textbf{Roadmap.} The paper is split into two parts: security against an eavesdropper adversary, and resilience towards a byzantine adversary. In the first part, we prove \Cref{thm:simulation} in \Cref{sec:mobile-Eavesdropper}, and \Cref{thm:mobile-compilers} in \Cref{sec:fault_free_to_f_secure}. In the second part, we prove \Cref{thm:tree-packing-comp}, \Cref{thm:CC-compiler} and \Cref{thm:expander-compiler} in \Cref{sec:exactly_k_edges_improved}. Results for the round-error rate setting of a byzantine adversary are proven in in \Cref{sec:byz_budgeted}. \Cref{cor:CC-mobile-simulation-final} is proven in \Cref{sec:FTcycle-cover}.
	
	\section{Security with Mobile Eavesdropper Adversaries}\label{sec:mobile-Eavesdropper}


In this section we prove \Cref{thm:simulation} by providing a round-efficient simulation that converts an $r$-round $f$-static-secure into an $r'$-round $f'$-mobile-secure algorithm. The main lemma optimizes the ratios $r'/r$ and $f'/f$ by reducing to the problem of Bit-Extraction introduced by Chor et al. \cite{ChorGHFRS85}. 


\paragraph{The Bit-Extraction Problem and $t$-Resilient Functions.} Let $n,m,t$ be arbitrary integers. In \cite{ChorGHFRS85}, Chor et al. considered the adversarial situation where for a given vector $x \in \{0,1\}^n$ the adversary knows $t$ entries in $x$ while the remaining $n-t$ entries are uniformly distributed in $\{0,1\}^{n-t}$. It is then required to output $m$ uniform random bits that are completely hidden from the adversary. The question is how large can $m$ be as a function of $n$ and $t$. 

\begin{definition}
Let $f:\{0,1\}^{n\cdot k} \to \{0,1\}^{m \cdot k}$ be a function and $\{y_1,\ldots, y_n\}$ be a set of random variables assuming values in $\{0,1\}^k$. The function $f$ is said to be $k$-unbiased with respect to $T \subset \{1,2,\ldots, n\}$ if the random variable $f(y_1,y_2, \ldots, y_n)$ is uniformly random on $\{0,1\}^{m \cdot k}$ when $\{y_i ~\mid~ i \notin T\}$ is a set of independent uniformly random variables on $\{0,1\}^k$ and $\{y_i ~\mid~ i \in T\}$ is a set of constant random variables. A function $f:\{0,1\}^{n\cdot k} \to \{0,1\}^{m \cdot k}$ is $(t,k)$-\emph{resilient} if for every $T \subseteq \{1,2,\ldots, n\}$ of cardinality $t$, $f$ is $k$-unbiased w.r.t $T$. 
\end{definition}

Let $B_k(n,t)$ be the maximum $m$ such that there exist a $(t,k)$-resilient function $f:\{0,1\}^{nk} \to \{0,1\}^{mk}$. In the (Block) Extraction Problem given $n,t$, it is required to determine $B_k(n,t)$. 

\begin{theorem}[\cite{ChorGHFRS85}] \label{thm:extract}
For $n\leq 2^k-1$ it holds that $B_k(n,t)=n-t$. Moreover, the following explicit function $f$ obtains this bound: let $M$ be an arbitrary $n \times (n-t)$ Vandermonde matrix over the finite field $\mathbb{F}_{2^k}$. Then for any random variables $x_1,\dots,x_n  \in \mathbb{F}_{2^k}$ such that at least $n-t$ of the $x_i$'s are uniform random variables on $\mathbb{F}_{2^k}$ and the rest are constants, the values $y_1,\dots,y_{n-t} \in \mathbb{F}_{2^k}$,  defined as $y_i = \sum_{j=1}^{n} M_{ji} \cdot x_j$,
where operations are made over the field $\mathbb{F}_{2^{k}}$, are independent uniform random variables on $\mathbb{F}_{2^k}$.
\end{theorem}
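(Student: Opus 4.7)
The plan is to prove the two directions of $B_k(n,t) = n-t$ separately. For the upper bound $B_k(n,t) \leq n-t$, I would argue by a counting contradiction: if $f \colon \{0,1\}^{nk} \to \{0,1\}^{mk}$ were $(t,k)$-resilient with $m > n-t$, then for any $t$-subset $T \subseteq \{1,\dots,n\}$ and any choice of constants for $\{x_i\}_{i \in T}$, the restricted map has domain $\{0,1\}^{(n-t)k}$ of size $2^{(n-t)k} < 2^{mk}$, so its image cannot be all of $\{0,1\}^{mk}$; hence the output distribution cannot be uniform on $\{0,1\}^{mk}$, contradicting $k$-unbiasedness with respect to $T$.

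For the matching construction, let $\alpha_1,\dots,\alpha_n$ be distinct nonzero elements of $\mathbb{F}_{2^k}$, which exist precisely because $n \leq 2^k - 1$, and let $M$ be the $n \times (n-t)$ Vandermonde matrix with $M_{ji} = \alpha_j^{\,i-1}$. Given any $T \subseteq \{1,\dots,n\}$ of size $t$ together with constants $\{c_j\}_{j \in T}$, I would write the output in vector form as $y = M_S^{\top} x_S + z$, where $S = \{1,\dots,n\} \setminus T$, the entries $x_S \in \mathbb{F}_{2^k}^{n-t}$ are the free (uniform, independent) inputs, $M_S$ is the $(n-t) \times (n-t)$ submatrix of $M$ on the rows indexed by $S$, and $z_i = \sum_{j \in T} M_{ji}\, c_j$ is a fixed shift depending only on the constants.

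The crux of the argument is then that $M_S$ is itself a Vandermonde matrix on the distinct nonzero nodes $\{\alpha_j\}_{j \in S}$, whose determinant $\prod_{j, j' \in S,\, j < j'}(\alpha_{j'} - \alpha_j)$ is nonzero in $\mathbb{F}_{2^k}$. Hence $M_S^{\top}$ is invertible, so the affine map $x_S \mapsto M_S^{\top} x_S + z$ is a bijection of $\mathbb{F}_{2^k}^{n-t}$; since $x_S$ is uniform on that space, so is $y$, which is exactly the statement that $y_1,\dots,y_{n-t}$ are independent and uniformly random on $\mathbb{F}_{2^k}$. As $T$ and the constants were arbitrary, this exhibits a $(t,k)$-resilient function of output length $n-t$, matching the upper bound. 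I do not anticipate a substantive obstacle; the only point requiring care is that the Vandermonde determinant formula relies only on the nodes being pairwise distinct and thus remains valid in characteristic $2$, and the hypothesis $n \leq 2^k - 1$ is precisely tight for selecting $n$ such nonzero nodes.
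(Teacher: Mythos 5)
Your proof is correct and self-contained. The paper itself does not prove \Cref{thm:extract}; it cites it directly to Chor et al.~\cite{ChorGHFRS85} and treats it as a black box, so there is no in-paper argument to compare against. That said, your argument is essentially the standard one from the bit-extraction literature: the upper bound $B_k(n,t)\leq n-t$ by a domain-counting/surjectivity obstruction, and the matching lower bound by observing that every $(n-t)\times(n-t)$ row-submatrix of the $n\times(n-t)$ Vandermonde matrix is itself Vandermonde on distinct nonzero nodes, hence invertible, so fixing any $t$ inputs leaves an affine bijection on the remaining $\mathbb{F}_{2^k}^{\,n-t}$. Your remark that the Vandermonde determinant $\prod_{j<j'}(\alpha_{j'}-\alpha_j)$ is nonzero in characteristic $2$ as long as the nodes are distinct is the right point to flag. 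Two small things worth noting for completeness: the theorem states ``at least $n-t$'' of the $x_i$ are uniform, whereas you treat the case of exactly $t$ constants; the general case follows by conditioning on any extra uniform coordinates (or, equivalently, by running your argument with an $(n-t)\times(n-t)$ invertible submatrix of $M_S$ when $|S|>n-t$). Also, the claim $B_k(n,t)=n-t$ asserts both an upper and a lower bound, and you have supplied both; this is exactly the split the original Chor et al.\ argument uses.
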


For further applications of Vandermonde matrices in static byzantine settings, see \cite{BH08}.

%
%
\smallskip
\noindent \textbf{The Static $\to$ Mobile Simulation.} Assume we are given an $r$-round protocol $\mathcal{A}$ which is $f$-static-secure with round complexity $r \in \poly(n)$, and an integer parameter $t$. Our goal is to construct an $r' = 2r+t$ round algorithm which is $f' =\Theta((f \cdot t)/(r+t))$-mobile-secure. Let $\mathbb{F}_q$ be a finite field of size $q = 2^{O(\log{n})}$ and let $M$ be an arbitrary $(r+t) \times r$ Vandermonde matrix over the field $\mathbb{F}_q$. Assume all messages in $\mathcal{A}$ are encoded as elements of $\mathbb{F}_q$.

Algorithm $\cA'$ has two phases, the first phase consists of $\ell=r+t$ rounds, and the second has $r$ rounds. In the first phase, for every $j=1,\dots,r+t$ rounds, for each ordered neighboring pair $(u,v) \in E$, $u$ sends to $v$ a uniform random number $R_j(u,v) \in \mathbb{F}_q$. At the end of this phase, each node $u$ locally computes for every $1 \leq i \leq r$, the  values $K_i(u,v)$ and $K_i(v,u)$ for every neighbor $v$, defined as $K_i(u,v) = \sum_{j=1}^{r+t} M_{ji} \cdot R_j(u,v)$.\footnote{all $+,\times$ operations on field elements are done over the field $\mathbb{F}_q$ throughout the section.}
 The second phase simulates Alg. $\mathcal{A}$ in a round by round fashion, where the $i$-round messages of $\mathcal{A}$ are encrypted using the keys $\{K_i(u,v)\}_{(u,v)\in E}$: For $i=1,\ldots,r$ rounds, each node $u$ sends to each neighboring node $v$ the message $m'_i(u,v) = m_i(u,v) + K_i(u,v)$, where $m_i(u,v)$ is the message $u$ sends $v$ in the $i$'th round of $\mathcal{A}$. Locally, each $v$ decodes every received $m'_i(u,v)$ by applying $m_i(u,v) = m'_i(u,v) - K_i(u,v)$. Consequently, it is easy to see that each node $u$ obtains the exact same messages as in Alg. $\cA$.


\def\APPENDSECUREEAVES{
\paragraph{Proof of Theorem \ref{thm:simulation}.} As correctness and running time follow immediately, we focus on showing that Algorithm $\cA'$ is $f'$-mobile-secure. Consider a simulation of $\cA'$ and for every round $i \in \{1,\ldots, r'\}$, let $F_i$ be the set of edges that the adversary eavesdrops on that round, where $|F_i|\leq f'$. 
We partition the edges of $G$ into two classes $\Egood$ and $\Ebad$ depending on the total number of rounds that the given edge has been eavesdropped by the adversary. The input parameter $t$ serves as a threshold that determines the partitioning, as follows. For every edge $e$, let $R(e)=|\{F_i ~\mid~ e \in F_i, i \in \{1,\ldots, \ell\}\}|$ be the number of rounds in which $e \in F_i$ for $i \in \{1,\ldots, \ell\}$. An edge $e$ is denoted as \emph{good} if $R(e)\leq t$ and it is \emph{bad} otherwise. The set $\Egood$ (resp., $\Ebad$) consists of all good (resp., bad) edges. I.e.,  
$\Egood=\{e \in E ~\mid~ R(e)\leq t\}$ and $\Ebad=E(G)\setminus \Egood$. 
By an averaging argument, we have that $|\Ebad|\leq \lfloor (f'\cdot \ell)/(t+1) \rfloor \leq f$. 
For the special case of $t \geq 2fr$, note that since $|\Ebad|$ is an integer, then $|\Ebad|\leq \lfloor (f'\cdot \ell)/(t+1) \rfloor$. Observe that the condition $t \geq 2rf$ is equivalent to $t \geq (t+r)/(1+1/2f) = \ell/(1+1/2f)$, hence $\lfloor \frac{\ell}{t+1}f \rfloor = f$.  Therefore, $|\Ebad|\leq \lfloor (f'\cdot \ell)/(t+1) \rfloor = \lfloor f\ell/(t+1) \rfloor = f$. 

\begin{observation}
(i) For every $e=(u,v)\in \Egood$, it holds that $\{K_i(u,v)\}_{i \in \{1,\ldots, r\}}$ are distributed uniformly at random in $\mathbb{F}_q$. (ii) $|\Ebad| \leq f$.
\end{observation}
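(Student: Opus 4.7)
The plan is to handle the two parts separately. Part (ii) is a clean double-counting / averaging argument on the first phase, and part (i) reduces immediately to the Bit-Extraction theorem (Theorem \ref{thm:extract}) once we identify which $R_j(u,v)$ values look uniform in the adversary's view.

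For part (ii), I would bound $\sum_{e \in E} R(e)$ from above by the total eavesdropping budget in the first phase: in each of the $\ell = r+t$ rounds of phase~1 the adversary listens on at most $f'$ edges, so $\sum_e R(e) \leq f' \cdot \ell$. Since each bad edge contributes at least $t+1$ to this sum, we get $|\Ebad| \leq \lfloor f' \ell/(t+1) \rfloor$. Plugging in $f' = \lfloor f(t+1)/(r+t)\rfloor$ gives $|\Ebad| \leq f$, and the excerpt already sketches how to handle the alternative regime $t \geq 2fr$ with $f' = f$. This is essentially a routine calculation.

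For part (i), fix a good edge $e = (u,v)$. By definition $R(e) \leq t$, so among the $r+t$ first-phase keys $R_1(u,v),\dots,R_{r+t}(u,v)$, at most $t$ are observed by the adversary (across all rounds in which $e$ is eavesdropped). Because the adversary is oblivious to the randomness of the nodes and each $R_j(u,v)$ was drawn uniformly and independently from $\mathbb{F}_q$, the at least $r$ unobserved variables remain uniform and independent conditioned on the adversary's view; the observed ones can be treated as constants from that view. Now the derived keys $K_i(u,v) = \sum_{j=1}^{r+t} M_{ji} R_j(u,v)$ are exactly the output of the Chor et al.\ $(t,\log_2 q)$-resilient function with matrix $M$ applied to $(R_1(u,v),\dots,R_{r+t}(u,v))$. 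Since $M$ is an $(r+t) \times r$ Vandermonde matrix and at least $(r+t) - t = r$ of the inputs are uniform and independent, Theorem \ref{thm:extract} directly yields that $K_1(u,v),\dots,K_r(u,v)$ are i.i.d.\ uniform in $\mathbb{F}_q$.

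The main conceptual subtlety — and arguably the only non-mechanical step — is making sure that ``unobserved in phase~1'' is the right notion of ``hidden'' at the moment the keys are used in phase~2. Because the first phase finishes before any encrypted message is sent, and each $R_j(u,v)$ is independent of all other randomness in the protocol, the adversary's view at the end of phase~1 is precisely a deterministic function of $\{R_j(u,v) : (u,v,j) \text{ eavesdropped}\}$, so conditioning on this view leaves the remaining $R_j(u,v)$'s uniform and mutually independent. Once this is cleanly stated, part (i) is just Theorem \ref{thm:extract} applied edge by edge, and part (ii) is the averaging bound above; together they feed into the subsequent argument that the overall algorithm $\cA'$ inherits $f$-static security on $\Ebad$ and leaks nothing on $\Egood$.
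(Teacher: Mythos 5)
Your proposal is correct and follows essentially the same route as the paper: part (ii) is the same averaging/counting argument over the $\ell=r+t$ rounds of phase~1, and part (i) is the same direct application of Theorem~\ref{thm:extract} to the at-least-$r$ unobserved first-phase values of a good edge. The paper states this in two lines; your added justification that conditioning on the phase-1 view leaves the unobserved $R_j(u,v)$ uniform and independent is a correct (and welcome) elaboration of what the paper leaves implicit.
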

\begin{proof}
(i) follows by a direct application of Theorem \ref{thm:extract} and (ii) follows by a simple counting argument. The eavesdropper controls at most $f'$ edges in each round, and therefore in the first phase of $\ell$ rounds, it controls at most $f'\ell$ edges. Therefore at most $(f'\cdot \ell)/(t+1)$ edges are controlled for at least $t+1$ rounds. 
\end{proof}

\noindent Let $\mathcal{X}$ be the input domain of $\mathcal{A}$. Denote by $\MessageStatic{i}{\mathcal{B}(x)}{F}$ the messages sent over edges in $F$ at round $i$ of algorithm $\mathcal{B}$ with input $x$. Throughout, we treat the messages in algorithms $\cA$ and $\cA'$ as field elements in $\mathbb{F}_q$. Assume by contradiction that $\mathcal{A}'$ is not $f'$-mobile-secure. Then there exist some $F_1,\dots,F_{r'} \subseteq |E|$ of size at most $f'$, and inputs $x_1,x_2 \in \mathcal{X}$ for which
\begin{equation}
	\label{eq:modified_is_secure}
	\View_{G,\mathcal{A}'}((F_1,\dots,F_{r'}),X = x_1) \not\equiv \View_{G,\mathcal{A}'}((F_1,\dots,F_{r'}),X = x_2).
\end{equation}
	%
Let $P_i = F_i \cup \Ebad$ and $P=(P_1,\ldots, P_{r'})$. It therefore also holds:
\begin{equation*}
	\View_{G,\mathcal{A}'}((P_1,\dots,P_{r'}),X = x_1) \not\equiv \View_{G,\mathcal{A}'}((P_1,\dots,P_{r'}),X = x_2).
\end{equation*}

Hence in particular, there exist $\alpha_1,\dots,\alpha_{r'} \in (\mathbb{F}_q)^{\leq f}$ such that:
\begin{equation}
\Pr\left(\Message{1}{\mathcal{A}'(x_1)}{P} = \alpha_{1},\dots,\Message{r'}{\mathcal{A}'(x_1)}{P} = \alpha_{r'} \right) \neq \Pr\left(\Message{1}{\mathcal{A}'(x_2)}{P} = \alpha_{1},\dots, \Message{r'}{\mathcal{A}'(x_2)}{P} = \alpha_{r'}\right).
\end{equation}
That is, in our notation, $\alpha_i$ is a vector of $|P_{i}|\leq f$ field elements in $\mathbb{F}$, where the $j^{th}$ entry, namely $\alpha_{i,j}$, specifies the message sent over the $j^{th}$ edges in $P_{i}$ in the $i^{th}$ round. 
We also define two $r'$-sets of subset of edges $\Pgood =(F_1 \cap \Egood, \ldots, F_{r'} \cap \Egood)$, and $\Pbad=(\Ebad, \ldots, \Ebad)$. For any $r'$-edge set $W = (W_1 \subseteq P_1,\ldots, W_{r'} \subseteq P_{r'})$, Algorithm $\mathcal{B}$, input $x \in \mathcal{X}$ and indices $0 \leq i \leq j \leq r'$, let $\MessageEvent{\mathcal{B}(x)}{i}{j}{W}{}$ denote the event where:
\[\Message{i}{\mathcal{B}(x)}{W} = \alpha_i(W_i),\dots,\Message{j}{\mathcal{B}(x)}{W} = \alpha_j(W_j),\] where $\alpha_j(W_j)$ denotes the sub-vector of $\alpha_j$ restricted on the coordinates of $W_j$. Since the communication in the first $\ell$ rounds depends only on the private randomness of the nodes, and does not depend on the input $X$ of Alg. $\cA$, for any such $r'$-edge set $W$, it holds:
	
\begin{equation}
\label{eq:first_phase_indep_of_input}
	\Pr\left(\MessageEvent{\mathcal{A}'(x_1)}{1}{\ell}{W}{\alpha}\right) = \Pr\left(\MessageEvent{\mathcal{A}'(x_2)}{1}{\ell}{W}{\alpha}\right).
	\end{equation}

	By \Cref{thm:extract}, all keys of the good edges $\{K_i(u,v)\}_{\{u,v\} \in \Egood; i \in [r]}$ are i.i.d distributed in $\mathbb{F}_q$ conditioned on the transcript of messages exchanged over\footnote{Note that this holds despite the fact that $|P_i|$ might be larger than $f'$, as we only added $\Ebad$ to $F_i$.} $P_j$ in round $j$ for every $j \in \{1,\ldots, \ell\}$.  The security provided by the OTP guarantees that all the messages exchanged over $\Pgood$ in the second phase are distributed i.i.d in $\mathbb{F}_q$ even when conditioned on the observed transcript. Therefore,
\begin{flalign*}
	&\Pr\left(\MessageEvent{\mathcal{A}'(x_1)}{\ell+1}{r'}{\Pgood}{\alphagood} \stt \MessageEvent{\mathcal{A}'(x_1)}{1}{\ell}{\Pgood}{\alphagood}\right) = \Pr\left(\MessageEvent{\mathcal{A}'(x_2)}{\ell+1}{r'}{\Pgood}{\alphagood} \stt \MessageEvent{\mathcal{A}'(x_2)}{1}{\ell}{\Pgood}{\alphagood}\right).
	\end{flalign*}
	
By combining with Equation~(\ref{eq:first_phase_indep_of_input}), we get:

	\begin{equation}
	\label{eq:good_edges}
	\Pr\left(\MessageEvent{\mathcal{A}'(x_1)}{\ell+1}{r'}{\Pgood}{\alphagood}\right) = \Pr\left(\MessageEvent{\mathcal{A}'(x_2)}{\ell+1}{r'}{\Pgood}{\alphagood}\right).
	\end{equation}
	
Moreover, \Cref{thm:extract} implies that the observed transcript $\Pgood_{\ell+1},\dots,\Pgood_{r'}$ in the second phase is independent of the transcript of $\Pbad$ in the second phase, when conditioned on the view of the adversary in the first phase. Therefore, for $P = (P_1,\dots,P_{r'})$,
	\begin{equation}
	\label{eq:good_and_bad_edges_indep}
	\begin{aligned}
	&\Pr\left(\MessageEvent{\mathcal{A}'(x_1)}{\ell+1}{r'}{P}{\alpha} \stt \MessageEvent{\mathcal{A}'(x_1)}{1}{\ell}{P}{\alpha}\right) \\&= \Pr\left(\MessageEventStatic{\mathcal{A}'(x_1)}{\ell+1}{r'}{\Pbad}{\alphabad} \stt \MessageEventStatic{\mathcal{A}'(x_1)}{1}{\ell}{\Pbad}{\alphabad}\right) \cdot \Pr\left(\MessageEvent{\mathcal{A}'(x_1)}{\ell+1}{r'}{\Pgood}{\alphabad} \stt \MessageEvent{\mathcal{A}'(x_1)}{1}{\ell}{\Pgood}{\alphabad}\right)~.
	\end{aligned}
	\end{equation} 
	
\begin{claim}\label{cl:bad-equality}
$\Pr\left(\MessageEvent{\mathcal{A}'(x_1)}{\ell+1}{r'}{\Pbad}{\alphabad} \stt \MessageEvent{\mathcal{A}'(x_1)}{1}{\ell}{\Pbad}{\alphabad}\right) = \Pr\left(\MessageEvent{\mathcal{A}'(x_2)}{\ell+1}{r'}{\Pbad}{\alphabad} \stt  \MessageEvent{\mathcal{A}'(x_2)}{1}{\ell}{\Pbad}{\alphabad}\right)~.$
\end{claim}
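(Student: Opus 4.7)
The plan is to reduce the claim to the $f$-static-security of $\cA$ applied to the edge set $\Ebad$. The key observation is that $\Pbad_i = \Ebad$ for every round $i$, so the conditioning event $\MessageEvent{\cA'(x_b)}{1}{\ell}{\Pbad}{\alphabad}$ pins down all the random values $R_j(u,v)$ for $(u,v) \in \Ebad$ and $j \in \{1,\ldots,\ell\}$. Via the linear formula $K_i(u,v) = \sum_{j=1}^{\ell} M_{ji} R_j(u,v)$, this deterministically fixes each OTP key on a bad edge to some scalar $\kappa_i(u,v) \in \mathbb{F}_q$ that depends only on the first-phase portion of $\alphabad$. In particular, these scalars are the \emph{same} on both sides of the asserted equality.

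Next I would argue that the OTP layer is transparent to $\cA$: in the second phase each node $v$ decodes $m_i(u,v) = m'_i(u,v) - K_i(u,v)$, so $\cA$ effectively executes as in a fault-free run with input $x_b$, and the underlying $\cA$-transcript $\{m_i(u,v)\}$ is a function of $x_b$ and the internal randomness of $\cA$ alone, independent of the first-phase randomness used to sample the $R_j$'s. Consequently, conditioning on the bad-edge first-phase transcript does not alter the law of $\{m_i(u,v)\}_{(u,v)\in\Ebad}$, and the conditional probability on either side of the claim rewrites as
\[
\Pr_{\cA(x_b)}\!\left( m_i(u,v) = \alphabad_{\ell+i}(u,v) - \kappa_i(u,v) \ \text{for every}\ (u,v)\in\Ebad,\ i\in\{1,\ldots,r\} \right),
\]
with exactly the same constants $\kappa$ appearing on both sides.

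To close the argument I would invoke the $f$-static-security of $\cA$: since $|\Ebad|\leq f$ (as established by the averaging argument preceding the claim), the view $\View_{G,\cA}(\Ebad, x)$ of a static eavesdropper controlling $\Ebad$ is identically distributed under $x_1$ and $x_2$. This equates the two conditional probabilities displayed above and finishes the proof. The main obstacle I anticipate is the bookkeeping in the middle step, namely carefully isolating the randomness: one must formally split $\cA'$'s global coins into the first-phase field elements $\{R_j(u,v)\}$ and the internal coins of $\cA$, observe that only the former is constrained by the conditioning event, and invoke independence to push the conditioning through to a statement that is purely about $\cA$'s transcript on $\Ebad$, where $f$-static-security applies directly.
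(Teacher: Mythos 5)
Your proposal is correct and follows essentially the same route as the paper's proof: condition on the first-phase transcript over $\Ebad$ to fix the keys $K_i(u,v)$ on bad edges, observe the OTP relation $\MessageStatic{\ell+i}{\mathcal{A}'(x)}{(u,v)} = \MessageStatic{i}{\mathcal{A}(x)}{(u,v)} + K_i(u,v)$ turns the second-phase event into a statement purely about $\cA$'s transcript on $\Ebad$, and then invoke $f$-static-security of $\cA$ together with $|\Ebad|\leq f$. Your explicit separation of the first-phase coins $\{R_j(u,v)\}$ from $\cA$'s internal randomness, and the resulting rewrite with constants $\kappa_i(u,v)$, is just a more careful spelling out of the same chain of equalities the paper presents.
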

\begin{proof}	
Since $\mathcal{A}$ is $f$-static-secure and $|\Ebad|\leq f$, we have:
	\begin{equation}
	\label{eq:security_aprime}
	\Pr\left(\MessageEvent{\mathcal{A}(x_1)}{1}{r}{\Pbad}{\alphabad}\right) = \Pr\left(\MessageEvent{\mathcal{A}(x_2)}{1}{r}{\Pbad}{\alphabad}\right).
	\end{equation}
	
By the second phase of Alg. $\cA'$, for any edge $(u,v) \in E(G)$ and any $x \in \mathcal{X}$, we have $\MessageStatic{\ell+i}{\mathcal{A}'(x)}{(u,v)} = \MessageStatic{i}{\mathcal{A}(x)}{(u,v)} + K_i(u,v)$. Therefore,	
\begin{flalign*}	
	&\Pr\left(\MessageEventStatic{\mathcal{A}'(x_1)}{\ell+1}{r'}{\Pbad}{\alphabad} \stt \MessageEventStatic{\mathcal{A}'(x_1)}{1}{\ell}{\Pbad}{\alphabad}\right) = \Pr\left(\MessageEventStatic{\mathcal{A}(x_1)}{1}{r}{\Pbad}{\alphabad}\right) =\Pr\left(\MessageEventStatic{\mathcal{A}(x_2)}{1}{r}{\Pbad}{\alphabad}\right) \\&= 
	\Pr\left(\MessageEventStatic{\mathcal{A}'(x_2)}{\ell+1}{r'}{\Pbad}{\alphabad} \stt \MessageEvent{\mathcal{A}'(x_2)}{1}{\ell}{\Pbad}{\alphabad}\right)~,
	\end{flalign*}
where the second equality holds due to Equation~(\ref{eq:security_aprime}). 
\end{proof}

\noindent We next show the following which provides a contradiction to our assumption and establishes the security guarantees of Alg. $\cA'$. 
\begin{claim}\label{cl:bad-final}
$\Pr\left(\MessageEvent{\mathcal{A}'(x_1)}{1}{r'}{P}{\alpha}\right) =\Pr\left(\MessageEvent{\mathcal{A}'(x_2)}{1}{r'}{P}{\alpha}\right)$~.
\end{claim}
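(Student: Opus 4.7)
The plan is to verify the claim by applying the chain rule to the full-view probability on $P$, splitting it into the first-phase contribution and the second-phase contribution conditioned on the first phase, and then using Equation~(\ref{eq:good_and_bad_edges_indep}) to further factorize the conditional second-phase contribution into a $\Pbad$-part and a $\Pgood$-part. Each of the three resulting factors has already been shown to be independent of the input earlier in the proof, so multiplying the three equalities gives the claim and contradicts Equation~(\ref{eq:modified_is_secure}), thereby establishing the $f'$-mobile security of $\cA'$.

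Concretely, I would write, for each $x \in \{x_1, x_2\}$,
\[
\Pr\left(\MessageEvent{\mathcal{A}'(x)}{1}{r'}{P}{\alpha}\right) = \Pr\left(\MessageEvent{\mathcal{A}'(x)}{1}{\ell}{P}{\alpha}\right) \cdot \Pr\left(\MessageEvent{\mathcal{A}'(x)}{\ell+1}{r'}{P}{\alpha} \stt \MessageEvent{\mathcal{A}'(x)}{1}{\ell}{P}{\alpha}\right),
\]
and then apply Equation~(\ref{eq:good_and_bad_edges_indep}) to rewrite the conditional factor on the right as
\[
\Pr\left(\MessageEvent{\mathcal{A}'(x)}{\ell+1}{r'}{\Pbad}{\alphabad} \stt \MessageEvent{\mathcal{A}'(x)}{1}{\ell}{\Pbad}{\alphabad}\right) \cdot \Pr\left(\MessageEvent{\mathcal{A}'(x)}{\ell+1}{r'}{\Pgood}{\alphagood} \stt \MessageEvent{\mathcal{A}'(x)}{1}{\ell}{\Pgood}{\alphagood}\right).
\]
At this point three factors remain. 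The first is input-independent by Equation~(\ref{eq:first_phase_indep_of_input}) taken with $W=P$, since the first phase only exchanges private randomness. The $\Pbad$-factor is input-independent by Claim~\ref{cl:bad-equality}. The $\Pgood$-factor is input-independent by the conditional uniform-OTP argument displayed just above Equation~(\ref{eq:good_edges}): Theorem~\ref{thm:extract} guarantees that, conditioned on the first-phase transcript on $P$, the good-edge keys $\{K_i(u,v)\}_{\{u,v\}\in\Egood,\, i\in[r]}$ are i.i.d.\ uniform over $\mathbb{F}_q$, so the $\Pgood$-messages of the second phase are uniform regardless of the input. Multiplying the three equalities yields the statement of the claim.

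The main obstacle is not algebraic but rather in invoking the good/bad decomposition in precisely the right conditional form: Equation~(\ref{eq:good_and_bad_edges_indep}) conditions each of the two factors only on the matching first-phase sub-view (on $\Pbad$ or on $\Pgood$), rather than on the joint first-phase view on $P$. I would justify this step by noting that the bad-edge second-phase messages depend only on the input and on the bad-edge first-phase transcript, while the good-edge second-phase messages are OTP-encryptions with fresh uniform masks and are therefore independent of the bad-edge first-phase transcript; conditioning on the extra information consequently does not alter either distribution, which is exactly the content of Equation~(\ref{eq:good_and_bad_edges_indep}).
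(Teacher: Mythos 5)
Your proof is correct and yields the claim. The ingredients are the same ones the paper uses — Equation~(\ref{eq:first_phase_indep_of_input}), Equation~(\ref{eq:good_and_bad_edges_indep}), Claim~\ref{cl:bad-equality}, and the conditional OTP argument displayed just before Equation~(\ref{eq:good_edges}) — but your assembly differs. You use the chain rule to split off a single first-phase factor on all of $P$ and then factor only the conditional second-phase term into its $\Pbad$ and $\Pgood$ pieces; each of the resulting three factors matches a previously stated identity verbatim. The paper instead first derives the marginal identity Eq.~(\ref{eq:bad_edges}) (by combining Claim~\ref{cl:bad-equality} with Eq.~(\ref{eq:first_phase_indep_of_input}) on $\Pbad$) and then writes the full $[1,r']$ probability on $P$ as a product of the full $[1,r']$ probabilities on $\Pgood$ and $\Pbad$, citing Eq.~(\ref{eq:good_and_bad_edges_indep}). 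That two-factor marginal factorization additionally requires the first-phase sub-views on $\Pgood$ and $\Pbad$ to be independent (which holds because the first phase exchanges fresh, mutually independent randomness on each edge), a fact not literally contained in Eq.~(\ref{eq:good_and_bad_edges_indep}) and left implicit in the paper. Your three-factor decomposition sidesteps this by keeping the first-phase event on $P$ intact, so each cited lemma covers exactly what is needed; it is a modest but genuine tidying of the paper's argument rather than a different proof.
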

\begin{proof}
Combining Cl. \ref{cl:bad-equality} with Equality (\ref{eq:first_phase_indep_of_input}), we obtain
\begin{equation}
\label{eq:bad_edges}
\Pr\left(\MessageEventStatic{\mathcal{A}'(x_1)}{1}{r'}{\Pbad}{\alphabad}\right) = \Pr\left(\MessageEventStatic{\mathcal{A}'(x_2)}{1}{r'}{\Pbad}{\alphabad}\right).
\end{equation}
Therefore,
	\begin{flalign*}
	\Pr\left(\MessageEvent{\mathcal{A}'(x_1)}{1}{r'}{P}{\alpha}\right) &= \Pr\left(\MessageEvent{\mathcal{A}'(x_1)}{1}{r'}{\Pgood}{\alphagood}\right) \cdot \Pr\left(\MessageEventStatic{\mathcal{A}'(x_1)}{1}{r'}{\Pbad}{\alphabad}\right) \\&= 
	\Pr\left(\MessageEvent{\mathcal{A}'(x_2)}{1}{r'}{\Pgood}{\alphagood}\right) \cdot \Pr\left(\MessageEventStatic{\mathcal{A}'(x_1)}{1}{r'}{\Pbad}{\alphabad}\right) \\&=
	\Pr\left(\MessageEvent{\mathcal{A}'(x_2)}{1}{r'}{\Pgood}{\alphagood}\right) \cdot \Pr\left(\MessageEventStatic{\mathcal{A}'(x_2)}{1}{r'}{\Pbad}{\alphabad}\right) \\&=
	\Pr\left(\MessageEvent{\mathcal{A}'(x_2)}{1}{r'}{P}{\alpha}\right)~,
	\end{flalign*}
where the first equality follows from Equation~(\ref{eq:good_and_bad_edges_indep}), the second equality from Equation~(\ref{eq:good_edges}), and the third equality from Equation~(\ref{eq:bad_edges}). The claim follows.
\end{proof}

}
\APPENDSECUREEAVES
	\section{Resilience with Mobile Byzantine Adversaries} \label{sec:exactly_k_edges_improved}

Our goal in this section is providing $f$-mobile-resilient distributed algorithms for graphs with edge-connectivity of $\Omega(f\log n)$. Unlike our results in Sec. \ref{sec:mobile-Eavesdropper}, the $f$-mobile-resilient algorithms are not obtained by translating an $f$-static-resilient algorithm into an $f'$-mobile algorithm, but rather translating any given (non-faulty) distributed $r$-round \congest\ algorithm $\cA$ into an equivalent $r'$-round $f$-mobile resilient algorithm $\cA'$.


\subsection{Tools}\label{sec:tools}
Our simulation is based on the following three main tools. 
 
\smallskip
\noindent\textbf{Tool 1: Tree-Packing of $(k,\kDiam)$ Connected Graphs.} We need the following definition that has been introduced by \cite{ChuzhoyPT20}.  A graph $G=(V,E)$ is $(k,\kDiam)$-\emph{connected} iff for every pair $u,v \in V$, there are $k$ edge-disjoint paths connecting $u$ and $v$ such that the length of each path in bounded by $\kDiam$. Observe that $\kDiam$ might be considerably larger than the graph diameter. 

A tree-packing is a decomposition of the graph edges into near edge-disjoint spanning trees. In the distributed setting, an important parameter of a tree-packing is the depth of the trees, which determines the round complexity of our procedures.  

\begin{definition}[Low-Diameter Tree Packing]
For a given graph $G=(V,E)$, an $(k,\kDiam,\eta)$ tree packing $\mathcal{T}=\{T_1,\ldots, T_k\}$ consists of a collection of $k$ spanning trees in $G$ such that (i) the diameter of each tree $T_i$ is at most $\kDiam$ and (ii) each $G$-edge appears on at most $\eta$ many trees in $\mathcal{T}$ (i.e., the load of $\mathcal{T}$ is at most $\eta$).
When $\eta=O(\log n)$, we may omit it and simply write $(k,\kDiam)$ tree packing. 
\end{definition}

Chuzhoy et al. \cite{ChuzhoyPT20} presented an efficient randomized centralized algorithm for computing a $(k,\kDiam \cdot \log n)$ tree-packing for any given $(k, \kDiam)$-connected graph $G$. 

\begin{theorem}[Theorem 4 \cite{ChuzhoyPT20}]
\label{thm:packing-KD}
There is an efficient randomized centralized algorithm that given $(k,\kDiam)$-connected $n$-node graph $G$ computes a collection $\mathcal{T}=\{T_1,\ldots,T_{k}\}$ of $k$ spanning trees of $G$, such that, w.h.p., for each $1\le \ell\le k$, the tree $T_{\ell}\subset G$ has diameter $O(\kDiam\log n)$ and the load of the trees is $O(\log n)$. 
\end{theorem}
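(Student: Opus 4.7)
The plan is to prove the theorem via the multiplicative weights update (MWU) framework applied to a suitable LP relaxation, using the $(k,\kDiam)$-connectivity property to implement a low-weight, low-diameter oracle at each step. Concretely, I would consider the LP that maximizes $\sum_{T \in \cT^*} x_T$ subject to $\sum_{T \ni e} x_T \le 1$ for every edge $e$, where $\cT^*$ is the family of spanning trees of $G$ whose diameter is at most $c\,\kDiam\log n$ for some constant $c$. By LP duality, it suffices to exhibit, for any nonnegative edge weighting $w:E\to\mathbb{R}_{\ge 0}$ with $W=\sum_e w(e)$, a spanning tree $T\in\cT^*$ whose total weight is $O(W/k)$. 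Iterating the resulting MWU procedure for $\Theta(k)$ rounds, with multiplicative updates $w(e)\leftarrow w(e)\cdot(1+\varepsilon)$ on every used edge and a constant $\varepsilon$, will yield the desired collection of $k$ trees by the standard accounting: the potential $W_t$ grows by at most a factor $(1+O(\varepsilon)/k)$ per iteration, while an edge used in $s$ trees accumulates weight $(1+\varepsilon)^s$, so $s=O(\log n)$ follows.

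The main work is implementing the oracle, which is where the $(k,\kDiam)$-connectivity enters. Given $w$, I would fix an arbitrary root $r$ and invoke the hypothesis to produce, for every $v \in V$, a family of $k$ edge-disjoint $r$-to-$v$ paths of hop-length at most $\kDiam$. An averaging argument shows that among these $k$ paths, at least one has $w$-weight $\le W/k$; call it $P_v$. The union $\bigcup_v P_v$ is a low-diameter connected subgraph (every vertex is within $\kDiam$ hops of $r$), so it contains a spanning tree of diameter $\le 2\kDiam$. To bound the total tree weight, I plan to replace the naive union (which costs $O(nW/k)$ and is too large) with a randomized tree-embedding step: sample each edge $e$ independently with probability $p(e)\propto w(e)^{-1}\cdot\log n / k$, show using Karger-style sampling together with the short-path property that the sampled subgraph is $(1,O(\kDiam\log n))$-connected w.h.p., and take a BFS tree from $r$ inside the sample. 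Combining the expected weight bound on the sample with the $(k,\kDiam)$-connectivity yields an oracle output of weight $O(W\log n/k)$ and diameter $O(\kDiam\log n)$, which is exactly what MWU requires (absorbing the $\log n$ factor into the update rate).

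The main obstacle I anticipate is the oracle's diameter control: it is straightforward to produce \emph{some} spanning tree of small $w$-weight (e.g., a minimum weight spanning tree), but the diameter of such a tree can be as large as $n$, not $O(\kDiam\log n)$. The short-paths guarantee from $(k,\kDiam)$-connectivity is what rescues us, but one must carefully couple the BFS construction with the sampling so that the edges selected by BFS do not accumulate excessive weight. The cleanest way I see to resolve this tension is to use the randomized sparsifier together with a pessimistic union bound over the $n$ vertices, charging the diameter blow-up to the $\log n$ factor in the statement. Once the oracle is correct, the MWU analysis is routine and gives both the diameter bound $O(\kDiam\log n)$ per tree and the load bound $O(\log n)$ per edge, each with high probability, completing the proof.
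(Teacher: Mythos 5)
A preliminary remark: the paper does not prove this statement at all — it is imported verbatim from \cite{ChuzhoyPT20} — so your proposal should be measured against the known proof there and against the paper's Appendix~\ref{sec:app_tree_packing}, which establishes a related \emph{distributed} packing result by a greedy/multiplicative-weights scheme whose analysis \emph{assumes} the existence of the low-diameter packing it is trying to recover. Your MWU skeleton (total weight grows by a factor $1+O(\varepsilon\log n)/k$ per round while an edge used $s$ times accumulates weight $(1+\varepsilon)^s$, giving $s=O(\log n)$) is sound and is essentially the accounting of Theorem~\ref{thm:packing_exponential}. The problem is entirely in your oracle.

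The oracle as described fails. You sample each edge with probability $p(e)\propto w(e)^{-1}\log n/k$ and assert that the sample is $(1,O(\kDiam\log n))$-connected w.h.p.\ "using Karger-style sampling together with the short-path property." Karger-type arguments need (near-)uniform sampling at rate $\Omega(\log n/k)$ across \emph{every} cut; with weight-biased probabilities this breaks precisely where MWU needs it to hold. After a few multiplicative updates, the $k$ edges crossing a frequently used cut all have weight $\omega(\log n)$, so each survives with probability $o(1/k)$ and the entire cut is empty in the sample with probability $1-o(1)$: the sampled subgraph is disconnected and no BFS tree exists. This is not a union-bound technicality — biasing the sample against heavy edges is structurally incompatible with keeping every cut occupied, yet avoiding heavy edges is the whole purpose of the oracle. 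The route that actually works (and is the one taken in \cite{ChuzhoyPT20}) is to sample \emph{uniformly} at rate $\Theta(\log n/k)$, equivalently to randomly partition $E$ into $\Theta(k/\log n)$ classes; the technical heart is then the lemma that each class is connected with diameter $O(\kDiam\log n)$ w.h.p.\ (plain cut-counting gives connectivity but says nothing about diameter; one needs the length-$\kDiam$ edge-disjoint paths together with a hop-by-hop union bound over vertices). Lightness then comes for free by \emph{averaging over the classes} — some class has $w$-weight $O(W\log n/k)$ — rather than from biasing the sampling; indeed, once each class is a low-diameter connected spanning subgraph you can dispense with MWU altogether and output one BFS tree per class, repeating the partition $O(\log n)$ times to obtain $k$ trees of load $O(\log n)$. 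The uniform-sampling diameter lemma is the missing ingredient you must supply, and your current sampling rule cannot be patched to avoid it.
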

It is well-known \cite{N61} that any $k$-edge connected graph, contains a $(\lfloor k/2 \rfloor ,n)$-tree packing. Moreover, Karger sampling \cite{Karger94} obtains a $(\widetilde{O}(k),n)$ tree packing (i.e., by randomly partitioning the edges into $\widetilde{O}(k)$ graphs). The main advantage of Thm. \ref{thm:packing-KD} is in providing low-diameter trees.  

In \Cref{sec:app_tree_packing}, we show a distributed algorithm for computing the tree packing of Theorem \ref{thm:packing-KD} in $\widetilde{O}(k\cdot \kDiam^2)$ rounds, in the fault-free setting, using the techniques of \cite{Ghaffari15}\footnote{This claim is implicitly shown in \cite{Ghaffari15}, but since their packing is optimized under different parameters, we provide the complete analysis for completeness.}. The algorithm presented in \Cref{sec:app_tree_packing} is useful when handling general graphs. For the latter, the trees are computed as part of a trusted preprocessing step. 

The main applications of this paper, namely, compilers for expander graphs and for the \congc\ model, do \emph{not} require such trusted pre-processing steps and rather compute the tree packing itself in the byzantine setting\footnote{This is trivial in the \congc\ model, but more challenging for expander graphs in the \congest\ model.}.  For the sake of the applications of Sec. \ref{sec:app-simple}, we consider a weaker notion of tree-packing, which allows for at most $0.1$ fraction of the subgraphs to be of arbitrary structure. 

\begin{definition}\label{def:weakTP}[Weak Tree-Packing]
A collection of $k$ $G$-subgraphs $\mathcal{T}=\{T_1,\ldots, T_k\}$ is a \emph{weak} $(k,\kDiam,\eta)$ tree packing if (i) at least $0.9k$ of the subgraphs in $\mathcal{T}$ are \emph{spanning trees} of diameter at most $\kDiam$, rooted at a common root $v_r$; and (ii) the load of 
$\mathcal{T}$ is at most $\eta$. 
\end{definition}


\smallskip
\noindent\textbf{Tool 2: Compilers against Bounded Adversarial Rate.} We use known compilers that can successfully simulate a given fault-free algorithm in the $1$-\congest\ model, as long as the adversary corrupts a bounded \emph{fraction} of the total \emph{communication}. The latter is referred to as \emph{communication-error-rate}. Specifically, we use the compilers of Rajagopalan and Schulman \cite{RajagopalanS94}, that we denote hereafter by \emph{RS-compilers}. While in their original work \cite{RajagopalanS94} the RS-compilers proved useful against stochastic noisy channels, Hoza and Schulman \cite{HozaS16} later observed that it is in-fact resilient against \emph{adversarial} corruptions.

\begin{theorem}[Slight restatement of Proposition 1 of \cite{HozaS16} (see also \cite{RajagopalanS94})]
\label{thm:less_than_one_error}

Given a network on $m$ edges on which a $1$-\congest\ protocol $\Pi$ runs in $r$ rounds (in the fault-free setting), there is an $r'$-round $1$-\congest\  protocol $\Pi_{\RS}$, where $r' \in [r, \RStime \cdot r]$ for some constant $\RStime \geq 1$, which simulates $\Pi$ by sending $1$-bit messages over \emph{all} graph edges in \emph{every} round with the following guarantee: if the adversary corrupts at most $1/(\RSthresh\cdot m)$-fraction of the total communication of the protocol for a constant $\RSthresh > 1$, the output distributions of $\Pi$ and $\Pi_{\RS}$ are the same. 
\end{theorem}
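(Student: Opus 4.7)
The plan is to build a network generalization of Schulman's two-party tree-code protocol. For each edge $e$ of the network I would associate a constant-rate tree code with sufficiently large distance, used to encode the transcript that the fault-free protocol $\Pi$ would induce on $e$ over its $r$ rounds. Each node maintains, for every incident edge, a decoded guess of the transcript so far, from which it infers which round of $\Pi$ it believes it is currently simulating on that edge and what bit to transmit next. In each round of $\Pi_{\RS}$ a node emits exactly one bit on every incident edge: the next tree-code symbol derived from its current guessed transcript together with the local randomness of $\Pi$.

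For the analysis I would introduce a global potential function $\Phi$ that sums over all $2m$ directed edges the number of ``correctly simulated'' rounds reached on that edge. In a round with no adversarial interference, every directed edge advances by one, so $\Phi$ grows by $2m$. Conversely, by the distance property of the tree code, any decoding mistake on a particular edge at depth $d$ forces the adversary to have corrupted $\Omega(d)$ of the bits previously transmitted on that edge. I would then charge each unit of potential loss to a distinct corrupted transmission, so that after $r'$ rounds the cumulative loss is at most a constant times the total number of corrupted bits across the network.

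Combining these two bookkeeping facts, if $r' = \RStime \cdot r$ for a sufficiently large constant $\RStime$ and the total number of corrupted bits is at most $r'/\RSthresh$ (which is exactly a $1/(\RSthresh \cdot m)$-fraction of the $r' \cdot m$ bits transmitted), then the net potential at the end is at least $2m \cdot r$, forcing every directed edge to have reached simulated round $r$ with its correct transcript. Since the simulated messages and the per-node randomness match those of a fault-free execution bit-for-bit, the induced output distribution of $\Pi_{\RS}$ is identical to that of $\Pi$.

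The main obstacle I expect is decoupling the simulations along the different incident edges of a single node: unlike the two-party setting, one node may be at different simulated rounds on different edges simultaneously, and a rewind on one edge should not cascade to its other edges. I would address this by making each edge's tree code and next-bit rule depend only on that edge's locally decoded transcript, so that the only coupling between edges is through the shared adversarial corruption budget, and then verify that the potential argument handles this budget correctly by pointing out that even an adversary who concentrates all of its corruptions on a single edge still only pays for a bounded amount of potential loss there, while every other edge continues to accumulate progress at rate $2$ per round.
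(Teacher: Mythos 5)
The paper does not prove this statement at all: it is imported as a black box from Hoza--Schulman (Proposition~1 of \cite{HozaS16}), which in turn repackages the Rajagopalan--Schulman network coding theorem \cite{RajagopalanS94}. So you are attempting to reprove the cited result from scratch. Your high-level ingredients (per-edge tree codes, a global potential summing per-edge progress, charging potential loss to corrupted transmissions via the tree-code distance property) are indeed the ingredients of the actual RS/HS argument.

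However, your proposed resolution of the ``main obstacle'' contains a genuine gap that breaks the potential argument as you state it. You propose to make ``each edge's tree code and next-bit rule depend only on that edge's locally decoded transcript,'' so that different incident edges of a node can sit at different simulated rounds independently. This is not possible: the round-$i$ message of $\Pi$ that $u$ must send on edge $e$ is a function of $u$'s input, its randomness, and the messages $u$ received on \emph{all} of its incident edges in rounds $1,\dots,i-1$ --- not only on $e$. A node therefore has a single simulated state, and a decoding error on any one incident edge corrupts (or stalls) what the node sends on every other incident edge, which in turn corrupts its neighbors' states, and so on. Consequently your claim that ``every other edge continues to accumulate progress at rate $2$ per round'' while the adversary concentrates corruptions on a single edge is false; errors cascade through nodes, and this cascading is precisely the technical difficulty that distinguishes the network setting from the two-party setting. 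The Rajagopalan--Schulman analysis does not decouple the edges; it defines the potential so that the total loss caused by an error, \emph{including} the stalling it induces at the node and the propagation to neighbors, is still bounded by a constant per corrupted symbol, and this is where most of the work in \cite{RajagopalanS94} lies. Without an argument of that kind, your bookkeeping step ``cumulative loss is at most a constant times the total number of corrupted bits'' is unsupported. (A secondary, minor point: you should also say that constant-rate, constant-distance tree codes exist --- this is nonconstructive but known --- since the whole construction rests on it.)
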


Our compilation scheme is based on applying the RS compiler on a collection of $k$ algorithms, each running over a distinct tree in a given $(k,\kDiam)$ tree-packing $\mathcal{T}$. To save on the round complexity (i.e., avoiding a multiplicative dependency in $k$), we provide the following scheduling lemma which allows us to run these $k$ algorithms in parallel, and enjoy the fact that the trees have a bounded level of overlap. The guarantee is that the vast majority of these algorithms end correctly in the presence of an $f$-mobile byzantine adversary (or, more strongly, in the presence of round-error rate $f$). 
Throughout, we assume that all nodes hold an upper bound on the runtime of the individual RS-algorithms, as well as an upper bound on the overlap of the trees. 

\begin{lemma}[$f$-Mobile (Almost) Resilient Scheduling of RS-Compiled Algorithms]
\label{lem:scheduler_security}
Let $\mathcal{G}=\{G_1,\ldots, G_s\}$ be an ordered $G$-subgraph family of load $\eta$. Let $\cA_1,\ldots, \cA_s$ be a collection of RS-algorithms, such that each $\cA_j$ sends messages over all $G_{j}$-edges in each of its $r_j$ rounds, where $r_j \in [r, \RStime \cdot r]$ for some fixed $r$ known to all nodes. Then, assuming a round-error rate of $f$, all $s$ algorithms can be run in parallel within $\leq\RStime \cdot r\eta$ rounds, such that the following holds: all but at most $\RStime \cdot \RSthresh \cdot f\cdot \eta$ algorithms end correctly (where $\RSthresh,\RStime$ are the constants from Thm. \ref{thm:less_than_one_error}). 
\end{lemma}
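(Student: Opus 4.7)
The plan is to first schedule the $s$ RS-compiled algorithms in parallel so that every (edge, global round) slot is used by at most one algorithm, and then to charge the adversary's budget disjointly across the algorithms and apply the per-algorithm error threshold from Theorem~\ref{thm:less_than_one_error}.

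\textbf{Parallel scheduling.} I would view each $\cA_j$ as a distributed subroutine of dilation at most $\RStime\cdot r$ that sends one bit across every edge of $G_j$ in every one of its rounds. Because the load of $\mathcal{G}$ is $\eta$, every edge of $G$ carries at most $\eta\cdot\RStime\cdot r$ messages summed over all $\cA_j$'s. I would combine the $s$ subroutines using a pipelining/random-delay scheduler (e.g., Theorem~\ref{thm:delay} applied to the $\cA_j$'s, using the ordering of $\mathcal{G}$ as a deterministic tie-breaker whenever several $\cA_j$'s compete for the same edge in the same step) into a single joint schedule of length $r'=\RStime\cdot r\cdot\eta$, with the key property that every $(e,t)$ slot in this joint schedule is occupied by the message of at most one $\cA_j$. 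This last property is what lets each adversarial corruption be charged uniquely to a single algorithm.

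\textbf{Charging and counting.} Under a round-error rate of $f$, the adversary corrupts at most $f\cdot r'=\RStime\cdot f\cdot r\cdot\eta$ messages in total over the $r'$ global rounds; by the disjointness property of the schedule each corrupted message is charged uniquely to one $\cA_j$. Now fix $j$. The protocol $\cA_j$ communicates a total of $r_j\cdot|E(G_j)|$ bits, and Theorem~\ref{thm:less_than_one_error} guarantees it produces the correct output distribution unless the adversary corrupts more than a $1/(\RSthresh\cdot|E(G_j)|)$-fraction of them, i.e., more than $r_j/\RSthresh\ge r/\RSthresh$ of them. Hence every \emph{failing} $\cA_j$ absorbs at least $r/\RSthresh$ of the adversary's corruptions. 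Writing $N$ for the number of failing algorithms, the two bounds combine to give $N\cdot (r/\RSthresh)\le \RStime\cdot f\cdot r\cdot\eta$, so $N\le \RStime\cdot\RSthresh\cdot f\cdot\eta$, exactly as claimed.

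\textbf{Main obstacle.} The nontrivial step is the scheduling: one has to argue that $\RStime\cdot r\cdot\eta$ global rounds really suffice even though each algorithm must execute its $r_j$ rounds in order and must use \emph{all} the edges of $G_j$ in \emph{each} of its rounds. A naive ``assign each algorithm a fixed slot per super-round of length~$\eta$'' argument runs into a hypergraph-coloring obstruction (the chromatic number of the conflict hypergraph can exceed $\eta$), so I would instead allow the rounds of different algorithms to be interleaved flexibly inside each super-round, relying on the load bound $\eta$ together with the global deterministic ordering of $\mathcal{G}$ to route each message without ever inflating the total length beyond $\RStime\cdot r\cdot\eta$. Once the scheduling is in place, the rest is the one-line counting argument above.
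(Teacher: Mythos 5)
Your counting argument in the second paragraph is exactly the paper's: charge each failing $\cA_j$ at least $r/\RSthresh$ corruptions, compare against the total budget $f\cdot\RStime\cdot r\cdot\eta$, and divide. That part is correct. The issue is the scheduling step, where you have both misidentified the difficulty and reached for a tool that does not actually deliver the stated bound.

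First, the obstruction you name is not there. You worry that ``assign each algorithm a fixed slot per super-round of length $\eta$'' requires a proper coloring of a conflict (hyper)graph whose chromatic number could exceed $\eta$. But no global slot assignment is needed, precisely because of the hypothesis that each $\cA_j$ sends a message over \emph{every} edge of $G_j$ in \emph{every} one of its rounds. That hypothesis means the messages of round $i$ of $\cA_j$ on different edges are all determined at the start of the round and can be transmitted in any order within a phase, independently per edge. So the scheduler simply runs in $\lceil \RStime\cdot r\rceil$ phases of $\eta$ rounds each; in phase $i$, every algorithm with $r_j\ge i$ advances exactly its $i$-th round, and each directed edge $e$ serves the (at most $\eta$) algorithms $\cA_j$ with $e\in G_j$ and $r_j\ge i$ in some \emph{local}, per-edge order over the $\eta$ sub-rounds of the phase. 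Two different edges of the same $G_j$ may serve $\cA_j$ in different sub-rounds of the phase; nothing breaks because the next round of $\cA_j$ only starts in the next phase, by which time all of its round-$i$ messages have been delivered. No coloring of a conflict structure arises, and the total is deterministically $\le \RStime\cdot r\cdot\eta$ rounds with every $(e,t)$ slot used by at most one algorithm (which is what your charging needs).

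Second, Theorem~\ref{thm:delay} (the random-delay scheduler) is the wrong tool here: it yields $\widetilde{O}(\congestion+\dilation)$ rounds, i.e.\ $\widetilde{O}(\eta r)$, which hides polylog factors and only succeeds w.h.p., so it does not give the clean $\le\RStime\cdot r\cdot\eta$ deterministic bound claimed by the lemma (which is needed exactly, since the constants $\RStime,\RSthresh$ reappear in downstream connectivity requirements). It is also designed for algorithms of possibly irregular per-edge communication, which is overkill given the ``all edges, every round'' structure. Replace it with the simple phase-based scheduler above, keep your counting argument verbatim, and the proof is complete.
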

\def\APPENDSCHEDRS{
\begin{proof}
Scheduler $\RSScheduler$ runs in phases of $\eta$ rounds. In phase $i$, it exchanges all messages of the $i^{th}$ round of all $k$ algorithms $\{\cA_j\}_{j | r_j \geq i}$ as follows. For every directed edge $e=(u,v)$, let $j_1(e),\ldots, j_{q_e}(e)$ be the set of indices such that $e \in G_{j}$ and $r_j \geq i$, for every $j \in \{j_1(e),\ldots, j_{q_e}(e)\}$. By definition we have that $q_e \leq \eta$ for every $e$. Then in the $j^{th}$ round of phase $i$, $u$ sends $v$ the message $m_{j,i}(u,v)$, where $m_{j,i}(u,v)$ is the message that $u$ sends to $v$ in round $i$ of Alg. $\cA_j$. (Since $\cA_j$ sends messages over all edges in $G_j$ in every round, this message is well-defined). This completes the description of the scheduler.

We say that algorithm $\mathcal{A}_j$ ends \emph{correctly} if the communication-error-rate among the messages of $\mathcal{A}_j$, in the scheduled algorithm $\RSScheduler$, is below the RS-threshold, namely, $1/(\RSthresh|E(G_j)|)$. By \Cref{thm:less_than_one_error}, we get that in such a case, the RS-compilation is indeed successful.  We next turn to show that all but $\RStime \cdot \RSthresh \cdot f\cdot \eta$ algorithms end correctly.

The total communication of each algorithm $\cA_j$ is $r_j|E(G_j)|$. Therefore, if the total number of corrupted $\cA_j$-messages in Alg. $\RSScheduler$ is below $r/\RSthresh \leq r_j|E(G_j)|/(\RSthresh|E(G_j)|)$, then $\cA_j$ is correct. 
As the round complexity of $\RSScheduler$ is at most $\RStime \cdot r\cdot\eta$, the adversary can corrupt at most $f \cdot \RStime \cdot r \cdot\eta$ messages. By an averaging argument, it holds that for all but $\RStime \cdot \RSthresh \cdot f\cdot \eta$ many algorithms have at most $r/\RSthresh$ corrupted messages that are associated with it, hence experiencing a communication-error-rate below the RS-threshold. We conclude that all but $\RSthresh \cdot f\cdot \eta$ many algorithms end correctly.
\end{proof}
}\APPENDSCHEDRS

\smallskip
\noindent\textbf{Tool 3: Sketches for $\ell_0$-Sampling.}  
Sketching is a powerful compression tool which allows one to store aggregate information of a data stream using small space. To introduce the concept of sketching, consider the following setting. For some size parameter $n \in \mathbb{N}$, let $U = \{1,\dots,\poly(n)\}$ be a universe of elements, and let $\sigma$ be a multi-set of $O(\poly(n))$ tuples $\{(e_i,f_i)\}_{i \in [O(\poly(n))]}$, where $e_i \in U$ is an element in the universe and $f_i \in \{-\poly(n),\dots,\poly(n)\}$ is an integer referred to as the change in frequency. The \emph{frequency} of an element $e \in U$ in the multi-set $\sigma$ is defined as the sum of its changes of frequency, i.e. $f(e) = \sum_{i:e_i = e} f_i$. Denote by $N(\sigma)$ the set of elements in $U$ with non-zero frequency in $\sigma$, i.e. $N(\sigma) = \{e \in U \stt f(e) \neq 0\}$. Informally, the act of $\ell_0$-sampling a multi-set $\sigma$ is choosing a uniform element from $N(\sigma)$, or in other words, choosing a uniformly random non-zero frequency element from $\sigma$.

In the context of this paper, an $\ell_0$-sampling sketch $\tau$ of $\sigma$ is a randomized string of size $\polylog(n)$, which can be constructed using a multi-set $\sigma$ and randomness $R$ of $\polylog(n)$ bits, and on which two operations are defined: $\Query$ and $\Merge$. A  $\Query$ operation receives as input a sketch $\tau$ of a multi-set $\sigma$, and outputs a random element in $N(\sigma)$ w.h.p., where each element in $N(\sigma)$ is chosen with probability $\frac{1}{N(\sigma)} \pm \frac{1}{\poly(n)}$ (the randomness is taken entirely over the selection of $R$). The \emph{merge} operation receives as input two sketches $\tau_1,\tau_2$ constructed using the same randomness $R$ on the multi-sets $\sigma_1$,$\sigma_2$ respectively, and its output is a sketch $\sigma'$ (of $\polylog{n}$ bits) which is equal to a sketch obtained using randomness $R$ and the concatenated multi-set $\sigma_1 \cup \sigma_2$.

\begin{theorem}[$\ell_0$-sampler, rephrasing of \cite{CF14} Corollary 1]
	\label{thm:l_0_sampler}
	There exists an algorithm $L$, which given a size parameter $n \in 	\mathbb{N}$, a multi-set $\sigma$ of size $\poly(n)$ and $R$, a string of $O(\log^4 n)$ random bits, outputs a randomized string $L(n,\sigma,R)$ called a $\ell_0$-sketch, which is encoded using $O(\log^4{n})$-bits. The following operations are defined on the sketch:
	\begin{itemize}
		\item $\Query$: a deterministic procedure in which given a (randomized) sketch $\tau = L(n,\sigma,R)$ outputs an element $e \in N(\sigma)$, where any given $e \in N(\sigma)$ is sampled with probability at least $1/N(\sigma) - 1/n^c$, and probability at most $1/N(\sigma) + 1/n^c$ for any constant $c > 1$, w.h.p. (randomness taken over $R$). 
		\item $\Merge$: a deterministic procedure in which given two sketches $\tau_1 = L(n,\sigma_1,R) ,\tau_2 = L(n,\sigma_2,R)$ created using the same randomness $R$ on multi-sets $\sigma_1,\sigma_2$ respectively, outputs the sketch $\tau' = L(n,\sigma_1 \cup \sigma_2,R)$.
	\end{itemize}

\end{theorem}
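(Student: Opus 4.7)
The plan is to realize the $\ell_0$-sampler by combining geometric subsampling with a linear $s$-sparse recovery sketch, essentially following Cormode--Firmani. The seed $R$ will be parsed as (i) the description of a $k$-wise independent hash function $h:U\to[N]$ with $N=\poly(n)$ and $k=\Theta(\log n)$, which can be stored and evaluated within the bounds of \Cref{lem:hash}, and (ii) the description of a linear code used to instantiate sparse recovery. At each of $L=O(\log N)=O(\log n)$ levels $j$, the sketch implicitly represents the frequency vector $f_j$ supported on $S_j:=\{e\in U:h(e)<N/2^j\}$ via an $s$-sparse recovery sketch over a field $\mathbb{F}_p$ with $p=\poly(n)$ and $s=\Theta(\log n)$. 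Concretely, I would use a Vandermonde/Reed--Solomon syndrome sketch, storing $\sum_{e\in S_j}\alpha_i^e\, f_j(e)$ for $i\le s$; whenever $|\supp(f_j)|\le s$, Berlekamp--Massey exactly recovers $f_j$ from these syndromes. The total size is $L\cdot s\cdot O(\log p)+|R|=O(\log^4 n)$ bits, matching the claim.

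The $\Merge$ operation then follows immediately from linearity: if two sketches share the same $R$, they use the same $h$ and the same evaluation points $\alpha_i$, so their stored syndromes may be added coordinate-wise to obtain the syndromes of the concatenated stream, level by level. For $\Query$, I would locate a level $j^\star$ at which the subsampled support $N(\sigma)\cap S_{j^\star}$ has size in $[1,s]$, decode its sparse recovery sketch, and output a uniformly random element of the recovered set. Existence of such a level holds with high probability since $|N(\sigma)\cap S_j|$ is a sum of $|N(\sigma)|$ indicators with expectation halving in $j$, and $k$-wise independence of $h$ yields the required $k$-th moment tail bound.

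The main obstacle, as is typical for $\ell_0$-samplers, is upgrading a crude multiplicative $(1\pm\epsilon)$ marginal guarantee into the near-uniform bound $1/|N(\sigma)|\pm 1/n^c$ demanded by the statement. Pairwise independence of $h$ is insufficient here; boosting the independence parameter to $k=\Theta(c\log n)$ makes the event ``$e\in N(\sigma)\cap S_{j^\star}$'' nearly uniform over $e\in N(\sigma)$ up to additive $1/n^c$ slack, by a $k$-th moment concentration of the level sizes combined with a swap between the choice of $j^\star$ and the choice of the reported element within the decoded set. Finally, conditioning on successful sparse-recovery decoding at $j^\star$ is a $1-1/n^c$ event, so it perturbs the marginals by at most an additional $1/n^c$; adjusting constants in $k$ and $s$ absorbs these losses and produces the per-element probability range $|N(\sigma)|^{-1}\pm n^{-c}$ claimed in the theorem, while preserving the $O(\log^4 n)$ size budget.
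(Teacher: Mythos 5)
The paper does not actually prove this theorem: it is stated as a rephrasing of Corollary 1 of Cormode–Firmani \cite{CF14} and is used as a black box. What you have done, in effect, is reconstruct the standard $\ell_0$-sampler construction that underlies \cite{CF14} — geometric subsampling by a $k$-wise independent hash, an $s$-sparse recovery sketch (Vandermonde/Reed–Solomon syndromes plus Berlekamp–Massey) at each subsampling level, level-wise linearity for $\Merge$, and decoding the first level whose residual support has size in $[1,s]$ for $\Query$. The size arithmetic checks out against the $O(\log^4 n)$ budget, and $\Merge$-correctness is indeed immediate from the linearity of syndromes with a shared seed.

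The one place where your sketch is genuinely thin, and which is the real technical crux of any $\ell_0$-sampler with the near-uniformity guarantee $\bigl|\Pr[\text{output}=e]-1/|N(\sigma)|\bigr|\le n^{-c}$, is the ``swap between the choice of $j^\star$ and the choice of the reported element.'' The problem is that $j^\star$ is itself a function of $h$, so conditioning on ``$e\in S_{j^\star}$'' is conditioning on a complicated event that changes the distribution of $h$ and of $|N(\sigma)\cap S_{j^\star}|$; it is not obvious that $k$-th moment concentration of the level sizes alone cancels this bias to within $n^{-c}$, especially when $|N(\sigma)|$ is small (where $n^{-c}$ is a very strong relative guarantee). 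The standard resolutions in the literature are more careful than a concentration-plus-swap: either (a) decode the sparse-recovery sketch \emph{exactly} at the chosen level and argue via an exchangeability/symmetry argument that is only broken by the failure event and by limited independence, tying the error to $2^{-\Theta(k)}$ and to the sparse-recovery failure probability; or (b) use an explicitly min-wise-flavored selection rule (report the element of $N(\sigma)\cap S_{j^\star}$ with smallest hash) and invoke approximate min-wise independence of $k$-wise families with $k=\Theta(\log(1/\epsilon))$. Without one of these, you do not quite have the claimed additive $n^{-c}$ bound. Since the paper offloads this to \cite{CF14}, your outline is in the right spirit, but if you wanted to present this as a self-contained proof you would need to pick one of these routes and carry the conditioning through carefully, and also make $s$ (and not only $k$) scale with $c$ so that both the ``no good level'' event and the recovery failure are $\le n^{-c}$.
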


In our context, the universe $U$ is the set $U = \{0,1\}^{O(\log{n})}$, i.e. the collection of all possible messages of size $O(\log{n})$. A multi-set $\tau$ contains each string with either zero, positive, or negative frequency. Intuitively, use the sketches in to find for some fixed round $i$ of Alg. $\cA$ the (original) messages in round $i$ that got corrupted by the adversary, by computing a sketch in which our non-zero frequency elements is the set of faulty messages and their corrections. To obtain this, we make sure that each sent message of round $i$ of Alg. $\cA$ is added into the set with positive frequency, and each received message in round $i$ with negative frequency. Messages that were sent correctly across an edge are then cancelled out. On the other hand messages that got corrupted do not. We note that the $\Merge$ operation does not modify the universe set, or the encoding size of the sketch (which is always of size $\widetilde{O}(1)$ bits).





\subsection{$f$-Resilient Compilation with Round Overhead $\widetilde{O}(\kDiam)$}  
\label{sec:comp-exactly_k_edges_fast}

Throughout, we consider $(k,\kDiam)$-connected graphs for $k=d \cdot f\log n$, for some constant $d$ to be specified later. It is assumed that a weak $(k,\kDiam, \eta)$ tree packing of $G$ is known in a distributed manner, where each node knows its parent in each of the trees. We show the following stronger statement of Thm. \ref{thm:tree-packing-comp}. 

\begin{theorem}\label{thm:res-mobile-improved}
Given a distributed knowledge of a \emph{weak} $(k,\kDiam,\eta)$ tree packing $\mathcal{T}$ for graph $G$, then for any $r$-round algorithm $\cA$ over $G$, there is an equivalent $r'$-round algorithm $\cA'$ for $G$ that is $f$-mobile resilient, where $r'=\widetilde{O}(r \cdot \kDiam)$ and $f=\Theta(k /\eta)$. 
\end{theorem}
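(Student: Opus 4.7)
The plan is to compile $\cA$ in a round-by-round fashion, where the simulation of each round $i$ of $\cA$ is followed by an error-correction procedure that identifies and fixes the at most $2f$ corrupted messages introduced by the $f$-mobile adversary during the direct transmission. First, every node sends its round-$i$ messages $\{m_i(u,v)\}$ over the original $G$-edges; let $m'_i(u,v)$ denote the received (possibly corrupted) copy. Since the adversary corrupts at most $f$ edges per round, at most $2f$ ordered pairs are \emph{mismatches} (i.e.\ $m'_i(u,v)\neq m_i(u,v)$). The correction proceeds in $\ell=O(\log f)$ phases, each of which uses the given weak tree packing $\cT=\{T_1,\ldots,T_k\}$ together with $\ell_0$-sampling sketches (Theorem~\ref{thm:l_0_sampler}) and the RS-compiler (Theorem~\ref{thm:less_than_one_error}) scheduled by Lemma~\ref{lem:scheduler_security}.

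At the start of phase $j$, each node $v$ locally forms, for every $T_\ell\in\cT$, an independent $\ell_0$-sketch $\sigma_\ell(v)$ of the multiset obtained by inserting $(m_i(v,u),+1)$ for every outgoing message and $(m'_{i,j}(u,v),-1)$ for every currently-received copy from each neighbor $u$, where $m'_{i,j}$ is $v$'s current estimate (the seed randomness for each $T_\ell$ is shared among all nodes in that tree, which requires an auxiliary $\widetilde{O}(\kDiam)$-round mobile-resilient broadcast built from the same machinery). By the stream construction, correctly-delivered messages cancel, so $N(\sigma_\ell)$ consists precisely of the current mismatches. The sketches are aggregated up each $T_\ell$ via $\Merge$, run inside an RS-compiled protocol on $T_\ell$; scheduling all $k$ protocols together by Lemma~\ref{lem:scheduler_security} costs $\widetilde{O}(\kDiam)$ rounds. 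The root $v_r$ queries each of the $k$ aggregated sketches, collecting a pool of candidate mismatches.

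The root retains only those candidates supported by at least $\Delta_j=\widetilde{\Theta}(2^j)$ trees, packs the resulting list of $O(f/2^j)$ mismatches into a Reed--Solomon codeword (Theorem~\ref{thm:rs}), splits the codeword into $k$ shares, and downcasts one share along each $T_\ell$, again inside an RS-compiled protocol and scheduled by Lemma~\ref{lem:scheduler_security} in $\widetilde{O}(\kDiam)$ rounds. Because at least $0.9k$ trees in $\cT$ are valid low-diameter spanning trees and because Lemma~\ref{lem:scheduler_security} guarantees that all but $O(f\eta)$ RS-instances complete correctly, setting $f=\Theta(k/\eta)$ with a small enough constant ensures a $(1-o(1))$-fraction of shares reach every node intact, so each node locally decodes the correct mismatch list and updates its $m'_{i,j+1}$ accordingly. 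The analysis of one phase is a two-sided counting: on one hand, the $\ell_0$-samplers on the $\Omega(k/\log n)$ honest-and-uncorrupted trees are uniform over the currently unresolved real mismatches, so a Chernoff bound (Lemma~\ref{lem:chernoff}) implies each real mismatch achieves support $\Delta_j$ w.h.p.; on the other hand, the number of fake mismatches that can accumulate support $\Delta_j$ is bounded by the number of adversary-influenced aggregations, which is at most $O(f\eta)+0.1k=O(k/\log n)\cdot \Delta_j^{-1}$ with appropriate constants. Hence the number of unfixed real mismatches drops by a constant factor per phase, and the full correction terminates after $O(\log f)$ phases at a total cost of $\widetilde{O}(\kDiam)$ rounds per simulated round of $\cA$.

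The hard part will be the joint calibration of the support thresholds $\{\Delta_j\}$ and the edge-connectivity parameter $k$: the thresholds must be simultaneously large enough to filter out all fake mismatches introduced by the (up to $0.1k$ non-tree subgraphs and the) RS-instances that Lemma~\ref{lem:scheduler_security} permits to fail, yet small enough that \emph{induction} on the phase index carries through, i.e.\ that a constant fraction of the remaining real mismatches is detected and removed. Pinning down the constants forces the precise relation $f=\Theta(k/\eta)$, where the hidden constant depends on $\RSthresh,\RStime$ from Theorem~\ref{thm:less_than_one_error} and on the $0.1$ slack in Definition~\ref{def:weakTP}; a secondary subtlety is that the seed for the $\ell_0$-samplers and the tree IDs themselves must be disseminated \emph{inside} the mobile-byzantine setting, which is handled by bootstrapping the same RS-compiler-on-tree-packing broadcast used above before the simulation begins.
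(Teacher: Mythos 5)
Your proposal is correct and follows essentially the same route as the paper's proof: round-by-round simulation, per-tree $\ell_0$-sketch aggregation under RS-compilation scheduled via Lemma~\ref{lem:scheduler_security}, support-threshold filtering with $\Delta_j=\widetilde{\Theta}(2^j)$, Reed--Solomon-coded downcast of the dominating mismatches, and the two-sided Chernoff/counting argument giving a constant-factor drop in mismatches per phase. The only cosmetic deviation is that the paper folds the seed dissemination for each tree's sketches into the RS-compiled per-tree procedure itself (trees where it fails are simply counted as bad), rather than running a separate bootstrapping broadcast.
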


\noindent \textbf{Immediate Corollary: Compilers in the Congested-Clique Model.} An $n$-node clique over vertices $V=\{v_1,\ldots, v_n\}$ contains a $(k,\kDiam,\eta)$ tree packing $\mathcal{T}=\{T_1,\ldots, T_k\}$ for $k=n$ and $\kDiam,\eta=2$.  Specifically, for every $i \in \{1,\ldots, n\}$, let $T_i = (V,E_i)$ where $E_i = \{(v_i,v_j) \mid v_j \in V\}$, be the star centered at $v_i$. It is easy to see that the diameter and the load are exactly $2$. Theorem \ref{thm:CC-compiler} is then immediate by Theorem \ref{thm:res-mobile-improved}. 

 %

\subsubsection{Sub-Procedure for Safe Broadcast} 
Before presenting the proof of Theorem \ref{thm:res-mobile-improved}, we present a useful sub-procedure, called $\ECCSafeBroadcast$, which allows us to broadcast messages in an $f$-mobile-resilient manner by using Error-Correcting-Codes. 

We assume the network has a distributed knowledge of a weak $(k,\kDiam,\eta)$ tree packing $\mathcal{T}$ (see Def. \ref{def:weakTP}), and that the root node $v_r$ of the packing holds a broadcast message $M$ of size $O(k\log{n})$ bits. Our Safe Broadcast procedure allows every node $v \in V$ to output the message $M$ within $O((\kDiam+\log{n})\eta)$ rounds.

We represent the broadcast message $M$ as a list $[\alpha_1,\dots,\alpha_\ell] \in [q]^{\ell}$ where $q=2^p$ for some positive integer $p$ and $q \geq k$, and such that $\ell$ is an integer satisfying $k \geq c''\ell$ for a sufficiently large $c'' > 0$. Let $C$ be a $[\ell,k,\delta_C]_q$-code for $\delta_C=(k-\ell+1)/k$, known as the Reed Solomon Code (see \Cref{thm:rs}).
The root encodes the broadcast message $[\alpha_1,\dots,\alpha_\ell]$ into a codeword $C([\alpha_1,\dots,\alpha_\ell]) =[\alpha'_1,\dots,\alpha'_k]$. Next, the algorithm runs $k$ RS-compiled $\kDiam$-hop broadcast algorithms, in parallel. That is, for every $T_j \in \mathcal{T}$, let $\Pi(T_j)$ be a $\kDiam$-hop broadcast algorithm in which the message $\alpha'_j$ starting from the root $v_r$, propagates over $T_j$ for $\kDiam$ hops (hence taking $O(\kDiam+\log q)$ rounds). Let $\Pi_{RS}(T_j)$ be the RS-compilation of that algorithm, denoted hereafter as RS-broadcast algorithm. All $k$ RS-broadcast algorithms are implemented in parallel by using the scheduling scheme of Lemma \ref{lem:scheduler_security}.

Let $\alpha'_j(u)$ be the value that a node $u$ receives from $T_j$ (or $\alpha'_j(u) = 0$ if it received no value) at the end of this execution. To determine the broadcast message $[\alpha_1,\dots,\alpha_\ell]$, each $u$  calculates first the closest codeword $\alpha(u)$ to $\alpha'(u) = [\alpha'_{1}(u),\dots,\alpha'_{k}(u)]$. 
Its output is then given by $\widetilde{\alpha}(u)=[\widetilde{\alpha}_1(u),\dots,\widetilde{\alpha}_\ell(u)]=C^{-1}(\alpha(u))$. 
%
%
This completes the description of this procedure. 

\begin{lemma}\label{lem:correct-BroadcastECC}
Consider the execution of Alg. $\ECCSafeBroadcast$ in the presence of $f$-mobile byzantine adversary with a given broadcast message $[\alpha_1,\dots,\alpha_\ell] \in [q]^\ell$; and a distributed knowledge of a \emph{weak} $(k,\kDiam,\eta)$ tree packing for $k\geq \max\{c''\cdot \ell, c^* \eta f\}$ for large constants $c'', c^*$. Then, $\widetilde{\alpha}(u)=[\alpha_1,\dots,\alpha_\ell]$ for every node $u$. In addition, the round complexity is $O((\kDiam+\log{q})\eta)$  $1$-$\congest$ rounds. 
\end{lemma}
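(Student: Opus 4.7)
The plan is to argue that at every node $u$ the received vector $\alpha'(u)$ is Hamming-close to the intended codeword $[\alpha'_1,\dots,\alpha'_k]$, so that the closest-codeword decoder returns the correct message, and then to account for the rounds.

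First I would bound the number of coordinates $j$ for which $\alpha'_j(u)\neq \alpha'_j$ can occur. By Definition \ref{def:weakTP}, at least $0.9k$ of the subgraphs in $\mathcal{T}$ are genuine spanning trees of diameter $\leq \kDiam$ rooted at $v_r$; call this set $\mathcal{T}_{\mathrm{good}}$. For every $T_j\in\mathcal{T}_{\mathrm{good}}$, the fault-free protocol $\Pi(T_j)$ truly delivers $\alpha'_j$ to every node in $O(\kDiam+\log q)$ rounds. Next, invoking Lemma \ref{lem:scheduler_security} with dilation $r=O(\kDiam+\log q)$ and load $\eta$ on the $k$ RS-compiled protocols, all but at most $\RStime\cdot\RSthresh\cdot f\cdot\eta$ of them terminate correctly even under an $f$-mobile byzantine adversary (a round-error rate of $f$). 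Intersecting with $\mathcal{T}_{\mathrm{good}}$, the number of coordinates that may be wrong at $u$ is at most
\[
0.1\,k + \RStime\cdot\RSthresh\cdot f\cdot\eta.
\]

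Second, I would choose the constants so that this bound falls strictly below the unique-decoding radius of the Reed--Solomon code $C$. By Theorem \ref{thm:rs} the relative distance is $\delta_C=(k-\ell+1)/k$, so $C$ corrects up to $\lfloor(k-\ell)/2\rfloor$ errors. Using $k\geq c''\ell$ gives a decoding radius of at least $k(1-1/c'')/2$. Picking $c''$ large enough (e.g.\ $c''\geq 10$) makes this at least $0.45\,k$, and picking $c^*$ so that $c^*\geq 100\,\RStime\cdot\RSthresh$ forces $\RStime\cdot\RSthresh\cdot f\cdot\eta\leq 0.01\,k$ (since $k\geq c^*\eta f$). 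Hence the error count $0.1\,k+0.01\,k<0.45\,k$ lies inside the decoding radius, so $\alpha(u)=[\alpha'_1,\dots,\alpha'_k]$ for every $u$, and consequently $\widetilde{\alpha}(u)=C^{-1}(\alpha(u))=[\alpha_1,\dots,\alpha_\ell]$.

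For the round complexity, each individual protocol $\Pi(T_j)$ is a $\kDiam$-hop pipeline of a single $\log q$-bit symbol, so it terminates in $O(\kDiam+\log q)$ rounds in the $1$-\congest{} model; by Theorem \ref{thm:less_than_one_error} its RS-compiled version $\Pi_{\RS}(T_j)$ runs within a constant factor of that, so all $r_j$ lie in a common interval $[r,\RStime\cdot r]$ with $r=O(\kDiam+\log q)$. Plugging this into Lemma \ref{lem:scheduler_security} yields an overall schedule of $O((\kDiam+\log q)\,\eta)$ rounds, as claimed.

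The only delicate point, and what I expect to be the main obstacle, is the bookkeeping of constants: the bound $k\geq c^*\eta f$ must be strong enough to simultaneously absorb (i) the $0.1k$ slack from the ``weak'' packing, (ii) the $\RStime\cdot\RSthresh\cdot f\cdot\eta$ slack from Lemma \ref{lem:scheduler_security}, and (iii) leave enough room below the Reed--Solomon decoding radius $(k-\ell)/2$. Fixing $c''$ first (to control the decoding radius) and then choosing $c^*$ large in terms of $c''$, $\RStime$ and $\RSthresh$ resolves this cleanly, and the rest of the argument is immediate from the cited tools.
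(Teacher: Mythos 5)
Your proof is correct and follows essentially the same route as the paper's: bound the number of coordinates that can be wrong by the size of the non-tree part of the weak packing ($0.1k$) plus the number of RS-compiled instances that fail under Lemma~\ref{lem:scheduler_security} ($O(f\eta)$), choose $c''$ and $c^*$ large enough to keep this total strictly below the Reed--Solomon unique-decoding radius $\lfloor(k-\ell)/2\rfloor$, and read off the round complexity from the scheduler with $r=O(\kDiam+\log q)$. The only cosmetic difference is that you pin down explicit numerical constants ($c''\geq 10$, $c^*\geq 100\,\RStime\RSthresh$), whereas the paper leaves the constant bookkeeping implicit; both are fine.
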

\def\APPENDBROADECC{
\begin{proof}
Let $\mathcal{T}'\subseteq \mathcal{T}$ be the collection of $\kDiam$-spanning trees rooted at a common root $v_r$. By the definition of weak tree-packing, we have that $|\mathcal{T}'|\geq 0.9k$. 

Broadcasting a message of size $O(q)$ for $\kDiam$-hops takes $O(\kDiam+\log{q})$ rounds using $1$-bit messages. Therefore, a single application of the RS-broadcast algorithm $\Pi_{RS}(T_j)$ over some subgraph $T_j \in \mathcal{T}$ takes $O(\kDiam+\log{q})$ rounds. By \Cref{lem:scheduler_security}, all $k$ algorithms $\{\Pi_{RS}(T_j)\}_{j=1}^k$ can be performed in $O((\kDiam+\log{q})\eta)$ rounds, such that all but $c \cdot f\cdot \eta$ end correctly, for some constant $c$.  Therefore, at least $|\mathcal{T}'|-c \cdot f\cdot \eta \geq (1-1/c')k$ algorithms are valid, by taking $k=c^* \eta f$ for a sufficiently large $c^*$. This implies that for at least $(1-1/c')k$ algorithms we have $\alpha'_{j}(u) = \alpha'_{j}$. (For that to happen, we have $T_j \in \mathcal{T}'$ and the RS-compiled algorithm $\Pi_{RS}(T_j)$ is valid. Or in other words,
	\[\frac{\Hamm(\widetilde{\alpha}(u),C(\alpha_1,\dots,\alpha_\ell))}{k} \leq \frac{1}{c'}.\]

\noindent On the other hand, since $k \geq c''\ell$, the relative distance of the code $C$ can be bounded by:
$$\delta_C =\frac{k-\ell+1}{k} \geq 1-\frac{\ell}{k} \geq 1-1/c''.$$  
Therefore, for any given point $x \in \mathbb{F}_{q}^k$, there is at most one codeword of relative distance less than $\delta_C/2$. As for sufficiently large $c',c''$, one obtains $(1-\frac{1}{c''})/2 \geq \frac{1}{c'}$, we get that: \[\frac{\Hamm(\widetilde{\alpha}(u),C(\alpha_1,\dots,\alpha_\ell))}{k} < \frac{\delta_C}{2}.\] 
We conclude that the decoding of every node is correct, i.e., that $\widetilde{\alpha}(u)=[\alpha_1,\dots,\alpha_\ell]$ for every $u$.
\end{proof}
}\APPENDBROADECC

\subsubsection{$f$-Resilient Compiler}

Let $\cA$ be a given $r$-round distributed algorithm for $G$, and for every $i \in \{1,\ldots, r\}$ and an edge $(u,v)\in E$, let $m_{i}(u,v)$ be the message sent from $u$ to $v$ in round $i$ in Alg. $\cA$. Throughout, we assume w.l.o.g. that the last $O(\log n)$ bits of each message $m_{i}(u,v)$ specifies the ID of the message defined by the sender identifier ($\ID(u)$) appended by the receiver identifier ($\ID(v)$), i.e., $\ID(m_i(u,v))=\ID(u)\circ \ID(v)$.\footnote{This requires the KT1 assumption, i.e. that nodes know the identifiers of their neighbors.} Recall that every node $u \in V$ holds private $\poly(n)$ random coins that are unknown to the adversary. Given this randomness, the simulation of $\cA$ is deterministic. 

Our goal is to simulate $\cA$ in a round-by-round manner, such that the resulting algorithm $\cA'$ provides the same output distribution as that of $\cA$ in the presence of an $f$-mobile byzantine adversary. Every round $i$ of Alg. $\cA$ is simulated by a phase of $O(k+\kDiam)$ rounds. At the end of the phase, each node $v$ can determine the set of incoming messages $\{m_i(u,v)\}_{u \in N(v)}$ as in the fault-free simulation of Alg. $\cA$. For the remainder of the section, we fix round $i \geq 1$ and assume that each node already holds the correct list of received messages $\{m_j(u,v)\}_{u \in N(v)}$ for every $j \leq i-1$. Hence, all nodes can simulate their state in a fault-free simulation of the first $i-1$ rounds of Alg. $\cA$.

\subsubsection*{Simulation of the $i^{th}$ Round}  We are now ready to describe the reliable simulation of round $i$. On a high level, in the first round, we let all nodes exchange their $i^{th}$-round messages as in (the fault-free) Alg. $\cA$. Then, the bulk of the simulation is devoted for a message correction procedure, which allows all nodes to hold the correct received messages as in the fault-free setting, despite the corruptions of the mobile adversary.

\noindent \textbf{Step (1): Single Round $i^{th}$ Message Exchange.}
The phase starts by letting each node $u$ sending the $i^{th}$-round messages of Alg. $\cA$, namely, $\{m_{i}(u,v)\}_{v \in N(u)}$. For every directed edge $(u,v)\in E$, let $m'_i(u,v)$ be the message \emph{received} by $v$ from $u$ in this round. Since the adversary might corrupt at most $f$ edges in this round, it might be the case where $m'_{i}(u,v)\neq m_i(u,v)$ for at most $2f$ ordered pairs. Note that since the identifiers of the messages are known to every node, we can assume that the last bits of the messages $m'_{i}(u,v), m_i(u,v)$ specify the message-ID\footnote{I.e., the identifier of the message consists of the sender-ID (namely, $\ID(u)$) concatenated by the receiver-ID (namely, $\ID(v)$).} given by $\ID(u)\circ \ID(v)$. We denote the event where $m'_{i}(u,v)\neq m_i(u,v)$ as a \emph{mismatch}.

\smallskip 
\noindent \textbf{Step (2): Upcast of Mismatches using $\ell_0$-Samplers.}
Our algorithm works in iterations $j \in \{1,\dots,z\}$ for $z=\Theta(\log f)$. At the beginning of every iteration $j$, each node $v$ holds for each neighbor $u$, a variable $m'_{i,j-1}(u,v)$, which represents its estimate\footnote{By estimate, we mean that each node $v$ maintains a variable $m'_{i,j-1}(u,v)$ containing some message that is ideally the message $m_i(u,v)$, but could be an incorrect value. We show in the analysis that as the iterations pass, the number of estimates which are not equal to the correct value decrease exponentially.} for its received message from $u$ in round $i$ of Alg. $\cA$. Initially, $m'_{i,0}(u,v) = m'_i(u,v)$.  

Next, every node $v$ locally defines two multi-sets corresponding to its outgoing messages and and its $j$-estimated incoming messages: 
$\mathrm{Out}_i(v) = \{m_i(v,u_1),\ldots, m_i(v,u_{\deg(v)})\}$ and $\mathrm{In}_{i,j}(v) = \{m'_{i,j}(u_1,v),\ldots,m'_{i,j}(u_{\deg(v)},v)\}.$ 
Let $S_{i,j}(v)$ be a multi-set of tuples consisting of every element in $\mathrm{Out}_i(v)$ with frequency $1$, and every element in $\mathrm{In}_{i,j}(v)$ with frequency $-1$. In other words, $S_{i,j}(v) = \{(m,1)\}_{m \in \mathrm{Out}_i(v)} \cup \{(m,-1)\}_{m \in \mathrm{In}_i(v)}$. Let $S_{i,j} = S_{i,j}(v_1) \cup \dots \cup S_{i,j}(v_n)$ be the multi-sets formed by taking a union over all $n$ multi-sets. 

Each subgraph $T \in \mathcal{T}$ runs an RS-compiled sub-procedure, $L0_{RS}(T,S_{i,j})$, which is defined by applying the RS-compiler of \Cref{thm:less_than_one_error} for the following (fault-free) $L0(T,S_{i,j})$ procedure, which is well defined when $T$ is a spanning tree. In the case where $T$ is an arbitrary subgraph, the execution of $L0(T,S_{i,j})$ which is restricted to $\widetilde{O}(\kDiam)$ rounds, will result in an arbitrary outcome. 

\smallskip
\noindent \textbf{Procedure $L0(T,S_{i,j})$.} The node $v_r$ first broadcasts $\widetilde{O}(1)$ random bits $R_{i,j}(T) = R_{i,j,1}(T),\dots,R_{i,j,t}(T)$ over the edges of $T$, where $t=\Theta(\log{n})$.\footnote{In the case where $T$ is \emph{not} a spanning tree, $v_r$ might have no neighbors in $T$. Nevertheless, the correctness will be based on the $0.9k$ spanning trees in $\mathcal{T}$.} Then, each node $v$ initializes $t$ mutually independent $\ell_0$-sampler sketches on the multi-sets $S_{i,j}(v)$ with randomness $R_{i,j,h}(T)$ for $h=1,\dots,t$ using Theorem~\ref{thm:l_0_sampler}. Let $[\tau_1(v),\ldots, \tau_t(v)]$ be the $\ell_0$-sampling sketches obtained for $S_{i,j}$ with the randomness $R_{i,j}(T)$. The combined sketches $L(n,S_{i,j},R_{i,j,1}),\dots,L(n,S_{i,j},R_{i,j,t})$ are then computed in a bottom-up manner on $T$ from the leaves to the root $v_r$, in the following manner: first, each leaf node $v$ sends its sketches, $\tau_{u,1} = L(n,S_{i,j}(v),R_{i,j,1}),\dots,\tau_{u,t} =  L(n,S_{i,j}(v),R_{i,j,t})$, to its parent in $T$ in $\widetilde{O}(1)$ rounds. Any non-leaf node $v$ waits until it receives $t$ sketches $\tau_{u,1},\dots,\tau_{u,t}$ from each child $u \in \mathrm{Children}(v,T)$. For each $h \in \{1,\ldots t\}$, it merges all the sketches $\{\tau_{u,h}\}_{u \in \mathrm{Children}(v,T)}$ together with $L(n,S_{i,j}(v),R_{i,j,h})$ (using the merge operation described in \Cref{thm:l_0_sampler}), and propagates the resulting $t$ sketches to its parent. Finally, the root $v_r$ obtains $t$ sketches from each of its children, computes the combined $t$ sketches $L(n,S_{i,j},R_{i,j,1}),\dots,L(n,S_{i,j},R_{i,j,t})$, and (locally) applies the $\Query$ operation of \Cref{thm:l_0_sampler} on each combined sketch to sample a list of values $A_{i,j}(T) = [a_1(T),\dots,a_{t}(T)]$, where $a_h(T)=\Query(L(n,S_{i,j},R_{i,j,h}))$ for $h \in \{1,\dots,t\}$.

The round complexity of Procedure $L0(T,S_{i,j})$ is restricted to $\widetilde{O}(\kDiam)$ rounds. 
This concludes the description of $L0(T,S_{i,j})$ and hence also its RS-compilation $L0_{RS}(T,S_{i,j})$. Our algorithm implements the collection of the $k$ RS-compiled algorithms $\{L0_{RS}(T,S_{i,j})\}_{T \in \mathcal{T}}$, in \emph{parallel}, by employing the RS-scheduler of Lemma \ref{lem:scheduler_security}. 

\smallskip
\noindent \textbf{Detecting Dominating Mismatches.} 
A \emph{positive} element $a \in A(T)$ is denoted as an \emph{observed-mismatch}. For every observed-mismatch $a \in \bigcup_{T \in \mathcal{T}} A_{i,j}(T)$, denote its \emph{support} in iteration $j$, by $\supp_{i,j}(a) = |\{(\ell,T) \stt a = a_\ell(T), T \in \mathcal{T}, \ell \in \{1,\ldots, t\}\}|$. The root $v_r$ then selects a sub-list $\DM_{i,j}$ of \emph{dominating} observed mismatches, i.e., mismatches that have a sufficiently large \emph{support} in $\bigcup_{T} A(T)$, based on the given threshold $\Delta_j$. Specifically, for a sufficiently large constant $c''$, let:
\begin{equation}\label{eq:Mij}
\Delta_j=0.2c''2^j\eta \cdot t \mbox{~and~} \DM_{i,j} = \{a \in \bigcup_T A_{i,j}(T) \stt a>0, \supp_{i,j}(a) \geq \Delta_j \}~.
\end{equation}
The remainder of the $i^{th}$ phase is devoted to (resiliently) broadcasting the list $\DM_{i,j}$. 

\smallskip 
\noindent \textbf{Step (3): Downcast of Dominating Mismatches.} 
To broadcast $\DM_{i,j}$, Alg. $\ECCSafeBroadcast$ is applied with parameters $q=2^p$ for $p=\lceil \max(\log{k},\log^5{n})\rceil$ and $\ell=|\DM_{i,j}|$. Upon receiving $\DM_{i,j}$, each node $v$ computes its $j$-estimate $m'_{i,j}(u,v)$ as follows. If there exists a message $m \in \DM_{i,j}$ with $\ID(m)=\ID(u) \circ \ID(v)$, then $v$ updates its estimate for the received message by setting $m'_{i,j}(u,v)=m$. Otherwise, the estimate is unchanged and $m'_{i,j}(u,v)\gets m'_{i,j-1}(u,v)$. This completes the description of the $j^{th}$ iteration. Within $z=O(\log f)$ iterations, each node $v$ sets $\widetilde{m}_i(u,v)=m'_{i,z}(u,v)$ for every neighbor $u$. In the analysis we show that, w.h.p., $\widetilde{m}_i(u,v)=m_{i}(u,v)$ for all $(u,v)\in E$. 

\def\APPENDCODEBYZ{
\begin{mdframed}[hidealllines=false,backgroundcolor=gray!00]
\center \textbf{Algorithm $\ImprovedMobileByznatineSim$ ($i^{th}$ Phase):}

\raggedright\textbf{Input:} Weak $(k,\kDiam,\eta)$ tree packing $\mathcal{T}=\{T_1\dots,T_k\}$.\\
\raggedright \textbf{Output:} Each node $v$ outputs $\{\widetilde{m}_i(u,v)\}_{u \in N(v)}$, estimation for its received messages in round $i$ of Alg. $\cA$.
	\begin{enumerate}
		\item Exchange $m_i(u,v)$ over each edge $(u,v) \in E$.
		\item Let $\{m'_{i,0}(u,v)\}_{(u,v)\in E}$ be the received messages at the receiver endpoints $\{v\}$. 
\item For $j = 1,\dots,z=O(\log{f})$ do:
\begin{itemize}
\item Employ protocol $L0_{RS}(T,S_{i,j})$ over each $T \in \mathcal{T}$, in parallel, using \Cref{lem:scheduler_security}.

\item Set $\DM_{i,j}$ as in Eq. (\ref{eq:Mij}).
\item Broadcast $\DM_{i,j}$ by applying Alg. $\ECCSafeBroadcast$.
\item For every $v \in V$ and $u \in N(v)$: 
\begin{itemize}
\item If $\exists m \in \DM_{i,j}$ with $\ID(m)=\ID(u)\circ \ID(v)$: $m'_{i,j}(u,v) \gets m$.
\item Otherwise, $m'_{i,j}(u,v) \gets m'_{i,j-1}(u,v)$. 
\end{itemize}
\end{itemize}

\item For every $v \in V$ and $u \in N(v)$: Set $\widetilde{m}_{i}(u,v)=m'_{i,z}(u,v)$.

\end{enumerate}
\end{mdframed}
}

\APPENDCODEBYZ

\noindent \textbf{Analysis.} We focus on the correctness of the $i^{th}$ round and omit the index $i$ from the indices of the variables when the context is apparent. Let $\mathcal{T}'\subseteq \mathcal{T}$ be the collection of spanning-trees of depth $O(\kDiam)$, rooted $v_r$. We assume that $k \geq c' \cdot c'' \eta f$, where $c''$ is the constant of \Cref{lem:scheduler_security} and $c' > c''$. 

A tree $T_q$ is denoted as $j$-\emph{good} if (a) $T_q \in \mathcal{T}'$ and (b) $L0_{RS}(T_q,S_{i,j})$ ended correctly when using the scheduler of \Cref{lem:scheduler_security}. Let $\mathcal{T}^{(j)}_{\good}\subseteq \mathcal{T}'$ be the collection of good trees. By \Cref{lem:scheduler_security}, it holds:

\begin{equation}\label{eq:good-trees-num}
|\mathcal{T}^{j}_{\good}| \geq \left(0.9-\frac{c''f\eta}{k}\right)|\mathcal{T}| \geq \left(0.9-\frac{1}{c'} \right)|\mathcal{T}|~.
\end{equation}
Finally, we say that a sent message $m_i(u,v)$ a $j$-\emph{mismatch} if $m_i(u,v) \neq m'_{i,j}(u,v)$. Let $B_{j}$ denote the number of $j$-mismatches. The following lemma follows by the Chernoff bound:

\begin{lemma}
\label{lem:faulty_message_chernoff}
For every $j \in [z]$, if $B_{j-1} \leq 2f/2^{j-1}$, then for any $(j-1)$-mismatch $m$, $\supp_{i,j}(m) \geq \Delta_j$, w.h.p.
\end{lemma}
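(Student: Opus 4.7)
The plan is to reduce the lemma to a direct Chernoff bound, where the independent trials correspond to the $t$ samples taken by each of the good trees. The two main ingredients are (i) a lower bound on the probability that a single $\ell_0$-sample drawn on a good tree returns the message $m$, and (ii) a lower bound on the number of such independent samples, so that a Chernoff bound pushes the count safely above the threshold $\Delta_j$.

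First, I would bound $|N(S_{i,j})|$. By construction, $S_{i,j}$ aggregates every outgoing message $m_i(u,v)$ with frequency $+1$ and every estimate $m'_{i,j-1}(u,v)$ with frequency $-1$; the matching entries cancel, leaving at most one positive and one negative entry per $(j-1)$-mismatch. Hence $|N(S_{i,j})|\le 2B_{j-1}\le 4f/2^{j-1}$, and any $(j-1)$-mismatch $m$ lies in $N(S_{i,j})$ with positive frequency. Now fix such an $m$ and any good tree $T\in\mathcal{T}^{(j)}_{\good}$. Since $T$ is a spanning tree of diameter $O(\kDiam)$ and the RS-compilation $L0_{RS}(T,S_{i,j})$ ended correctly, the list $A_{i,j}(T)=[a_1(T),\ldots,a_t(T)]$ equals the output of the noiseless procedure $L0(T,S_{i,j})$, i.e.\ $t$ mutually independent $\ell_0$-queries on the true multi-set $S_{i,j}$ using independent seeds $R_{i,j,h}(T)$. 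By Theorem~\ref{thm:l_0_sampler}, each $a_h(T)$ equals $m$ with probability at least $1/|N(S_{i,j})|-1/n^c\ge 2^{j-1}/(4f)-1/n^c$.

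Next, I would assemble these events across good trees and across $h$. The seeds $R_{i,j,h}(T)$ are independent for different $(T,h)$ pairs (independent across trees by construction, and the $t$ sketches inside a single tree are drawn from independent randomness). Hence the indicators $\mathbf{1}[a_h(T)=m]$ for $(T,h)\in\mathcal{T}^{(j)}_{\good}\times[t]$ are mutually independent Bernoulli variables. By Eq.~(\ref{eq:good-trees-num}), $|\mathcal{T}^{(j)}_{\good}|\ge (0.9-1/c')k$, so the expected number of these good $(T,h)$ pairs on which the sample equals $m$ is at least
\[
\mu \;\ge\; (0.9-1/c')\,k\,t\cdot\Bigl(\tfrac{2^{j-1}}{4f}-\tfrac{1}{n^c}\Bigr)\;\ge\;(0.9-1/c')\cdot\tfrac{k}{8f}\cdot 2^{j}\,t\,(1-o(1)).
\]
Plugging in $k\ge c'c''\eta f$ gives $\mu\ge (0.9-1/c')\,\tfrac{c'c''}{8}\,\eta\,2^{j}\,t\,(1-o(1))$, which for a sufficiently large constant $c'$ is at least $2\Delta_j=0.4c''\,2^{j}\eta\,t$. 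Since $\supp_{i,j}(m)$ is bounded below by the count over good pairs, $\mathbb{E}[\supp_{i,j}(m)]\ge 2\Delta_j$.

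The last step is to apply the Chernoff bound of Lemma~\ref{lem:chernoff} with $\delta=1/2$ to the independent good-tree indicators: since $t=\Theta(\log n)$ we have $\mu=\Omega(\log n)$, so
\[
\Pr\bigl[\supp_{i,j}(m)<\Delta_j\bigr]\;\le\;\Pr\bigl[\,\text{good-pair count}\le \mu/2\,\bigr]\;\le\;e^{-\mu/8}\;=\;n^{-\Omega(1)}.
\]
Taking a union bound over the at most $2B_{j-1}\le 4f\le 4n$ possible $(j-1)$-mismatches yields the desired statement w.h.p. The main technical point to get right is the independence structure: arguing that restricting to good trees preserves faithful $\ell_0$-sampling while keeping the randomness across $(T,h)$ independent, so that Chernoff applies cleanly; the adversary's contribution through bad trees or RS-failed executions can only increase $\supp_{i,j}(m)$ and is therefore harmless for this lower bound.
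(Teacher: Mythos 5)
Your proof is correct and takes essentially the same approach as the paper: bound $|N(S_{i,j})|\le 4f/2^{j-1}$, use the good-tree count from Eq.~(\ref{eq:good-trees-num}) together with the $\ell_0$-sampler detection probability to lower-bound $\mathbb{E}[\supp_{i,j}(m)]$ against $\Delta_j$, and then apply a Chernoff bound to the mutually independent $(T,h)$ samples on good trees. The only differences are cosmetic — you make the union bound over mismatches explicit and work with the bound $\mathbb{E}[\supp_{i,j}(m)]\ge 2\Delta_j$ with $\delta=1/2$ rather than the paper's $\ge c'\Delta_j$ — so both arrive at the same conclusion by the same route.
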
	
\def\APPENDFMCHERNOFF{
\begin{proof}
Let $m$ be a $(j-1)$-mismatch. Since there are at most $2f/2^{j-1}$ mismatches, there are at most $4f/2^{j-1}$ non-zero entries in $S_{i,j}$. By \Cref{thm:l_0_sampler}, in each sketch of a $j$-good tree $T$ the root detects a given $(j-1)$-mismatch with probability at least $2^{j-1}/4f - \epsilon(n)$ for $\epsilon(n) = O(1/\poly(n))$. Therefore, for a sufficiently large $c'$, by Equations (\ref{eq:Mij},\ref{eq:good-trees-num}) and the fact that we use $t=\Theta(\log n)$ independent $\ell_0$ sketches, we get:

\[\mathbb{E}\left(\supp_{i,j}(m) \right) \geq \left(0.9-\frac{1}{c'}\right)\left(\frac{2^j}{4f}-\epsilon(n)\right)k\cdot t \geq 0.8\left(\frac{2^{j-1}}{4f}\right)kt = 0.2c'' \cdot c'2^j\eta\cdot t = c' \Delta_j~.\]
	
Moreover, the observed mismatches sampled by the good trees are mutually independent of each other. Given $(j-1)$-mismatch $m'$, the probability that it is sampled by less than $\Delta_j$ many $j$-good trees can be bounded by a Chernoff bound (Lemma~\ref{lem:chernoff}), as follows:
	
\begin{eqnarray}
\Pr(\supp_{i,j}(m) \leq \Delta_j) &\leq& \Pr\left(\supp_{i,j}(m) \leq \mathbb{E}\left(\supp_{i,j}(m)\right)/c'\right) e^{-(1-1/c')^2 c'\Delta_j/2} \nonumber
\\&=& e^{-(1-1/c')^2 \frac{0.2c'' \cdot c' 2^j\eta t}{2}} \leq \frac{1}{\poly(n)}, \nonumber
\end{eqnarray}

where the last inequality holds for a sufficiently large $c'$ (which can be chosen sufficiently large compared to $c''$).
\end{proof}
}\APPENDFMCHERNOFF

\begin{lemma}
\label{lem:faults_decrease}
For every $j \in [z]$, $B_{j} \leq 2f/2^{j}$ w.h.p. 
\end{lemma}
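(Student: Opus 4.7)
\smallskip
\noindent\textbf{Proof plan.} The plan is to induct on $j \in \{0, 1, \ldots, z\}$. For the base case $j=0$, the adversary corrupts at most $f$ (bidirectional) edges during the single-round exchange in Step (1), and each such corruption creates at most two ordered-pair mismatches, so $B_0 \leq 2f = 2f/2^0$. A single union bound over the $z = O(\log f)$ iterations will turn the per-step w.h.p.\ statements into a global w.h.p.\ statement.

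For the inductive step, assume $B_{j-1} \leq 2f/2^{j-1}$. I decompose the set of $j$-mismatches into two sources: (a) a pair $(u,v)$ that was a $(j-1)$-mismatch and failed to be fixed in iteration $j$, i.e., $m_i(u,v) \notin \DM_{i,j}$; and (b) a pair $(u,v)$ that was not a $(j-1)$-mismatch but got overwritten by some \emph{fake} element $m \in \DM_{i,j}$ carrying $\ID(m)=\ID(u)\circ\ID(v)$ but with $m \neq m_i(u,v)$. Source (a) is killed by the induction hypothesis: since $B_{j-1} \leq 2f/2^{j-1}$, Lemma~\ref{lem:faulty_message_chernoff} gives that every real $(j-1)$-mismatch $m$ satisfies $\supp_{i,j}(m) \geq \Delta_j$ w.h.p., hence enters $\DM_{i,j}$; and by Lemma~\ref{lem:correct-BroadcastECC} every node correctly receives $\DM_{i,j}$, so the update rule sets $m'_{i,j}(u,v) \gets m_i(u,v)$. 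Thus source (a) contributes $0$ w.h.p.

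Source (b) is the main obstacle and requires a careful accounting of the adversary's ``fake budget.'' An element $a$ can enter $\DM_{i,j}$ only if $\supp_{i,j}(a) \geq \Delta_j$. The support of a fake $a$ can come \emph{only} from bad trees, meaning either (i) one of the up-to-$0.1k$ non-spanning/wrong-depth subgraphs allowed by the weak tree-packing definition, or (ii) one of the at-most $\RStime\cdot\RSthresh\cdot f\cdot\eta$ RS-compiled sub-procedures that fail in the scheduling of Lemma~\ref{lem:scheduler_security}. Each bad tree can pollute at most $t$ (tree, index) support slots, so the total support budget available to fakes is bounded by $(0.1k + \RStime\cdot\RSthresh\cdot f\cdot\eta)\cdot t$. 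Dividing by the per-fake threshold $\Delta_j = 0.2 c'' 2^j \eta t$ and substituting $k=c' c'' f\eta$, this yields at most $(0.5 c' + O(1/c''))\cdot f/2^j$ fakes in $\DM_{i,j}$. With the constants $c'$ chosen as a small constant (so that $c' \leq 2$ suffices for the fake count) and $c''$ chosen sufficiently large relative to $\RStime\cdot\RSthresh$ (so that the $O(1/c'')$ term is negligible and the weak-packing ``$0.9$'' constant gives the needed slack), the number of fakes is at most $2f/2^j$.

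Combining, $B_j \leq 0 + 2f/2^j = 2f/2^j$ w.h.p. The main technical obstacle is indeed the fake-counting in source (b): the threshold $\Delta_j$ must be large enough to filter out spurious elements planted by the up to $\approx 0.1k$ non-spanning subgraphs (weighted by $t$ samples each) and by the RS-failed sub-procedures, while simultaneously remaining small enough that every real $(j-1)$-mismatch clears it; the parameter choice $k \geq c' c'' f\eta$ in the hypothesis of the theorem is precisely what balances these two requirements, via the Chernoff bound in Lemma~\ref{lem:faulty_message_chernoff} on one side and the support-budget counting on the other.
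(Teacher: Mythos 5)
Your proof follows the paper's approach: the same induction on $j$, the same appeal to Lemma~\ref{lem:faulty_message_chernoff} together with the correctness of $\ECCSafeBroadcast$ to argue that every real $(j-1)$-mismatch is placed into $\DM_{i,j}$, and the same charging of fake supports to the $j$-bad trees (the up-to-$0.1k$ non-spanning subgraphs plus the at most $c''f\eta$ RS-failures from Lemma~\ref{lem:scheduler_security}), divided by $\Delta_j$. One imprecision worth fixing: your case split --- (a) a $(j-1)$-mismatch with $m_i(u,v)\notin \DM_{i,j}$, and (b) a non-$(j-1)$-mismatch overwritten by a fake --- omits a $(j-1)$-mismatch $(u,v)$ whose true message $m_i(u,v)$ \emph{does} enter $\DM_{i,j}$ but which is nevertheless left wrong because a competing element with the same $\ID(u)\circ\ID(v)$ also appears in $\DM_{i,j}$ and the update rule happens to pick it (the rule only asserts existence of a matching $m$ and does not specify which to take, so your sentence ``the update rule sets $m'_{i,j}(u,v)\gets m_i(u,v)$'' is unjustified). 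The paper's formulation of the necessary condition --- ``there exists a competing observed-mismatch with $j$-high support'' --- covers this case without assuming anything about whether the pair was previously correct. Since your count in (b) is in fact an upper bound on \emph{all} competing high-support entries of $\DM_{i,j}$, regardless of whether they clobber a previously-correct or previously-wrong pair, the numeric conclusion $B_j\le 2f/2^j$ is unaffected; just restate (b) as counting every pair $(u,v)$ admitting some $a\ne m_i(u,v)$ with $\ID(a)=\ID(u)\circ\ID(v)$ and $\supp_{i,j}(a)\ge\Delta_j$.
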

\begin{proof}
We prove it by induction on $j$. For $j = 0$, the number of $j$-mismatches at the start of the protocol is indeed at most $2f$ since the adversary corrupts at most $2f$ messages in the first round of Phase $i$.
Assume that the claim holds up to $j-1$ and consider $j\geq 1$. 
We say that an observed-mismatch $a$ has $j$-\emph{high support} if $\supp_{i,j}(a) \geq \Delta_j$, and has $j$-\emph{low support} otherwise. An observed-mismatch $a$ is \emph{competing} with $m_i(u,v)$ if $\ID(a) = \ID(u) \circ \ID(v)$ but $a \neq m_i(u,v)$. Note that assuming all sketches on $j$-good trees are successful (which indeed holds w.h.p.), then all \emph{competing mismatches} with any message $m_i(u,v)$ must be sampled by $j$-\emph{bad} trees, since they are not real $(j-1)$ mismatches. 
A necessary condition for $m_i(u,v)$ to be a $j$-mismatch, is either that (a) there is an observed-mismatch with $j$-high support that is competing with $m_i(u,v)$, or (b) it is a $(j-1)$-mismatch and has $j$-low support. By \Cref{lem:faulty_message_chernoff}, all $(j-1)$-mismatches have $j$-high support w.h.p., and therefore, there are no $j$-mismatches due to condition (b).

Since the number of $j$-bad trees is at most $c''f\eta+0.1k$ (see Eq. (\ref{eq:good-trees-num})), at most $\frac{(c''\eta f+0.1k)t}{\Delta_j} \leq 2f/2^{j+1}$ competing observed-mismatches have a $j$-high support. Therefore, $B_{j} \leq 2f/2^{j}$ w.h.p.
\end{proof}

\noindent The proof of \Cref{thm:res-mobile-improved} follows by noting that in the last iteration $z = {O(\log{f})}$, it holds that $B_z = 0$, w.h.p. In particular, at the last iteration $z$, each estimated message is the correct message, i.e. $\widetilde{m}_{i,z}(u,v) = m_i(u,v)$. 
%

	\subsection{Applications}\label{sec:app-simple}

\noindent \textbf{Compilers for $(k,\kDiam)$-connected Graphs.} By Thm. \ref{thm:packing-KD} for every $(k,\kDiam)$-connected $n$-node graph $G$ one can compute a $(k, O(\kDiam \cdot \log n))$ tree-packing with load $\eta=O(\log n)$. These trees can be computes using $\widetilde{O}(k \cdot \kDiam)$ $\congest$ (fault-free) rounds. By Theorem \ref{thm:res-mobile-improved}, we have: 

\begin{corollary}[Mobile-Resilient Compilers for $(k,\kDiam)$-Connected Graphs]\label{thm:kd-compiler}
Given a $(k,\kDiam)$-connected $n$-vertex graph $G$ for $k=\Omega(\log n)$, then one can compile any $r$-round algorithm $\cA$ into equivalent $r'$-round algorithm $\cA'$ that is $f$-mobile resilient, where $f=\Omega(k/\log n)$ and $r'=\widetilde{O}(\kDiam)$, provided that either (i) all nodes know the graph topology (a.k.a. the supported-$\congest$ model), or (ii) there is a fault-free preprocessing step of $\widetilde{O}(k \cdot \kDiam)$ rounds. 
\end{corollary}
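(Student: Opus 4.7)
The plan is to derive the corollary as an almost immediate composition of the tree packing construction of \Cref{thm:packing-KD} with the compiler of \Cref{thm:res-mobile-improved}. Concretely, I would first produce, in a distributed manner, a $(k, O(\kDiam \log n))$ tree packing $\mathcal{T}=\{T_1,\ldots,T_k\}$ of load $\eta = O(\log n)$, then feed this tree packing into \Cref{thm:res-mobile-improved}, and finally verify that the resulting parameters match the claim.

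For the preprocessing step, I would split into the two cases of the statement. In case (ii), a fault-free preprocessing is allowed, so I would simply invoke the distributed construction of the tree packing of \Cref{thm:packing-KD} referenced in \Cref{sec:app_tree_packing}, which runs in $\widetilde{O}(k \cdot \kDiam)$ rounds (absorbing $\polylog(n)$ factors into $\widetilde{O}(\cdot)$) and ensures that every node $v$ locally knows its parent in each $T_j$, i.e. that $\mathcal{T}$ is known in a distributed manner in the sense required by \Cref{thm:res-mobile-improved}. In case (i), the supported-$\congest$ model provides every node with the full topology of $G$, so every node can \emph{locally} simulate the centralized randomized construction of \Cref{thm:packing-KD} using a common random seed (e.g., provided a single shared seed broadcast once, or by assuming public randomness standard in this model), obtaining identical outputs across all nodes and hence a consistent distributed representation of the same tree packing $\mathcal{T}$ without any online communication.

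With $\mathcal{T}$ in hand, I would then apply \Cref{thm:res-mobile-improved} directly. Since $\mathcal{T}$ is a (strong, hence also weak) $(k, O(\kDiam \log n), O(\log n))$ tree packing, \Cref{thm:res-mobile-improved} produces an $f$-mobile-resilient simulation of $\cA$ with $f = \Omega(k/\eta) = \Omega(k/\log n)$ and per-compiled-round overhead $\widetilde{O}(\kDiam \log n) = \widetilde{O}(\kDiam)$, matching the claimed parameters (with the total round complexity scaling linearly in $r$ as in \Cref{thm:res-mobile-improved}). The hypothesis $k = \Omega(\log n)$ is precisely what is needed so that $f = \Omega(k/\log n)$ is non-trivial and so that the edge-connectivity requirement $k \geq c^\ast \eta f$ in the analysis of \Cref{thm:res-mobile-improved} is satisfiable by choosing the constant in $f = \Omega(k/\log n)$ appropriately.

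The only delicate point I expect is ensuring a \emph{consistent} distributed view of the packing across all nodes with no communication in case (i): since \Cref{thm:packing-KD} is randomized, one must fix a common random tape. This is standard under the supported-$\congest$ model (or, if preferred, can be handled by a trivial one-time broadcast of a $\polylog(n)$-bit seed, which fits into the same $\widetilde{O}(k \cdot \kDiam)$ budget), and does not affect the asymptotic round count. Everything else is a routine plug-in of parameters.
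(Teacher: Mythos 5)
Your proof is correct and follows essentially the same route as the paper: invoke \Cref{thm:packing-KD} to obtain a $(k, O(\kDiam\log n))$ tree packing of load $\eta = O(\log n)$, produce it either locally (supported-\congest, case (i)) or via the $\widetilde{O}(k\cdot\kDiam)$-round fault-free procedure of \Cref{sec:app_tree_packing} (case (ii)), and then plug the packing into \Cref{thm:res-mobile-improved} to get $f = \Theta(k/\eta) = \Omega(k/\log n)$ with round overhead $\widetilde{O}(\kDiam\log n) = \widetilde{O}(\kDiam)$. Your additional care in case (i) about fixing a common random tape so all nodes reproduce an identical packing from the known topology is a detail the paper leaves implicit, but it is the right way to make that case rigorous.
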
 


\subsubsection*{Compilers for Expander Graphs, Proof of Theorem \ref{thm:expander-compiler}} By \Cref{thm:res-mobile-improved}, it is sufficient to show that for every $\phi$-expander graph with minimum degree $\Omega(f/\phi^2)$, there is an $f$-mobile-resilient algorithm for computing a weak $(k,\kDiam,\eta)$ tree packing with $k=\Omega(f/\phi^2)$, $\kDiam=O(\log n/\phi)$ and $\eta=2$. Specifically, we show:

\begin{lemma}[Weak Tree-Packing in Expander Graphs]\label{lem:weakTP-expanders}
For every $\phi$-expander $n$-node graph with minimum degree $\Omega(f/\phi^2)$, there is an $f$-mobile-resilient algorithm for computing a weak $(k,\kDiam,\eta)$ tree packing with $k=O(f/(\phi\log{n}))$, $\kDiam=O(\log n/\phi)$ and $\eta=2$. The round complexity is $O(\log n/\phi)$. 
\end{lemma}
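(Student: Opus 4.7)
The plan is to build the weak tree packing by (i) partitioning $E(G)$ into $k$ random color classes $G_1,\dots,G_k$ so that w.h.p.\ every $G_j$ is a spanning $\Omega(\phi)$-expander, and (ii) running a depth-$\kDiam$ BFS inside each color class from a commonly known root $v_r$ (say the node with smallest $\ID$), all in parallel. The crucial observation is that an $f$-mobile adversary can ``spoil'' at most $\eta f$ color classes in each of the $\kDiam=O(\log n/\phi)$ BFS rounds, hence at most $2f\kDiam=O(f\log n/\phi)$ color classes over the whole execution; by a suitable choice of $k$, the remaining $\ge 0.9k$ classes deliver correct BFS spanning trees of depth $O(\log n/\phi)$ rooted at $v_r$, which is exactly the weak packing guarantee.

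For the coloring I exploit the KT$_1$ assumption: the endpoints of each edge $(u,v)$ both know $\{\ID(u),\ID(v)\}$ and can therefore locally assign a common color $c(u,v)\in [k]$ by evaluating a publicly seeded $\Theta(\log n)$-wise independent hash function (Lemma~\ref{lem:hash}). This costs zero rounds and is oblivious to the adversary. Under $d=\Omega(f/\phi^2)$ and an appropriate choice of $k$, each vertex has expected degree $\widetilde{\Omega}(1/\phi)$ in every $G_j$; a Chernoff-type argument using the bounded independence of the hash, combined with the edge expansion of $G$, implies that every $G_j$ is w.h.p.\ a spanning $\Omega(\phi)$-expander, and in particular has diameter $O(\log n/\phi)\le\kDiam$. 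The BFS in class $G_j$ is a standard $\kDiam$-round flood along $j$-colored edges in which each node picks as its $T_j$-parent the earliest neighbor from which an uncorrupted $j$-token arrived. All $k$ BFS instances run in parallel on their own edge sets, and the small overlap permitted by the hash (together with the bidirectional token exchange) yields load $\eta=2$.

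The analysis then splits into two observations. First, for any color class $G_j$ whose edges are never corrupted during the BFS, flooding on a spanning $\Omega(\phi)$-expander from a fixed root for $\kDiam=O(\log n/\phi)$ rounds produces a spanning BFS tree of depth $\le\kDiam$. Second, in each round the adversary corrupts at most $f$ edges and each edge lies in at most $\eta=2$ color classes, so at most $2f\kDiam$ color classes are ever touched, bounding the number of spoilt trees. The main difficulty is threading the parameters so that $k$ is simultaneously large enough for $2f\kDiam\le 0.1k$ (leaving $0.9k$ good trees) and small enough for $d/k=\widetilde{\Omega}(1/\phi)$ (preserving expansion of each $G_j$ after sampling); the hypothesized minimum degree $d=\Omega(f/\phi^2)$ is precisely the quantitative assumption that makes both inequalities compatible. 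Once this is settled, the $O(\log n/\phi)$ round bound follows directly from the BFS depth.
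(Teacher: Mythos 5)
Your overall blueprint matches the paper's: randomly partition edges into $k$ color classes, run a bounded-depth BFS/flooding over each class in parallel toward a fixed root, and argue by a counting bound that an $f$-mobile adversary can spoil at most $O(f \cdot \kDiam)$ of the $k$ classes. Two places, however, deviate in ways that matter.

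First, the expansion step is a genuine gap. You assert that a ``Chernoff-type argument using the bounded independence of the hash, combined with the edge expansion of $G$'' shows each color class is a spanning $\Omega(\phi)$-expander. Degree concentration alone does not give you conductance of every cut; that is precisely the nontrivial content of Wulff--Nilsen's sampling theorem (\cite{wulff2017fully}, Lemma 20, quoted as Theorem~\ref{thm:conduct-karger}), which the paper invokes directly and which is stated for edges sampled \emph{fully independently}. To re-derive it you need a Karger-style cut-counting argument over the polynomially many near-minimum cuts, with exponentially small failure probability per cut — not something that $\Theta(\log n)$-wise independence hands you automatically. Replacing the one-round random-color exchange with a shared-seed $O(\log n)$-wise independent hash is an interesting attempt to shave the coloring round, but it forces you to re-prove expansion preservation under limited independence, and you don't supply that proof.

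Second, several of the surrounding details are slightly off relative to what your own approach would entail. With a consistent public hash, both endpoints of each edge compute the \emph{same} color, so each edge lands in exactly one class and you actually get $\eta=1$, not $\eta=2$; the $\eta\le2$ in the paper arises precisely because there the color is transmitted in round~1 and the adversary may cause sender and receiver to disagree, placing a corrupted edge in two classes. Also, a ``publicly seeded'' hash means the adversary knows the coloring; the counting bound $2f\cdot\kDiam$ spoiled classes is robust to this, but the expansion argument over $G_j$ is now over randomness the adversary has seen, so you should at least say a word about why this is harmless (in the paper the coloring is private node randomness exchanged in round~1, and the Wulff--Nilsen guarantee is established over that randomness before the adversary can react to the colors). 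Finally, the paper roots the BFS at $v_r=\argmax_v\ID(v)$ and propagates the maximum ID, rather than flooding ``$j$-tokens'' from a pre-agreed smallest-ID root; this is cosmetic, but the propagation is what lets each node learn its parent without any prior agreement beyond KT$_1$. The structural skeleton is right, but the proposal as written does not constitute a proof of the expansion claim, which is where the lemma's substance lies.
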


\paragraph{The Algorithm.} In the first round, for every edge $(u,v) \in E$ with $\ID(u)>\ID(v)$, let $u$ choose a color $c(u,v)$ sampled uniformly at random in $[k]$ and send this color to $v$, who sets the color of the edge $(u,v)$ to the received value $c(v,u)$. For every $i \in [k]$, define the directed subgraph: 
$$G_i=\{(u,v) ~\mid~ \ID(u) > \ID(v), c(u,v)=i\} \cup \{(v,u) ~\mid~ \ID(u) > \ID(v), c(v,u)=i\}.$$

Let $N_i(u)$ be the outgoing neighbors of $v$ in $G_i$. Each node $u$ proceeds to do the following BFS procedure in each $G_i$: Set a variable $I_{i,0}(u) = \ID(u)$ and $\parent_i(u) = \bot$ for each color $i \in [k]$. 
For $\ell = 1,\dots,z = O(\log{n}/\phi)$ rounds, in parallel for all colors $[k]$, each node $u$ sends to each (outgoing) neighbor $v \in N_i(u)$ the value $I_{i,\ell-1}(u)$. Let $L_{i,\ell}$ be the set of messages $u$ receives from neighbors $\{v \in N(u)  \mid c(u,v) = i\}$. Each node $u$ sets $I_{i,\ell}(u) = \max_{\ID(v) \in L_{i,\ell} \cup \{\ID(u)\}}(\ID(v))$, and if this value strictly increases, then it sets $\parent_i(u)$ to be the neighbor from which this value has arrived, oriented towards the other endpoint. In the final round, each node $u$ sends in parallel for each $i$ a message to $\parent_i(u)$, to orient the edge $\parent_i(u)$ towards itself. Each node that receives an edge orientation request, locally sets the orientation of that edge towards itself.

\paragraph*{Analysis, Proof of Lemma \ref{lem:weakTP-expanders}.}
For every round $j$, let $F_j$ be the set of (undirected) edges controlled by the mobile byzantine adversary in that round. A color $i \in [k]$ is denoted as \emph{good} if the adversary did not control any of the edges of $G_i$ during the first phase (i.e., no edge in $\bigcup_{j=1}^L F_j$ appears in $G_i$, in any orientation). Otherwise, the color $i$ is defined as \emph{bad}. For every $i \in [k]$, let $G'_i \subseteq G_i$ denote the output marked (directed) subgraph obtained at the end of the algorithm. We use the following result of Wullf-Nilsen \cite{wulff2017fully}:

\begin{theorem}[\cite{wulff2017fully}, Lemma 20]\label{thm:conduct-karger}
	Given $c>0, \gamma \geq 1$ and $\alpha\leq 1$, let $G=(V,E)$ be an $n$-node multigraph with degree at least $\gamma \alpha$. Let $G'=(V,E')$ be the multigraph obtained from $G$ by sampling each edge independently with probability 
$p=\min\{1, (12c+24)\ln n/(\alpha^2 \gamma)\}~$. Then, with probability $1-O(1/n^c)$, for every cut $(S,V \setminus S)$ in $G$, it holds that:
	\begin{itemize}
		\item if $\phi_{G}(S)\geq \alpha$ then $\phi_{G'}(S)$ deviates from $\phi_{G}(S)$ by a factor of at most $4$, and
		\item if $\phi_{G}(S)< \alpha$ then $\phi_{G'}(S)<6\alpha$.
	\end{itemize}
\end{theorem}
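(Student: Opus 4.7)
The plan is to combine the Karger-style conductance-preservation guarantee of Theorem~\ref{thm:conduct-karger} with a simple counting argument on corrupted colors, using the max-ID BFS as a one-hop-per-round mechanism that converges within the diameter of each color class.

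\textbf{Step 1: Conductance and diameter of sampled subgraphs.} I would first analyze the color partition as an edge-sampling process: since each edge independently receives a uniform color in $[k]$, each color class $G_i$ is a random subgraph of $G$ in which each edge appears with probability $p=1/k$, independently. Applying Theorem~\ref{thm:conduct-karger} with $\alpha=\phi$ and a sampling rate $p=1/k$ chosen so that the per-color expected degree is $\widetilde{\Omega}(1/\phi^2)$, the minimum-degree hypothesis $\Omega(f/\phi^2)$ (absorbing polylogarithmic factors as needed) guarantees that, w.h.p., every one of the $k$ color classes $G_i$ simultaneously has conductance $\Omega(\phi)$. Standard expander diameter bounds then yield $\mathrm{diam}(G_i)=O(\log n/\phi)$ for all $i$ w.h.p., so the max-ID BFS run for $z=O(\log n/\phi)$ rounds suffices for the value $I_{i,z}(u)$ at every node $u$ to equal $\ID(v_r)$, where $v_r$ is the global maximum-ID node (which lies in every $G_i$ since each $G_i$ spans $V$). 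The pointers $\parent_i(\cdot)$ then form a spanning tree of $G_i$ of depth $O(\log n/\phi)$, rooted at the common root $v_r$.

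\textbf{Step 2: Bounding bad colors.} Next I would account for the mobile adversary. Call a color $i$ \emph{bad} if some edge of $G_i$ is corrupted during the $z+O(1)$ rounds of the protocol, and \emph{good} otherwise. Since the adversary corrupts at most $f$ edges per round, the total number of edge-round corruptions is at most $O(fz)=O(f\log n/\phi)$; because each $G$-edge belongs to exactly one color class, the number of bad colors is likewise at most $O(f\log n/\phi)$, which is at most $0.1k$ in the stated parameter regime. For every good color $i$, no adversarial interference touches $G_i$ and the BFS proceeds exactly as in Step~1, so the output subgraph $G'_i$ is a rooted spanning tree of $G_i$ of depth at most $O(\log n/\phi)$, rooted at the common $v_r$. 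Thus at least $0.9k$ of the $G'_i$ satisfy condition~(i) of Definition~\ref{def:weakTP}.

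\textbf{Step 3: Load and round complexity.} Because each $G$-edge has exactly one color, it appears in at most one output subgraph $G'_i$, so the load is at most $1$; the slack to $\eta=2$ in the statement absorbs the final orientation round during which each node locally sets the direction of its $\parent_i(u)$ edge. The entire procedure consists of one coloring round, $z$ BFS rounds, and one orientation round, totaling $O(\log n/\phi)$ rounds, as claimed.

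\textbf{Main obstacle.} The principal quantitative difficulty is tuning the three coupled parameters $k$, the per-color expected degree, and the number $z$ of BFS rounds so that the Karger conductance-preservation bound of Theorem~\ref{thm:conduct-karger} applies to all $k$ color classes simultaneously via a union bound, \emph{while} the adversarial corruption budget $fz$ still fits within $0.1k$ bad colors. The hypothesis on minimum degree $\Omega(f/\phi^2)$ is calibrated precisely so that these two constraints can be met at once. A secondary subtlety is that the adversary may adaptively choose which edges to corrupt after observing the first-round color announcements; but since a single corruption suffices to mark a color bad and redundant corruptions on the same color do not help, the worst-case bound on bad colors is exactly the one used above.
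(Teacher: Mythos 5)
Your proposal does not actually prove the statement in question. Theorem~\ref{thm:conduct-karger} is a pure edge-sampling result (cited by the paper from Wulff-Nilsen, Lemma~20): for a multigraph with minimum degree at least $\gamma\alpha$, independently sampling each edge with probability $p=\min\{1,(12c+24)\ln n/(\alpha^2\gamma)\}$ preserves the conductance of \emph{every} cut up to a factor of $4$ (and keeps low-conductance cuts below $6\alpha$), with probability $1-O(1/n^c)$. What you have written is instead a proof sketch of the downstream application — essentially Lemma~\ref{lem:weakTP-expanders} and the surrounding Lemmas~\ref{lem:good-expander-sampled}, \ref{lem:good-properties}, and \ref{lem:stepone-correct} about computing a weak tree packing in expanders via random edge-coloring, BFS to the maximum-ID root, and counting bad colors against the mobile adversary. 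Your Step~1 explicitly \emph{invokes} Theorem~\ref{thm:conduct-karger} as a black box, so relative to the assigned statement the argument is circular: nothing in your proposal establishes the concentration claim itself.

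A genuine proof of Theorem~\ref{thm:conduct-karger} needs a different set of ingredients entirely: a Chernoff bound showing that for a fixed cut $(S,V\setminus S)$ the sampled quantities $|E'(S,V\setminus S)|$ and $\min(|E'(S)|,|E'(V\setminus S)|)$ concentrate around $p$ times their originals (this is where the degree hypothesis $\deg\geq\gamma\alpha$ enters, guaranteeing that $p\cdot|E(S)|=\Omega(c\ln n)$ so the relative deviation is constant), combined with a Karger-style cut-counting union bound over the exponentially many cuts — the number of cuts whose boundary is at most $\beta$ times the minimum-size relevant quantity grows only like $n^{O(\beta)}$, which is what lets the per-cut failure probability $n^{-\Theta(c\beta)}$ survive the union bound; low-conductance cuts ($\phi_G(S)<\alpha$) are handled by a one-sided upper-tail bound only, yielding the weaker $\phi_{G'}(S)<6\alpha$ guarantee. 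None of these steps appears in your proposal, so there is a gap: you have proved (sketchily, and plausibly correctly) a statement that \emph{uses} the theorem, not the theorem itself.
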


\begin{lemma}[Section 19.1 from \cite{Spielman09}, Fact 32 \cite{HitronP21-arXiv}]
\label{lem:large_cond_low_diam}
The diameter of every $n$-node $\phi$-expander is at most $O(\log{n}/\phi)$.
\end{lemma}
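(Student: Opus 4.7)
The plan is to prove the diameter bound by a textbook ball-growing argument on edge-volume, adapted to the edge-based conductance definition used in this paper.

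First I would set up bookkeeping. For a set $S\subseteq V$, let $\Vol(S)=\sum_{v\in S}\deg(v)$, write $E_{\mathrm{int}}(S)$ for the edges with both endpoints in $S$, and $|\partial S|=|E(S,V\setminus S)|$. Then $|E(S)|=|E_{\mathrm{int}}(S)|+|\partial S|$ and $\Vol(S)=2|E_{\mathrm{int}}(S)|+|\partial S|$, so $\Vol(S)/2\le |E(S)|\le \Vol(S)$. Also, since $|E(S)|+|E(V\setminus S)|=|E|+|\partial S|$, whenever $|E(S)|>|E(V\setminus S)|$ one automatically has $|E(S)|>|E|/2$.

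Next I would perform the ball-growing step. Fix an arbitrary source $v$ and let $B_i$ be its closed BFS ball of radius $i$. Each vertex in $B_{i+1}\setminus B_i$ is adjacent to $B_i$, so summing degrees gives $\Vol(B_{i+1}\setminus B_i)\ge |\partial B_i|$ and therefore $\Vol(B_{i+1})\ge \Vol(B_i)+|\partial B_i|$. As long as $|E(B_i)|\le |E(V\setminus B_i)|$, the $\phi$-expansion hypothesis yields $|\partial B_i|\ge \phi\,|E(B_i)|\ge (\phi/2)\,\Vol(B_i)$, hence $\Vol(B_{i+1})\ge (1+\phi/2)\,\Vol(B_i)$. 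Starting from $\Vol(B_0)=\deg(v)\ge 1$ and using $\Vol(B_i)\le 2|E|\le 2n^2$, the expansion regime must terminate within $r=O(\log n/\phi)$ steps; at termination $|E(B_r)|>|E(V\setminus B_r)|$, and by the identity above, $|E(B_r)|>|E|/2$.

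Finally I would close the argument by combining two such balls. Applying the same analysis from any second vertex $u$ gives $|E(B_r(u))|>|E|/2$. Since the two edge-sets $E(B_r(v))$ and $E(B_r(u))$ together have strictly more than $|E|$ edges, they share at least one edge $e$; its two endpoints lie in $B_r(v)\cup B_r(u)$, witnessing $\dist(u,v)\le 2r+1=O(\log n/\phi)$. The only mildly delicate step is the $\Vol$-vs-$|E(\cdot)|$ comparison needed to convert the edge-conductance lower bound into a clean geometric progression for $\Vol(B_i)$ — once that bookkeeping is in place, the rest of the argument is routine, so I do not anticipate a serious obstacle.
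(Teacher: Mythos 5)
Your proof is correct. The paper gives no argument of its own for this lemma—it simply cites Spielman's notes and Fact 32 of Hitron--Parter—and those sources establish the bound by exactly this ball-growing/volume-doubling argument; your adaptation to the paper's edge-based conductance definition (with $E(S)$ the edges incident to $S$, so $\Vol(S)/2 \le |E(S)| \le \Vol(S)$), the termination condition $|E(B_r)|>|E|/2$, and the two-ball intersection step giving $\dist(u,v)\le 2r+1$ are all handled correctly.
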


\begin{lemma}\label{lem:good-expander-sampled}
For each good color $i \in [k]$, the subgraph $G_i$ is an $\Omega(\phi)$-expander with diameter at most $d\log{n}/\phi$, for some constant $d$, w.h.p.
\end{lemma}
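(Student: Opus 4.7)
The plan is to decouple the adversary from the random coloring by introducing, for each color $i \in [k]$, an auxiliary subgraph $G'_i := \{ e = (u,v) \in E \setminus F_1 : c(u,v) = i \}$, where $F_1$ is the set of edges corrupted by the adversary in round~$1$ (when the colors are transmitted). For any \emph{good} color~$i$, every edge of $G_i$ lies outside $\bigcup_{j=1}^{L} F_j \supseteq F_1$, and on a non-corrupted edge the sender- and receiver-colors coincide; these two observations together imply the identity $G_i = G'_i$. Hence it suffices to prove, with high probability, that $G'_i$ is an $\Omega(\phi)$-expander with diameter $O(\log n/\phi)$ simultaneously for all $i \in [k]$; the identity then transfers the conclusion to every good color.

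To analyze $G'_i$, I would first verify the hypotheses of Theorem~\ref{thm:conduct-karger} on the ground graph $G \setminus F_1$. Since $|F_1|\leq f$ and the minimum degree of $G$ is $\Omega(f/\phi^2)$, the minimum degree of $G \setminus F_1$ remains $\Omega(f/\phi^2)$, and a short cut-by-cut argument (using that for any non-empty $S$ the volume $\mathrm{vol}_G(S) \geq \Omega(f/\phi^2) \gg f/\phi$ so that the at-most-$f$ removed cut edges are absorbed in the $\Omega(\phi \cdot \mathrm{vol}_G(S))$ budget) shows the conductance is still $\Omega(\phi)$. Because $F_1$ is fixed obliviously to the honest random coins $\{c(e)\}_{e \in E \setminus F_1}$, which are i.i.d.\ uniform on $[k]$, the subgraph $G'_i$ is exactly the random subsample of $G \setminus F_1$ obtained by retaining each edge independently with probability $1/k$.

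Plugging $\alpha = \Theta(\phi)$ and $\gamma = \Theta(f/\phi^3)$ into Theorem~\ref{thm:conduct-karger}, the requirement $p \geq (12c+24)\ln n/(\alpha^2\gamma)$ translates to $1/k \geq \widetilde{\Omega}(\phi/f)$, which is exactly the assumed bound $k = O(f/(\phi\log n))$. The theorem then yields that $G'_i$ has conductance $\Omega(\phi)$ with high probability, and Lemma~\ref{lem:large_cond_low_diam} upgrades this to the desired diameter bound $d\log n/\phi$. A union bound over the $k = \poly(n)$ colors completes the argument.

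The main obstacle I expect is the subtle conditioning involved: the event ``$i$ is good'' depends on all $L=O(\log n/\phi)$ rounds of corruption, and a rushing adversary may choose $F_j$ for $j\geq 2$ adaptively based on round-$1$ messages (and hence on the coloring). The decoupling via $G'_i$ sidesteps this difficulty entirely, since the expansion statement for $G'_i$ is unconditional over both the coloring and $F_1$. The one spot where I would be extra careful is arguing that $F_1$ is indeed selected obliviously of the round-$1$ coins: if the adversary is allowed to see the sent colors before committing $F_1$, a principle-of-deferred-decisions argument would be required to preserve the i.i.d.\ uniform distribution of $\{c(e)\}_{e \notin F_1}$ conditional on $F_1$.
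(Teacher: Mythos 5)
Your proof is correct, and it reaches the same conclusion by essentially the same route as the paper: apply the Wulff-Nilsen sampling theorem (Theorem~\ref{thm:conduct-karger}) to deduce that each color class is an $\Omega(\phi)$-expander w.h.p., take a union bound over all $k$ colors, invoke Lemma~\ref{lem:large_cond_low_diam} for the diameter, and then observe that for a \emph{good} color the adversary does not touch any of its edges, so the directed graph $G_i$ coincides (as a bidirected graph) with the color class.

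The one place you diverge is that you sample from $G \setminus F_1$ rather than from $G$, and therefore you must insert an extra step showing $G\setminus F_1$ retains conductance $\Omega(\phi)$. The paper avoids this: it treats the undirected color class $H_i := \{e : c(e)=i\}$ as a $p$-subsample of the full graph $G$ (this is an unconditional distributional fact about the random coloring, with no reference to $F_1$), applies the sampling theorem to $G$ directly, and union-bounds over \emph{all} $i\in[k]$. Because the w.h.p.\ event quantifies over every color and not just the eventually-good ones, there is no conditioning bias to worry about: being ``good'' enters only deterministically, through the identity $G_i = H_i$. Your restriction to $E\setminus F_1$ is sound but unnecessary, and it introduces the extra volume/cut bookkeeping you describe. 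Finally, your concern about a rushing adversary choosing $F_1$ after observing the round-1 messages is moot in this paper's model: the adversary is explicitly oblivious to the nodes' randomness, and even absent that assumption the union-bound-over-all-colors argument would still go through without any deferred-decisions machinery.
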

\begin{proof}
For each good color $i \in [k]$, we have that $G_i=G[p]$ where each edge in $G$ is sampled independently with probability $p  = 1/k = O(\frac{\phi\log{n}}{f})$. Note that since the adversary does not control any of the $G_i$ edges, all edges in $G_i$ are bidirectional. 

By \Cref{thm:conduct-karger} for $\alpha = \phi$, $\gamma = O(\frac{f}{\phi^3})$ such that $p=\min\{1,\ln n/(\alpha^2 \gamma)\}$, then w.h.p. a subgraph obtained by sampling by taking each edge in $G$ with probability $p$ has conductance at least $\phi/4$. By the union bound, all graphs $G_i$ have conductance $\geq \phi/4$, w.h.p.  Therefore, by \Cref{lem:large_cond_low_diam} the graph $G_i$ has diameter at most $d \cdot\log n/\phi$ for some constant $d$.
\end{proof}

\begin{lemma}\label{lem:good-properties}
For each good color $i \in [k]$, the output directed subgraph $G'_i$ defined by $\{\parent_i(u)\}_{u \in V}$ is a directed spanning tree of depth $O(\log n/ \phi)$ oriented towards the node with the largest ID in the network, $v_r = \argmax_v \ID(v)$.
\end{lemma}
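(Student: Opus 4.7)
My plan is to reduce the lemma to a standard analysis of max-ID BFS on the (uncorrupted, low-diameter, connected) subgraph $G_i$. First I would observe the crucial consequence of $i$ being good: since the adversary does not touch any edge of $G_i$ during the $O(\log n/\phi)$ rounds of the procedure, every message exchanged along a $G_i$-edge, including the initial color-assignment message that defines $G_i$, arrives unmodified. In particular, for every node $u$ and every round $\ell$, the values $I_{i,\ell}(u)$ and $\parent_i(u)$ on $G_i$ are exactly what they would be in a fault-free execution of this max-ID BFS on $G_i$. This removes the adversary from the analysis altogether.

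Next I would invoke \Cref{lem:good-expander-sampled} to conclude that $G_i$ has diameter at most $d\log n/\phi$ w.h.p. Choosing the BFS depth $z = O(\log n/\phi)$ to satisfy $z \geq d\log n/\phi$, I would then prove by induction on $\ell$ the standard invariant
\[
I_{i,\ell}(u) \;=\; \max\bigl\{\,\ID(v) \,:\, \dist_{G_i}(u,v)\le \ell\,\bigr\}.
\]
The base case $\ell = 0$ is immediate, and the inductive step follows because each $u$ takes the maximum of its previous value with the values sent by all its $G_i$-neighbors. After $z$ rounds every node lies within distance $z$ of every other node in $G_i$, so $I_{i,z}(u) = \max_{v\in V}\ID(v) = \ID(v_r)$ uniformly.

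I would then verify the tree structure. For $u\neq v_r$, let $\ell_u \in \{1,\dots,z\}$ be the first round at which $I_{i,\ell_u}(u) = \ID(v_r)$; this is well defined since $u$'s initial value is $\ID(u) < \ID(v_r)$ and its final value is $\ID(v_r)$. At round $\ell_u$ the value $I$ strictly increases, so $\parent_i(u)$ is set to the neighbor $v$ that sent $\ID(v_r)$ in that round. This neighbor must have satisfied $I_{i,\ell_u-1}(v) = \ID(v_r)$, hence $\ell_v \le \ell_u - 1 < \ell_u$. Therefore iteration of $\parent_i$ strictly decreases the integer $\ell_\cdot$, which must bottom out at the unique node with $\ell = 0$, namely $v_r$. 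This rules out cycles and shows that the parent relation defines a tree rooted at $v_r$. After the final re-orientation step, every tree edge is directed from child to parent, i.e.\ towards $v_r$, as required. The depth bound $O(\log n/\phi)$ is immediate from $\ell_u \le z$.

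\textbf{Main obstacle.} The only slightly delicate point is the tree/acyclicity argument: a priori many neighbors of $u$ may end up with the same final $I$-value $\ID(v_r)$, so one cannot rule out cycles by comparing final $I$-values alone. The right progress measure is the first-arrival time $\ell_u$, which decreases strictly along parent pointers precisely because $\parent_i(u)$ records the round in which $u$'s maximum strictly increased. Everything else is a standard BFS argument, made possible by the ``no corruption on $G_i$'' property of good colors together with the expander-diameter bound from \Cref{lem:good-expander-sampled}.
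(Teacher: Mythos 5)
Your proof is correct and follows essentially the same route as the paper: both reduce to a fault-free max-ID BFS analysis on the connected, low-diameter subgraph $G_i$, and your first-arrival time $\ell_u$ coincides with $\dist_{G_i}(u,v_r)$, so your decreasing-potential argument along parent pointers is the same as the paper's layer-by-layer induction showing $\parent_i(u)\in V_i(j-1)$ for $u\in V_i(j)$. Your explicit handling of acyclicity (via the strict decrease of $\ell_u$ and the fact that the final parent assignment happens at the round the maximum first arrives) is a slightly more careful write-up of the same idea.
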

\begin{proof}
Since $i$ is good, the adversary did not control any of the edges of $G_i$ throughout the entire algorithm. Since $v_r = \argmax_v \ID(v)$, and all the message $v_r$ receives are real IDs of nodes, it never changes the value $\parent_i(v_r) = \bot$. Let $V_i(j)$ be nodes of distance $j$ from $v_r$ in $G_i$. Let $V_i(-1)=\{\bot\}$. We show by induction on $j\geq 0$ that for any $u \in V_i(j)$, it holds that $I_{i,j}(u) = \ID(v_r)$ and $\parent_i(u) \in V_i(j-1)$. Clearly, $I_{i,0}(v_r)=\ID(v_r)$ and as $\parent_i(v_r)=\bot$, the claim holds for $j=0$. Assume this is the case for $j-1$. We note that $V \setminus \left(\bigcup_{a=0}^{j-1} V_a\right)$ has not seen the value $\ID(v_r)$ before round $j$, as they are of distance at least $j$ from $v_r$ in $G_j$. By induction, all $u\in V_{j-1}$ set $I_{i,j-1}(u) = v_r$, therefore any node $u \in V_j$ receives a message from a node $v \in V_{j-1}$ with the value $\ID(v_r)$ and sets its parent to be in $V_{j-1}$. The induction holds and the claim follows.
\end{proof}

Set the duration of the first phase to $L=3d\log{n}/\phi$ where $d$ is the constant from Lemma \ref{lem:good-expander-sampled} and let $k =20fL$. 
\begin{lemma}\label{lem:stepone-correct}
The collection of subgraphs $\{ G'_i\}_{i \in [k]}$ is a \emph{weak}-$(k, \kDiam,\eta)$ Tree-Packing with $\kDiam=O(\log n/\phi)$ and $\eta=2$. 
\end{lemma}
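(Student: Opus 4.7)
My plan is to verify the two defining properties of a weak $(k,\kDiam,\eta)$ tree-packing separately, leveraging the two structural lemmas (\Cref{lem:good-expander-sampled} and \Cref{lem:good-properties}) that have already been established for good colors, and doing a simple counting argument for the fraction of good colors and for the load.

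First, I would bound the number of bad colors. Since the adversary controls at most $f$ edges per round over the $L=3d\log n/\phi$ rounds of the algorithm, the set $\bigcup_{j=1}^L F_j$ of all ever-corrupted undirected edges has size at most $fL$. Because the color assignment $c(\cdot,\cdot)$ gives every undirected edge exactly one color, each corrupted edge contaminates exactly one color. Hence at most $fL$ colors are bad, and with $k=20fL$ this yields at least $19k/20\geq 0.9k$ good colors, matching condition (i) of \Cref{def:weakTP}.

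Second, I would invoke the lemmas on good colors. For every good color $i$, \Cref{lem:good-properties} says that $G'_i$ is a directed spanning tree oriented towards the globally defined root $v_r=\argmax_v \ID(v)$; since $v_r$ does not depend on $i$, all good trees share the same root. \Cref{lem:good-expander-sampled} shows that $G_i$ is a $\Theta(\phi)$-expander of diameter $\leq d\log n/\phi$, so running the BFS for $L=3d\log n/\phi$ rounds is more than enough time for the label $\ID(v_r)$ to reach every node via $G_i$; hence $G'_i$ has depth at most $d\log n/\phi$ and its undirected diameter is at most $2d\log n/\phi=O(\log n/\phi)=\kDiam$.

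Third, I would bound the load. Because every undirected $G$-edge is assigned a single color by the initial random coloring, it can appear in $G_i$ for at most one index $i$, and therefore also in at most one $G'_i\subseteq G_i$. The load is thus $1\leq 2=\eta$, which gives condition (ii) of \Cref{def:weakTP}. Both conditions now hold, proving the lemma. I do not expect any serious obstacle: the heavy lifting (the expansion of $G_i$ via Karger sampling and the correctness of the BFS under no corruption) is entirely absorbed into the two preceding lemmas, and what remains is a bookkeeping argument about adversarial edge counts and per-edge color uniqueness.
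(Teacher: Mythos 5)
Your proof has the right skeleton, but there is a genuine gap in the counting arguments for both the number of bad colors and the load, and it stems from the same overlooked subtlety: the first round of the algorithm — the one in which $u$ sends the random color $c(u,v)$ to $v$ — is itself subject to corruption by the mobile adversary.

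You assert that ``the color assignment $c(\cdot,\cdot)$ gives every undirected edge exactly one color,'' and again later that ``every undirected $G$-edge is assigned a single color by the initial random coloring.'' This is false in general. Look at how $G_i$ is defined: the directed edge $(u,v)$ (from the higher-ID endpoint) belongs to $G_{c(u,v)}$ where $c(u,v)$ is the color $u$ chose, while the directed edge $(v,u)$ belongs to $G_{c(v,u)}$ where $c(v,u)$ is the value $v$ \emph{received}. If the adversary corrupts $\{u,v\}$ in round $1$, then $c(v,u)\neq c(u,v)$ and the two orientations of the same undirected edge land in two distinct subgraphs. Consequently: (a) a single corrupted undirected edge can contaminate \emph{two} colors, so the correct bound is $2fL$ bad colors, not $fL$ (still $\leq 0.1k$ under $k=20fL$, so the $0.9k$ conclusion holds, but only by a tighter margin than you claim); and (b) an undirected edge can appear in up to two subgraphs $G_i$, so the load is at most $2$, not $1$. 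This is precisely why the paper phrases both counts in terms of \emph{directed} edges (``each directed edge $(u,v)$ belongs to a unique subgraph $G_i$,'' hence $2fL$ bad colors and $\eta\leq 2$). Your per-undirected-edge reasoning would need the color-exchange round to be fault-free, which is not something the $f$-mobile adversary grants you. The remainder of your argument (invoking Lemmas~\ref{lem:good-expander-sampled} and \ref{lem:good-properties} on good colors, and that all good trees share the root $v_r=\argmax_v\ID(v)$) matches the paper and is fine.
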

\begin{proof}
As the adversary can control at most $f$ undirected edges in each round, it corrupts at most $2f L$ directed edges. Since each directed edge $(u,v)$ belongs to a unique subgraph $G_i$, in total there are at most $2fL$ bad colors. Since $k =20fL$, there are at least $0.9k$ good subgraphs. By Lemma \ref{lem:good-properties}, for every good color $i$, the subgraph $G'_i$ corresponds to a spanning tree of depth at most $\kDiam$ and rooted at $v_r$. Since each directed edge is committed to a single color, we get that $\eta\leq 2$ (even among the bad subgraphs).
\end{proof}
	
	\section{Resilience to Bounded Round-Error Corruption Rate}
\label{sec:byz_budgeted}

	 In this section, we consider a stronger adversary, which is allowed to corrupt $f$ edges per round ``on average'', allowing e.g. for specific rounds with unbounded faults. We assume that all nodes terminate exactly at some time $r'$ for some given $r'$, and the adversary may corrupt in total $r' \cdot f$ communication transmissions in the duration of the protocol. Our goal is given a protocol $\cA$, to construct a protocol $\cA'$ with the same output as $\cA$ which is resilient to this adversary.
	 
	 \begin{theorem}
	 	\label{thm:budgeted_main_intro}
	 	Given a distributed knowledge of a weak $(k,\kDiam,\eta)$ tree packing $\cT$ for graph $G$, then for any $r$-round algorithm $\cA$ over $G$, there is an equivalent $r'$-round algorithm $\cA'$ for $G$ that is resilient to round-error rate $f$, where $r'=\widetilde{O}(r \cdot \kDiam)$ and $f=\Theta(\frac{k}{\eta \log{n}})$.
	 \end{theorem}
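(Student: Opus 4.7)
The plan is to extend the per-round compiler of \Cref{thm:res-mobile-improved} by a rewind-if-error scheme in the spirit of Schulman's interactive coding~\cite{S92}, using the weak tree packing $\cT$ as the only structural resource. Each node $v$ will maintain a local counter $t_v$ tracking the number of rounds of $\cA$ it believes it has successfully simulated, together with the list of estimated received messages $\{\widetilde m_j(u,v)\}_{u\in N(v), \, j\le t_v}$. The compiled algorithm proceeds in identical \emph{super-phases} of length $\Theta(\kDiam \cdot \polylog n)$ \congest rounds, so that the total number of super-phases is $\Theta(r)$ after calibration. Each super-phase attempts to advance the simulation of $\cA$ by one round at every node, using the three-step procedure of \Cref{thm:res-mobile-improved} (round-$t_v$ message exchange, $\ell_0$-sampling upcast on each tree, and dominating-mismatch downcast via $\ECCSafeBroadcast$), augmented by a verification and possible rewind at the end.

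The first step I would formalize is the verification. After the message-correction procedure of a super-phase, every node $v$ locally computes a short hash of its transcript for the currently simulated round. These hashes are aggregated upward along each $T\in\cT$ by an RS-compiled subroutine scheduled via \Cref{lem:scheduler_security}, the root $v_r$ selects the majority outcome (``success'' or ``failure'') across the trees, and this single bit is disseminated by $\ECCSafeBroadcast$. On ``failure'', every node that incremented $t_v$ in this super-phase performs a rewind, decrementing $t_v$ and discarding the corresponding estimated messages; only the nodes currently at the maximal depth rewind, which matches the intuition from the technical overview that ``only the nodes of largest $t_u$ values apply a rewind step''. Because $\ECCSafeBroadcast$ and the scheduled $\ell_0$-sketches of \Cref{sec:tools} are robust as long as the adversary corrupts at most $\Theta(k/\eta)$ edges \emph{within the super-phase}, every node decodes the same verification bit whenever the per-super-phase corruption budget is below this threshold, so rewinds are globally consistent.

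The second step is the potential-function accounting. I would define
\[
\Phi \;=\; \min_v t_v
\]
(or, if convenient, $\Phi=\sum_v t_v$ normalized appropriately) and show that in any super-phase one of the following holds: (i) the adversary spends fewer than $c\cdot k/\eta$ corruptions in that super-phase, in which case \Cref{lem:correct-BroadcastECC} and \Cref{lem:faults_decrease} apply verbatim, the simulation of one round of $\cA$ succeeds on every node, verification returns ``success'', and $\Phi$ increases by one; or (ii) the adversary spends $\Omega(k/\eta)$ corruptions, in which case either $\Phi$ increases (a lucky success) or a globally-consistent rewind is triggered and $\Phi$ does not increase. With a total budget of $r'\cdot f$ corruptions and threshold $\Omega(k/\eta)=\Omega(f\log n)$ per disruptive super-phase, the number of ``bad'' super-phases is at most $O(r'\cdot f/(f\log n))=O(r'/\log n)$. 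Setting the super-phase length to $\widetilde\Theta(\kDiam)$ and the total round budget to $r'=\widetilde\Theta(r\cdot \kDiam)$ leaves at least $r$ ``good'' super-phases, so $\Phi$ reaches $r$ by termination and every node outputs the correct transcript of $\cA$.

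The main obstacle is ensuring that rewinds are \emph{globally synchronized} despite the fact that the adversary can concentrate a huge burst of corruptions in a single super-phase. I would handle this by piggybacking the current counter $t_v$ onto every exchanged message (as part of the message IDs already used in \Cref{sec:comp-exactly_k_edges_fast}) and letting each node ignore incoming messages whose embedded counter disagrees with its own; a secondary verification round at the start of every super-phase, again implemented by $\ECCSafeBroadcast$ over $\cT$, confirms the common value of $\min_v t_v$. The extra logarithmic slack in $f=\Theta(k/(\eta\log n))$ compared with the per-round bound $\Theta(k/\eta)$ of \Cref{thm:res-mobile-improved} is exactly what is needed so that, averaged over the $\widetilde\Theta(\kDiam)$ rounds of a super-phase, the adversary cannot routinely cross the $k/\eta$ per-super-phase threshold, which is what makes the potential argument go through.
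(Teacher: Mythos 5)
Your high-level architecture is essentially the paper's: simulate one round of $\cA$ per super-phase using the $\ell_0$-sketch correction machinery, append a hash-based verification step, and apply a rewind-if-error rule analyzed through a potential function and a good/bad super-phase count. However, there are three concrete gaps that would make the argument fail as written.

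\textbf{The potential function does not track correctness.} You define $\Phi = \min_v t_v$, where $t_v$ is a node's local counter of rounds it \emph{believes} it simulated. This quantity can increase in a super-phase where the adversary concentrates a burst of corruptions: nodes whose estimated incoming transcripts are already wrong will produce internally consistent but globally incorrect sends, pass the hash checks (since hashes only compare $\pi_i(u,v)$ against $\widetilde{\pi}_i(u,v)$, i.e. sent versus received, not against the true transcript), and append further wrong symbols — so $\min_v t_v$ grows while the simulation drifts away from the correct transcript $\Gamma$. The paper instead uses $\Phi(i)=\min_{(u,v)} 2\cdot\prefix(\widetilde{\pi}_i(u,v),\Gamma(u,v)) - \max_{(u,v)} |\widetilde{\pi}_i(u,v)|$, in which the term $\max |\widetilde{\pi}|$ explicitly \emph{penalizes} appending symbols (right or wrong) while the $\prefix$ term only credits correct ones. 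Both terms are needed: without the subtraction, wrong-but-agreed appends look like progress; without the $\prefix$ term, you cannot show that good super-phases actually make net progress.

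\textbf{Missing the consistency-implies-correctness lemma.} Your verification step detects disagreement between what a node sent and what its neighbor received. Translating ``all pairwise hash checks pass'' into ``all transcripts equal the fault-free transcript $\Gamma$'' is a genuine step, proven in the paper by an induction on transcript length (Lemma~\ref{lem:if_all_transcripts_consistent_then_is_real}, feeding into Lemma~\ref{lem:if_good_so_far_then_next_sent_good}). Without this, a good super-phase with a passing verification does not immediately yield $\Phi(i+1)\geq\Phi(i)+1$ under the correct potential, and with your $\min_v t_v$ it is vacuous anyway, as noted above.

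\textbf{The bad super-phase threshold and counting are off by a $\widetilde{\Theta}(\kDiam)$ factor.} You set the threshold for a ``bad'' super-phase at $\Theta(k/\eta)=\Theta(f\log n)$ total corruptions, and conclude at most $O(r' f/(f\log n))=O(r'/\log n)$ bad super-phases. But $r'$ here is the total \congest\ round count, $\widetilde{\Theta}(r\cdot\kDiam)$, so the bound is $\widetilde{O}(r\kDiam/\log n)$, which exceeds the number $\Theta(r)$ of super-phases whenever $\kDiam\gtrsim\log n$, making the bound vacuous. The correct threshold scales with the super-phase length: the correction procedure (Lemma~\ref{lem:correction_budgeted}) tolerates a \emph{round-error-rate} of $O(k/(\eta\log n))$, so across a super-phase of $t=\widetilde{\Theta}(\kDiam)$ rounds the budget is $\Theta(f\cdot t)$ total corruptions. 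With that threshold the count becomes $O(r'f/(ft))=O(r'/t)=O(r)$, matching the paper's bound of at most $r$ bad global-rounds out of $5r$ and leaving enough good super-phases for the potential to reach $r$.
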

	 
	 Similarly to \Cref{sec:comp-exactly_k_edges_fast}, we would like to simulate $\cA$ in a round-by-round manner, but in this setting we cannot correct all mismatches in every round, since the adversary may invest a ``large budget'' of faults in it.
	 
	 We use an approach referred to as the ``rewind-if-error'' paradigm for interactive coding.  For a comprehensive exposition of the rewind-if-error paradigm, and interactive coding in general, we refer to the excellent survey of \cite{G17}.
	 
	 Intuitively, our approach is as follows: If we have a round error rate of $f$, the number of rounds in which there can be more than $\alpha f$ faulty messages is at most $O(1/\alpha)$-fraction of the rounds. This means that if we apply the algorithm of Section~\ref{sec:exactly_k_edges_improved} on each simulated round such that it is resilient to $\alpha f$ total faults, in all but $O(1/\alpha)$-fraction of the rounds there are no mismatches after a correction phase in the \emph{entire network}. Rounds with more faults may cause errors; to remedy this, we combine the approach of Section~\ref{sec:exactly_k_edges_improved} with a global variant of the rewind-if-error scheme, in which if \emph{any} error in the past transcript is detected in \emph{any} part of the network, the entire network ``rewinds'' a step (by having nodes delete the last symbol of the transcript). This in turn may cause several issues, such that causing some nodes to have more rounds in their transcript than other nodes, which is why rewinding needs to be done with some care.

	\subsection{Algorithm Description}
	
	We assume that the algorithm $\cA$ for the fault-free setting is for the $1$-$\congest$ model. This can be assumed w.l.o.g. by paying a $O(\log{n})$ factor in its round complexity. Throughout the section, a symbol is a value in $\{0,1,\bot\}$.
	
	Let $\cH_k = \left\{h_i:\{0,1\}^k \rightarrow \{0,1\}^{\Theta(\log{n})} \right\}$ be the pairwise-independent hash function family of \Cref{lem:hash}.

	Our algorithm has $r' = 5r$ iterations (called \emph{global-rounds}), containing three phases: a round-initialization phase, a message-correcting phase, and a rewind-if-error phase. We design each phase to take $O(t)$ rounds for some integer $t = \widetilde{\Theta}(\kDiam)$, and to be resilient (in a sense) to a total of $O(ft)$ faults (i.e. a round-error rate of $O(f)$).
	
	Each node $v$ maintains for each edge $(u,v)$ variables $\widetilde{\pi}_i(u,v)$ (resp., $\pi_i(v,u)$) which informally contain an ``estimated'' transcript of all the messages received from $u$ at $v$ (resp., sent from $v$ to $u$) in Alg. $\cA$ up to round $|\widetilde{\pi}_i(u,v)|$ (resp., $\pi_i(v,u)$). Both these variables are initially set to be $\emptyset$, i.e., $\widetilde{\pi}_1(u,v),\pi_i(u,v)=\emptyset$. Throughout the algorithm, the following invariant is maintained:
	
\begin{invariant}
\label{inv:interactive-same-size-transcript}
For any global-round $i$ and for any node $v$, there exists a value $\gamma(v,i)$ such that for every neighbor $u \in N(v)$, $|\pi_i(v,u)| = \gamma(v,i)$ and $|\widetilde{\pi}_i(u,v)| = \gamma(v,i)$.
\end{invariant}
Clearly the invariant holds vacuously for $i=1$. We next describe the $i^{th}$ global-round. Note that it might be the case that in the $i^{th}$ global-round, node $u$ simulates some round $j_u$ of Alg. $\cA$ where $j_u \leq i$, and possibly $j_u \neq j_v$ for neighboring nodes $u,v$.

\paragraph{Round-Initialization Phase:} At the start of a global-round $i$, each node $u$ chooses for each neighbor $v$ a random string of size $O(\log{n})$, denoted by $R_i(u,v)$. Using $R_i(u,v)$, $u$ chooses a random hash $h_{R_i(u,v)} \in \cH$. For $2t$ rounds, it repeatedly sends to every neighbor $v$ the message 
	$$M_i(u,v) = (m_i(u,v),R_i(u,v),h_{R_i(u,v)}(\pi_i(u,v)),|\pi_i(u,v)|),$$ where $m_i(u,v)$ is the message $v$ sends $u$ according to $\cA$ in round $i$ assuming its incoming transcript is given by:
	$$(\widetilde{\pi}_i(v,u_1),\dots,\widetilde{\pi}_i(v,u_{\deg(v)})).$$ 
We note that this is well defined due to \Cref{inv:interactive-same-size-transcript}. If the node $u$ has terminated according to $\cA$ (i.e., based in its incoming transcript $\{\widetilde{\pi}_i(w,u)\}_{w \in N(u)}$), it sends the special symbol $m_i(u,v) = \bot$, instead. In these $2t$ rounds, each node $v$ receives from each $u$ the messages $\widetilde{M}_i(u,v,1),\dots,\widetilde{M}_i(u,v,2t)$. Following this, each $v$ locally sets for each neighbor $u$ a variable $\widetilde{M}_i(u,v) = \MAJ(\widetilde{M}_i(u,v,1),\dots,\widetilde{M}_i(u,v,2t))$, where the $\MAJ(\cdot)$ function returns the majority value or $0$, if no such exists. 
	
\paragraph*{Message-Correcting Phase:} For an integer parameter $d$, in a $d$-Message Correction Procedure, each vertex $u$ holds as input a message for each neighbor $M(u,v)$ (its \emph{outgoing values}) and a received message $\widetilde{M}(u,v)$ (its \emph{incoming values}). Define the number of \emph{mismatches} in the input instance by $|\{(u,v)\in E(G) \stt M(u,v) \neq \widetilde{M}(u,v)\}|$. Then, the message correction procedure \emph{corrects} all the mismatches under the promise that the number of mismatches in the input instance is at most $d$ (and that the error-rate is at most $f$). By correcting mismatches, we mean that each node $v$ obtained the correct received values $M(u,v)$ from each neighbor $u$. We prove in \Cref{sec:budgeted_k_edges_message_correction} the existence of the following procedure (which essentially mimics Step (2) of Subsection~\ref{sec:exactly_k_edges_improved}):
	
\begin{lemma}
\label{lem:correction_budgeted}
Let $t = \widetilde{\Omega}(\kDiam)$ be an integer known to all nodes. Given a distributed knowledge of a \emph{weak} $(k,\kDiam,\eta)$ tree packing $\cT$ for graph $G$, then the $d$-message correction procedure is a $\Theta(t)$-round procedure that is resilient to a round-error-rate is at most $d$, assuming that $d=O(\frac{k}{\eta \log{n}})$. The round complexity of the algorithm does not depend on $d$. 
\end{lemma}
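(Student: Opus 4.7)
The plan is to run the iterative $\ell_0$-sampler procedure of Step (2) in Section \ref{sec:comp-exactly_k_edges_fast} essentially verbatim, with the $d$-budget playing the role that $f$ played in the fully-mobile setting. On input $\{M(u,v)\}$ and $\{\widetilde{M}(u,v)\}$ at each node $v$, form the signed multi-set $S_0(v) = \{(M(v,u),+1) : u \in N(v)\} \cup \{(\widetilde{M}(u,v),-1) : u \in N(v)\}$; correctly-delivered pairs cancel in $\bigcup_v S_0(v)$, so the nonzero-frequency elements of the global multi-set pinpoint both sides of every mismatch. Over $z = O(\log d)$ iterations the root $v_r$ seeds $\Theta(\log n)$ independent $\ell_0$-sketches of the current signed multi-set; each tree $T \in \cT$ upcasts these sketches via the procedure $L0(T,S_j)$, and the root then downcasts the list of dominating observed mismatches via \ECCSafeBroadcast so that each endpoint can correct its estimate for any mismatch whose identifier matches one of its incoming messages.

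All $k$ tree upcasts are executed in parallel through the RS-compiler of \Cref{thm:less_than_one_error} scheduled by \Cref{lem:scheduler_security}, whose statement is already phrased in terms of round-error rate. Under the input promise of round-error rate at most $d$, the scheduler guarantees that at most $\RStime \cdot \RSthresh \cdot d \eta$ of the RS-compiled upcasts are corrupted; together with the $0.9k$ spanning trees supplied by the weak packing, this leaves a $(1-1/c')$ fraction of the upcasts acting honestly whenever $k \geq c' \cdot \eta \log n \cdot d$, which is exactly the hypothesis $d = O(k/(\eta \log n))$. The downcast relies on the same scheduler through \Cref{lem:correct-BroadcastECC}, ensuring every node consistently receives the dominating-mismatch list at the end of each iteration. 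Setting the support threshold $\Delta_j = \Theta(2^j \eta \log n)$, the Chernoff argument of \Cref{lem:faulty_message_chernoff} and the counting argument of \Cref{lem:faults_decrease} transport with $d$ in place of $f$, giving $B_j \leq 2d/2^j$ w.h.p.\ after iteration $j$; after $O(\log d)$ iterations the residual mismatch count is zero with high probability.

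Each iteration costs a constant number of $\widetilde{O}(\kDiam)$-round primitives (the sketch upcast and the \ECCSafeBroadcast downcast), so the total cost is $\widetilde{O}(\kDiam \log d) = O(t)$ once $t = \widetilde{\Omega}(\kDiam)$ is chosen to absorb the polylogarithmic factors; because $d \leq \poly(n)$ the $\log d$ factor is swallowed by the $\widetilde{O}$ and the final round count depends only on $t$, not on $d$. The main obstacle will be calibrating $\Delta_j$ so that the $O(d \eta)$ corrupted upcasts in any single iteration cannot collude to push a phantom mismatch above threshold, while every real $(j-1)$-mismatch is still observed on enough good trees to clear $\Delta_j$; this is the same delicate balance as in Section \ref{sec:comp-exactly_k_edges_fast}, but now the per-iteration bad-tree count must be obtained as an amortized consequence of round-error rate over the $\Theta(t)$ rounds of a single scheduled phase, rather than being enforced round-by-round as in the mobile-adversary setting.
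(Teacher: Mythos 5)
Your high-level plan is exactly the paper's approach: recycle the iterative $\ell_0$-sampling/sketch procedure of Section~\ref{sec:comp-exactly_k_edges_fast}, schedule the per-tree upcasts with the RS-scheduler of \Cref{lem:scheduler_security} (which is already phrased in terms of round-error rate), downcast the dominating mismatches with an ECC-based broadcast, and re-run the $B_j \le d/2^j$ induction. The paper's Appendix does precisely this, with a padded variant $\ModifiedECCSafeBroadcast$ and padded $L0(T,S_j,\tau)$ so every iteration lasts exactly $\Theta(\tau)$ rounds, and with $z = O(\log n)$ iterations (not $O(\log d)$) so the round count is literally independent of $d$ as the lemma requires; your ``$d \le \poly(n)$'' remark is the right instinct, but you should state it as ``always run $\Theta(\log n)$ iterations'' rather than padding after the fact.

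The place where your sketch falls short of a proof is the quantitative per-iteration bad-tree bound, which you correctly flag as ``the main obstacle'' but then mis-state. You write that at most $\RStime\cdot\RSthresh\cdot d\eta$ upcasts are corrupted in any single iteration, i.e.\ the same $O(d\eta)$ bound you would get in the $f$-mobile setting with $f$ replaced by $d$. That is not true here: the adversary's budget is $t\cdot d$ corrupted messages over the \emph{entire} $\Theta(t)$-round procedure, and a single iteration of the scheduler lasts only $\Theta(\tau)$ rounds with $\tau = \Theta(t/\log n)$. The adversary can therefore dump its whole budget into one iteration. Tracing the averaging argument in the proof of \Cref{lem:scheduler_security} with $t\cdot d = \Theta(\tau d\log n)$ corrupted messages against a per-algorithm failure threshold of $\Theta(\tau/\eta)$ gives $O(d\eta\log n)$ bad trees in that iteration --- an extra $\log n$ factor compared to the $f$-mobile case. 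This is exactly why the hypothesis must strengthen from $f = O(k/\eta)$ in Section~\ref{sec:comp-exactly_k_edges_fast} to $d = O(k/(\eta\log n))$ here, and it is what the paper bakes into Eq.~(\ref{eq:good-trees-num_budg}). Once you replace $O(d\eta)$ by $O(d\eta\log n)$ the threshold $\Delta_j = \Theta(2^j\eta\log n)$ you propose is the right order, the analogue of \Cref{lem:faulty_message_chernoff} goes through, and the counting step in the analogue of \Cref{lem:faults_decrease} --- bounding the phantom mismatches with $j$-high support by $(\text{bad trees})\cdot s / \Delta_j$ --- closes with the stated hypothesis on $k$. Without that extra $\log n$ your $k \ge c'\eta\log n\cdot d$ requirement is unjustified by the chain of inequalities that precede it.
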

	
Specifically, the algorithm applies the $d$-Message Correction Procedure of Lemma \ref{lem:correction_budgeted} with $d=\alpha f$ for a sufficiently large $\alpha \geq 1$ and the lists $\{M_i(u,v)\}_{v \in N(u)}$ (resp., $\{\widetilde{M}_i(v,u)\}_{v \in N(u)}$), as the outgoing (resp., incoming) values for each $u$. At the end of the correction, each node $v$ obtains for each neighbor $u$ a value $M'_i(u,v)=(m'_i(u,v),R'_i(u,v),x'_i(u,v),\ell'_i(u,v))$. If the promise for the correction procedure holds (number of mismatches is at most $d$) then $M'_i(u,v)=M_i(u,v)$ for every $(u,v) \in E$. 
	
\noindent \paragraph{Rewind-If-Error Phase:} Every node $v$ locally checks for every neighbor $u$ whether $|\widetilde{\pi}_i(u,v)| = \ell'_i(u,v)$ and $h_{R'_i(u,v)}(\widetilde{\pi}_i(u,v)) = h_{R'_i(u,v)}(x'_i(u,v))$. If these conditions hold for all neighbors, then $v$ sets $\GoodState_i(v) = 1$, and otherwise $\GoodState_i(v) = 0$. The network then computes $\GoodState_i = \min_v \GoodState_i(v)$ and $\ell_{i} = \max_{v} \gamma(v,i)$ in the following manner. 
	
For every $T_j \in \mathcal{T}$, we define the following $\Pi_j$ protocol: In tree $T_j$, perform an upcast of the variables $\GoodState_i(v)$ and $|\widetilde{\pi}_i(u,v)|$, taking the minimum and maximum during the upcast, respectively. Then, in a downcast from the root $v_r$ to the leaves, these values are propagated downwards. Finally, all nodes wait until a total of $t$ rounds elapse. Let $\Pi_{RS,j}$ be the RS-compiled procedure using \Cref{thm:less_than_one_error}. We schedule all $k$ algorithms $\Pi_{RS,1},\dots,\Pi_{RS,k}$, in parallel, using \Cref{lem:scheduler_security}. 
	
	Let $\ell'_{i}(v),\GoodState'_i(v)$ be the value that node $v$ receives in the majority of trees (or zero it there is no majority). If $\GoodState'_i(v) = 1$, the $v$ updates its incoming and received transcripts by:
	$$\widetilde{\pi}_{i+1}(u,v) = \widetilde{\pi}_{i}(u,v) \circ m_i'(u,v) \mbox{~and~} \pi_{i+1}(v,u) = \pi_{i}(v,u) \circ m_i(v,u), \forall u\in N(v)~.$$ 
If $\GoodState'_i(v) = 0$ and $\gamma(v,i) = \ell'_{i}(v)$, then $v$ sets:
$$\widetilde{\pi}_{i+1}(v,u) = \DeleteLast(\widetilde{\pi}_i(v,u)) \mbox{~and~} \pi_{i+1}(v,u)= \DeleteLast(\pi_i(v,u)), \forall u \in N(v),$$ 
where $\DeleteLast$ is a function that removes the last symbol of its input string. In the remaining case where $\GoodState'_i(v) = 0$ and $\gamma(v,i) \leq \ell'_{i}(v)-1$, the $(i+1)^{th}$ transcripts are unchanged. I.e., $\widetilde{\pi}_{i+1}(v,u) = \widetilde{\pi}_i(v,u)$ and $\pi_{i+1}(v,u)= \pi_i(v,u)$ for every $u \in N(v)$. 

This concludes the description of a global round. After the final global round, each node $v$ outputs a value according to what it would output in $\cA$ if its incoming transcript from each neighbor $u$ was $\widetilde{\pi}_{r'}(u,v)$ and according to $v$'s inputs.
	
\subsection{Analysis}
Let $t' = \widetilde{O}(\kDiam)$ be an upper bound on the round complexity of each of the phases, and set $\alpha = 15t'/t$. We note that the round complexity of  \Cref{lem:correction_budgeted} is unaffected by $\alpha$, so this is well defined.
	
We say a global-round is \emph{bad} if during its execution the adversary corrupts at least $\alpha ft$ messages (i.e., the error-rate for that global-round is $\Omega(f)$). Otherwise, the global-round is \emph{good}. Let $\Gamma(u,v)$ denote the transcript of the protocol $\cA$ on $(u,v)$ in a fault-free network, padded by a suffix of $\poly(n)$many $\bot$ symbols\footnote{Intuitively, it's important to add the $\bot$ symbols to the end of the estimated transcripts and to $\Gamma$, in order to allow the potential function, defined later, to grow beyond $r$. Otherwise, if not handled carefully, a single adversarial rewind in the last global round could have caused an error.}.
	
First, we prove \Cref{inv:interactive-same-size-transcript}, i.e., that in any time $i$, and for any node $v$, there exists a value $\gamma(v,i)$ such that for neighbor $u \in N(v)$, $|\pi(v,u)| = \gamma(v,i)$ and $|\widetilde{\pi}_i(u,v)| = \gamma(v,i)$.
	
\begin{proof}[Proof of \Cref{inv:interactive-same-size-transcript}]
Let $v$ be a node. We prove the invariant by induction on $i$. For $i=1$ the claim is trivial, since for any neighbor $u$, $\widetilde{\pi}_i(u,v) = \emptyset$ and  $\pi_i(v,u) = \emptyset$. Assume this holds for $i-1$, where $i \geq 2$. 
		
By induction assumption, there is some value $\gamma(v,i-1)$ such that $|\pi_{i-1}(v,u)| = \gamma(v,i-1)$ and $|\widetilde{\pi}_{i-1}(u,v)| = \gamma(v,i-1)$. Recall that for any neighbor $u$, the variables of $v$, $\widetilde{\pi}_{i}(u,v)$ and $\pi_{i}(v,u)$ are only set in the Rewind-If-Error phase. If $\GoodState'_{i-1}(v) = 1$ then $v$ adds a single symbol to $\widetilde{\pi}_{i}(u,v)$ and $\pi_{i}(v,u)$ for each neighbor $u$ compared to the previous global-round, and the induction claim follows with $\gamma(v,i) = \gamma(v,i-1)+1$. If $\GoodState'_{i-1}(v) = 0$ then if $\gamma(v,i-1) = \ell'_{i-1}$, $v$ removes a symbol from all $\widetilde{\pi}_{i}(u,v)$ and $\pi_{i}(v,u)$ compared to the previous global-round. Otherwise, $|\widetilde{\pi}_{i}(u,v)|$ and $|\pi_{i}(v,u)|$ remain unchanged compared to the previous global round.
	\end{proof}

	
	\begin{lemma}
		\label{lem:number_of_bad_rounds}
		At most $r$ global-rounds are bad.
	\end{lemma}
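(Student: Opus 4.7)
The approach is a direct averaging argument that balances the adversary's total corruption budget against the minimum number of corruptions required to make a global-round bad. The choice $\alpha = 15 t'/t$ is calibrated precisely to make this accounting work out to the bound $r$.

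First I would bound the total number of actual communication rounds used by $\cA'$. Each global-round consists of three phases (round-initialization, message-correction, rewind-if-error), and by the choice of $t'$ each phase uses at most $t'$ communication rounds. Since there are $5r$ global-rounds, the total number of actual communication rounds is at most $15 r t'$. By the round-error-rate assumption, the adversary may corrupt at most $f$ communication transmissions per actual round on average, i.e., a total of at most $15 r t' \cdot f$ transmissions throughout the entire execution of $\cA'$.

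Next, by the definition of a bad global-round, any bad global-round contains at least
$$\alpha f t = \frac{15 t'}{t}\cdot f t = 15 t' f$$
corrupted transmissions. Letting $B$ denote the number of bad global-rounds, summing over bad global-rounds and using that the total corruption budget upper-bounds this sum gives
$$B \cdot 15 t' f \;\leq\; 15 r t' f,$$
so $B \leq r$, as claimed.

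There is essentially no hard step; the only point worth noting is that the constant $15$ appearing in the definition $\alpha = 15 t'/t$ is chosen exactly so that it absorbs both the factor $5$ from $r' = 5r$ global-rounds and the factor $3$ from the three phases per global-round, while bookkeeping the ratio between the per-phase upper bound $t'$ and the base parameter $t$ used inside the message-correction procedure. Once this constant is pinned down, the rest of the lemma follows from the single averaging inequality above.
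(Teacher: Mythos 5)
Your proof is correct and follows essentially the same averaging argument as the paper: bound the total number of communication rounds by $15 r t'$ (three phases of at most $t'$ rounds each, across $5r$ global-rounds), convert the round-error-rate into a total corruption budget of $15 r t' f$, and divide by the per-bad-global-round threshold $\alpha f t = 15 t' f$. You also correctly identify the role of the constant $15 = 5\times 3$, and your $\alpha ft$ threshold matches the paper's definition (the paper's proof has a typo writing $\alpha tr$ where it means $\alpha tf$).
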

	\begin{proof}
		Since there are in total at most $3t'r'$ rounds, the total number of corrupted messages is at most $15t'\cdot r\cdot  f = \alpha \cdot t\cdot r \cdot f$. Since a bad global-round is a round that has at least $\alpha tr$ total faults, the number of bad global-rounds is at most $r$.
	\end{proof}

	For strings $a,b$, let $\prefix(a,b)$ be the maximum index $j$ for which $a,b$ agree on the first $j$ symbols. Let $g(u,v,i) = 2 \cdot \prefix(\widetilde{\pi}_i(u,v),\Gamma(u,v))$, $g(i) = \min_{(u,v) \in E} g(u,v,i)$. To analyze the progress of the algorithm, define the potential function $\Phi(i)$ by: 
\begin{equation}\label{eq:potential}
\Phi(i)=\min_{(u,v) \in E} \left(2\cdot\prefix(\widetilde{\pi}_i(u,v),\Gamma(u,v))\right)- \max_{(u,v) \in E} |\widetilde{\pi}_i(u,v)|~.
\end{equation}
In other words, $\Phi(i)=g(i) - \ell_i$. In the following we bound the the potential $\Phi(i)$ in an inductive manner, depending on whether $i$ is a good or bad global-round.

\begin{lemma}\label{lem:bad-potential}
If $i$ is a bad global-round, then $\Phi(i+1) \geq \Phi(i)-3$.
\end{lemma}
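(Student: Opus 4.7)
The plan is to show that the bound on $\Phi(i+1) - \Phi(i)$ in a bad global-round follows purely from the structural constraint that each node's transcript variables can change by at most one symbol per global-round, regardless of adversarial behavior. Indeed, note that the stated bound does not invoke any property of ``bad'' rounds; this lemma is really a universal worst-case bound, meant to be paired with a companion lemma giving a positive progress guarantee on good rounds.

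First I would fix an arbitrary directed edge $(u,v)\in E$ and inspect the Rewind-If-Error phase. The only ways $v$ can modify $\widetilde{\pi}_i(u,v)$ between global-round $i$ and $i+1$ are: (a) appending a single symbol $m'_i(u,v)$, (b) deleting the last symbol via $\DeleteLast$, or (c) leaving it unchanged. In particular $\bigl||\widetilde{\pi}_{i+1}(u,v)| - |\widetilde{\pi}_i(u,v)|\bigr| \leq 1$, so taking the max over all edges yields $\ell_{i+1} \leq \ell_i + 1$.

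Next I would carry out a short case analysis on the effect of these three operations on $\prefix(\widetilde{\pi}_i(u,v), \Gamma(u,v))$. In case (a), if the appended symbol matches the next symbol of $\Gamma(u,v)$ \emph{and} $\widetilde{\pi}_i(u,v)$ was already a prefix of $\Gamma(u,v)$, the prefix length increases by $1$; otherwise it is unchanged. In case (b), deleting the last symbol reduces the prefix length by at most $1$ (it drops by exactly one only if the deleted symbol was itself in agreement with $\Gamma(u,v)$). Case (c) leaves the prefix untouched. Hence
$$\prefix(\widetilde{\pi}_{i+1}(u,v), \Gamma(u,v)) \geq \prefix(\widetilde{\pi}_i(u,v), \Gamma(u,v)) - 1,$$
so $g(u,v,i+1) \geq g(u,v,i) - 2$. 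Taking the min over all edges gives $g(i+1) \geq g(i) - 2$.

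Combining the two bounds,
$$\Phi(i+1) = g(i+1) - \ell_{i+1} \geq (g(i) - 2) - (\ell_i + 1) = \Phi(i) - 3,$$
as required. There is no real obstacle here; the only subtlety is correctly enumerating the add/delete/unchanged cases in the Rewind-If-Error phase and verifying that the factor $2$ in the definition of $g$ exactly absorbs the $\pm 1$ movement of the prefix while leaving room for the $+1$ slack in $\ell$.
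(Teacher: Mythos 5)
Your proof is correct and takes essentially the same route as the paper: observe that each transcript changes by at most one trailing symbol per global-round, hence $g(i+1)\geq g(i)-2$ and $\ell_{i+1}\leq\ell_i+1$, giving $\Phi(i+1)\geq\Phi(i)-3$. The paper states this in one compressed sentence; you have merely expanded the add/delete/unchanged case analysis, and your side remark that the bound is a universal worst-case estimate not using the ``bad'' hypothesis is accurate.
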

	\begin{proof}
In any global-round $i$, it holds that $\widetilde{\pi}_{i}(u,v)$ and $\widetilde{\pi}_{i+1}(u,v)$ differ by at most one symbol in the suffix (either (a) the last symbol is removed, (b) a new last symbol is added, or (c) the variable remains the same), for any $(u,v)\in E$. Therefore, $g(i+1) \geq g(i)-2$, and $\ell_{i+1} \leq \ell_{i}+1$. It follows that $\Phi(i+1) \geq \Phi(i)-3$.
	\end{proof}

To bound the potential increase for good global-rounds, we need the following auxiliary claims.
\begin{lemma}
\label{lem:budgeted_correct_message_in_good}
If $i$ is a good global-round, then each node $v$ receives from  every neighbor $u$ the correct value $M_i(u,v)$  in the Message-Correcting phase (i.e., $M'_i(u,v)=M_i(u,v)$ for every $u \in N(u)$), and the correct values of $\GoodState_i$ and $\ell_{i}$ in the Rewind-If-Error phase.
	\end{lemma}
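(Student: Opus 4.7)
The plan is to trace through each of the three phases of a good global-round $i$ separately, and show that the total adversarial budget of at most $\alpha t f$ corruptions in that round (which is precisely the ``good'' condition) is insufficient to disrupt any of them.

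The first step handles the round-initialization phase: every directed edge carries the fixed message $M_i(u,v)$ for $2t$ repetitions, and $v$ takes the coordinate-wise majority of its $2t$ received copies. For this majority to differ from $M_i(u,v)$ the adversary must corrupt at least $t$ of these $2t$ transmissions on that single edge, so the number of edges on which the majority is wrong is at most $\alpha t f / t = \alpha f$. Hence at the start of the correction phase the number of mismatched pairs $(\widetilde{M}_i(u,v),M_i(u,v))$ is at most $\alpha f$.

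Next, I will invoke Lemma~\ref{lem:correction_budgeted} with $d = \alpha f$ on the message-correcting phase. Its two premises are (i) at most $d$ input mismatches, which we just established, and (ii) round-error-rate in the phase at most $d$, which follows because the entire good round contributes at most $\alpha t f$ faults spread over a $\Theta(t)$-round phase, giving rate $O(\alpha f)$. Since $\alpha=O(1)$ and $f=\Theta(k/(\eta\log n))$ in Theorem~\ref{thm:budgeted_main_intro}, we may indeed take $d = \alpha f = O(k/(\eta\log n))$; the lemma then guarantees $M'_i(u,v)=M_i(u,v)$ at every $v$ and every neighbor $u$.

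Finally, for the rewind-if-error phase, the plan is to apply Lemma~\ref{lem:scheduler_security} to the parallel scheduling of $\Pi_{RS,1},\ldots,\Pi_{RS,k}$, whose round-error-rate in this phase is again at most $\alpha f$. That lemma ensures at most $\RStime\cdot \RSthresh\cdot \alpha f\cdot \eta$ of the RS-compiled protocols fail. Intersecting with the $\geq 0.9k$ genuine spanning trees guaranteed by the weak $(k,\kDiam,\eta)$ packing, at least $0.9k - \RStime\cdot \RSthresh\cdot \alpha f\cdot \eta$ trees both succeed under RS-compilation and, being honest spanning trees rooted at $v_r$, correctly aggregate the true $\min_v \GoodState_i(v)$ and $\max_v \gamma(v,i)$. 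Under $f = \Theta(k/(\eta\log n))$ with a small enough hidden constant, this surviving-and-honest count strictly exceeds $k/2$, so the per-node majority vote returns the true $\GoodState_i$ and $\ell_i$. The delicate point, and the main obstacle, is pinning down these constants so that three constraints can simultaneously hold: a surplus of spanning trees over arbitrary subgraphs (the $0.9k$ from the weak packing), the scheduler slack of Lemma~\ref{lem:scheduler_security}, and a strict majority after the intersection — this is exactly what determines the hidden constant in $f = \Theta(k/(\eta\log n))$.
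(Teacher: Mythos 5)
Your proof is correct and follows the same three-phase structure as the paper's: bound mismatches after Round-Initialization by a per-edge majority argument (at most $\alpha f$ bad edges since each requires $t$ corruptions out of a budget of $<\alpha tf$), invoke Lemma~\ref{lem:correction_budgeted} with $d=\alpha f$ for the Message-Correcting phase, and invoke Lemma~\ref{lem:scheduler_security} plus a majority vote for the Rewind-If-Error phase. You are somewhat more careful than the paper in the final step: you explicitly intersect the RS-successful protocols with the $\geq 0.9k$ genuine spanning trees of the weak packing before taking the majority, and you track the $\RStime,\RSthresh,\eta$ factors in the count of failed protocols; the paper glosses over both (it simply asserts ``at least $k-\alpha ft$ protocols succeed'' and then compares to $k/2$, which is a loose reading of Lemma~\ref{lem:scheduler_security} and ignores the $0.1k$ arbitrary subgraphs). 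Your tightened bookkeeping is the right way to make the constant-dependence explicit, and it reaches the same conclusion under the stated choice $f=\Theta(k/(\eta\log n))$.
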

	\begin{proof}
		 
		Since $i$ is a good global-round, there are at most $\alpha ft$ corrupted messages throughout its execution. Recall that for the message from $M(u,v)$ to be received incorrectly by $v$, there must be at least $t$ many faults on the edge $(u,v)$ in this phase. Therefore, in the Round-Initialization phase, there are at most $\alpha \cdot f$ adjacent pairs $u,v$ such that $M_i(u,v)$ is not correctly decoded by $v$.
		
		Assuming that after the Round-Initialization phase, there are at most $\alpha \cdot f$ adjacent pairs $u,v$ such that $\widetilde{M}_i(u,v) \neq M_i(u,v)$, the promise on the input of \Cref{lem:correction_budgeted} holds. By \Cref{lem:correction_budgeted}, the Message-Correcting Phase has round complexity $O(t)$ and resilience to $\Theta(\frac{k}{\eta\log{n}})$ corrupted messages, which for the assumed $k$ is more than $\alpha f t$ many total faults. Therefore, it succeeds assuming at most $\alpha \cdot f \cdot t$ corrupted messages.
		
		The Rewind-If-Error phase consists of a single application of the scheduler \Cref{lem:scheduler_security} on protocols with round complexity $t$, and some local computation. Therefore, at least $k - \alpha ft$ protocols succeed. By assumption of $k$, $k - \alpha ft \geq k/2 + 1$, therefore the majority value of $\GoodState_i,\ell_{i}$ received by any node $v$ is the correct values.
	\end{proof}

\begin{lemma}
\label{lem:if_good_so_far_then_next_sent_good}
Let $v$ be a node, $i \geq 1$ be an integer, and $0 \leq j \leq \gamma(v,i)-1$. If for all neighbors $u \in N(v)$ it holds that $\prefix(\widetilde{\pi}_i(u,v),\Gamma(u,v)) \geq j$, then for any neighbor $w \in N(v)$ it holds that $\prefix(\pi_i(v,w),\Gamma(u,v)) \geq j+1$.  
	\end{lemma}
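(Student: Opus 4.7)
The plan is to reduce the lemma to a structural invariant saying that, at any global-round $i$, the outgoing transcripts $\{\pi_i(v,u)\}_{u \in N(v)}$ stored at a node $v$ form a genuine execution of $\cA$ by $v$ against the incoming transcripts $\{\widetilde{\pi}_i(u,v)\}_{u \in N(v)}$. Once this is established, the conclusion follows immediately from the determinism of $\cA$ together with the definition of $\Gamma$.

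Concretely, I will prove by induction on $i$ the following \emph{simulation invariant}: for every node $v$ and every $s \in \{1,\ldots,\gamma(v,i)\}$ and every neighbor $w \in N(v)$, the $s$-th symbol of $\pi_i(v,w)$ is exactly the message that $\cA$ instructs $v$ to send $w$ at round $s$, given $v$'s input and given that, in rounds $1,\ldots,s-1$, $v$ received from each neighbor $u$ the symbols $\widetilde{\pi}_i(u,v)[1..s-1]$. The base case $i=1$ is vacuous since $\gamma(v,1)=0$. For the inductive step, I analyze the three possible updates of the Rewind-If-Error phase of global-round $i$. When $\GoodState'_i(v)=1$, both $\pi_i(v,w)$ and every $\widetilde{\pi}_i(u,v)$ grow by a single symbol at the same index $s=\gamma(v,i)+1$; the appended symbol on the outgoing side is precisely $m_i(v,w)$, which by construction of the Round-Initialization phase is the message $\cA$ dictates $v$ sends $w$ at round $s$ given incoming prefixes $\widetilde{\pi}_i(u,v)=\widetilde{\pi}_{i+1}(u,v)[1..s-1]$, so the invariant extends to the new coordinate; the earlier coordinates are unchanged and handled by the induction hypothesis. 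When $\GoodState'_i(v)=0$ and $\gamma(v,i)=\ell_i'(v)$, the last symbol of both $\pi_i(v,w)$ and every $\widetilde{\pi}_i(u,v)$ is simultaneously deleted (via $\DeleteLast$), so the invariant is inherited from the prefix of length $\gamma(v,i)-1$ covered by the induction hypothesis. In the remaining case nothing changes. Crucially, Invariant~\ref{inv:interactive-same-size-transcript} guarantees that $\pi$ and $\widetilde{\pi}$ always share a common length $\gamma(v,i)$, so the synchronized append/truncate above is well defined.

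With the invariant in hand, the lemma becomes a one-line calculation. Fix $s \leq j+1 \leq \gamma(v,i)$. By the invariant, $\pi_i(v,w)[s]$ is the message $v$ sends $w$ at round $s$ of $\cA$ under the incoming prefixes $\widetilde{\pi}_i(u,v)[1..s-1]$, while $\Gamma(v,w)[s]$ is the message $v$ sends $w$ at round $s$ of $\cA$ under the fault-free incoming prefixes $\Gamma(u,v)[1..s-1]$. The hypothesis $\prefix(\widetilde{\pi}_i(u,v),\Gamma(u,v)) \geq j$ for every $u \in N(v)$ gives $\widetilde{\pi}_i(u,v)[1..s-1] = \Gamma(u,v)[1..s-1]$ for every such $s$, since $s-1 \leq j$. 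By determinism of $\cA$ on $v$'s state, these two messages coincide, and $\prefix(\pi_i(v,w),\Gamma(v,w)) \geq j+1$ follows.

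I expect the only delicate point to be the inductive step of the invariant in the rewind case: one has to verify that removing the last symbol preserves the simulation property for the surviving prefix, which in turn relies on using Invariant~\ref{inv:interactive-same-size-transcript} to argue that the truncation is simultaneous on the outgoing side $\pi_i(v,u)$ and on all incoming sides $\widetilde{\pi}_i(u,v)$ at the same index. Everything else is bookkeeping supported directly by the description of the Round-Initialization and Rewind-If-Error phases.
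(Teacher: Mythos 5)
Your proposal is correct, but it is organized differently from the paper's proof. The paper argues locally: it lets $i^*$ be the last global-round before $i$ with $\gamma(v,i^*)=j$, observes that since $\gamma(v,\cdot)\geq j+1$ on $[i^*,i]$ the first $j+1$ symbols of all relevant transcripts are frozen on that interval, and then notes that the $(j{+}1)$-th symbol of $\pi(v,w)$ was appended precisely at round $i^*+1$ from incoming prefixes of length $j$ that (by the frozen-prefix observation plus the hypothesis at round $i$) coincide with $\Gamma$. You instead prove a global \emph{simulation invariant} by induction on $i$ — that at every global-round the stored outgoing transcript $\pi_i(v,\cdot)$ is symbol-by-symbol what $\cA$ would emit against the current incoming transcripts $\widetilde{\pi}_i(\cdot,v)$ — and then obtain the lemma as a one-line corollary of determinism. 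The two arguments rest on the same underlying facts (synchronized append/truncate guaranteed by \Cref{inv:interactive-same-size-transcript}, and prefix stability of transcripts whose length never drops below a given index), but your route has a concrete advantage: the conclusion $\prefix(\pi_i(v,w),\Gamma(v,w))\geq j+1$ requires \emph{all} of the first $j+1$ outgoing symbols to agree with $\Gamma$, and the paper's proof only explicitly certifies the $(j{+}1)$-th one, leaving the first $j$ to an implicit induction on $j$; your invariant handles all coordinates uniformly and closes that gap. The paper's version is shorter and avoids stating an auxiliary invariant; yours is more modular and would also feed directly into \Cref{lem:if_all_transcripts_consistent_then_is_real}. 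One minor point to keep in mind when writing it up fully: when $s$ exceeds the fault-free round count $r$, you should invoke the $\bot$-padding convention of $\Gamma$ together with the rule that terminated nodes send $\bot$, so that determinism still yields agreement.
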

	\begin{proof}
		Let $i^* \leq i-1$ be last global-round before $i$ in which $\gamma(v,i^*) = j$. By choice of $i^*$, for any index $i^* \leq i' \leq i$ it holds that $\gamma(v,i') \geq j+1$, and in particular, all $j+1$ symbols remain the same in between iteration $i^*$ and $i$, i.e.
		
		\begin{equation}
		\label{eq:if_good_so_far_index_star}
			\prefix(\widetilde{\pi}_{i^*}(u,v),\widetilde{\pi}_{i}(u,v)) \geq j~.
		\end{equation}

		By \Cref{eq:if_good_so_far_index_star} and the assumption, it holds that $\prefix(\widetilde{\pi}_{i^*}(u,v),\Gamma(u,v)) \geq j$, for all neighbors $u \in N(v)$. In addition, by the choice of $i^*$, $\gamma(v,i^*+1) = \gamma(v,i^*)+1$. Therefore, in that round $v$ set for every $w \in N(v)$ the value $\pi_{i^*+1}(v,w)$ such that  $\prefix(\pi_{i^*}(v,w),\Gamma(u,v)) = j+1$. By \Cref{eq:if_good_so_far_index_star} it follows that $\prefix(\pi_i(v,w),\Gamma(u,v)) \geq j+1$.
	\end{proof}

	\begin{lemma}
		\label{lem:if_all_transcripts_consistent_then_is_real}
		Assume that there exists $\gamma_i$ such that for any $u,v$, it holds that $|\widetilde{\pi}_i(u,v)| = |\pi_i(u,v)| = \gamma_i$. Then the following two conditions are equivalent:
		\begin{enumerate}[(i)]
			\item For all adjacent nodes $u,v$, it holds that $\pi_i(u,v) = \widetilde{\pi}_i(u,v)$.
			\item For all adjacent nodes $u,v$ it holds that $\prefix(\widetilde{\pi}_i(u,v),\Gamma(u,v)) = |\widetilde{\pi}_i(u,v)|$.
		\end{enumerate}
	\end{lemma}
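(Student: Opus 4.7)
My plan is to establish the two directions separately, both by leveraging Lemma~\ref{lem:if_good_so_far_then_next_sent_good}, which serves as the key ``transfer'' tool converting agreement on incoming estimated transcripts at a node $v$ into agreement on outgoing transcripts from $v$ (both measured against the fault-free transcript $\Gamma$). The hypothesis that all $|\widetilde{\pi}_i(u,v)|$ and $|\pi_i(u,v)|$ equal a common value $\gamma_i$ will let me align prefix lengths throughout and reduce both implications to identities of strings of equal length. I will treat $\gamma_i = 0$ separately, where both (i) and (ii) hold vacuously, so in what follows I assume $\gamma_i \geq 1$.

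\textbf{The direction (ii) $\Rightarrow$ (i)} will go through a single application of the lemma per edge. Fix adjacent $u,v$. By (ii), for every neighbor $u' \in N(u)$ we have $\prefix(\widetilde{\pi}_i(u',u),\Gamma(u',u)) = \gamma_i \geq \gamma_i - 1$. Applying Lemma~\ref{lem:if_good_so_far_then_next_sent_good} at node $u$ with parameter $j = \gamma_i - 1$ (legal since $\gamma(u,i) = \gamma_i$) yields $\prefix(\pi_i(u,v),\Gamma(u,v)) \geq \gamma_i$. Combined with $|\pi_i(u,v)| = \gamma_i$ and the fact that $\Gamma(u,v)$ is $\bot$-padded to length much larger than $\gamma_i$, this forces $\pi_i(u,v)$ to be exactly the length-$\gamma_i$ prefix of $\Gamma(u,v)$. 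By (ii) and $|\widetilde{\pi}_i(u,v)| = \gamma_i$, the string $\widetilde{\pi}_i(u,v)$ is that very same prefix. Therefore $\pi_i(u,v) = \widetilde{\pi}_i(u,v)$, as required by (i).

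\textbf{The direction (i) $\Rightarrow$ (ii)} I will prove by induction on $j \in \{0,1,\dots,\gamma_i\}$ that for every adjacent pair $u,v$, $\prefix(\widetilde{\pi}_i(u,v),\Gamma(u,v)) \geq j$. Taking $j = \gamma_i$ together with $|\widetilde{\pi}_i(u,v)| = \gamma_i$ and $|\Gamma(u,v)| \gg \gamma_i$ gives the equality in (ii). The base case $j=0$ is immediate. For the inductive step, fix a node $u$ and invoke Lemma~\ref{lem:if_good_so_far_then_next_sent_good} at $u$ with parameter $j-1$ (in range, since $0 \leq j-1 \leq \gamma_i - 1$): the hypotheses on the incoming transcripts at $u$ are exactly the inductive hypothesis at stage $j-1$, so the lemma delivers $\prefix(\pi_i(u,w),\Gamma(u,w)) \geq j$ for every neighbor $w \in N(u)$. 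Using (i) to replace $\pi_i(u,w)$ by $\widetilde{\pi}_i(u,w)$ closes the induction.

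\textbf{Main obstacle.} I expect the direction (i) $\Rightarrow$ (ii) to be the delicate one: knowing only that ``sent'' and ``received'' transcripts coincide does not a priori say that either matches the fault-free $\Gamma$, and there is no local edge-by-edge argument available --- one must propagate agreement symbol-by-symbol across the whole network, which is exactly what the induction above does and what Lemma~\ref{lem:if_good_so_far_then_next_sent_good} is tailored to drive. A minor but easy-to-overlook technical point is checking the range $0 \leq j \leq \gamma(u,i)-1$ whenever invoking that lemma; the induction range $\{0,\dots,\gamma_i\}$ is chosen precisely so that this condition holds at every step, and the degenerate case $\gamma_i = 0$ must be peeled off at the start.
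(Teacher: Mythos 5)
Your proof is correct and takes essentially the same approach as the paper's: both directions invoke Lemma~\ref{lem:if_good_so_far_then_next_sent_good} as the transfer tool, with an inductive argument on the prefix length for (i)~$\Rightarrow$~(ii) and a single application at $j = \gamma_i - 1$ for (ii)~$\Rightarrow$~(i). Your explicit treatment of the degenerate case $\gamma_i = 0$ and the careful parameter-range checks are welcome but not a change in method.
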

	\begin{proof}
	First, we show that (1) implies (2). Assume that for all adjacent nodes $u,v$, it holds that $\pi_i(u,v) = \widetilde{\pi}_i(u,v)$. We prove by induction on $j \leq \gamma_i$ that $\prefix(\widetilde{\pi}_i(u,v),\Gamma(u,v)) \geq j$, for all $(u,v)\in E$. For $j=0$ the claim is trivial. Assume the claim holds up to $j-1$ and consider $j \geq 1$. By the induction assumption, for any node $u$, the first $j-1$ symbols $\widetilde{\pi}_i(w,u)$ are the same as $\Gamma(w,u)$, therefore, by \Cref{lem:if_good_so_far_then_next_sent_good}, the $j$'th entry in $\pi_i(u,v)$ is also consistent with $\Gamma(u,v)$. Moreover, since $\pi_i(u,v) = \widetilde{\pi}_i(u,v)$, it also holds that the $j$'th entry in $\widetilde{\pi}_i(u,v)$ is also consistent with $\Gamma(u,v)$. This concludes the induction.
	
	Next, we show that (2) implies (1). Assume that for all adjacent nodes $u,v$ it holds that 
$$\prefix(\widetilde{\pi}_i(u,v),\Gamma(u,v)) = |\widetilde{\pi}_i(u,v)|.$$ 
By \Cref{lem:if_good_so_far_then_next_sent_good}, since $\prefix(\widetilde{\pi}_i(u,v),\Gamma(u,v)) \geq \gamma_i -1$ for all neighbors $u \in N(v)$, then \\$\prefix(\pi_i(v,w),\Gamma(v,w)) \geq \gamma_i$ for any neighbor $w \in N(v)$. Since by assumption, \\$\prefix(\widetilde{\pi}_i(v,w),\Gamma(v,w)) \geq \gamma_i$, then it holds that $\prefix(\widetilde{\pi}_i(v,w),\pi_i(v,w)) \geq \gamma_i$. The claim follows.
	
\end{proof}

	\begin{lemma}
		\label{lem:in_good_round_prefix_not_decreases}
		If $i$ is a good global-round, then $g(i+1) \geq g(i)$.
	\end{lemma}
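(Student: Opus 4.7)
The plan is to invoke Lemma \ref{lem:budgeted_correct_message_in_good} to reduce to a ``clean'' setting where each node sees the correct $M_i(u,v)$ in the Message-Correcting phase and the correct global values $\GoodState_i$, $\ell_i$ in the Rewind-If-Error phase, and then split on the value of $\GoodState_i$. Let $g^*_i = g(i)/2 = \min_{(u,v)\in E}\prefix(\widetilde{\pi}_i(u,v),\Gamma(u,v))$ so that the goal becomes $g^*_{i+1}\geq g^*_i$.

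Case 1 ($\GoodState_i=1$). Here every node $v$ passes both checks for every neighbor $u$, which gives $|\widetilde{\pi}_i(u,v)|=\ell'_i(u,v)=|\pi_i(u,v)|$ and $h(\widetilde{\pi}_i(u,v))=h(\pi_i(u,v))$. Using Invariant \ref{inv:interactive-same-size-transcript}, the first equality propagates along edges and forces a common value $\gamma(v,i)=\gamma_i$ across the whole network; by the pairwise independence of $\cH_k$ and the $\polylog(n)$ hash length, the second equality gives $\widetilde{\pi}_i(u,v)=\pi_i(u,v)$ for every edge, w.h.p. Lemma \ref{lem:if_all_transcripts_consistent_then_is_real} then yields $\prefix(\widetilde{\pi}_i(u,v),\Gamma(u,v))=\gamma_i$ for every edge, i.e. $g^*_i=\gamma_i$. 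Each node then appends $m'_i(u,v)=m_i(u,v)$, which is the outgoing message computed by $\cA$ from transcripts that already coincide with $\Gamma$, so the appended symbol is exactly $\Gamma(u,v)_{\gamma_i+1}$, and thus $g^*_{i+1}=\gamma_i+1\geq g^*_i$.

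Case 2 ($\GoodState_i=0$). The key sub-claim is that $g^*_i\leq \ell_i-1$. I would prove this by contradiction: assume $\prefix(\widetilde{\pi}_i(u,v),\Gamma(u,v))\geq\ell_i$ for every edge $(u,v)$. Since $\prefix(\widetilde{\pi}_i(u,v),\Gamma(u,v))\leq|\widetilde{\pi}_i(u,v)|=\gamma(v,i)\leq\ell_i$, the inequality forces $\gamma(v,i)=\ell_i$ on every node and $\widetilde{\pi}_i(u,v)$ to be the length-$\ell_i$ prefix of $\Gamma(u,v)$ on every edge. Applying Lemma \ref{lem:if_good_so_far_then_next_sent_good} at each node $v$ with $j=\ell_i-1$ then gives $\prefix(\pi_i(v,w),\Gamma(v,w))\geq\ell_i=|\pi_i(v,w)|$ for every $w\in N(v)$, so $\pi_i(v,w)$ coincides with the length-$\ell_i$ prefix of $\Gamma(v,w)$, which equals $\widetilde{\pi}_i(v,w)$. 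Hence $\widetilde{\pi}_i(u,v)=\pi_i(u,v)$ and $|\widetilde{\pi}_i(u,v)|=|\pi_i(u,v)|$ everywhere, so both the hash and length checks pass at every node and $\GoodState_i=1$, contradicting the case assumption.

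Given $g^*_i\leq\ell_i-1$, I conclude the argument by inspecting the update edge by edge. If $\gamma(v,i)<\ell_i$ then $v$ does nothing, so $\widetilde{\pi}_{i+1}(u,v)=\widetilde{\pi}_i(u,v)$ and the prefix on $(u,v)$ is preserved. If $\gamma(v,i)=\ell_i$ then $\widetilde{\pi}_{i+1}(u,v)=\DeleteLast(\widetilde{\pi}_i(u,v))$, and since the last symbol is removed, $\prefix(\widetilde{\pi}_{i+1}(u,v),\Gamma(u,v))=\min\bigl(\prefix(\widetilde{\pi}_i(u,v),\Gamma(u,v)),\,\ell_i-1\bigr)\geq\min(g^*_i,\,\ell_i-1)=g^*_i$, using the sub-claim. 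Taking the minimum over all edges gives $g^*_{i+1}\geq g^*_i$, i.e. $g(i+1)\geq g(i)$. The main obstacle is Case 2: one must carefully combine Lemma \ref{lem:if_good_so_far_then_next_sent_good} with the length-matching forced by $\GoodState_i=0$'s negation to locate an edge whose prefix deficit ``absorbs'' the one-symbol loss from the deletion.
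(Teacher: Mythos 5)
Your proof is correct, and it uses the same underlying machinery as the paper (reduce to the ``clean'' setting via Lemma \ref{lem:budgeted_correct_message_in_good}, then invoke Lemma \ref{lem:if_good_so_far_then_next_sent_good} / Lemma \ref{lem:if_all_transcripts_consistent_then_is_real}), but it organizes the case analysis differently. You split on the value of $\GoodState_i$, whereas the paper fixes the pair $(u,v)$ minimizing $g(u,v,i)$ and splits on whether that edge has a full-length matching prefix of length $\ell_i$. These are equivalent: your Case-2 sub-claim ``$\GoodState_i=0 \Rightarrow g^*_i \leq \ell_i-1$'' is the contrapositive of the paper's Case 2, in which the minimizing edge having $\prefix = |\widetilde{\pi}_i| = \ell_i$ forces $\GoodState_i=1$. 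Your version is arguably a bit cleaner in that it handles all edges explicitly via the identity $\prefix(\widetilde{\pi}_{i+1}(u,v),\Gamma(u,v)) = \min\bigl(\prefix(\widetilde{\pi}_i(u,v),\Gamma(u,v)),\ell_i-1\bigr)$, while the paper only discusses the minimizing edge and leaves the extension to other edges implicit.

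One remark: the hash-collision detour in your Case 1 is unnecessary and introduces a ``w.h.p.'' qualifier that the lemma does not need. Once $\GoodState'_i(v)=\GoodState_i=1$ (which Lemma \ref{lem:budgeted_correct_message_in_good} guarantees in a good global-round), every node appends one symbol to every $\widetilde{\pi}_i(u,v)$; appending can only preserve or increase $\prefix(\cdot,\Gamma(u,v))$, so $g(i+1)\geq g(i)$ follows deterministically without knowing whether the appended symbol is the correct one. The pairwise-independence argument you invoke is exactly what the paper deploys one lemma later, in Lemma \ref{lem:good-potential}, where the \emph{strict} increase of the potential (a genuinely probabilistic fact) is proved --- and indeed the paper states Lemma \ref{lem:in_good_round_prefix_not_decreases} without a ``w.h.p.'' while Lemma \ref{lem:good-potential} carries one.
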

	\begin{proof}
By \Cref{lem:budgeted_correct_message_in_good}, each node $v$ receives at the end of the global-round the correct value $M_i(u,v)$. Let $(u,v) \in E$ be a pair of nodes minimizing $g(u,v,i)$.  We distinguish between the following cases:
		
\smallskip		
\noindent \textbf{Case 1: $|\widetilde{\pi}_i(u,v)| \leq \ell_i-1$ or $\prefix(\widetilde{\pi}_{i}(u,v),\Gamma(u,v)) < |\widetilde{\pi}_i(u,v)|$.} In both cases, it holds that $\prefix(\widetilde{\pi}_{i+1}(u,v),\Gamma(u,v)) = \prefix(\widetilde{\pi}_{i}(u,v),\Gamma(u,v))$. To see this note that only the last symbol of transcripts of length $\ell_i$ are deleted. 
		
\smallskip		
\noindent \textbf{Case 2:  $|\widetilde{\pi}_i(u,v)| = \ell_i$, and $\prefix(\widetilde{\pi}_{i}(u,v),\Gamma(u,v)) = |\widetilde{\pi}_i(u,v)|$}. Since the pair $u,v$ minimizes $g(i)$, then for all neighboring pairs $u',v'$ it holds that $g(u',v',i) \geq g(u,v,i) = 2\ell_i$. Consequently, $\prefix(\widetilde{\pi}_{i}(u',v'),\Gamma(u',v')) = |\widetilde{\pi}_i(u',v')|$, and $|\widetilde{\pi}_i(u',v')| = |\widetilde{\pi}_i(u,v)|$. By \Cref{inv:interactive-same-size-transcript} it holds that $\gamma(u',i) = \gamma(v',i)$ for any $u',v'$, and by \Cref{lem:if_all_transcripts_consistent_then_is_real}, it follows that $\GoodState_i = 1$, and no symbol is deleted in any transcript, i.e. $g(i+1) \geq g(i)$.
	\end{proof}

	\begin{lemma}\label{lem:good-potential}
		If $i$ is a good global-round, then $\Phi(i+1) \geq \Phi(i)+1$ w.h.p.
	\end{lemma}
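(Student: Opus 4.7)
The plan is to split the analysis by the global value $\GoodState_i \in \{0,1\}$, since this cleanly partitions the ways $\widetilde{\pi}$ and $\pi$ can evolve in a good round, and in both cases I can argue that the drop in $\ell_i$ (the max transcript length) together with the non-decrease of $g(i)$ guaranteed by \Cref{lem:in_good_round_prefix_not_decreases} adds up to a $+1$ gain in $\Phi$. Throughout I will use \Cref{lem:budgeted_correct_message_in_good} to assume every node receives the correct $M_i(u,v)$, $\GoodState_i$ and $\ell_i$ in the good round $i$.

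In the first case $\GoodState_i=0$, I will simply observe that every node $v$ with $\gamma(v,i)=\ell_i$ executes the deletion branch of the Rewind-If-Error phase (since $\ell'_i(v)=\ell_i$), shrinking its local length to $\ell_i-1$, while every node with $\gamma(v,i)<\ell_i$ leaves its variables untouched. Taking the maximum gives $\ell_{i+1}\le \ell_i-1$. Combined with $g(i+1)\ge g(i)$ from \Cref{lem:in_good_round_prefix_not_decreases}, this yields $\Phi(i+1)\ge g(i)-(\ell_i-1)=\Phi(i)+1$. The only subtlety is the boundary case $\ell_i=0$; but if all transcripts are empty, the hash and length checks trivially pass, so by \Cref{lem:budgeted_correct_message_in_good} we would have $\GoodState_i=1$, placing us outside this case.

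In the second case $\GoodState_i=1$, I will first upgrade this event to a statement about the real transcripts: because every node $v$ passes $h_{R'_i(u,v)}(\widetilde{\pi}_i(u,v))=h_{R'_i(u,v)}(x'_i(u,v))$ with $x'_i(u,v)=h_{R_i(u,v)}(\pi_i(u,v))$ (the correctness of $R'_i,x'_i$ coming from \Cref{lem:budgeted_correct_message_in_good}), the pairwise independence of $\cH$ from \Cref{lem:hash} together with a union bound over all edges and global-rounds implies $\widetilde{\pi}_i(u,v)=\pi_i(u,v)$ w.h.p. The length checks additionally force $\gamma(u,i)=\gamma(v,i)$ for every adjacent pair, and connectivity of $G$ then gives a common value $\gamma_i$, so the hypothesis of \Cref{lem:if_all_transcripts_consistent_then_is_real} is satisfied. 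Applying that lemma, $\prefix(\widetilde{\pi}_i(u,v),\Gamma(u,v))=|\widetilde{\pi}_i(u,v)|=\gamma_i$ for every edge.

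From this point the conclusion is a direct computation: since every node $u$ is simulating $\cA$ on the correct incoming transcripts, its next message $m_i(u,v)$ agrees with the fault-free transcript, i.e.\ the $(\gamma_i+1)$-th symbol of $\Gamma(u,v)$; the good-round guarantee gives $m'_i(u,v)=m_i(u,v)$, and the Rewind-If-Error phase appends this correct symbol to every $\widetilde{\pi}_i(u,v)$. Therefore $g(i+1)=g(i)+2$ and $\ell_{i+1}=\ell_i+1$, giving $\Phi(i+1)=\Phi(i)+1$. The main obstacle I expect is the careful bookkeeping in this case: one must ensure that the hash false-positive bound (over $\poly(n)$ many edges and $O(r)$ many global-rounds) is genuinely $1/\poly(n)$ given that hash keys are of length $\Theta(\log n)$, and that the propagation of equal $\gamma$ across $G$ is legitimate even though different nodes could \textit{a priori} have had different $\gamma(\cdot,i)$ before the hash check is invoked.
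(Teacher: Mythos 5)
Your proof is correct and follows essentially the same approach as the paper. The only difference is organizational: you split on the realized value of $\GoodState_i$, whereas the paper splits first on whether the values $\gamma(\cdot,i)$ agree across nodes and then on whether $\pi_i=\widetilde{\pi}_i$ on every edge; given \Cref{lem:budgeted_correct_message_in_good}, these two decompositions are equivalent, and both proofs invoke pairwise independence of $\cH$, \Cref{lem:in_good_round_prefix_not_decreases}, and \Cref{lem:if_all_transcripts_consistent_then_is_real} in the same way to obtain the $+1$ gain.
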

	\begin{proof}
By \Cref{lem:budgeted_correct_message_in_good}, each node $v$ receives at the end of the global-round the correct value $M_i(u,v)$.

		If for some two nodes $v_1,v_2$, it holds that $\gamma(v_1,i) \neq \gamma(v_2,i)$, then every $v$ such that $\gamma(v,i) = \ell_i$ deletes the last symbol, meaning $f(i+1) = f(i) - 1$. Moreover, by \Cref{lem:in_good_round_prefix_not_decreases}, the value $g(i+1) \geq g(i)$, and the claim follows.
		
		Otherwise, the exists some $\gamma_i$ such that $\gamma(v,i) = \gamma_i$. We split into two cases: first, assume that there exist adjacent nodes $u,v$ such that $\pi_i(u,v) \neq \widetilde{\pi}_i(u,v)$. Since $\cH$ is a pairwise-independent hash function, $h_{R_i(u,v)}(\pi_i(u,v)) \neq h_{R_i(u,v)}(\widetilde{\pi}_i(u,v))$ w.h.p. over the randomness $R_i(u,v)$ during the ``Rewind-If-Error'' Phase.\footnote{We note that $\pi_i(u,v)$ and $\widetilde{\pi}_i(u,v)$ are set before $R_i(u,v)$ is chosen. Therefore  $h_{R_i(u,v)}(\pi_i(u,v)) \neq h_{R_i(u,v)}(\widetilde{\pi}_i(u,v))$ w.h.p. irregardless of the actions of the adversary.} Therefore, $\GoodState_i = 0$, and by \Cref{lem:budgeted_correct_message_in_good} all nodes in the network receive this value. Meaning each node $v$ deletes for each neighbor $u$ the last symbol of $\widetilde{\pi}_i(u,v)$, implying that $\ell_{i+1} = \ell_i - 1$. On the other hand, by \Cref{lem:in_good_round_prefix_not_decreases}, $g(i)$ does not decrease. This implies that in this case $\Phi(i+1) \geq \Phi(i)+1$.
		
		
		It remains to consider the case where $\gamma(v,i) = \gamma_i$ and $\pi_i(u,v) = \widetilde{\pi}(u,v)$ for all $(u,v) \in E$. By \Cref{lem:if_all_transcripts_consistent_then_is_real} on round $i$, combined with \Cref{lem:budgeted_correct_message_in_good}, each node $v$ receives from each $u$ the value $m_i(u,v)$, which is the next message according to $\Gamma$. Therefore, $g(i+1) = g(i)+2$, while $\ell_{i+1} \leq \ell_i+1$, meaning $\Phi(i+1) \geq \Phi(i)+1$.
	\end{proof}
	
\begin{lemma}\label{lem:budgeted-correctness}
$\prefix(\widetilde{\pi}_{r'}(u,v),\Gamma(u,v)) \geq r$, for every $(u,v)\in E$. Consequently, the protocol $\cA'$ has the same output as $\cA$.
\end{lemma}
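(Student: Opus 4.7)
The plan is to combine the two potential-shift lemmas (Lemmas~\ref{lem:bad-potential} and~\ref{lem:good-potential}) with the bound on the number of bad global-rounds (Lemma~\ref{lem:number_of_bad_rounds}), and then convert a lower bound on $\Phi$ at termination into the claimed prefix bound. Initially all estimated transcripts are empty, so $g(1)=\ell_1=0$ and $\Phi(1)=0$. By Lemma~\ref{lem:number_of_bad_rounds} at most $r$ of the $r'=5r$ global-rounds are bad, hence at least $4r$ global-rounds are good. By Lemma~\ref{lem:good-potential} each good global-round raises $\Phi$ by at least $1$ w.h.p., and by Lemma~\ref{lem:bad-potential} each bad global-round lowers $\Phi$ by at most $3$. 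Taking a union bound over the $r'=\poly(n)$ good global-rounds, with high probability the final potential is at least $\Phi(1)+1\cdot 4r-3\cdot r=r$.

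Next I would translate this into the prefix bound. Recall from~\eqref{eq:potential} that $g(i)=\min_{(u,v)\in E}\bigl(2\cdot \prefix(\widetilde{\pi}_i(u,v),\Gamma(u,v))\bigr)$ and $\Phi(i)=g(i)-\ell_i$. For every edge $(u,v)$ one has $\prefix(\widetilde{\pi}_i(u,v),\Gamma(u,v))\leq |\widetilde{\pi}_i(u,v)|\leq \ell_i$, so $g(i)\leq 2\ell_i$ and equivalently $\Phi(i)\leq \ell_i$. Therefore the final potential being at least $r$ forces the final $\ell$-value to be at least $r$, and hence the final $g$-value is at least $\Phi+\ell\geq 2r$. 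By the definition of $g$, this yields $\prefix(\widetilde{\pi}_{r'}(u,v),\Gamma(u,v))\geq r$ for every edge $(u,v)\in E$, which is the stated bound.

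For the ``consequently'' part, recall that $\Gamma(u,v)$ is defined as the complete transcript of the $r$-round fault-free execution of $\cA$ on $(u,v)$, padded by $\bot$ symbols. Since every incident final transcript $\widetilde{\pi}_{r'}(u,v)$ agrees with $\Gamma(u,v)$ on its first $r$ symbols, each node $v$ holds in its local state, for each neighbor $u$, exactly the sequence of $r$ incoming messages that it would receive in a fault-free run of $\cA$. Because in the Round-Initialization phase outgoing messages are computed by simulating $\cA$ on these estimated incoming transcripts, and the final output of each node in $\cA'$ is likewise computed locally by running $\cA$'s output function on these transcripts together with the node's input and private randomness, the output distribution of $\cA'$ is identical to that of $\cA$.

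I expect the main obstacle not to lie in this wrap-up lemma but in the potential-shift Lemmas~\ref{lem:bad-potential} and~\ref{lem:good-potential} it invokes; Lemma~\ref{lem:good-potential} in particular bundles together the correctness of the $d$-message-correction procedure (Lemma~\ref{lem:correction_budgeted}), the successful majority decoding of $\GoodState_i$ and $\ell_i$ via the RS-compiled scheduler (Lemma~\ref{lem:scheduler_security}), and the pairwise-independent hash-collision bound that w.h.p.\ detects any inconsistency $\pi_i(u,v)\neq\widetilde{\pi}_i(u,v)$. Once those are granted, the present lemma reduces to the clean ``potential telescopes over $5r$ steps'' accounting above, together with the structural inequality $\Phi\leq \ell$ that converts an additive potential bound into a uniform prefix bound on every edge.
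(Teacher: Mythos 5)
Your proof is correct and follows essentially the same telescoping-potential argument as the paper, including the key structural observation that $\prefix(\widetilde{\pi}_i(u,v),\Gamma(u,v))\leq |\widetilde{\pi}_i(u,v)|\leq\ell_i$ used to convert the bound $\Phi(r')\geq r$ into a per-edge prefix bound (the paper phrases this as $\min\prefix\geq\Phi$ rather than going through $\ell\geq r$ and $g\geq 2r$, but these are the same inequality). You also make explicit two things the paper leaves implicit — $\Phi(1)=0$ and the union bound over good global-rounds — which is a nice touch but not a different route.
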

\begin{proof}
By Lemma \ref{lem:good-potential}, in a good global-round, the potential function increases by at least one. By Lemma \ref{lem:bad-potential}, in a bad global-round, it decreases by at most three. By \Cref{lem:number_of_bad_rounds}, there are at most $r$ bad global-rounds, and at least $r'-r = 4r$ good global-rounds. Therefore, at the end of all the global-rounds, $\Phi(r') \geq 4r - 3r = r$. Since $\Phi(r')\geq r$, by Eq. (\ref{eq:potential}), it holds that 
$\prefix(\widetilde{\pi}_i(u,v),\Gamma(u,v))\geq \Phi(r') \geq r$ for every $(u,v)\in E$. . Recall that the output of $v$ in $\cA'$ is determined by $\{\widetilde{\pi}(u,v)\}_{u \in N(v)}$ and $v$'s input. Since the first $r$ symbols of the estimated incoming transcript are equal to the incoming transcript in $\cA$, the output is also the same. 


	\end{proof}

\noindent We are now ready to complete the proof of Thm. \ref{thm:budgeted_main_intro}.
\begin{proof}[Proof of Theorem \ref{thm:budgeted_main_intro}]
The correctness follows by \Cref{lem:budgeted-correctness}. The round complexity of the first phase is $\widetilde{O}(\kDiam)$ rounds. By Lemma \ref{lem:correction_budgeted}, the correction phase takes also $\widetilde{O}(\kDiam)$ rounds. Each Alg. $\Pi_j$ of the last phase takes $\widetilde{O}(\kDiam)$ rounds. Therefore, also $\Pi_{j,RS}$ takes $\widetilde{O}(\kDiam)$ rounds. The final round complexity then follows by the RS-scheduling of Lemma \ref{lem:scheduler_security}. The proof follows. 
\end{proof}

\subsection{Applications}\label{sec:app_budgeted}

\noindent \textbf{Congested Clique Model.} We provide a variant of Theorem \ref{thm:CC-compiler} for the bounded error rate setting.
	
\begin{theorem}[Mobile-Resilient Compilers in the Congested Clique]\label{thm:CC-compiler-budgeted}
For any algorithm $\cA$ that runs in $r$ congested-clique rounds, there is an equivalent algorithm $\cA'$ with $f$-round error rate resilience for $f=\Theta(n/\log{n})$ that runs in $\widetilde{O}(r)$ \congc\ rounds. 
	\end{theorem}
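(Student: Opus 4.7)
The plan is to invoke Theorem~\ref{thm:budgeted_main_intro} using the trivial tree packing of the clique that already powered the proof of Theorem~\ref{thm:CC-compiler}. For every $i \in \{1,\ldots,n\}$ let $T_i$ be the star $E_i = \{(v_i,v_j) : j \neq i\}$ centered at $v_i$, and set $\mathcal{T} = \{T_1,\ldots,T_n\}$. I would first argue that $\mathcal{T}$ is a (weak) $(k,\kDiam,\eta)$ tree packing with $k = n$, $\kDiam = 2$, and $\eta = 2$: by re-rooting each star at any fixed common vertex $v_r$ (e.g., $v_1$), each $T_i$ becomes a spanning tree of depth at most $2$ from $v_r$, and each clique edge $(v_a,v_b)$ lies in exactly the two stars $T_a$ and $T_b$, so the load is $2$.

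Next I would verify that $\mathcal{T}$ is known in a distributed manner without any preprocessing. In the \congc\ model each node knows the identifiers of all other nodes, so from $\ID(v_i)$ alone every node $u$ can compute its role and parent in $T_i$ (either as the center $v_i$ itself or as a leaf whose parent is $v_i$ under the fixed root $v_r$). Hence the distributed-knowledge hypothesis of Theorem~\ref{thm:budgeted_main_intro} is satisfied for free.

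Finally, applying Theorem~\ref{thm:budgeted_main_intro} to the given $r$-round clique algorithm $\cA$ with this tree packing yields an equivalent algorithm $\cA'$ that is resilient to a round-error rate of $f = \Theta\!\left(k/(\eta \log n)\right) = \Theta(n/\log n)$ and runs in $r' = \widetilde{O}(r \cdot \kDiam) = \widetilde{O}(r)$ rounds, matching the statement of Theorem~\ref{thm:CC-compiler-budgeted}. There is no substantive obstacle: the theorem merely records the observation that the clique is the ``best-case'' input for the compiler of Section~\ref{sec:byz_budgeted}, because it admits a diameter-$2$, load-$2$, maximum-size tree packing with zero communication cost. The only minor point to address in writing is the re-rooting step ensuring the stars share a common root, as required by Definition~\ref{def:weakTP}, and this is immediate since re-rooting a diameter-$2$ tree does not increase its diameter.
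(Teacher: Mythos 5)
Your proposal is correct and follows essentially the same route as the paper: instantiate the star packing $\{T_1,\dots,T_n\}$ of the clique as a $(n,2,2)$ tree packing and invoke Theorem~\ref{thm:budgeted_main_intro} to get $f=\Theta(n/\log n)$ and $r'=\widetilde{O}(r)$. Your extra remarks on re-rooting the stars at a common root and on the packing being known for free in \congc\ are sensible clarifications of details the paper leaves implicit, but they do not change the argument.
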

	\begin{proof}
		Similarly to \Cref{thm:CC-compiler}, The proof follows by noting that an $n$-node clique over vertices $V=\{v_1,\ldots, v_n\}$ contains a $(k,\kDiam,\eta)$ tree packing $\mathcal{T}=\{T_1,\ldots, T_k\}$ for $k=n$ and $\kDiam,\eta=2$.  Specifically, for every $i \in \{1,\ldots, n\}$, let $T_i = (V,E_i)$ where $E_i = \{(v_i,v_j) \mid v_j \in V\}$, be the star centered at $v_i$. It is easy to see that the diameter and the load is exactly $2$. The claim follows immediately from  \Cref{thm:budgeted_main_intro}. 
	\end{proof} 
	
\noindent \textbf{Expander Graphs.} Next we show an analog of \Cref{thm:expander-compiler} in to the bounded error rate setting. 
	
\begin{theorem}\label{thm:expander-compiler-budgeted}[Mobile-Resilient Compilers for Expander Graphs]
		Assume $G$ is a $\phi$-expander with minimum degree $k=\widetilde{\Omega}(1/\phi^2)$. Then, for any algorithm $\cA$ that runs in $r$ \congest\-rounds, there is an equivalent algorithm $\cA'$ which is resilient to round-error-rate $f$, for $f=\widetilde{O}(k\phi^2)$ that runs in $\widetilde{O}(r/\phi)$ \congest\-rounds. 
\end{theorem}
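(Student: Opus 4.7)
The plan is to follow the blueprint of Theorem \ref{thm:expander-compiler}, substituting Theorem \ref{thm:budgeted_main_intro} in place of Theorem \ref{thm:res-mobile-improved}. Concretely, I would first construct a weak $(\kappa,\kDiam,\eta)$ tree packing on $G$ with $\kDiam = O(\log n/\phi)$, $\eta = 2$, and $\kappa$ chosen as large as the conductance-sampling argument of Theorem \ref{thm:conduct-karger} allows, i.e.\ $\kappa = \widetilde{\Theta}(k\phi)$ when the minimum degree is $k = \widetilde{\Omega}(1/\phi^2)$. Plugging this packing into Theorem \ref{thm:budgeted_main_intro} then yields round complexity $\widetilde{O}(r\kDiam) = \widetilde{O}(r/\phi)$ and round-error-rate resilience $\Theta(\kappa/(\eta\log n)) = \widetilde{\Theta}(\kappa)$ for the main compilation step, matching the target $\widetilde{O}(k\phi^2)$ once budget-accounting across the two phases is performed.

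The key adaptation is in the tree-packing construction. In Lemma \ref{lem:weakTP-expanders} each edge samples a random color which is then transmitted over the edge, followed by an $O(\log n/\phi)$-round BFS on each color class. Under a round-error-rate adversary both the color transmission and the BFS are vulnerable to fault concentration. I would eliminate the color-transmission step by using a network-wide bounded-independence hash family (Lemma \ref{lem:hash}) to define $c(u,v) = h(\min(\ID(u),\ID(v)),\max(\ID(u),\ID(v)))$, which, under the KT1 assumption, each endpoint of $(u,v)$ can evaluate locally without communication. The conductance-sampling analysis of Lemma \ref{lem:good-expander-sampled} carries over provided the hash family has sufficient independence, so w.h.p.\ every good color class forms an $\Omega(\phi)$-expander with diameter $O(\log n/\phi)$. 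This way the only vulnerable stage is the per-color BFS, and by Lemma \ref{lem:good-properties} each corrupted BFS message can spoil at most the single color whose subgraph contains the corrupted edge, so the bad-color count equals the number of corruptions during the BFS phase.

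The hard part is the quantitative balancing of the adversary's $f r' = \widetilde{O}(fr/\phi)$ total corruptions across the $O(\log n/\phi)$-round tree-packing preamble and the $\widetilde{O}(r/\phi)$-round main compilation. To preserve $\geq 0.9\kappa$ good colors while $\kappa \leq \widetilde{O}(k\phi)$ is capped by the conductance-sampling constraint, the budget available to the tree-packing phase must be $O(\kappa)$. Reconciling this with the main phase's tolerance $\widetilde{\Theta}(\kappa)$ per round via Theorem \ref{thm:budgeted_main_intro}, one obtains the bound $f = \widetilde{O}(k\phi^2)$; the extra $\phi$ factor compared with the mobile case in Theorem \ref{thm:expander-compiler} reflects the fact that the budgeted adversary may dump a disproportionate share of its budget into the short preprocessing window. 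If the bookkeeping in the preprocessing phase turns out to be too tight, a fallback is to wrap each per-color BFS in the RS-compiler of Theorem \ref{thm:less_than_one_error} scheduled over a bootstrap BFS spanning tree via Lemma \ref{lem:scheduler_security}, absorbing the concentration cost at the price of logarithmic factors that are already hidden by $\widetilde{O}$. Once the packing is obtained, step two is an immediate invocation of Theorem \ref{thm:budgeted_main_intro}.
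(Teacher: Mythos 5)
Your overall blueprint (construct a weak tree packing with $\kappa = \widetilde\Theta(k\phi)$, $\kDiam = O(\log n/\phi)$, $\eta = 2$, then invoke Theorem \ref{thm:budgeted_main_intro}) matches the paper, and your rough parameter arithmetic is sound. However, your proposed replacement for Lemma \ref{lem:weakTP-expanders} has two genuine gaps. First, the hash-based coloring $c(u,v) = h(\min(\ID(u),\ID(v)),\max(\ID(u),\ID(v)))$ requires every node to evaluate the \emph{same} random $h \in \cH$. If the seed for $h$ is hardcoded, the coloring is deterministic and Theorem \ref{thm:conduct-karger} — a ``w.h.p.\ over the random sampling'' statement — no longer guarantees that each color class is an expander. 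If instead the seed is drawn at runtime, it has to be disseminated to all nodes, which is itself a reliable-broadcast problem in precisely the adversarial setting you are trying to bootstrap out of. The paper sidesteps this entirely: each high-ID endpoint samples its colors privately and transmits them, and the transmission is protected by repetition rather than eliminated.

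Second, and more importantly, you do not actually neutralize budget concentration in the BFS preamble. As you note, the preamble naively runs for $\widetilde{O}(1/\phi)$ rounds while the adversary's total budget is $f r' = \widetilde{O}(fr/\phi)$, so dumping the whole budget there yields up to $\widetilde{O}(fr)$ corruptions — an $r$-factor, not a logarithmic factor, too many. Your RS-compiler fallback via Lemma \ref{lem:scheduler_security} does not fix this: the per-color tolerance it supplies scales with the window's own length, so a short window still admits an $r$-dependent number of spoiled colors, leaving $f = \widetilde{O}(k\phi/r)$ rather than $\widetilde{O}(k\phi^2)$. The paper's actual device is to \emph{stretch} the preamble: it uses ``padded rounds'' in which each preamble message is retransmitted for $s = r'\phi/\log n$ physical rounds and decoded by majority, and declares a color bad only if the adversary spends $\geq s/2$ of its budget on that color's edges. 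This makes the preamble $\Theta(r')$ rounds long, forcing the adversary to commit a $\widetilde\Omega(1/\kappa)$ fraction of its total budget per spoiled color, and yields $\widetilde{O}(f\log n/\phi) \leq 0.1\kappa$ bad colors. To salvage your plan you would essentially have to reinvent this padding (e.g., pad each per-color BFS with dummy rounds so the RS-compiler's tolerance rises to $\Theta(r')$), at which point the hash trick buys you nothing over what the paper already does.
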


Let $c''$ be the constant of \Cref{lem:large_cond_low_diam}, and let $f = \frac{k\phi^2}{c\log^{c'}{n}}$ for sufficiently large constants $c,c'$, in particular with regards to $c''$. We assume $\cA$ is an $r$ round algorithm, and let $r'$ be the round complexity of an $r$ round algorithm compiled \Cref{thm:budgeted_main_intro} against round-error rate of at most $(4c''+2)f$. We slightly adapt the tree packing algorithm used in \Cref{thm:expander-compiler}, to have the part of computing the weak tree packing to be resilient to many faults. This is done very simply by repeating each round in this computation for $O(r' \cdot \phi/\log{n})$ rounds, and then running the algorithm of \Cref{thm:budgeted_main_intro}, such that it is resilient to round-error rate of $(4c''+2)f$.
	
\paragraph{The Algorithm.} We define a padded-round as a round of communication where each node $u$ sends to each neighbor $v$ a message $m(u,v)$ repeatedly for $s = r' \cdot \phi/\log{n}$ rounds. The received message of $v$ from $u$ is defined as the majority value seen by $v$ during these $s$ rounds, or $0$ if a majority does not exist. The following protocol is defined by a series of padded rounds:
	  
In the first padded-round, for every edge $(u,v) \in E$ with $\ID(u)>\ID(v)$, let $u$ choose a color $c(u,v)$ sampled uniformly at random in $[k]$ and send this color to $v$. Let $c(u,v)$ be the received message of $u$ from $v$ in this padded-round. $u$ sets the color of the edge $(u,v)$ to the received value $c(v,u)$. For every $i \in [k]$, define the directed subgraph: 
	$$G_i=\{(u,v) ~\mid~ \ID(u) > \ID(v), c(u,v)=i\} \cup \{(v,u) ~\mid~ \ID(u) > \ID(v), c(v,u)=i\}.$$

Let $N_i(u)$ be the outgoing neighbors of $v$ in $G_i$. Each node $u$ proceeds to do the following padded-round BFS procedure in each $G_i$: Set a variable $I_{i,0}(u) = \ID(u)$ and $\parent_i(u) = \bot$ for each color $i \in [k]$. For $\ell = 1,\dots,z = 4c''\log{n}/\phi$ padded-rounds, in parallel for all colors $[k]$, each node $u$ sends to each (outgoing) neighbor $v \in N_i(u)$ the value $I_{i,\ell-1}(u)$. Let $L_{i,\ell}$ be the set of messages $u$ receives from neighbors $\{v \in N(u)  \mid c(u,v) = i\}$. Each node $u$ sets $I_{i,\ell}(u) = \max_{\ID(v) \in L_{i,\ell} \cup \{\ID(u)\}}(\ID(v))$, and if this value strictly increases, then it sets $\parent_i(u)$ to be the neighbor from which this value has arrived, oriented towards the other endpoint. In the final padded-round, each node $u$ sends in parallel for each $i$ a message to $\parent_i(u)$, to orient the edge $\parent_i(u)$ towards itself. Each node that receives an edge orientation request, locally sets the orientation of that edge towards itself.
	
\paragraph*{Analysis}
	
	A color $i \in [k]$ is denoted as \emph{good} if the adversary did not control any edge of $G_i$ for more than $s/2$ rounds during the tree computation phase. Otherwise, the color $i$ is defined as \emph{bad}. 
	
	In particular, in any good color, the adversary did not change the outcome of any of the BFS procedures, and the guarantees of Lemmas \ref{lem:good-expander-sampled} and \ref{lem:good-properties} hold for the altered algorithm and altered notion of good color, as formalized in the following lemma:
	
	\begin{lemma}\label{lem:good-expander-sampled-budgeted}
		For each good color $i \in [k]$, the output directed subgraph $G'_i$ defined by $\{\parent_i(u)\}_{u \in V}$ is a directed spanning tree of depth $4c''\log n/ \phi$ oriented towards the node with the largest ID in the network, $v_r = \argmax_v \ID(v)$.
	\end{lemma}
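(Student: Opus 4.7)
The plan is to reduce this lemma to the fault-free analogs already established in Lemmas \ref{lem:good-expander-sampled} and \ref{lem:good-properties}, by showing that on the edges of any good-colored subgraph $G_i$ the padded-round abstraction delivers every message correctly, so that the execution on $G_i$ coincides with a fault-free run of the protocol of Lemma \ref{lem:weakTP-expanders}.

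First I would establish the majority-decoding step. Since $i$ is good, the adversary corrupts any single edge of $G_i$ for at most $s/2$ rounds over the \emph{entire} tree-computation phase. In particular, within every individual padded-round (which consists of $s$ consecutive repetitions of the same symbol on each edge), the adversary alters strictly fewer than $s/2$ of the $s$ repetitions on each edge of $G_i$. Hence the majority value taken at the receiving endpoint always equals the value actually sent, so from the viewpoint of $G_i$ the protocol behaves exactly as in the fault-free setting of Lemma \ref{lem:weakTP-expanders}.

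Next I would argue that $G_i$ retains the structural properties of an $\Omega(\phi)$-expander of diameter at most $c''\log n/\phi$. Since the adversary is oblivious to the private randomness used to pick the colors $c(u,v)$, the subgraph $G_i$ is distributed as the sample $G[1/k]$ obtained by including each $G$-edge independently with probability $1/k$. With the minimum-degree assumption $k=\widetilde{\Omega}(1/\phi^2)$ and $f=\widetilde{O}(k\phi^2)$ (so that $k$ exceeds the sampling threshold of Theorem \ref{thm:conduct-karger} with the appropriate constants), Theorem \ref{thm:conduct-karger} yields that $G_i$ has conductance $\Omega(\phi)$ w.h.p., and Lemma \ref{lem:large_cond_low_diam} then gives $\mathrm{diam}(G_i)\leq c''\log n/\phi$.

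Finally I would invoke the BFS analysis of Lemma \ref{lem:good-properties}. Because the padded-round protocol on $G_i$ is, message-for-message, identical to a fault-free execution, the induction in the proof of Lemma \ref{lem:good-properties} carries through verbatim: after $j$ padded BFS iterations, every node at $G_i$-distance $j$ from $v_r=\argmax_v \ID(v)$ has $I_{i,j}(u)=\ID(v_r)$ and its $\parent_i(u)$ pointer set to a neighbor one hop closer to $v_r$. Since the number of BFS padded-rounds is $z=4c''\log n/\phi$, which is at least twice $\mathrm{diam}(G_i)$, all nodes are reached; the collection $\{\parent_i(u)\}_{u\in V}$ therefore forms a spanning tree of $G_i$ oriented towards $v_r$, of depth at most $c''\log n/\phi\leq 4c''\log n/\phi$, as claimed.

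The main obstacle is purely bookkeeping: choosing the constants $c,c'$ in the definition $f=k\phi^2/(c\log^{c'} n)$ and $s=r'\phi/\log n$ so that (i) the total round budget $sr'$ is large enough to absorb the per-padded-round cost, (ii) the round-error-rate $(4c''+2)f$ handled by Theorem \ref{thm:budgeted_main_intro} is consistent with the $s/2$-threshold used to define a good color, and (iii) Theorem \ref{thm:conduct-karger} can be applied with parameters $\alpha=\phi$, $\gamma=\Theta(f/\phi^3)$ so that $p=1/k$. Once these constants are aligned, the lemma itself—conditional on goodness—is a direct reduction to the fault-free analysis of Lemmas \ref{lem:good-expander-sampled} and \ref{lem:good-properties}.
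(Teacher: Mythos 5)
Your proposal is correct and follows essentially the same route as the paper: reduce to the fault-free Lemmas~\ref{lem:good-expander-sampled} and~\ref{lem:good-properties} after observing that goodness of a color makes the padded-round majorities coincide with the fault-free messages, then apply Theorem~\ref{thm:conduct-karger} and Lemma~\ref{lem:large_cond_low_diam}, then rerun the BFS induction verbatim. In fact your first paragraph is a touch more careful than the paper's own wording (which loosely says the adversary ``did not control any of the edges of $G_i$,'' whereas for the budgeted definition of a good color the correct claim is the one you make: the majority over the $s$ repetitions is unaffected); the only slip is the constant in ``$\mathrm{diam}(G_i)\leq c''\log n/\phi$,'' which should read $4c''\log n/\phi$ since $G_i$ is only a $\phi/4$-expander, but you absorb this in the final bound so the conclusion is unaffected.
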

	\begin{proof}
		For each good color $i \in [k]$, we have that $G_i=G[p]$ where each edge in $G$ is sampled independently with probability $p  = 1/k = O(\frac{\phi\log{n}}{f})$. Note that since the adversary does not affect the padded-received message of any of the $G_i$ edges, all edges in $G_i$ are bidirectional. 
		
		By \Cref{thm:conduct-karger} for $\alpha = \phi$, $\gamma = O(\frac{f}{\phi^3})$ such that $p=\min\{1,\ln n/(\alpha^2 \gamma)\}$, then w.h.p. a subgraph obtained by sampling by taking each edge in $G$ with probability $p$ has conductance at least $\phi/4$. By the union bound, all graphs $G_i$ have conductance $\geq \phi/4$, w.h.p.  Therefore, by \Cref{lem:large_cond_low_diam} the graph $G_i$ has diameter at most $4c'' \cdot\log n/\phi$.
		
		Since $i$ is good, the adversary did not control any of the edges of $G_i$ throughout the entire algorithm. Since $v_r = \argmax_v \ID(v)$, and all the message $v_r$ receives are real IDs of nodes, it never changes the value $\parent_i(v_r) = \bot$. Let $V_i(j)$ be nodes of distance $j$ from $v_r$ in $G_i$. Let $V_i(-1)=\{\bot\}$. We show by induction on $j\geq 0$ that for any $u \in V_i(j)$, it holds that $I_{i,j}(u) = \ID(v_r)$ and $\parent_i(u) \in V_i(j-1)$. Clearly, $I_{i,0}(v_r)=\ID(v_r)$ and as $\parent_i(v_r)=\bot$, the claim holds for $j=0$. Assume this is the case for $j-1$. We note that $V \setminus \left(\bigcup_{a=0}^{j-1} V_a\right)$ has not seen the value $\ID(v_r)$ before round $j$, as they are of distance at least $j$ from $v_r$ in $G_j$. By induction, all $u\in V_{j-1}$ set $I_{i,j-1}(u) = v_r$, therefore any node $u \in V_j$ receives a message from a node $v \in V_{j-1}$ with the value $\ID(v_r)$ and sets its parent to be in $V_{j-1}$. The induction holds and the claim follows.
	\end{proof}

	We conclude with the following claim:
	
	\begin{lemma}\label{lem:stepone-correct-budgeted}
		The collection of subgraphs $\{ G'_i\}_{i \in [k]}$ is a \emph{weak}-$(k, \kDiam,\eta)$ Tree-Packing with $\kDiam=O(\log n/\phi)$ and $\eta=2$. 
	\end{lemma}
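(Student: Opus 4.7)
The plan is to mirror the structure of Lemma~\ref{lem:stepone-correct}, but replace the per-round faults counting with a budget-based counting that matches the round-error-rate setting. The total runtime of the tree-packing computation is $(1+z) \cdot s = O(r')$ rounds, so the adversary has a total budget of at most $r' \cdot f$ corrupted messages. I will then argue that with this budget, at most an $0.1$ fraction of colors can be made bad, so at least $0.9k$ colors remain good and Lemma~\ref{lem:good-expander-sampled-budgeted} hands us the required $\kDiam$-depth spanning trees rooted at $v_r$.

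The key step is the budget accounting for bad colors. A color $i$ is bad only if the adversary flipped the majority outcome of at least one (edge, padded-round) pair whose edge lies in $G_i$; doing so requires corrupting that edge in at least $\lceil s/2 \rceil$ of the $s$ sub-rounds, i.e.\ spending at least $s/2$ units of the adversarial budget. Since distinct colors induce edge-disjoint subgraphs $G_i$ (each directed edge is committed to a single color based on its $c(\cdot,\cdot)$ value), the (edge, padded-round) pairs charged to different bad colors are disjoint, so the total cost is at least $B \cdot s/2$ for $B$ bad colors. Combined with the total budget $r'f$, this yields $B \leq 2r'f / s = 2f \log n / \phi$. Plugging in $s = r'\phi/\log n$ and the hypothesis $f = k\phi^2 / (c \log^{c'} n)$ and choosing $c, c'$ large enough (against the $c''$ from Lemma~\ref{lem:large_cond_low_diam}) gives $B \leq 0.1 k$, so at least $0.9k$ colors are good.

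Given this, Lemma~\ref{lem:good-expander-sampled-budgeted} directly supplies, for each good color $i$, a directed spanning tree rooted at $v_r$ of depth $4c''\log n/\phi = O(\log n/\phi)$, establishing the first clause of the weak tree-packing definition with $\kDiam = O(\log n/\phi)$. The load bound $\eta \leq 2$ is immediate from the construction exactly as in Lemma~\ref{lem:stepone-correct}: each directed edge is assigned to exactly one subgraph (whether under correct or adversarial operation, since $c(u,v)$ and $c(v,u)$ each pick a single color for each directed copy), so an undirected edge $(u,v)$ lies in at most two of the $G_i$'s, one for each of its two directed copies.

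I expect the main obstacle to be the budget accounting: one must be careful that the cost of corrupting one (edge, padded-round) pair is genuinely at least $s/2$ (because the received value is taken as the majority over $s$ sub-rounds) and that charges to distinct bad colors are disjoint (because the subgraphs $G_i$ are edge-disjoint in their directed form). Once these are in place, the choice of constants $c$ and $c'$ relative to $c''$ is a routine calculation, and the remaining parts of the statement follow exactly as in the non-budgeted case.
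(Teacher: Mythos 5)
Your proof follows essentially the same structure as the paper's: bound the number of bad colors by a budget argument, invoke Lemma~\ref{lem:good-expander-sampled-budgeted} for the good colors, and read off $\eta\leq 2$ from the fact that each directed edge commits to a single color. One small inaccuracy worth flagging: you state the adversary's budget as "$r'\cdot f$", deriving it from the fact that the tree-packing phase has $O(r')$ rounds. But the $f$-round-error-rate budget is taken over the \emph{entire} protocol, which runs for roughly $(4c''+2)r'$ rounds (the tree-packing phase alone is $\approx 4c''r'$ rounds since $z=4c''\log n/\phi$ padded rounds of $s$ sub-rounds each), and the adversary can concentrate all of it in the tree-packing phase. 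So the budget is $\approx(4c''+2)r'f$, as the paper writes, not $r'f$; your displayed bound $B\le 2r'f/s$ should pick up the $(4c''+2)$ factor. Since you explicitly say to choose $c,c'$ large "against $c''$", you clearly intend the right thing, and the conclusion $B\le 0.1k$ is unaffected. One further pedantic point: a single unit of budget (controlling the undirected edge $\{u,v\}$ for one sub-round) can corrupt both directed copies $u\to v$ and $v\to u$, which may be committed to \emph{two different} colors, so the "(edge, padded-round) pairs charged to different bad colors are disjoint" claim is only true up to a factor of $2$; this too is absorbed into the constants. Apart from these constant-factor sloppinesses, the argument is sound and matches the paper.
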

	\begin{proof}
		The total number of corrupted messages is at most $(4c''+2)r'f$. For a color to be bad, there has to be at least $r'(\phi/\log{n})/2$ corrupted messages associated with it. Since we assume $f=\widetilde{O}(k\phi^2) = \widetilde{O}(r'\phi)$, there are at most $\frac{(4c''+2)r'f}{r'(\phi/\log{n})/2} = \frac{(4c''+2) \cdot f}{(\phi/\log{n})/2} \leq k/10$ bad colors, where the last inequality follows from choosing large enough constants $c,c'$ for $f$. Therefore, there are at least $0.9k$ good colors. By Lemma \ref{lem:good-expander-sampled-budgeted}, for every good color $i$, the subgraph $G'_i$ corresponds to a spanning tree of depth at most $\kDiam$ and rooted at $v_r$. Since each directed edge is committed to a single color, we get that $\eta\leq 2$ (even among the bad subgraphs).
	\end{proof}

	\begin{proof}[Proof of \Cref{thm:expander-compiler-budgeted}]
		The round complexity of this procedure is at most $(4c''+2)r'$. By \Cref{lem:stepone-correct-budgeted}, the first phase of the algorithm computes a weak-$(k, \kDiam,\eta)$ tree packing. In the second phase, we run $\cA$ compiled by \ref{lem:good-expander-sampled-budgeted} on round-error rate $(4c''+2)f$. Since its round complexity of the second phase is $r'$ and since there are at most $(4c''+2)r'f$ corrupted messages in total, we are guaranteed that the output is the same as in $\cA$. 
	\end{proof}

	\section{Mobile Resilience using Fault-Tolerant Cycle Covers}\label{sec:FTcycle-cover}

In this section we show a $f$-mobile-resilient algorithms. We use the same techniques as in \cite{PatraCRSR09} for the unicast case, and extend this approach to obtain a general compiler using cycle covers. We also note that this approach is also a direct generalization of the approach of the compiler of \cite{ParterYPODC19} against a $1$-mobile byzantine adversary.

\begin{definition}[Low-Congestion FT Cycle-Covers]
For a given graph $G=(V,E)$, an $f$-FT $(\congestion,\dilation)$-Cycle Cover is a collection of paths\footnote{For our purposes, it is instructive to view it as a collection of $f$ edge-disjoint paths, between each neighboring pair $u,v$. This is equivalent to $(f-1)$ cycles covering the edge $(u,v)$.} $\mathcal{P}=\bigcup_{e \in E}\mathcal{P}(e)$ where each $\mathcal{P}(e=(u,v))$ consists of $k$ edge-disjoint $u$-$v$ paths, with the following properties: (i) $\max_{P \in \mathcal{P}}|P|\leq \dilation$ and (ii) each $e \in E$ appears on at most $\congestion$ paths in $\mathcal{P}$ (i.e., $\load(\mathcal{P})\leq \congestion$). 
\end{definition}

\begin{theorem}[Existence of Low-Congestion FT-Cycle Covers, \cite{HitronP21a}]\label{thm:LCFT}
Every $f$-edge connected graph $G=(V,E)$ admits an $f$-FT $(\congestion,\dilation)$-Cycle Cover with $\congestion, \dilation= D^{O(f)}\log(n)$. Moreover, in the fault-free setting these cycles can be computed in $\congestion+ \dilation$ rounds. 
\end{theorem}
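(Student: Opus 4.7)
The plan is to establish the combinatorial existence first and then address the distributed construction. For existence, I would appeal to the short fault-tolerant paths machinery of Parter \cite{parter2019small}, which guarantees that in any $f$-edge connected $D$-diameter graph, every pair of neighbors $u,v$ is connected by $f$ edge-disjoint $u$-$v$ paths of length at most $D^{O(f)}$. Applying this to every edge $e=(u,v)\in E$ yields a candidate family $\mathcal{P}(e)$ that satisfies the dilation bound $\dilation = D^{O(f)}$; what remains is to reduce the congestion of the union $\mathcal{P}=\bigcup_e \mathcal{P}(e)$.

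To bound congestion, I would use randomization: for each edge $e$ independently, sample one valid short collection $\mathcal{P}(e)$. Since each sampled collection contains at most $f\cdot \dilation = D^{O(f)}$ edges and $|E|\le n^2$, an averaging argument gives an expected load of $D^{O(f)}$ on any fixed edge $e'$. A Chernoff plus union bound then pushes the worst-case congestion to $\congestion = D^{O(f)}\log n$ with high probability. Should the joint distribution over the different $\mathcal{P}(e)$'s exhibit bad dependencies (for instance if the sampling has to be coordinated to preserve edge-disjointness inside a single $\mathcal{P}(e)$), one can fall back on the Lovász Local Lemma: each edge's random choice depends on only $\dilation \cdot \congestion = \poly(D^{O(f)})$ other edges, which is small enough to satisfy the LLL condition.

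For the distributed construction in the fault-free setting, each endpoint of an edge $e=(u,v)$ locally computes its $f$ short edge-disjoint $u$-$v$ paths via an FT-BFS / $f$-route-flow subroutine restricted to radius $\dilation$. Running the $|E|$ subroutines sequentially would cost $m\cdot D^{O(f)}$ rounds, but they pipeline: the joint instance has dilation $\dilation$ and, by the existence argument, admits a schedule of congestion $\congestion$. Applying the random-delay scheduling result (Theorem~\ref{thm:delay}) completes every subroutine in $\widetilde{O}(\congestion+\dilation) = D^{O(f)}\log n$ rounds, matching the stated bound.

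The main obstacle is the $D^{O(f)}$ length bound itself: showing that $f$ edge-disjoint $u$-$v$ paths can be chosen \emph{simultaneously short}, rather than relying on the trivial $O(n)$ bound coming from Menger's theorem, is the nontrivial graph-theoretic ingredient and is exactly what forces the exponential dependence on $f$. Once that is in hand, both the probabilistic congestion control and the pipelined distributed implementation follow standard templates and only add a $\polylog(n)$ overhead.
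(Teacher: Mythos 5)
The paper does not prove \Cref{thm:LCFT} itself; it is imported as a black box from \cite{HitronP21a}, so the comparison here is really against the cited construction. Your reconstruction has a genuine gap at the congestion step.

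The averaging argument you invoke bounds only the \emph{average} edge load, not the expectation at any fixed edge $e'$. You write that sampling one valid short collection $\mathcal{P}(e)$ per edge gives expected load $D^{O(f)}$ at a fixed $e'$, but this is false in general: if $e'$ is a bottleneck that lies on \emph{every} admissible short path collection for $\Omega(n)$ many source–target pairs $e$ (nothing in the short-path existence result of \cite{parter2019small} rules this out), then the randomization over which collection to choose is useless and $e'$ will have load $\Omega(n)$ with probability $1$. Chernoff only helps when the per-edge indicator variables have small individual expectations and are concentrated; here the individual expectations themselves are not under control. The LLL fallback has the same problem and is additionally circular, since you feed $\congestion$ into the dependency-degree calculation that is supposed to establish $\congestion$. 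The real difficulty in \cite{ParterY19s,HitronP21a} is exactly that both \emph{length} and \emph{overlap} have to be controlled simultaneously — this is accomplished by a structured recursive construction (based on graph decomposition into low-diameter clusters and careful routing inside each level of the recursion), not by independently picking short path families per edge and hoping randomness spreads them out. Without such coordination, your scheduling step via \Cref{thm:delay} also has nothing to feed on, because you haven't certified the congestion bound the scheduler requires.
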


For an $f$-FT cycle cover $\mathcal{P}$, we define the \emph{path-conflict graph} to be a graph $H=(E,E_H)$ with a vertex $v_e$ for every edge $e \in E$, and where $\{v_{e_1},v_{e_2}\} \in E_H$ if and only if there are paths $P_1 \in P(e_1)$ and $P_2 \in P(e_2)$ that share at least one edge. 

\begin{lemma}\label{lem:coloring-ft-cycle-cover}
	Let $\mathcal{P}$ be an $f$-FT $(\congestion,\dilation)$-Cycle Cover. There exists a coloring of the edges of $E$, $\Col:E \rightarrow [f \cdot \dilation \cdot \congestion+1]$, such that for two distinct edges $e_1,e_2 \in E$, if $\Col(e_1) = \Col(e_2)$, then any two paths $P_1 \in P(e_1)$ and $P_2 \in P(e_2)$ are edge-disjoint.
\end{lemma}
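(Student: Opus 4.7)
My plan is to bound the maximum degree of the path-conflict graph $H=(E,E_H)$ and then apply a standard greedy coloring argument. A proper vertex coloring of $H$ translates directly into the desired coloring $\Col$ of $E$: if $v_{e_1}$ and $v_{e_2}$ receive different colors whenever they are adjacent in $H$, then whenever $\Col(e_1)=\Col(e_2)$ the vertices $v_{e_1},v_{e_2}$ are non-adjacent in $H$, which by the definition of $H$ means that no path in $P(e_1)$ shares an edge with any path in $P(e_2)$.

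The key step is to show that every vertex in $H$ has degree at most $f\cdot \dilation\cdot \congestion$. Fix an edge $e_1\in E$. Its path collection $P(e_1)$ consists of at most $f$ paths, each of length at most $\dilation$, so the total number of distinct edges appearing on paths in $P(e_1)$ is at most $f\cdot \dilation$. For any such edge $e'$, the congestion property of the FT cycle cover guarantees that $e'$ lies on at most $\congestion$ paths of $\mathcal{P}$ in total, and hence belongs to the path collections $P(e_2)$ of at most $\congestion$ distinct edges $e_2$. Summing over the at most $f\cdot \dilation$ edges of $\bigcup_{P\in P(e_1)} P$ yields at most $f\cdot \dilation \cdot \congestion$ edges $e_2\neq e_1$ whose collection $P(e_2)$ can overlap (in at least one edge) with $\bigcup_{P\in P(e_1)} P$. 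This is exactly the bound on $\deg_H(v_{e_1})$.

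Once the maximum degree bound $\Delta(H)\le f\cdot \dilation\cdot \congestion$ is established, a greedy coloring in any vertex ordering uses at most $\Delta(H)+1\le f\cdot \dilation\cdot \congestion+1$ colors, producing a proper coloring of $H$ and hence the required map $\Col:E\to[f\cdot \dilation\cdot \congestion+1]$. I do not expect any real obstacle here; the only point to be a bit careful about is to count edges appearing on paths of $P(e_1)$ (rather than pairs path-edge), so that the congestion bound $\load(\mathcal{P})\le \congestion$ is applied once per such edge.
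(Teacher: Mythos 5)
Your proposal is correct and follows essentially the same approach as the paper: bound the maximum degree of the path-conflict graph $H$ by $f\cdot\dilation\cdot\congestion$ via the same counting (at most $f$ paths of length at most $\dilation$ in $P(e_1)$, and each edge on them lies on at most $\congestion$ paths of $\mathcal{P}$ by the load bound), then apply greedy $(\Delta+1)$-coloring. Your write-up is a bit more careful than the paper's about counting distinct edges of $\bigcup_{P\in P(e_1)}P$ rather than path-edge incidences, but the argument is the same.
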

\begin{proof}
	
	We note that in the \emph{path-conflict graph} $H$, the degree of each vertex $v_e$ in $H$ is at most $f\cdot\dilation\cdot\congestion$, since for each $v_e$ there are $f$ paths of size at most $\dilation$, and for each such edge $e$, there are at most $\congestion$ other paths in $\mathcal{P}$ containing $e$. Therefore, there exist a coloring $\Col$ of the vertices by $(f\cdot\dilation\cdot\congestion+1)$ colors so that no two adjacent vertices have the same color.
	
	In the graph $G$, coloring each edge of $e \in E$ by the color $\Col(v_e)$, we obtain a coloring such that if two edges  $e_1,e_2$ share the same color, then all paths $P(e_1)$ and $P(e_2)$ are edge disjoint.
\end{proof}

We call an edge coloring $\Col$ with the above properties of \Cref{lem:coloring-ft-cycle-cover} good cycle coloring.  We note that in a fault-free network, a good cycle coloring can be computed for an $f$-FT $(\congestion,\dilation)$-Cycle Cover by simulating a $(\Delta+1)$-coloring in the the \emph{path-conflict graph} $H$ (as defined in \Cref{lem:coloring-ft-cycle-cover}). Next, we show that  $G$ can simulate a round in $H$ using $O(\dilation\cdot\congestion^2)$ rounds.

\begin{lemma}\label{lem:good_color_simul}
	For a $\mathcal{P}$ be an $f$-FT $(\congestion,\dilation)$-Cycle Cover, a $\congest$-round of its \emph{path-conflict graph} can be simulated in $O(\dilation\cdot\congestion^2)$ $\congest$-rounds in $G$ under a fault-free assumption.
\end{lemma}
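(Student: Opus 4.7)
The plan is to represent each vertex $v_e \in V(H)$ by one of the endpoints of $e$ in $G$, say $u_e$ is the endpoint of $e$ with larger $\ID$, so that any message addressed to $v_e$ is held at $u_e$. Simulating one $\congest$-round of the path-conflict graph $H$ then reduces to delivering, for every $H$-edge $\{v_{e_1},v_{e_2}\}$, an $O(\log n)$-bit message from $u_{e_1}$ to $u_{e_2}$ (and vice versa).

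For the routing, I will use the cycle cover $\mathcal{P}$ itself. Since we are in the fault-free setting and Thm.~\ref{thm:LCFT} gives a distributed construction of $\mathcal{P}$, every node knows $\mathcal{P}$ and can consistently select, for each $H$-edge $\{v_{e_1},v_{e_2}\}$, a canonical witness triple $(P_1,P_2,e^*)$ with $P_1\in\mathcal{P}(e_1)$, $P_2\in\mathcal{P}(e_2)$ and $e^*\in P_1\cap P_2$ (say, the lexicographically smallest such triple under some globally fixed ordering). The composed route from $u_{e_1}$ to $u_{e_2}$ traverses the portion of $P_1$ from $u_{e_1}$ up to an endpoint of $e^*$, and then the portion of $P_2$ from that endpoint to $u_{e_2}$; both segments are well-defined because each path in $\mathcal{P}(e_i)$ connects the two endpoints of $e_i$, so $u_{e_i}\in P_i$. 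The composed route has length at most $2\dilation$.

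The crux is the congestion bound. Fix a $G$-edge $e$, and count how many composed routes cross $e$. Any such route is indexed by a witness triple $(P_1,P_2,e^*)$, and $e$ lies either on the $P_1$-segment (between $u_{e_1}$ and $e^*$) or on the $P_2$-segment (between $e^*$ and $u_{e_2}$). In the first case, the number of paths of $\mathcal{P}$ containing $e$ is at most $\congestion$ by the load bound, giving $\leq\congestion$ choices of $P_1$; for each such $P_1$ there are $\leq|P_1|\leq\dilation$ choices of $e^*$ on $P_1$; and for each $e^*$ there are $\leq\congestion$ paths of $\mathcal{P}$ through $e^*$, which determines both $P_2$ and hence $v_{e_2}$. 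This yields $\dilation\cdot\congestion^2$ routes. The second case is symmetric, so the congestion is $O(\dilation\cdot\congestion^2)$.

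Putting it together, I have a routing instance with dilation $O(\dilation)$ and congestion $O(\dilation\cdot\congestion^2)$, so by a deterministic Leighton--Maggs--Rao offline schedule (computable locally since $\mathcal{P}$ is globally known) all messages are delivered in $O(\dilation\cdot\congestion^2)$ $\congest$-rounds; alternatively, Thm.~\ref{thm:delay} achieves the bound up to polylog factors. The only (minor) obstacle is ensuring that $u_{e_1}$, $u_{e_2}$, and every relay along $P_1,P_2$ agree on the canonical witness for each $H$-edge and on the per-edge schedule slot assigned to each message, but this requires no runtime coordination because $\mathcal{P}$, $H$, and the canonical choices are all deterministic functions of $G$ that every node can compute once the cycle cover is known.
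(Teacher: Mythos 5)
Your routing plan and the congestion count are sound, and they mirror the paper's underlying idea: a message between $v_{e_1}$ and $v_{e_2}$ travels along a path of $\mathcal{P}(e_1)$ to a shared edge and then along a path of $\mathcal{P}(e_2)$, and the per-edge load is bounded by (paths through $e$) $\times$ (edges per path) $\times$ (paths through the witness edge) $= \congestion\cdot\dilation\cdot\congestion$. The scheduling step is slightly loose (Theorem~\ref{thm:delay} gives $\widetilde{O}(\cdot)$ w.h.p.\ rather than $O(\cdot)$; a cleaner deterministic route is the round-robin/pipelining argument the paper invokes, since each path carries at most $\dilation\cdot\congestion$ messages and each edge lies on at most $\congestion$ paths), but that is cosmetic.

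The genuine gap is your repeated assumption that ``every node knows $\mathcal{P}$'' and can therefore locally compute $H$, the canonical witness triples, and an offline LMR schedule. The cycle cover is only known \emph{in a distributed manner} (Theorem~\ref{thm:CC-mobile-simulation} states this explicitly, and the distributed construction of Theorem~\ref{thm:LCFT} yields exactly that): the endpoints of $e_1$ know the paths of $\mathcal{P}(e_1)$, and a relay node knows which paths pass through it, but nobody knows which paths of \emph{other} edges overlap with $\mathcal{P}(e_1)$. Consequently $u_{e_1}$ cannot even enumerate its $H$-neighbors, let alone agree with all relays on a canonical $(P_1,P_2,e^*)$ or on per-edge time slots --- ``a deterministic function of $G$ that every node can compute'' is not available outside the supported-\congest\ variant. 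This discovery problem is precisely the substance of the paper's proof: it first floods each edge identifier along its own paths so that every edge $e'$ learns the set $W_{e'}$ of edges whose paths traverse it ($O(\dilation\congestion)$ rounds), then sends the lists $W_{e'}$ back along the paths so that each edge learns its full $H$-neighborhood ($O(\dilation\congestion^2)$ rounds), and only then routes the actual round-$H$ messages by reversing that flow. Your argument can be repaired by prepending such a discovery phase (it fits in the same $O(\dilation\congestion^2)$ budget), but as written the protocol has no way to determine which messages to inject or where to switch paths.
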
 
\begin{proof}
	In the first step, for each edge $(u,v)$, the vertex $u$ sends through all paths $P(u,v)$ the identifier $\ID(u) \circ \ID(v)$. This can be done in $O(\dilation \cdot \congestion)$ rounds, by a standard pipelining argument. Following this step, the endpoints of each edge $e$ knows entire list $W_e$ of edge identifiers $(u,v)$ such that $e \in P(u,v)$. In the second step, each edge $e \in E$ sends back through each of these paths the entire list $W_e$. Since for any $e$ the list $|W_e| \leq \congestion$, this can be done in $O(\congestion^2 \cdot \dilation)$ rounds using a standard pipelining argument. Given this, each node $u$ knows for each incident edge $(u,v)$ the neighbors of $(u,v)$ in the \emph{path-conflict graph}. To simulate a $\congest$ round in this graph, each node transmits in reverse the communication it received in the second step, but for each edge identifier $(u',v')$ it received, it also sends the message $(u,v)$ would send to $(u',v')$ in the simulated $\congest$ round on $H$. The claim follows.
\end{proof}

\begin{corollary}
	For a $\mathcal{P}$ be an $f$-FT $(\congestion,\dilation)$-Cycle Cover, a good coloring can be found in $\widetilde{O}(\congestion^2 \cdot \dilation)$ rounds of $G$. 
\end{corollary}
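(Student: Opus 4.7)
The plan is to reduce the task of computing a good cycle coloring to running a standard distributed $(\Delta+1)$-vertex-coloring on the path-conflict graph $H$, and then to lift that computation back to $G$ using the simulation guaranteed by \Cref{lem:good_color_simul}.

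First, recall from the proof of \Cref{lem:coloring-ft-cycle-cover} that every vertex $v_e$ of the path-conflict graph $H=(E,E_H)$ has degree at most $\Delta := f\cdot\dilation\cdot\congestion$, since each of the $f$ paths in $P(e)$ has length at most $\dilation$, and each edge of $G$ lies on at most $\congestion$ paths of $\mathcal{P}$. Moreover, $H$ has $|E(G)|\le n^2$ vertices, so $\log|V(H)|=O(\log n)$. A good cycle coloring is, by definition, precisely a proper vertex coloring of $H$ with $\Delta+1$ colors, so it suffices to compute a $(\Delta+1)$-coloring of $H$ in the \congest\ model on $H$.

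The second step is to invoke any standard distributed $(\Delta+1)$-coloring algorithm on $H$ that runs in $\polylog(|V(H)|)=\polylog(n)$ rounds w.h.p. (for instance, the classical randomized algorithm based on tentative color trials, or any of its modern refinements). This yields a proper $(\Delta+1)$-coloring of $H$ in $\widetilde{O}(1)$ rounds of the \congest\ model on $H$, in which each node $v_e$ of $H$ learns its color $\Col(v_e)$.

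Finally, I lift this computation to $G$. By \Cref{lem:good_color_simul}, every round of \congest\ communication on $H$ can be simulated in $G$ using $O(\dilation\cdot\congestion^2)$ rounds in the fault-free setting. Since the coloring procedure uses $\widetilde{O}(1)$ rounds on $H$, the total cost in $G$ is $\widetilde{O}(\congestion^2\cdot\dilation)$ rounds, during which each $G$-edge $e$ learns the color $\Col(v_e)$ assigned to its corresponding vertex in $H$. The resulting edge coloring of $G$ satisfies the required property by \Cref{lem:coloring-ft-cycle-cover}, so it is a good cycle coloring, as desired. The main (minor) subtlety is simply to verify that the message sizes used by the $(\Delta+1)$-coloring routine fit into $O(\log n)$ bits so that the simulation via \Cref{lem:good_color_simul} is valid, which is standard since colors are encoded in $O(\log\Delta)=O(\log n)$ bits.
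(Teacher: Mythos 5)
Your proposal is correct and takes essentially the same route as the paper: simulate a $\widetilde{O}(1)$-round $(\Delta+1)$-coloring algorithm on the path-conflict graph $H$ via \Cref{lem:good_color_simul}, yielding $\widetilde{O}(\congestion^2\cdot\dilation)$ rounds in $G$. The only (cosmetic) difference is that you suggest a generic randomized $(\Delta+1)$-coloring routine, whereas the paper cites the deterministic $\widetilde{O}(1)$-round algorithm of Ghaffari--Kuhn; either suffices here since the subroutine runs on $H$ in the fault-free setting.
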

\begin{proof}
	By \Cref{lem:good_color_simul}, we can simulate the deterministic $(\Delta+1)$-coloring algorithm of \cite{GK21}, which runs in $\widetilde{O}(1)$ congest rounds. If a vertex $v_e \in H$ is assigned the color $c(e)$, the edge $e$ colors itself with $c(e)$. By definition of the \emph{path-conflict graph}, any edge pair $e_1,e_2$ share an edge if and only if their paths intersect. Therefore, a vertex coloring on $H$ such that two adjacent vertices share the same color implies a good coloring on the edges of $G$ w.r.t. the path cover $\mathcal{P}$.
\end{proof}

\begin{theorem}[$f$-Mobile Resilient Simulation via FT Cycle-Covers]\label{thm:CC-mobile-simulation}
Assume that a $k$-FT $(\congestion,\dilation)$-Cycle Cover is known in a distributed manner, and a good cycle coloring is known for the edges, and let $\cA$ be an $r$-round algorithm in the fault-free setting. Then, there is an equivalent $f$-mobile-resilient algorithm $\cA'$ with round complexity of $r'$ for $f\leq k/c$ for a sufficiently large constant $c$ and $r'= \dilation \cdot \congestion \cdot r$. 
\end{theorem}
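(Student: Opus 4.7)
My plan is to simulate $\cA$ round by round, compiling each $\cA$-round $i$ into a ``phase'' of $\dilation \cdot \congestion$ CONGEST rounds in which every directed edge $(u,v) \in E$ reliably delivers its message $m_i(u,v)$ from $u$ to $v$, and $v$ then produces its estimate $\widetilde{m}_i(u,v)$ as the majority over its received copies. The delivery uses the given $k$-FT cycle cover: for each $(u,v)$, the source $u$ transmits $m_i(u,v)$ in parallel along each of the $k$ edge-disjoint paths in $\mathcal{P}(u,v)$.

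Internally, a phase is organized into $\dilation$ super-steps. At super-step $t$, every path $P$ forwards the value currently stored at its $(t-1)$-st vertex one hop further toward its far endpoint, simultaneously for every $(u,v)$ and every $P \in \mathcal{P}(u,v)$. Since each graph edge $e$ lies on at most $\congestion$ paths of $\mathcal{P}$, a single super-step may ask $e$ to transport up to $\congestion$ distinct hop-messages. I would serialize each super-step into $\congestion$ consecutive CONGEST subrounds, using the good coloring $\Col$ to fix the order deterministically; by Lemma~\ref{lem:coloring-ft-cycle-cover}, paths of same-colored edges are edge-disjoint, so within each subround the active paths never collide on any graph edge. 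This gives $\dilation \cdot \congestion$ rounds per $\cA$-round, and hence $r' = \dilation \cdot \congestion \cdot r$ overall.

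For correctness, I would argue by induction on $i$ that every receiver $v$ obtains $\widetilde{m}_i(u,v) = m_i(u,v)$ for every neighbor $u$, so $\cA'$ faithfully reproduces $\cA$'s transcript. The inductive step reduces to bounding, for every $(u,v)$, the number of the $k$ candidate copies at $v$ that can be tampered with during the $i$-th phase. Two structural facts are available: (i) the $k$ paths in $\mathcal{P}(u,v)$ are pairwise edge-disjoint, so any single adversarial edge-corruption at a single round spoils at most one of them; and (ii) by the good-coloring property, within any serialized subround the simultaneously firing paths form an edge-disjoint family, so a single corruption cannot be amplified across multiple same-color transmissions. A charging argument then assigns each $(e, \text{round})$ corruption to the unique path-message actually traversing $e$ in that round, and a budget count against the $f$-per-round adversarial cap is used to show that at most $O(f)$ of the $k$ paths of any single $(u,v)$ are corrupted. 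Taking $c$ large enough so that $f \le k/c$ drops this strictly below $k/2$, after which majority decoding at $v$ recovers $m_i(u,v)$ exactly.

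I expect the main obstacle to be precisely this charging/budget step. The adversary's nominal corruption budget over a phase is $f \cdot \dilation \cdot \congestion$, which is vastly larger than $k = \Theta(f)$, so the naive bound would already permit every $(u,v)$-path to be spoiled. Converting the good-coloring and the within-$\mathcal{P}(u,v)$ edge-disjointness into a tight $O(f)$ effective-corruption bound per transmission---in the spirit of the single-pair mobile secure-message-transmission analysis of \cite{PatraCRSR09}, generalized across all $|E|$ simultaneous transmissions via the coloring---is the real content of the proof, and is what ultimately pins down the constant $c$ in the condition $f \le k/c$.
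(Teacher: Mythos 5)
Your proposal has a genuine gap, and it is exactly the one you flagged at the end: no charging argument can rescue a scheme in which each copy of $m_i(u,v)$ traverses its path only \emph{once}. Concretely, fix a pair $(u,v)$ and consider your $\dilation$ super-steps. In super-step $t$, the adversary corrupts, for $f$ of the still-uncorrupted paths in $\mathcal{P}(u,v)$, the edge currently carrying that path's copy (the paths are edge-disjoint, so these are $f$ distinct edges, within budget). After $\lceil k/f\rceil \le c$ super-steps it has destroyed all $k$ copies of this one pair, and since $\dilation \ge c$ in any nontrivial instance, majority decoding at $v$ fails. The edge-disjointness of $\mathcal{P}(u,v)$ and the good coloring cannot prevent this: they control \emph{spatial} collisions between transmissions, but the attack above uses only the \emph{temporal} freedom of the mobile adversary, which resets its $f$-edge budget every round. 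So the ``$O(f)$ effective-corruption bound per transmission'' you hope for is simply false for your delivery scheme.

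The missing idea (which is the content of the paper's proof, following \cite{PatraCRSR09}) is \emph{temporal repetition}: the color classes are processed one at a time, and within the window devoted to color class $j$, each source $u$ with $\Col(u,v)=j$ re-sends $m_i(u,v)$ in \emph{every} round of a window of length $\Theta(f\cdot\dilation)$, with each path pipelining all these copies forward. The receiver then collects $\Theta(f\cdot\dilation\cdot k)$ copies over the $k$ edge-disjoint paths, whereas the adversary can corrupt at most $f\cdot\Theta(f\cdot\dilation)=\Theta(f^2\dilation)$ messages during that window; since $k\ge cf$ (the paper's count is tight already at $k=2f+1$), the uncorrupted copies form a strict majority. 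The good coloring is used only to let all pairs of the same color run their windows simultaneously without congestion, the color classes being handled sequentially. Your single-pass, $\dilation\cdot\congestion$-round phase omits both the repetition and the per-color-class windows, and without them the majority vote does not go through.
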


By combining Theorem \ref{thm:LCFT} and \ref{thm:CC-mobile-simulation}, we obtain Theorem \ref{cor:CC-mobile-simulation-final}. The rest of the section is devoted for proving Thm. \ref{thm:CC-mobile-simulation}.

%
%
\noindent\paragraph{Simulation of the $i^{th}$ Round of Alg $\cA$.}
Assume the network simulated rounds $1,\dots,i-1$, and each node $v$ is given as input the some outgoing messages $m_i(v,u_1),\dots,m_i(v,u_{\deg(v)})$ for every neighbor $u_i$, where our goal is for every node $v$ to output $m_i(u_1,v),\dots,m_i(u_{\deg(v)},v)$.

	
	The network runs the following protocol: For any $j \in [k]$, let $E_j = \{e \mid \Col(e) = j-1\}$. The network performs $j = 1,\dots,f \cdot \dilation \cdot \congestion+1$ iterations. In each iteration, for $t = 1,\dots,(2f\cdot\dilation+\dilation+1)$, in parallel for each $(u,v) \in E_j$, node $u$ sends $m_i(u,v)$ repeatedly over each path $P \in P(e)$, and whenever an edge in in $P(u,v)$ receives a message, the receiving endpoint propagates it forward to the next edge in the path. Let $m_i(u,v,\ell,t)$ be the message $v$ receives over the last edge of $P_\ell$ exactly $t$ rounds after the start of iteration $\Col(u,v)$. Let $M_i(u,v) = \{m_i(u,v,\ell,t) \stt \ell \leq k \land \dilation \leq t \leq 2f\cdot\dilation+\dilation+1\}$ be the multi-set of messages $v$ receives between rounds $\kDiam$ and $2f\cdot\dilation+\dilation+1$. For each neighbor $u$, the node $v$ outputs the majority value over $\MAJ(M_i(u,v))$.
	
	\begin{lemma}
		Let $(u,v) \in E_j$. Then $\MAJ(M_i(u,v)) = m_i(u,v)$.
	\end{lemma}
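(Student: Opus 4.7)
The plan is to lower-bound the number of entries of $M_i(u,v)$ that equal $m_i(u,v)$ and upper-bound the number of possibly-corrupted entries, then show that for $k \geq cf$ with $c$ a sufficiently large constant, the former strictly exceeds the latter. Observe first that $|M_i(u,v)| = k(2f\dilation+2)$, since we collect one sample per path $\ell \in [k]$ and per arrival round $t \in \{\dilation,\ldots, 2f\dilation+\dilation+1\}$. An entry $m_i(u,v,\ell,t)$ equals $m_i(u,v)$ unless some edge of $P_\ell$ is corrupted by the adversary in the precise round at which the pipelined copy of $m_i(u,v)$ initiated at $u$ passes through it; hence the ``bad'' entries of $M_i(u,v)$ are charged to individual adversarial corruptions.

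The next step is to invoke the good cycle coloring of \Cref{lem:coloring-ft-cycle-cover}. Since all edges of $E_j$ share the color $j-1$, the path collections $\{P(e) \colon e \in E_j\}$ are pairwise edge-disjoint. Together with the intra-$P(u,v)$ edge-disjointness, this shows that during iteration $j$ each network edge $e^*$ lies on at most one active path and therefore carries the pipelined copies of at most one message $m_i(e)$ with $e \in E_j$. Consequently, a single adversarial corruption (one edge, one round) taints the propagation of exactly one entry in $\bigcup_{e \in E_j} \{m_i(e,\ell,t)\}$, so two different corruptions cannot double-count against the same entry of $M_i(u,v)$.

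It then follows that the total number of corrupted entries of $M_i(u,v)$ is bounded by the adversary's budget over the full iteration, namely $f\cdot(2f\dilation+\dilation+1)$, while the remaining $\geq k(2f\dilation+2)-f(2f\dilation+\dilation+1)$ entries all equal $m_i(u,v)$. A short calculation shows that for $k \geq cf$ with $c$ a sufficiently large constant, this lower bound strictly exceeds $|M_i(u,v)|/2 = k(f\dilation+1)$, so $m_i(u,v)$ is the strict majority in the multiset $M_i(u,v)$ and $\MAJ(M_i(u,v)) = m_i(u,v)$.

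The step I expect to need the most care is the charging argument of the second paragraph: verifying that in iteration $j$ no single edge of $G$ is traversed simultaneously by pipelined copies of two distinct messages $m_i(e), m_i(e')$ with $e,e'\in E_j$, so that the ``one corruption taints one entry'' bookkeeping is valid. This is precisely what the good coloring yields in combination with the edge-disjointness of the paths within each $P(u,v)$; without it, one adversarial edge-corruption could simultaneously damage copies of several distinct $m_i(e)$, and the budget bound would degrade by a factor proportional to the overlap $\congestion$ of the cycle cover, ruining the majority calculation.
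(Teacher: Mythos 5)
Your proof is correct and follows essentially the same counting argument as the paper: lower-bound the multiset size $|M_i(u,v)|=k(2f\dilation+2)$, upper-bound the corrupted entries by the adversary's round-budget $f\cdot(2f\dilation+\dilation+1)$ via a ``one corruption, at most one tainted entry'' charge, and conclude a strict majority once $k\geq 2f+1$ (your ``sufficiently large $c$'' is not actually needed; $c=2$ already works, and the paper implicitly sets $k=2f+1$). Your count of $|M_i(u,v)|$ is in fact slightly tighter than the paper's, which drops a $+1$ when counting the window $t\in\{\dilation,\dots,2f\dilation+\dilation+1\}$, though this does not affect the conclusion.

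One observation in your proof is misdirected, however. You present the good cycle coloring of \Cref{lem:coloring-ft-cycle-cover} as the thing that makes the ``one corruption $\to$ one tainted entry of $M_i(u,v)$'' charge valid, and warn that without it the budget bound would degrade by a factor of $\congestion$. That is not where the coloring is doing work. For a \emph{fixed} pair $(u,v)$, the paths $P_1,\ldots,P_k\in\mathcal{P}(u,v)$ are pairwise edge-disjoint by the definition of an FT cycle cover, so a corruption of edge $e^*$ in round $t^*$ can lie on at most one $P_\ell$; combined with the pipelining (at most one copy of $m_i(u,v)$ occupies $e^*$ in any given round), that corruption taints at most one entry $m_i(u,v,\ell,t)$. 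Whether or not paths of other pairs $(u',v')\in E_j$ share $e^*$ is irrelevant to this count: a single physical corruption may simultaneously harm entries of several $M_i(u',v')$ multisets, but it still harms at most one entry of $M_i(u,v)$, and that is all the lemma needs. What the good coloring actually buys is feasibility in the $1$-\congest\ model: since the path collections $\{\mathcal{P}(e):e\in E_j\}$ are pairwise edge-disjoint within an iteration, no edge has to carry two pipelined messages in the same round, so the schedule is implementable without queuing and the stated round complexity holds. So the coloring is a prerequisite for the protocol being well-defined, not for the majority bookkeeping.
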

	\begin{proof}
		The number of rounds in iteration $j$ is $t = 2f\dilation+\dilation+1$. Therefore, there are at most $(2f\dilation+\dilation+1)f$ faults. Let $L_i(u,v)$ be the number of messages in $M_i$ that are not equal to $m_i(u,v)$. Since any fault changes the value of at most one $m_i(u,v,\ell,t) \in M_i$, then $L_i(u,v) \leq (2f\dilation+\dilation+1)f$. But the number of $m_i(u,v,\ell,t)$ that $v$ receives is $(2f\dilation+\dilation+1 - \dilation) \cdot k = (2f\dilation+1) \cdot (2f+1) = 2f(f\dilation +\dilation+1)+1 \geq 2L_i(u,v)+1$. Therefore, at least $L_i(u,v)+1$ of the messages are equal to $m_i(u,v)$, meaning $\MAJ(M_i(u,v)) = m_i(u,v)$.
	\end{proof}
	
	
	
	Theorem \ref{thm:CC-mobile-simulation} follows, since in each round $i$, and each pair of adjacent nodes $u,v$, $v$ successfully receives $m_i(u,v)$.

	\section*{Acknowledgments}
		We are very grateful to Yanic Maus for his extremely valuable comments and suggestions, and for the time and effort he invested in
		carefully reading our paper. This project is funded by the European Research Council (ERC) under the European Union’s Horizon 2020 research and innovation programme (grant agreement No. 949083), and by the Israeli Science Foundation (ISF), grant No. 2084/18.

	\bibliographystyle{plain}
	\bibliography{crypto}

\appendix

\section{Translation of Fault-Free Algorithms into  $f$-Mobile-Secure Algorithms}\label{sec:fault_free_to_f_secure}

In this section we prove \Cref{thm:mobile-compilers} by presenting a simulation result that translates any low-congestion (fault-free) algorithm $\cA$ into an equivalent $f$-mobile-secure algorithm $\cA$. 
The connectivity requirements and the round overhead match those obtained by \cite{EavesMST2023} for the $f$-\emph{static} setting. The additional benefit of our compiler is that in contrast to \cite{EavesMST2023}, it provides perfect, rather than statistical, security guarantees. Similarly to \cite{EavesMST2023}, the key task is in exchanging $\widetilde{O}(f \cdot \congestion)$ secrets to all nodes in the graph, that are hidden from the adversary. This calls for providing $f$-mobile-secure algorithms for unicast and broadcast task. In Subsec. \ref{sec:fmobile-unicast}, we describe the adaptation of secure unicast to the mobile setting. Subsec. \ref{sec:fmobile-broascast} describes the modified $f$-mobile-secure broadcast algorithm. Finally, Subsec. \ref{sec:fmobile-LWcompiler} presents the final $f$-mobile compiler, with prefect-security guarantees. 

Throughout, we use the following lemma, which follows by Theorem \ref{thm:extract}.

\begin{lemma}\label{lem:sufficient-secrets}
Let $t,r,f$ be input integers. Then, there is an $r+t$ round procedure that allows each neighbors pair $u,v$ to output a list $\mathcal{K}(u,v)$ of $r$ many $B$-bit words for $B=O(\log n)$ such that the following holds: the $f$-mobile adversary learns nothing, on all but the $\mathcal{K}(u,v)$ sets of at most $f'$ edges, where $f'=\lfloor (f \cdot (t+1))/(r+t) \rfloor$. Moreover, for $t \geq 2fr$, it holds that $f'=f$. 
\end{lemma}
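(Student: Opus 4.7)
The approach is to distill the first phase of the construction used in the proof of Theorem~\ref{thm:simulation} into a standalone procedure and to isolate its output guarantee. Let $q = 2^B$ and fix an arbitrary $(r+t) \times r$ Vandermonde matrix $M$ over $\mathbb{F}_q$. The procedure runs for $r+t$ rounds: in round $j = 1, \ldots, r+t$, every ordered neighboring pair $(u,v)$ sends an independent uniformly random field element $R_j(u,v) \in \mathbb{F}_q$. At the end, both endpoints of each edge locally output the list $\mathcal{K}(u,v) = (K_1(u,v), \ldots, K_r(u,v))$ with $K_i(u,v) = \sum_{j=1}^{r+t} M_{ji} R_j(u,v)$, all arithmetic being over $\mathbb{F}_q$. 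The round complexity is thus exactly $r+t$.

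Next I would partition the edges into good and bad according to adversarial exposure: an edge $e$ is bad if the mobile adversary listens on it in more than $t$ of the $r+t$ rounds, and good otherwise. Since the adversary listens on at most $f$ edges per round, the total number of (edge, round) eavesdropping events is at most $f(r+t)$; consequently a standard averaging argument bounds the number of bad edges by the stated quantity $f'$. The ``moreover'' clause for $t \geq 2fr$ is then a short arithmetic verification, identical to the one carried out in the proof of Theorem~\ref{thm:simulation}: the condition $t \geq 2fr$ forces the floor to equal $f$.

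Finally I would argue perfect hiding of the key list on every good edge. Fix a good edge $e = (u,v)$. By definition, at least $r = (r+t) - t$ of the values $\{R_j(u,v)\}_j$ were never observed by the adversary, and conditioned on the adversary's entire view those unobserved entries remain mutually independent and uniform in $\mathbb{F}_q$. Theorem~\ref{thm:extract} applied with ambient dimension $n = r+t$, number of ``constants'' at most $t$, and the fixed $(r+t) \times r$ Vandermonde matrix $M$, then yields that $\mathcal{K}(u,v)$ is a uniformly random element of $\mathbb{F}_q^r$, independent of the adversary's view. Because the random bits exchanged on distinct edges are mutually independent, the joint distribution of the keys across all good edges is perfectly hidden, giving the claim. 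The only mild pitfall is to avoid implicitly conditioning on information that would destroy the uniformity hypothesis of Theorem~\ref{thm:extract}, but the per-edge independence of the $R_j(u,v)$'s makes this straightforward.
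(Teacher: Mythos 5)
Your proposal is correct and follows essentially the same route as the paper's proof: exchange $r+t$ uniformly random field elements per edge, apply the Vandermonde-based extraction of Theorem~\ref{thm:extract} to obtain the $r$ keys, classify an edge as bad if it is eavesdropped in more than $t$ rounds, and bound the number of bad edges by averaging over the at most $f(r+t)$ (edge, round) eavesdropping events. One remark: the averaging argument (in your write-up and in the paper's own proof alike) yields the bound $\lfloor f(r+t)/(t+1)\rfloor$ on the number of bad edges --- which is the quantity for which the $t\ge 2fr$ clause gives $f'=f$ --- whereas the expression $\lfloor f(t+1)/(r+t)\rfloor$ in the lemma statement appears to be the inverted formula carried over from Theorem~\ref{thm:simulation}, so you should not claim the averaging bound literally equals ``the stated quantity.''
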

\begin{proof}
Let each neighboring pair $u,v$ exchange $r+t$ random messages $\mathcal{R}(u,v)=\{R_1(u,v),\ldots, R_{r+t}(u,v)\}$. Each $u,v$ apply the algorithm of Thm. \ref{thm:extract} with $n=r+t$ and $k=O(\log n)$ to compute the set $\mathcal{K}(u,v)$ of $r$ words of length $B=O(\log n)$. An edge $(u,v)$ is good if the adversary eavesdrops the edge, during this $r+t$ round procedure, for at most $t$ rounds. Otherwise, the edge is bad. By \Cref{thm:extract}, it holds that the adversary learns nothing on the $\mathcal{K}(u,v)$ set of a good edge. Since the adversary controls at most $f$ edges in each round, it eavesdrops over at most $f(r+t)$ edges. By averaging, there are at most $\lfloor f(r+t)/(t+1) \rfloor$ bad edges. 
\end{proof}

\subsection{Secure Unicast with Mobile Adversaries} \label{sec:fmobile-unicast}

In the static-secure unicast problem, a given source node $s$ is required to send a (secret) message $m^*$ to a designated target $t$, while leaking no information to an eavesdropper adversary controlling a fixed (i.e, static) subset of edges $F \subseteq G$ in the graph. Recently, \cite{EavesBroad2022} observed that the secure network coding algorithm of Jain \cite{Jain04} to the problem can be implemented in an optimal number of \congest\ rounds and with minimal congestion. 

\begin{theorem}[(Static) Secure-Unicast, \cite{EavesBroad2022,Jain04}]\label{lemma:unicast}
Given a $D$--diameter graph $G$, there is an $O(D)$-round $1$-congestion algorithm $\StaticSecureUnicast$ that allows a sender $s$ to send a message $m^*$ to a target $t$, while leaking no information to an eavesdropper adversary controlling $F^*$ edges in the graph, provided that $s$ and $t$ are connected in $G \setminus F^*$. 
\end{theorem}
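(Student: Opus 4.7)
My plan is to invoke Jain's secure network coding scheme \cite{Jain04}, implemented in the \congest\ model as observed by \cite{EavesBroad2022}. The core idea is a tree-based one-time pad: a BFS tree accumulates secret masks contributed by the endpoints of all non-tree edges, and these masks are designed so that they cancel out exactly at $t$, leaving only $m^*$.

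Concretely, I would first build a BFS tree $T$ rooted at $t$ in $O(D)$ rounds so every node learns its parent. In a single additional round, for each non-tree edge $e=(u,v)$ one endpoint (say the higher-ID one) draws a uniform pad $R(e)\in\{0,1\}^B$ and transmits it to the other. Each node $v\neq t$ then locally defines its contribution $C(v)=\bigoplus_{e\ni v,\,e\notin T}R(e)\;\oplus\;\mathbf{1}[v=s]\cdot m^*$, and the tree performs an XOR-convergecast toward $t$ in $O(D)$ rounds, each node forwarding to its parent the XOR of its own $C(v)$ and the values received from its children. Every non-tree edge with neither endpoint equal to $t$ contributes its pad twice and cancels in XOR, so $t$ receives $m^* \oplus \bigoplus_{e\ni t,\,e\notin T}R(e)$, from which $t$ extracts $m^*$ using the pads it already knows.

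The main obstacle is establishing perfect security against any $F^*$ with $s,t$ connected in $G\setminus F^*$. Viewing all messages as $\mathbb{F}_2$-vectors, each convergecast value observed by the adversary on a corrupted tree edge $(v,p(v))\in F^*$ has the form $L_v = \mathbf{1}[s\in S_v]\cdot m^*\;\oplus\;\bigoplus_{e\in\partial S_v\setminus T}R(e)$, where $S_v$ is the subtree rooted at $v$ and $\partial S_v$ its edge boundary. The key lemma to prove is that any $\mathbb{F}_2$-combination of observed pads and observed $L_v$'s (say for $v\in W$) producing a nonzero coefficient of $m^*$ corresponds, by linearity of the boundary map, to a subset $A=\bigoplus_{v\in W}S_v$ of vertices with $s\in A$, $t\notin A$ (the latter holding automatically since $t\notin S_v$ for any $v\neq t$), and with all edges of $\partial A$ lying in $F^*$. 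An elementary check confirms the tree edges in $\partial A$ are exactly $\{(v,p(v)):v\in W\}\subseteq F^*$, so the requirement reduces to all non-tree edges of $\partial A$ being in $F^*$ as well. But then $F^*$ contains an entire $s$-$t$ cut, contradicting the connectivity hypothesis. Hence no such combination exists, and the adversary's view is distributed identically regardless of $m^*$.

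Round complexity and congestion are then immediate: the BFS construction, the single round of pad exchanges on non-tree edges, and the XOR-convergecast together yield an $O(D)$-round algorithm sending $O(1)$ messages per edge in total, as required.
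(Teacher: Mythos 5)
The paper does not prove this statement at all: it is imported verbatim as a black-box result from \cite{Jain04} (as implemented in \congest\ by \cite{EavesBroad2022}), so there is no in-paper argument to compare against. Your reconstruction is the standard Jain protocol and, as far as I can check, it is correct. The crux is the security lemma, and you handle it properly: writing each observed convergecast value as $L_v=\mathbf{1}[s\in S_v]\cdot m^*\oplus\bigoplus_{e\in\partial S_v\setminus T}R(e)$, noting that an $\mathbb{F}_2$-combination over $W$ has $m^*$-coefficient $\mathbf{1}[s\in A]$ for $A=\triangle_{v\in W}S_v$, that the tree edges of $\partial A$ are exactly $\{(v,p(v)):v\in W\}$ (since $\mathbf{1}[u\in S_v]\neq\mathbf{1}[p(u)\in S_v]$ iff $v=u$), and that cancelling every unobserved non-tree pad forces $\partial A\subseteq F^*$, which would be an $s$--$t$ cut inside $F^*$. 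Combined with the standard duality (the view $Am^*+BR$ is independent of $m^*$ iff no functional kills all pad columns while keeping a nonzero $m^*$-coefficient), this gives perfect security. Two minor accounting remarks: the literal ``$1$-congestion'' in the statement refers to the secret-dependent communication --- exactly one message per edge (a pad on each non-tree edge, one convergecast value on each tree edge) --- which is the property the paper's mobile compiler actually exploits; the BFS-tree construction is topology-only, leaks nothing about $m^*$, and is what pushes your count to ``$O(1)$ per edge'' rather than exactly one. Also, for completeness you should note $A\neq\emptyset$ and $A\neq V$ (immediate from $s\in A$, $t\notin A$) so that $\partial A$ really is an $s$--$t$ cut.
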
 

We start by observing that Jain's unicast algorithm, upon slight modifications, can become secure against mobile eavesdropper adversary, provided that the set of edges $F_i$ controlled by the adversary in round $i$ does not disconnect the graph. In fact, our requirements on the $\{F_i\}$ sets can be relaxed even further. We show the following which matches the known bounds for the static setting.

\begin{lemma}[Mobile Secure-(Multi) Unicast]\label{lemma:unicast-mobile}
Given a $D$--diameter graph $G$, there is an $O(D)$-round $2$-congestion algorithm $\MobileSecureUnicast$ that allows a sender $s$ to send a message $m^*$ to a target $t$, while leaking no information to a mobile eavesdropper adversary controlling distinct set of edges $F_i$ in every round $i$ of the algorithm,  provided that $s$ and $t$ are connected in $G \setminus F_1$ (where possible $F_i=E(G)$ for every $i \geq 2$).  
Moreover, there is an algorithm $\MobileSecureMulticast$ which can solve $R$ secure unicast instances $(s_j,t_j,m_j)$ in parallel using $O(D+R)$ rounds. The security holds provided that each pair $s_j$ and $t_j$ are connected in $G \setminus F_j$, for every $j \in \{1,\ldots, R\}$. 
\end{lemma}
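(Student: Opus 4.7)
The plan is to bootstrap Jain's static-secure unicast (Theorem \ref{lemma:unicast}) into the mobile setting by prepending a single round of key exchange and then simulating Jain's algorithm under a one-time-pad encryption of every message. Concretely, for the single unicast case: in round $1$, every neighboring pair $(u,v)$ exchanges a uniformly random $B$-bit key $R(u,v)$. In the subsequent $O(D)$ rounds, the nodes simulate $\StaticSecureUnicast(s,t,m^*)$ round-by-round; whenever Jain's algorithm instructs $u$ to transmit a message $m$ to $v$, the endpoints send $m \oplus R(u,v)$ instead, and $v$ locally XORs $R(u,v)$ back to recover $m$. The congestion is $2$: one key in round $1$, and one encrypted message per edge over the remainder of the protocol (since Jain's algorithm has $1$-congestion).

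For the security reduction I would argue as follows. Declare an edge $(u,v)$ to be ``compromised'' iff $(u,v)\in F_1$, so the set of compromised edges coincides with $F_1$. For any edge $(u,v)\notin F_1$, the key $R(u,v)$ is uniformly random and is never transmitted in cleartext in any later round, so by one-time-pad the encrypted transcript on $(u,v)$ seen by the adversary in rounds $i\ge 2$ is uniformly distributed and statistically independent of Jain's cleartext messages; crucially this holds even if $F_i=E(G)$. For $(u,v)\in F_1$ the adversary effectively observes Jain's cleartext on that edge. Thus the mobile adversary's view is a (deterministic) function of Jain's cleartext messages on the fixed edge set $F_1$ together with independent uniform randomness; since $s,t$ are connected in $G\setminus F_1$, Theorem \ref{lemma:unicast} guarantees that Jain's cleartext transcript on $F_1$ is independent of $m^*$, and hence so is the mobile adversary's view. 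This gives perfect $f$-mobile security.

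For $\MobileSecureMulticast$ I would replicate the same template with one key exchange round per instance. In the preprocessing phase of $R$ rounds, round $j$ is dedicated to having every neighboring pair exchange a fresh uniform key $R_j(u,v)$ that will be used exclusively for the $j$-th unicast instance. In the execution phase, I invoke $\StaticSecureUnicast(s_j,t_j,m_j)$ for all $j\in[R]$ in parallel, each message of instance $j$ encrypted with its own key $R_j(\cdot,\cdot)$. The total load on any edge is $O(R)$ (one key per instance, plus one encrypted message from each of the $R$ Jain protocols, each of which is $1$-congestion), and each individual protocol has dilation $O(D)$, so Theorem \ref{thm:delay} schedules everything in $\widetilde{O}(D+R)$ rounds; a direct round-robin pipelining along the BFS trees used by Jain's protocol tightens this to $O(D+R)$. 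Security for instance $j$ reduces to the single-unicast argument with $F_1$ replaced by $F_j$: the key $R_j(u,v)$ is transmitted only in round $j$, so the ``compromised'' edge set for instance $j$ is precisely $F_j$, and $R_{j'}$ for $j'\ne j$ together with all encrypted payloads on instance-$j$ edges outside $F_j$ are uniform and independent of $m_j$ in the adversary's view.

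The delicate point, which I would flag as the main obstacle, is formalizing independence across the $R$ instances simultaneously: one must verify that conditioning on the adversary's full mobile transcript (spanning both key exchange rounds and the interleaved execution phase) does not create spurious correlations between $R_j$ and the messages exchanged on behalf of other instances. This is resolved by observing that the keys $\{R_j(u,v)\}_{j,(u,v)}$ are mutually independent and each is used in exactly one product $m\oplus R_j(u,v)$ per edge of instance $j$, so the standard one-time-pad invariant (the encryption is uniform and information-theoretically independent of the plaintext on any edge whose key is not revealed) propagates through the joint view by a hybrid argument across $j=1,\dots,R$. After this, the reduction to the $f$-static security of each individual Jain instance proceeds as in the single-unicast case.
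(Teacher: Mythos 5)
Your proof is correct and follows essentially the same route as the paper's: a one-round (resp.\ $R$-round) key-exchange preamble, OTP encryption of Jain's $1$-congestion unicast messages, the observation that only edges eavesdropped during the key-exchange round for that instance have compromised keys, and a reduction to the $f$-static security of $\StaticSecureUnicast$ with corruption set $F_1$ (resp.\ $F_j$), plus random-delay scheduling for the multicast variant. The paper makes the same point you flag as delicate — that parallel scheduling does not amplify the adversary's power — and resolves it exactly as you do, by exploiting that each key is used exactly once so one may pessimistically assume all edges are eavesdropped after the key-exchange phase.
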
 
%
\noindent \textbf{Mobile-Secure-Unicast Alg. $\MobileSecureUnicast$.} The algorithm starts with a preliminary round in which nodes exchange random $O(\log n)$-bit messages $K(u,v) \in \mathbb{F}_q$ over all graph edges $(u,v)\in E$, for a prime $q \in \poly(n)$. These messages are next used as OTP keys for encrypting the messages of Alg. $\StaticSecureUnicast$ applied with the input $s,t$ and the secret message $m^*$. For every $(u,v)\in E$, letting $m_i(u,v)$ be a non-empty message sent from $u$ to $v$ in round $i$ of Alg. $\StaticSecureUnicast$, then in corresponding round $i$ of Alg. $\MobileSecureUnicast$ (i.e., round $i+1$) $u$ sends $v$ the message $m_i(u,v)+K(u,v)$. The correctness follows immediately and it remains to prove security.

\begin{claim}\label{cl:security-mobile-unicast}
Let $F^*$ be the set of edges controlled by the adversary in the first round of Alg. $\MobileSecureUnicast$. Then, provided that $s$ and $t$ are connected in $G \setminus F^*$, the adversary learns nothing on the secret message $m^*$. This holds even if the adversary controls all the edges in the graph in every round $i\geq 2$.
\end{claim}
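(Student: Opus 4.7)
}
The plan is to reduce the mobile security of Alg. $\MobileSecureUnicast$ to the static security of Jain's algorithm (Theorem \ref{lemma:unicast}) via a one-time-pad argument on the round-$1$ keys. I will partition the edges into the set $F^*$ that the adversary eavesdrops in round $1$, and its complement $\Egood = E \setminus F^*$. The first observation is that for every $(u,v) \in \Egood$, the key $K(u,v)$ is exchanged only in round $1$, over an edge which is not listened to in that round; consequently, $K(u,v)$ is uniformly distributed in $\mathbb{F}_q$ and independent of the entire view of the adversary up to the beginning of round $2$, and also independent of $\{K(u',v')\}_{(u',v') \neq (u,v)}$, of the randomness of $\StaticSecureUnicast$, and of $m^*$.

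Next, I would argue that even if the adversary controls every edge in every round $i \geq 2$, the messages it observes over $\Egood$ carry no information on $m^*$. Indeed, for each $(u,v) \in \Egood$ and each round $i \geq 2$, the transmitted value is $m_i(u,v) + K(u,v)$, and by the independence established above, a standard OTP argument gives that this transmitted value is uniformly distributed in $\mathbb{F}_q$ and independent of $(m^*, \{m_i(u,v)\})$ as well as of the transcripts exchanged over the other edges of $\Egood$. Hence, the joint distribution of the adversary's observations over $\Egood$ in rounds $\geq 2$ is a collection of i.i.d.\ uniform elements of $\mathbb{F}_q$, independent of $m^*$.

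It remains to handle the information learned over $F^*$. Since the adversary eavesdrops on $(u,v) \in F^*$ in round $1$, it learns $K(u,v)$, and therefore from the round-$i$ transmissions $m_i(u,v) + K(u,v)$ (for all $i \geq 2$) it can recover the underlying messages $\{m_i(u,v)\}_{i \geq 2,\, (u,v)\in F^*}$ of $\StaticSecureUnicast$. Thus, the adversary's view on $F^*$ is information-theoretically equivalent to the view of a static eavesdropper of $F^*$ in an execution of $\StaticSecureUnicast$. By the assumption that $s$ and $t$ are connected in $G \setminus F^*$, Theorem \ref{lemma:unicast} guarantees that this view leaks nothing about $m^*$.

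Combining the two pieces via the independence of the random keys $\{K(u,v)\}_{(u,v) \in \Egood}$ from everything else, the full view of the mobile adversary decomposes as a product of (i) a distribution over $\Egood$-transcripts independent of $m^*$ and (ii) a distribution over $F^*$-transcripts that is independent of $m^*$ by Jain's static security. Formally, I will show that for every two candidate messages $m^*_1, m^*_2$, the two induced view distributions are identical, which is the definition of perfect security. The only subtle step is verifying the conditional independence in step (ii), which is where invoking Theorem \ref{lemma:unicast} in a black-box manner requires care; I expect this to be the main (albeit mild) obstacle, handled by noting that Jain's algorithm's private randomness is independent of $\{K(u,v)\}_{(u,v) \in E}$.
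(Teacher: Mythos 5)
Your overall reduction — splitting the edges into $F^*$ and $\Egood = E\setminus F^*$, arguing that the transcript over $\Egood$ is uniform by OTP, and then invoking Theorem~\ref{lemma:unicast} for the $F^*$ part — is the same decomposition the paper uses. However, there is a genuine gap in the step where you conclude that ``the joint distribution of the adversary's observations over $\Egood$ in rounds $\geq 2$ is a collection of i.i.d.\ uniform elements.'' You establish that each individual transmitted value $m_i(u,v)+K(u,v)$ with $(u,v)\in\Egood$ is uniform and independent of $m^*$ and of transcripts over \emph{other} edges of $\Egood$, but you never address correlations across different rounds on the \emph{same} edge. The algorithm uses a single key $K(u,v)$ per edge, so if Jain's protocol sent two non-empty messages $m_{i_1}(u,v)$ and $m_{i_2}(u,v)$ over the same edge, the adversary would observe $m_{i_1}(u,v)+K(u,v)$ and $m_{i_2}(u,v)+K(u,v)$, whose difference reveals $m_{i_1}(u,v)-m_{i_2}(u,v)$; the joint law would not be i.i.d.\ uniform and the argument would collapse.

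The fact that saves the claim — and the one the paper explicitly leads with — is that $\StaticSecureUnicast$ has congestion exactly $1$ (Theorem~\ref{lemma:unicast} states it is a ``$1$-congestion'' algorithm), so each key $K(u,v)$ is used at most once and the standard OTP guarantee applies round-by-round without key reuse. Your proof silently relies on this but never states or invokes it; as written, the argument would (incorrectly) ``prove'' the same result for an arbitrary static-secure algorithm compiled with one key per edge, which is false. You should insert the congestion-$1$ observation explicitly before the i.i.d.\ step; once that is in place, the remainder of your argument (recovering the $F^*$-messages from $\mathcal{K}(F^*)$, matching the static adversary's view, and invoking Theorem~\ref{lemma:unicast}) matches the paper's proof and is sound.
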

\begin{proof}
The security uses the fact that Jain's unicast algorithm of Lemma \ref{lemma:unicast} sends exactly one message on each edge $(u,v)$ (in exactly one direction, either from $u$ to $v$ or vice-versa). For a subset of edges $E' \subseteq E$, let $\mathcal{K}(E')=\{K(u,v)\}_{(u,v)\in E'}$. 

First observe that the security is immediate, by the OTP guarantees, if all keys in $\mathcal{K}(E)$ are hidden from the adversary. This holds as every key $K(u,v)$ is used exactly once, as we send at most one message from $u$ to $v$ throughout the entire algorithm. Since the adversary controls only the edges $F^*$ in the first round, it fully knows $\mathcal{K}(F^*)$, and has \emph{no} information on all the remaining keys $\mathcal{K}(E \setminus F^*)$

Consider the worst case scenario where the adversary controls all edges in $G$ in every round $i\geq 1$ when simulating Alg. $\MobileSecureUnicast$. Since the adversary knows the keys $\mathcal{K}(F^*)$, it knows all the messages sent throughout the edges of $F^*$ in Jain's algorithm. All remaining messages sent through edges in $E \setminus F^*$ are distributed uniformly at random (by the OTP security). Therefore, the view obtained by the mobile adversary in this case is equivalent to the view obtained by a static adversary controlling $F^*$. The security follows by Lemma \ref{lemma:unicast}. 
\end{proof}

\begin{proof}[Proof of Lemma \ref{lemma:unicast-mobile}]
By Claim Claim \ref{cl:security-mobile-unicast}, it remains to handle the multi-unicast problem, where the instance consists of $R$ triplets $\{(s_i,t_i,m_i\}_{i=1}^R$. Alg. $\MobileSecureMulticast$ starts with a phase of $R$ rounds in which random messages are exchanged over all graph edges. For every $(u,v)\in E$, denote these messages as a list of $R$ keys: $K_1(u,v), \ldots, K_R(u,v)$. For a subset of edges $E'$, let $\mathcal{K}_i(E')=\{K_i(u,v)\}_{(u,v) \in E'}$. 
For every $i \in \{1,\ldots, R\}$, let $\cA_i$ denote that application of Alg. $\StaticSecureUnicast$ with input $(s_i,t_i,m_i)$ in which messages are encrypted (and decrypted) with the keys of $\mathcal{K}_i(E)$. The algorithm runs all $R$ algorithms $\{\cA_i\}$ in parallel. Using the standard random delay approach of Theorem \ref{thm:delay}, this can be done within $\widetilde{O}(D+R)$ rounds.  This completes the description of the algorithm.

The correctness is immediate and we consider security in the worst case setting where that adversary controls all edges in round $i\geq R+1$. That is, $F_i=E$ for every $i \geq R+1$. In the random delay approach, each of the $R$ algorithms proceed at a speed of phases, each consisting of $O(\log n)$ rounds. We note that such a scheduling might be risky in the general secure-mobile setting, as a single round is now simulated in $\ell=O(\log n)$ rounds which allows the adversary to control $f \cdot \ell$ faults rather than $f$ faults (when the algorithm runs in isolation). This concern does not appear in the context of Jain's algorithm due to the fact that it has congestion of exactly $1$. For the reason, we can assume (as in Claim \ref{cl:security-mobile-unicast}), that the adversary controls all edges in rounds $i \geq R+1$ and consequently the fact that a round is now completed within $O(\log n)$ rounds cannot further increase the power of the adversary. 

Specifically, for every algorithm $\cA_i$ the adversary knows all keys of $\mathcal{K}_i(F_i)$ and it knows nothing on the keys $\mathcal{K}_i(E \setminus F_i)$. Since there is at most one message of $\cA_i$ on each edge, the adversary learns nothing on the messages exchanged over $E \setminus F_i$ and fully knows the messages exchanged over $F_i$. Consequently, the view of the adversary in the simulation of $\cA_i$ is analogous to the view of the static adversary controlling the edges in $F_i$. The security then holds by the security guarantees of Alg. $\StaticSecureUnicast$ (Lemma \ref{lemma:unicast}).  
%
\end{proof}

\subsection{$f$-Mobile-Secure Broadcast}\label{sec:fmobile-broascast}

We next show that the $f$-static secure broadcast algorithm of \cite{EavesBroad2022} can be slightly modified to provide $f$-mobile security with the same round complexity.  

\begin{theorem} [$f$-Mobile-Secure Broadcast] \label{thm:bc-sqrt-mobile}
For every $(2f+3)(1+o(1))$ edge-connected $n$-vertex $D$-diameter graph $G$, there exists a randomized $f$-mobile-secure broadcast algorithm $\MobileBroadcast$ for sending w.h.p a $b$-bit message $m^*$ that runs in $\widetilde{O}(D+\sqrt{f \cdot b\cdot n}+b)$ rounds. The edge congestion of the algorithm is $\widetilde{O}(\sqrt{f \cdot b\cdot n}+b)$. 
\end{theorem}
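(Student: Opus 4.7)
The plan is to reduce mobile-secure broadcast to the known $f$-static-secure broadcast algorithm of \cite{EavesBroad2022}, which achieves round complexity $r=\widetilde{O}(D+\sqrt{fbn}+b)$ under edge-connectivity $2f+3$. We invoke Theorem~\ref{thm:simulation} with carefully tuned parameters so that the output algorithm is exactly $f$-mobile-secure, rather than only $\Theta(f)$-mobile-secure as given by the black-box default, while keeping the round complexity $\widetilde{O}(r)$. The vanishing slack in the theorem's edge-connectivity hypothesis, $(2f+3)(1+o(1))$, is precisely what accommodates the required inflation of the static threshold.

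\textbf{Parameter Choice.} Set $L:=\lceil\log n\rceil$ and $f'' := \lceil f(L+1)/L\rceil = f(1+O(1/\log n))$. The $f''$-static-secure broadcast of \cite{EavesBroad2022} runs in $r=\widetilde{O}(D+\sqrt{f''bn}+b)=\widetilde{O}(D+\sqrt{fbn}+b)$ rounds and requires edge-connectivity $2f''+3 = (2f+3)(1+O(1/\log n))$, comfortably within the hypothesis. Apply Theorem~\ref{thm:simulation} to this algorithm with slack parameter $t:=rL$; the resulting algorithm runs in
\[r'=2r+t = O(r\log n)=\widetilde{O}(D+\sqrt{fbn}+b)\]
rounds, and is $f'$-mobile-secure for
\[f' = \left\lfloor \frac{f''(t+1)}{r+t} \right\rfloor \;=\; \left\lfloor f''\cdot\frac{L}{L+1} + O(1/r)\right\rfloor \;\geq\; f,\]
where the last inequality uses $f''\geq f(L+1)/L$. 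Hence the compiled algorithm is in particular $f$-mobile-secure. Correctness and perfect security are inherited verbatim from Theorem~\ref{thm:simulation}.

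\textbf{Obstacle and Resolution.} The main tension is the simultaneous requirement that $f''$ be close enough to $f$ to respect the $(2f+3)(1+o(1))$ edge-connectivity budget, yet large enough so that after the $L/(L+1)$ loss in the bit-extraction step the residual mobile-security level $f'$ still clears the target $f$. Choosing $L=\Theta(\log n)$ reconciles these: it forces only a $(1+O(1/\log n))$ excess in $f''$ (hence in connectivity) while paying only a $\polylog(n)$ factor overhead in rounds, which is absorbed by the $\widetilde{O}(\cdot)$. As an alternative, one can avoid the black-box reduction by a white-box modification of the \cite{EavesBroad2022} construction: replace each $\StaticSecureUnicast$ call by the mobile-secure $\MobileSecureMulticast$ of Lemma~\ref{lemma:unicast-mobile}, and OTP-encrypt every message of its low-congestion tree-downcast phase using keys pre-extracted via Theorem~\ref{thm:extract}; an averaging argument on the key-exchange transcript and the perfect security of $\MobileSecureMulticast$ then yield the same round bound without the $\polylog$ slack in the analysis.
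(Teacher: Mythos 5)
Your main (black-box) route does not quite prove the stated theorem, for two concrete reasons, and your sketched ``alternative'' at the end is in fact the proof the paper gives.

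\textbf{Edge-connectivity gap.} You set $f'' = \lceil f(L+1)/L\rceil$ and assert $f'' = f(1+O(1/\log n))$, but since $f'' = f + \lceil f/L\rceil$ and $\lceil f/L\rceil \ge 1$ always, the inflation is at least additive $+1$. For $f = O(\log n)$ (in particular any constant $f$) you get $f'' = f+1$, and the $(f+1)$-static-secure broadcast of \cite{EavesBroad2022} requires edge-connectivity $(2(f+1)+3)(1+o(1)) = (2f+5)(1+o(1))$. This exceeds $(2f+3)(1+o(1))$ by the multiplicative factor $1 + \tfrac{2}{2f+3}$, which is a fixed constant bounded away from $1$ for bounded $f$ — not an $o(1)$ term. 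Moreover, this is intrinsic to the black-box route: to obtain $f' = \lfloor f''(t+1)/(r+t)\rfloor \ge f$ via Theorem~\ref{thm:simulation} with $t < 2fr$, you must take $f'' > f$, hence $f'' \ge f+1$, so no choice of $L$ repairs it. The only black-box way to keep $f'' = f$ is the $t \ge 2fr$ regime, which costs an extra factor $f$ in rounds.

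\textbf{Congestion gap.} Theorem~\ref{thm:bc-sqrt-mobile} also asserts edge congestion $\widetilde O(\sqrt{fbn}+b)$. The key-exchange phase of the Theorem~\ref{thm:simulation} compiler sends one message on \emph{every} edge in \emph{every} round for $\ell = r+t = r(L+1)$ rounds, so the compiled algorithm has congestion $\widetilde\Theta(r) = \widetilde\Theta(D + \sqrt{fbn}+b)$, i.e., an unavoidable $\widetilde O(D)$ term that does not appear in the claimed bound. The point of the paper's white-box proof is precisely that one ties the length of the local key-exchange to the \emph{congestion} of the high-congestion step (the fragment downcast, Step~(3), which has congestion $\widetilde O(b)$) rather than to the full round count $r$; this is what keeps the congestion at $\widetilde O(\sqrt{fbn}+b)$.

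\textbf{On the alternative route.} Your final paragraph is essentially the paper's proof: run the landmark/fractional-tree-packing step in the clear, replace Jain's unicast by $\MobileSecureMulticast$ of Lemma~\ref{lemma:unicast-mobile}, perform a local key-exchange of length proportional to the congestion of the fragment-downcast and apply Theorem~\ref{thm:extract} to extract keys (an averaging argument then shows that the $f$-mobile adversary learns the keys of at most $f$ edges), OTP-encrypt Step~(3), and use that each edge carries at most $\lceil\log^8 n\rceil$ of the $\widehat f = \chi(\cT)\lceil\log^8 n\rceil > f\lceil\log^8 n\rceil$ XOR-shares, so the adversary misses at least one share. If you flesh that out (in particular verifying the congestion bookkeeping), you recover the theorem; the black-box detour should be dropped.
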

This matches the bounds provided by \cite{EavesBroad2022} for the $f$-static-secure setting. We first provide a quick high-level description of the static-secure algorithm of \cite{EavesBroad2022} and then highlight the required modifications. For the sake of clarity, we provide a simplified description assuming that all nodes know the network topology. 
%
%
%

As in the $f$-static setting of \cite{EavesBroad2022}, our $f$-mobile-secure algorithm is also based on the distributed computation of a tree collection denoted as \emph{fractional tree packing} \cite{Censor-HillelGK14}. 
\begin{definition}[Fractional Tree Packing]\label{def:tree-packing}
	A \emph{fractional tree-packing} of a graph $G$ is a collection of spanning trees $\mathcal{T}$, and a weight function $w:\mathcal{T} \rightarrow (0,1]$, such that for every edge $e\in E$, $\sum_{T_i \in \mathcal{T}: e\in T_i} w(T_i) \leq 1$. The size of the fractional tree packing $\mathcal{T}$ is denoted by $\chi(\mathcal{T})=\sum_{ T_i \in \mathcal{T}} w(T_i)$.
\end{definition}
We use the distributed computation of fractional tree packing, due to Censor-Hillel, Ghaffari, and Kuhn \cite{Censor-HillelGK14}, slightly adapted to our setting by \cite{EavesBroad2022}.
\begin{lemma}[A slight adaptation of Theorem 1.3 \cite{Censor-HillelGK14}] \label{lem:large-weights}
	Given a $D$-diameter $\lambda$ edge-connected $n$-node graph, and an integer parameter $\lambda' \leq \frac{\lambda-1}{2}(1-o(1))$, one can compute a fractional tree packing $\mathcal{T}$ and a weight function $w:\mathcal{T} \rightarrow (0,1]$, such that:
	\begin{enumerate}
		\item for every $T_i \in \mathcal{T}$, $w(T_i)=\frac{n_i}{\lceil \log^8 n \rceil}$ for some positive integer $n_i \geq 1$,
		\item $\chi(\mathcal{T}) \in [\lambda',\lambda'(1+o(1))]$,
		\item the round complexity of the algorithm is $\widetilde{O}(D+\sqrt{\lambda'\cdot n})$ and the edge-congestion is $\widetilde{O}(\sqrt{\lambda'\cdot n})$.
	\end{enumerate}
\end{lemma}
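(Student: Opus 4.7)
The plan is to obtain the lemma by invoking the Censor-Hillel--Ghaffari--Kuhn distributed fractional tree packing algorithm as a black box, and then applying three light modifications: early termination to reach the prescribed target $\lambda'$, rounding of the output weights so they become integer multiples of $1/\lceil \log^8 n \rceil$, and restatement of the complexity in terms of $\lambda'$ instead of $\lambda$. The original CGK14 theorem produces, on a $\lambda$-edge-connected graph, a fractional tree packing of size $(\lambda-1)/2 \cdot (1-o(1))$ together with its weights in $\widetilde{O}(D + \sqrt{\lambda n})$ rounds with edge-congestion $\widetilde{O}(\sqrt{\lambda n})$, and its analysis in fact scales with the target packing size rather than the connectivity itself. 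Since $\lambda' \leq (\lambda-1)/2 \cdot (1-o(1))$ lies in the feasible regime, we can halt the CGK14 procedure as soon as the accumulated weight first exceeds $\lambda'$. The halting preserves the fractional packing property $\sum_{T \ni e} w(T) \leq 1$ because we stop strictly earlier than the original run, and it guarantees $\chi(\mathcal{T}) \in [\lambda', \lambda'(1+o(1))]$ up to the additive granularity of a single CGK14 iteration, which is $o(1)$.

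For the weight rounding step, each produced tree $T_i$ has a well-defined weight $w(T_i)$ known to its root; the root locally rounds $w(T_i)$ down to the nearest multiple of $1/\lceil \log^8 n \rceil$, and any tree whose rounded weight becomes $0$ is dropped from the packing. This is a purely local post-processing step that requires no additional communication, and it is trivially compatible with the fractional packing constraint because weights only decrease. The surviving trees satisfy $w(T_i) = n_i/\lceil \log^8 n \rceil$ for integer $n_i \geq 1$, as required in condition (1). The total weight lost to rounding is at most $|\mathcal{T}|/\lceil \log^8 n \rceil$, and since the number of trees produced by CGK14 is polynomial in $n$ while $\log^8 n$ absorbs polylogarithmic factors, this loss is comfortably $o(\lambda')$ in our applications, so condition (2) is preserved. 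The round complexity and edge-congestion claims in condition (3) follow directly from the analysis of CGK14 with the target size $\lambda'$ plugged in place of $\lambda$.

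The main technical obstacle is verifying the claim that the CGK14 complexity genuinely scales with the target packing size $\lambda'$ rather than the connectivity $\lambda$, so that one does not pay $\sqrt{\lambda n}$ when only $\lambda'$ units of packing are desired. I would address this by inspecting the iteration count of their multiplicative-weights-style construction, which is driven by how much packing has been accumulated so far rather than by $\lambda$ itself, and by confirming that each iteration's random-walk-based tree-computation subroutine has complexity that depends only on the current target weight. The remaining sanity check is that the slack term $\lambda' \cdot o(1)$ in condition (2) dominates both the one-iteration overshoot from early stopping and the aggregate rounding loss; for any $\lambda' = \omega(1)$, this is immediate, and the case of constant $\lambda'$ can be handled by simply producing a single tree of weight $\lfloor \lambda' \lceil \log^8 n \rceil \rfloor / \lceil \log^8 n\rceil$ via a standard spanning-tree construction, since constant edge-disjoint spanning trees can be assembled directly.
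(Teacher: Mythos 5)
The paper does not actually prove this lemma: it is imported as a black box, citing \cite{Censor-HillelGK14} for the packing algorithm and crediting the specific adaptation (the $\lambda'$-parametrization and the $1/\lceil \log^8 n\rceil$-granular weights) to \cite{EavesBroad2022}. So the relevant comparison is whether your sketch would actually deliver that adaptation, and there I see genuine gaps. First, the heart of the matter is your claim that the complexity of \cite{Censor-HillelGK14} ``scales with the target packing size rather than the connectivity,'' which you defer to an unperformed inspection of their iteration count. Early stopping by itself does not obviously convert $\widetilde{O}(D+\sqrt{\lambda n})$ into $\widetilde{O}(D+\sqrt{\lambda' n})$, since the per-iteration subroutines could depend on $\lambda$ or on global graph quantities; the standard and cleaner route is to first subsample edges \`a la Karger \cite{Karger94} with probability $\widetilde{\Theta}(\lambda'/\lambda)$ so that the surviving subgraph has connectivity $\widetilde{\Theta}(\lambda')$ w.h.p., and only then run the packing algorithm, which makes the $\sqrt{\lambda' n}$ dependence immediate. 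As written, the one claim that constitutes the actual ``adaptation'' is asserted, not proved.

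Second, your rounding step is not justified. The bound ``total loss at most $|\mathcal{T}|/\lceil\log^8 n\rceil$, and $|\mathcal{T}|$ is polynomial in $n$, hence the loss is $o(\lambda')$'' is a non sequitur: a polynomial number of trees divided by $\log^8 n$ is not $o(\lambda')$ in general, and if the algorithm outputs many trees of weight below $1/\lceil\log^8 n\rceil$ your round-down-and-drop step could discard most of the packing. Making this work requires a lower bound on the individual weights (or on the number of trees, e.g.\ $|\mathcal{T}|=\widetilde{O}(\lambda')$) coming from the internals of \cite{Censor-HillelGK14}, which you do not establish; moreover, stopping at accumulated weight $\lambda'$ and \emph{then} rounding down can push $\chi(\mathcal{T})$ strictly below $\lambda'$, violating the stated interval, so you must overshoot before rounding. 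Finally, your fallback for constant $\lambda'$ --- a single tree of weight $\approx\lambda'$ --- contradicts \Cref{def:tree-packing}: weights lie in $(0,1]$ and each edge of that tree would carry load $\lambda'>1$; for constant $\lambda'$ one needs $\lambda'$ edge-disjoint (or appropriately weighted) spanning trees, not one heavy tree, and the $+1$ overshoot from early stopping is also not $\lambda'\cdot o(1)$ in that regime.
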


We are now ready to provide the complete description of Alg. $\MobileBroadcast$ given a source $s$ holding a broadcast message $m^*$ of $b$-bits (where possibly $b=\Omega(\log n)$). 

\paragraph{Step (0): Fractional Tree Decomposition and Landmarks Selection.} Apply the fractional tree-packing algorithm of Lemma \ref{lem:large-weights} with $\lambda'=f+1$. Denote the output tree packing by $\mathcal{T}$ and its size by $\chi(\mathcal{T})$. By the end of this computation, all nodes also know the weights $n_1,\ldots, n_{|\mathcal{T}|}$ of the trees in $\mathcal{T}$ (see Lemma \ref{lem:large-weights}(1)). Decompose each tree $T_i \in \mathcal{T}$ into fragments $T_{i,j}$ of size $s=\Theta(\min\{\sqrt{f\cdot b \cdot n},n\})$ (as in \cite{EavesMST2023}). Let $L \subseteq V$ be a subset of nodes, denoted as \emph{landmarks}, obtained by sampling each node with probability $p=\Theta(\log n/s)$. The identities of the sampled nodes $L$ are broadcast to all the nodes in $O(D+|L|)$ rounds. For every fragment $T_{i,j}$, let $\ell_{i,j}$ be some chosen node in $L \cap V(T_{i,j})$, denoted as the \emph{leader} of the fragment, which exists with high probability. This leader can be chosen in $O(s)$ rounds by a simple convergecast over each fragment, simultaneously. This entire step can be done in a non-secure manner, as it does not leak any information on the broadcast message, $m^*$.

\smallskip
\noindent \textbf{Step (1): Secure Secret Transmission to Landmarks.} The source node $s$ secret shares its $b$-bit broadcast message $m^*$ into 
\begin{equation}\label{eq:num-shared-broadcast}
\widehat{f}=\chi(\mathcal{T})\cdot \lceil \log^8 n \rceil \mbox{~~shares}, 
\end{equation}
denoted as $M^*=(m_1,\ldots, m_{\widehat{f}})$, where each share $m_i$ has $b$ bits. Note that by Lemma \ref{lem:large-weights}(1), $\widehat{f}$ is an integer.   Then algorithm then applies Alg. $\MobileSecureMulticast$ of Lemma \ref{lemma:unicast-mobile} to securely send the landmarks $L$ the collection of all $\widehat{f}$ shares. 

\smallskip
\noindent \textbf{Step (2): Local Secret Exchange.} The step consists of $r=3f \cdot b \cdot \lceil \log^8 n \rceil$ rounds, in each round, random $O(\log n)$-bit messages $R_j(u,v)$ are exchanged over all edges $(u,v) \in G$. At the end of this step, each neighboring pair $u,v$ locally apply the algorithm of Theorem \ref{thm:extract} to compute a collection of $r'=b \cdot \lceil \log^8 n \rceil$ secret keys $\mathcal{K}(u,v)=\{K_j(u,v)\}_{j=1}^{r'}$. In the analysis, we show that the adversary knows nothing on the $\mathcal{K}(u,v)$ sets of all but $f$ many (undirected) edges.  The third step, described next, is simulated exactly as in the static-secure algorithm of \cite{EavesBroad2022}, with the only distinction that the messages are OTP encrypted with the keys of $\mathcal{K}(u,v)$.

\smallskip
\noindent \textbf{Step (3): Shares Exchange in each Fragment.} Recall that by Lemma \ref{lem:large-weights}(1), $w(T_i)=\frac{n_i}{\lceil \log^8 n \rceil}$ for some positive integer $n_i \geq 1$ for every tree $T_i \in \mathcal{T}$. Our goal is to propagate a distinct collection of $n_i$ shares in $M^*$ over each tree $T_i$. 
To do that, the landmark nodes partition locally and canonically the shares of $M^*$ into disjoint subsets $M^*_1, \ldots , M^*_k$ such that $|M^*_i|=n_i$ for every $i \in \{1,\ldots, k=|\mathcal{T}|\}$. Note that by \Cref{eq:num-shared-broadcast}, $\sum_{T_i \in \mathcal{T}} n_i = \widehat{f}$.
The leader $\ell_{i,j}$ of each fragment $T_{i,j}$ sends all the shares in $M^*_i$ over its fragment, as follows. Every edge $(u,v)$ encrypts its $j^{th}$ message using the key $K_j(u,v) \in \mathcal{K}(u,v)$, for every $j \in \{1,\ldots, r'\}$. We show in the analysis, that each edge is required to send at most $r'$ messages and therefore it indeed has sufficiently many keys in $\mathcal{K}(u,v)$. Finally, each node $v$ recovers the $b$-bit broadcast message, by setting $m^*=\oplus_{i=1}^k \oplus_{m \in M^*_i} m~.$
This completes the description of the algorithm. 

\smallskip
\noindent \textbf{Analysis.} The correctness is immediate by the $f$-static-secure description of \cite{EavesBroad2022}, noting that each node indeed receives all needed shares, w.h.p., to recover the broadcast message $m^*$. 
We now turn to address the round complexity and congestion analysis. In Step (0) the computation of the fractional tree packing can be done in $\widetilde{O}(D+\sqrt{f\cdot n})$ rounds by Lemma \ref{lem:large-weights}(3) with $\lambda'=f+1$. By \cite{EavesBroad2022}, the fragmentation of the trees takes $s=\widetilde{O}(\min\{\sqrt{f \cdot b \cdot n},n\})$ rounds. Picking the leader of each fragment takes $O(s)$ rounds, as well. 

Step (1) takes $\widetilde{O}(D+f \cdot b\cdot |L|)$ rounds, by \Cref{lemma:unicast-mobile}. Since, w.h.p., $|L|=\widetilde{O}(\sqrt{n/(f \cdot b)})$, it takes $\widetilde{O}(D+\sqrt{f \cdot b \cdot n})$ rounds.
As for Step (3), since $w(T_i)\leq 1$, it holds that $n_i \leq \lceil \log^8 n\rceil$ for every $T_i \in \mathcal{T}$. Therefore, Step (3) propagates $r'$-bit messages, for $r'=b \cdot \lceil \log^8 n\rceil$, over the edge-disjoint fragments of size $O(s)$. This can be done in $\widetilde{\Theta}(b+\sqrt{f \cdot b \cdot n})$ rounds via standard pipeline. Note that edge congestion of this step is bounded by $r'$, and that the edge congestion in the entire algorithm is bounded by $\widetilde{O}(b+\sqrt{f \cdot b \cdot n})$.

\noindent \textbf{Security.} It remains to show that the algorithm is $f$-mobile-secure. 
Recall that the eavesdropper is assumed to know $G$, therefore Step (0) can be implemented in a non-secure manner. Step (1) is $f$-mobile-secure by Lemma \ref{lemma:unicast-mobile}.

We turn to consider Step (3) in which the shares in $M^*$ are sent over the tree fragments of $\mathcal{T}$. By the above, Step (3) sends at most $r'$ messages along each edge and therefore the number of keys in $\mathcal{K}(u,v)$ is sufficient. By Lemma \ref{lem:sufficient-secrets}, we get that the adversary knows nothing on the keys in $\mathcal{K}(u,v)$ of all but $F^* \subseteq G$ edges, where $|F^*|\leq f$. 
Consequently, the adversary learns nothing from the messages exchanged over the edges in $E \setminus F^*$ during this step. We next show that there exists at least one share in $M^*$ that the eavesdropper did not learn, and therefore it knows nothing on $m^*$. Since the tree fragments of each $T_i \in \mathcal{T}$ are edge-disjoint, the number of shares in $M^*$ sent over an edge $e$ is bounded by:
\begin{equation}\label{eq:edge-shares-packing}
	\sum_{e\in T_i}|M^*_i|=\sum_{e\in T_i}n_i \leq \lceil \log^8 n \rceil ~, \nonumber
\end{equation}
where the last inequality follows by the fractional tree packing guarantee that $\sum_{e\in T_i} w(T_i) \leq 1$, and using the fact that $w(T_i)=\frac{n_i}{\lceil \log^8 n \rceil}$ for every $T_i \in \mathcal{T}$.  Therefore, the number of shares exchanged over the edges in $F^*$ and possibly observed by the eavesdropper can be bounded by:
$$|F^*| \cdot \lceil \log^8 n \rceil < \chi(\mathcal{T})\cdot \lceil \log^8 n \rceil = \widehat{f}=|M^*|~,$$
where the first inequality is by Lemma \ref{lem:large-weights}(2) (with $\lambda'=f+1$), and the last equality follows by \Cref{eq:num-shared-broadcast}.

\subsection{Congestion-Sensitive Compiler with $f$-Mobile Security} \label{sec:fmobile-LWcompiler}

We are now ready to describe our $f$-mobile compilers. While the round overhead of our compiler match those obtained by \cite{EavesMST2023} they differ from it in two major points. First, our compilers provide $f$-mobile rather than $f$-static security; The compiler consists of the following three steps:

\noindent\textbf{Step 1: Local Secret Exchange.} Step 1 consists of $\ell=\Theta(r)$ rounds in which all neighboring pairs exchange random field elements $R_j(u,v) \in \mathbb{F}_q$ for every $j \in \{1,\ldots, \ell\}$. Using Theorem \ref{thm:extract}, at the end of this phase, each pair $(u,v) \in E$ holds $r$ keys $\{K_i(u,v)\}_{i=1}^{r}$ such that the adversary learns nothing on the keys of all but at most $4f$ \emph{bad} edges\footnote{I.e., corresponding to the edges occupied by the adversary for at least $t+1=\Theta(r)$ rounds in that phase.}. 

\smallskip

\noindent\textbf{Step 2: Global Secret Exchange.} The second step shares a collection of $b=O(f \cdot \congestion \cdot \log n)$ random bits, denoted as $R \in \{0,1\}^{b}$, to the entire network, while guaranteeing that the adversary learns nothing on $R$. This is done by applying our $f$-mobile-secure broadcast algorithm $\MobileBroadcast$ of Theorem \ref{thm:bc-sqrt-mobile}. Let $\mathcal{H} = \{h : \{0,1\}^p \rightarrow \{0,1\}^q\}$ be a family of $\cj$-wise independent hash functions for $\cj=4f \cdot \congestion$ and $p,q=O(\log n)$ (see Lemma \ref{lem:hash}). At the end of this step, all nodes use $R$ to choose random function $h^* \in \mathcal{H}$. 

\smallskip

\noindent\textbf{Step 3: Round-by-Round Simulation of Alg. $\cA$.} For every $i \in \{1,\ldots, r\}$, let $m_i(u,v)$ be the message sent from $u$ to $v$ in round $i$ of (the fault-free) Alg. $\cA$. We assume, w.l.o.g., that all the messages $\{m_i(u,v)\}_{(u,v) \in E, i \in \{1,\ldots, r\}}$ are \emph{distinct}. This can be easily done\footnote{I.e., assuming that the bandwidth of the CONGEST model is $c\log n$ for some constant $c$. Otherwise, we can simulate each round using a constant number of rounds.} by appending to each message $m_i(u,v)$ the round-number and the edge-ID. As Algorithm $\cA$ has bounded congestion, some of these $m_i(u,v)$ messages are empty, however, the compiled algorithm must still exchange messages on all edges in every round. We let every vertex $u$ act differently according to whether $m_i(u,v)$ is empty, but in a way that is indistinguishable to the adversary. If $m_i(u,v)$ is \emph{not} empty, then $u$ pads the message $m_i(u,v)$ with $0$'s of length $B'$, by letting $m=m_i(u,v) \circ 0^{B'-B}$ and sending the message $\widehat{m}_i(u,v)=h^*(m) + K_i(u,v)$ to $v$.  Otherwise, if $m_i(u,v)$ is empty, then $u$ simply sends a uniformly random string length $B'$.

While the adversary will not be able to distinguish between real messages and random messages, the receiver $v$ will indeed be able to distinguish between the two, by applying following decoding procedure for each received message $\widehat{m}_i(u,v)$: Iterate over all possible messages of length $O(\log n)$ and let $m'_i(u,v)$ be the first (lexicographically) message $m$ satisfying that $h^*(m)=\widehat{m}_i(u,v)+K_i(u,v)$. Note that all nodes, and in particular, $v$, know $h^*$. If $m'_i(u,v)$ ends with $B'-B$ 0's, then $v$ knows that the message was not empty, w.h.p, and thus sets the final message $\widehat{m}_i(u,v)$ to be the first $B$ bits of $m'_i(u,v)$. Otherwise, $v$ knows that a random string was sent, and thus sets $\widehat{m}_i(u,v)$ to be an empty message. The summary of this protocol is given below.

\begin{mdframed}[hidealllines=false,backgroundcolor=white!30]
	\begin{description}
		\item[\noindent \textbf{$f$-Mobile Secure Simulation of Alg. $\cA$:}]
		
		\item[\textbf{Step (1): Local Secret Exchange.}] \; 
		\begin{enumerate}
			\item Exchange $\ell=\Theta(r)$ random messages over all edges $(u,v)\in E$.  
			\item Each node $u,v$ locally applies Theorem \ref{thm:extract} to compute $r$ keys $\mathcal{K}(u,v)=\{K_i(u,v)\}_{i=1}^{r}$ for every $(u,v)\in E$.
		\end{enumerate}
		
		\item[\textbf{Step (2): Global Secret Exchange.}] \; 
		\begin{enumerate}
			\item Let an arbitrary vertex $s$ compute a random $R \in \{0,1\}^b$ for $b=\Theta(f \cdot \congestion \cdot \log n)$.
			\item Apply Alg. $\MobileBroadcast$ to securely share the (secret) message $R$.
		\end{enumerate}
		
		\item [\textbf{Step (3): Round by Round Simulation, Sim. Round $i$ of $\cA$}:] Let $m_i(u,v)$ be the unique message the $u$ needs to send to $v$ in round $i$. 
		\begin{enumerate}
			\item If $m_i(u,v)$ is not empty set $\widehat{m}_i(u,v) \gets h^*(m_i(u,v) \circ 0^{B'-B})+K_i(u,v)$.
			\item Otherwise, set $\widehat{m}_i(u,v)$ to be a uniformly random message of length $B'$.
			\item $u$ sends $\widehat{m}_i(u,v)$ to $v$.
		\end{enumerate}
		
		\item [\textbf{Receiving round $i$}] Let $\widehat{m}_i(u,v)$ be the message received by $v$ from $u$.
		\begin{enumerate}
			\item Compute $m' \gets \widehat{m}_i(u,v)+K_i(u,v)$.
			\item Compute $m$ as the first lexicographically message satisfying $h^*(m)=m'$.
			\item If $m$ does not end with $B'-B$ $0$'s, then drop the message (i.e., no message from $u$ has been received).
			\item Otherwise, set $\widetilde{m}_i(u,v)$ to be the first $B$ bits of $m$.
		\end{enumerate}
	\end{description}
\end{mdframed}

\paragraph{Proof of Theorem \ref{thm:mobile-compilers}.} We next turn to analyze the correctness of the construction. 

\begin{lemma}[Correctness of Simulation]\label{lem:correct-compiler}
W.h.p., for every rounds $i \in [r]$ and every pair of vertices $(u,v) \in E$, it holds that:
	\begin{itemize}
		\item If $m_{i}(u,v)$ is not empty $\widetilde{m}_i(u,v)=m_{i}(u,v)$.
		\item If $m_{i}(u,v)$ is empty then $v$ drops the received message.
	\end{itemize}
\end{lemma}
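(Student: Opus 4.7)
The plan is to analyze the decoding procedure on an edge-by-edge, round-by-round basis and then apply a global union bound over all $r\cdot|E|=\poly(n)$ message slots. Since the eavesdropping adversary is passive, every transmitted $\widehat{m}_i(u,v)$ reaches $v$ unchanged; hence $m'=\widehat{m}_i(u,v)+K_i(u,v)$ equals $h^*(m_i(u,v)\circ 0^{B'-B})$ in the non-empty case, and is a uniformly random element of the hash's codomain (independent of $h^*$, since $\widehat{m}_i(u,v)$ was chosen uniformly and independently of everything else) in the empty case. Correctness thus reduces to two purely probabilistic statements about the shared hash $h^*$.

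For the non-empty case, I would first show that $h^*$ is injective on its entire domain with high probability. Pairwise independence, which is implied by $\cj$-wise independence for any $\cj\geq 2$, gives $\Pr[h^*(x)=h^*(y)]=2^{-q}$ for each distinct pair $x,y$, so a union bound over at most $2^{2p}$ unordered pairs yields $\Pr[h^*\text{ not injective}]\leq 2^{2p-q}$. Choosing the output length $q$ to be a sufficiently large constant multiple of the input length $p$ (both $\Theta(\log n)$, so still fitting into a CONGEST message) makes this at most $n^{-c-1}$. Conditioned on injectivity, the unique preimage of $m'$ is exactly $m_i(u,v)\circ 0^{B'-B}$, so the first-lex preimage the receiver computes has trailing $B'-B$ zeros and yields $\widetilde{m}_i(u,v)=m_i(u,v)$.

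For the empty case, I would argue that with high probability no string ending in $B'-B$ zeros is a preimage of $m'$. There are exactly $2^B$ such \emph{bad} strings, and, because $m'$ is uniform and independent of $h^*$, each one hashes to $m'$ with probability exactly $2^{-q}$; a union bound bounds the failure probability per slot by $2^{B-q}$. If no bad string is a preimage, then whatever preimage the receiver finds is not bad, so the receiver correctly drops the message.

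Combining both cases, a global union bound over all $r\cdot |E|$ pairs $(u,v,i)$ bounds the total probability of any decoding failure by $\poly(n)\cdot(2^{2p-q}+2^{B-q})$, which is at most $n^{-c}$ for any desired constant $c$ under the above choice of $q$. The main subtlety, which I expect to be the only real obstacle, is reconciling the different requirements placed on $\mathcal{H}$: the correctness argument needs only pairwise independence but over the \emph{entire} domain $\{0,1\}^p$, whereas the security analysis elsewhere in the paper exploits full $\cj$-wise independence for $\cj=4f\congestion$. Since a $\cj$-wise independent family with $\cj\geq 2$ is automatically pairwise independent, the single choice of $h^*\in\mathcal{H}$ serves both purposes, and no change to the parameters of \Cref{lem:hash} is required.
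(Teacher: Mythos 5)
Your argument is correct, and for the empty case it mirrors the paper's almost exactly. For the non-empty case, however, you take a genuinely different and strictly stronger route that silently changes a parameter the paper commits to. The paper's protocol uses $p = q = B'$: the padded message, the key $K_i(u,v)$, the transmitted string, and $h^*$'s codomain all have length $B' = B + 3\log(nr)$; this is visible in the empty-case step of the paper's proof, where $m'$ is taken to be uniform in $\{0,1\}^{B'}$. With $p = q$, your union bound $\Pr[h^* \text{ not injective}] \leq 2^{2p - q}$ is vacuous. You avoid this by taking $q$ to be a large constant multiple of $p$, which is a legitimate modification --- it only inflates the key and message lengths by a constant factor, so the round complexity in \Cref{thm:mobile-compilers} is unaffected --- but it is a change to the protocol, not just to the proof, and your closing claim that ``no change to the parameters of \Cref{lem:hash} is required'' is misleading: the lemma is parameterized, and you \emph{are} changing the value of $b=q$ that the compiler passes to it.

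That said, your whole-domain injectivity argument is arguably the cleaner way to prove the lemma. The paper's own non-empty-case reasoning is terse: it asserts, w.h.p., that the lexicographically first preimage of $m'$ is the padded message $m_i(u,v)\circ 0^{B'-B}$. With $p = q = B'$, justifying this is delicate: ruling out collisions even among just the $2^B$ padded strings requires $2B < B'$, and, more seriously, the lexicographically smallest preimage could be an \emph{unpadded} string, which would make the receiver wrongly drop a real message --- an event that pairwise independence alone cannot bound below a constant when $p = q$. Your enlarged-codomain injectivity resolves both concerns at once. So the proof is sound, the route is genuinely different, and it is if anything more careful than the paper's; just state the enlarged $q$ (with the correspondingly longer keys and transmitted words) as an explicit modification of Step (3), rather than as a no-op.
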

\begin{proof}
Fix a round $i$ and a pair $(u,v)$. If $m_{i}(u,v)$ is not empty, then $u$ sends $\widehat{m}_i(u,v) \gets h^*(m_i(u,v) \circ 0^{B'-B})+K_i(u,v)$. The vertex $v$ gets $\widehat{m}_i(u,v)$ and since it knows $K_i(u,v)$ and $h^*$ (as it knows the seed $R$), w.h.p. it can obtain the message $m'=m_i(u,v) \circ 0^{B'-B}$ (by checking $h^*(m')+K_i(u,v)=\widehat{m}_i(u,v)$) as $m'$ ends with $B'-B$ 0's, $v$ sets $\widetilde{m}_i(u,v)=m_{i}(u,v)$, as desired. 

If $m_{i}(u,v)$ is empty, then $u$ sends a random string $m=\widehat{m}_i(u,v)$ of $B'=B + 3\log(nr)$ bits. To decrypt the message, the receiver $v$ first computes $m'=m + K_i(u,v)$ which is still a random uniform string in $\{0,1\}^{B'}$. 
Since all non-empty messages have $B'-B=3\log(nr)$ bits of $0$s, there are at most $2^B$ distinct such messages, and $h^*$ maps these messages into at most $2^B$ values in $\{0,1\}^{B'}$. Since $m'$ is sampled uniformly at random in $\{0,1\}^{B'}$, the probability that it is in the set of these at most $2^B$ values is $2^{B-B'}= (nr)^{-3}$. Hence, w.h.p., $(h^*)^{-1}(m')$ will not end with $B'-B$ bits of $0$, and $v$ drops the message, as desired. 
%
%
\end{proof}

\begin{lemma}[Security of Simulation]\label{lem:secure-compiler}
Alg. $\cA'$ is $f$-mobile-secure. 
\end{lemma}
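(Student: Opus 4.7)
The plan is to show that the joint distribution of the adversary's view across the three steps is independent of the input to $\cA$. I would proceed phase by phase, maintaining the invariant that, conditioned on the adversary's view so far, certain key quantities remain uniformly distributed.

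For Step (1), I would invoke \Cref{lem:sufficient-secrets} (equivalently, \Cref{thm:extract}) with $t=\Theta(r)$ to identify a set $F^* \subseteq E$ of at most $4f$ \emph{bad} edges, such that, conditioned on the entire Step (1) view, the key collection $\mathcal{K}(u,v) = \{K_i(u,v)\}_{i=1}^{r}$ for every \emph{good} edge $(u,v) \notin F^*$ is uniformly distributed and jointly independent across good edges. For Step (2), I would apply Theorem~\ref{thm:bc-sqrt-mobile}: the Step~(2) view is distributed independently of the random seed $R$, and hence of the hash function $h^* \in \mathcal{H}$ chosen from it. Note that the Step~(1) view is by construction independent of $R$ and of the inputs. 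Thus, after Steps (1)--(2), conditioned on the adversary's view, $h^*$ is uniformly distributed in $\mathcal{H}$ and $\{K_i(u,v)\}$ is uniform on good edges, both independently of any message that will be sent in Step~(3).

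For Step (3), the goal is to show that every message $\widehat{m}_i(u,v)$ observed by the adversary in round $i$ on an edge $(u,v) \in F_i$ is, conditionally on the view so far, uniformly distributed in $\{0,1\}^{B'}$ and jointly independent of the remaining observed messages, regardless of the input. I would split into two cases. If $(u,v) \notin F^*$ (good edge), then $K_i(u,v)$ is a uniform element independent of $h^*$, the input, and the view, so one-time-pad security gives uniformity of $\widehat{m}_i(u,v)$ directly. If $(u,v) \in F^*$ (bad edge), then either (a) $m_i(u,v)$ is empty, in which case $\widehat{m}_i(u,v)$ is drawn uniformly at random by construction, or (b) $m_i(u,v)$ is non-empty, in which case the adversary effectively observes the value $h^*(m_i(u,v)\circ 0^{B'-B})$ since it may know $K_i(u,v)$. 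The central counting step is that by the $\congestion$-congestion assumption on $\cA$, each bad edge carries at most $\congestion$ non-empty messages over the entire execution, so the total number of bad-edge non-empty observations is at most $|F^*|\cdot\congestion \leq 4f\cdot\congestion = \cj$. Because we pre-appended round numbers and edge IDs to guarantee all messages $m_i(u,v)$ are distinct, these at most $\cj$ hash queries are made at pairwise distinct inputs, and so by the $\cj$-wise independence of $h^*$ (combined with the fact that $h^*$ is uniform in the adversary's view after Step~(2)), all these observations are jointly uniform and independent of the input.

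Combining the three cases, the full Step~(3) view, conditioned on the Step~(1) and Step~(2) views, is distributed as a vector of uniform independent $B'$-bit strings, independent of the input. Since the Step~(1) and Step~(2) views are themselves input-independent by construction and by Theorem~\ref{thm:bc-sqrt-mobile}, the entire transcript observed by the adversary has a distribution that does not depend on the input, which is exactly the $f$-mobile-security condition. The main obstacle is the counting step on bad edges: one must verify that adaptive corruption choices in Step~(3) do not give the adversary more than $\cj$ ``useful'' hash evaluations, which is handled by noting that the congestion bound is a worst-case property of $\cA$ holding simultaneously with any fixed $F^*$ coming from Step~(1), and that bad-edge empty-message observations reveal nothing since they are drawn from fresh uniform randomness.
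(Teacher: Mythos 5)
Your proposal is correct and follows essentially the same route as the paper's proof: identify the $\le 4f$ bad edges $F^*$ via Lemma~\ref{lem:sufficient-secrets}, invoke OTP security for good-edge messages, note that empty messages are drawn fresh and uniform, and reduce the remaining $\le |F^*|\cdot\congestion$ non-empty bad-edge observations to at most $\cj=4f\cdot\congestion$ distinct hash queries handled by the $\cj$-wise independence of $h^*$ (whose seed $R$ is hidden by Theorem~\ref{thm:bc-sqrt-mobile}). Your version is somewhat more explicit about the conditioning structure across phases, and it uses the (correct) bound $4f\cdot\congestion$ for the number of relevant hash queries where the paper's proof text contains a minor slip writing $f\cdot\congestion$, but the substance is identical.
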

\begin{proof}
By running the first phase and using Lemma \ref{lem:sufficient-secrets}, the adversary knows nothing on the keys $\mathcal{K}(u,v)$ of all but $4f$ edges, denoted hereafter by $F^*$. Hence, the adversary learns nothing on the messages exchanged in Step (3) over the edges in $E \setminus F^*$. In fact, those messages are distributed uniformly at random in the view of the adversary. Since the empty-messages of $\cA$ are replaced by pure random messages, it remains to show that the adversary learns nothing on the at most $|F^*|\cdot \congestion$ \emph{non-empty} messages going through $F^*$ in Alg. $\cA$. 

We show that the security on these messages holds even if the adversary knows all keys $\mathcal{K}(u,v)$ for every $(u,v)\in F^*$. By the security of the broadcast algorithm of \Cref{thm:bc-sqrt-mobile}, the adversary knows nothing on the random seed $R$. Since the algorithm views a collection of at most $f'=f\cdot \congestion$ values: $h^*(m_1), \ldots, h^*(m_{f'})$ where all messages $m_1,\ldots, m_{f'}$ are distinct, by the properties of the $f'$-wise independent hash family $\mathcal{H}$, we get that these values are distributed independently and uniformly at random. Consequently, all messages observed by the adversary in the last phase, are pure random messages. The security then holds.   
%
%
%
	%
\end{proof}

The proof of Theorem \ref{thm:mobile-compilers} follows. The number of rounds of the simulated algorithm is 
$$
\widetilde{O}(D + f \cdot  \sqrt{\congestion \cdot n} + f \congestion + r \cdot \log n) = 
\widetilde{O}(r + D + f \cdot  \sqrt{\congestion \cdot n} + f  \cdot \congestion),
$$
as desired. The correctness and security follow directly from \Cref{lem:correct-compiler,lem:secure-compiler}, respectively.

\section{Proof of \Cref{lem:correction_budgeted}}  
\label{sec:budgeted_k_edges_message_correction}

In this section, we prove \Cref{lem:correction_budgeted} by adapting Subsection~\ref{sec:exactly_k_edges_improved} to be resilient to an average number of faults per round rather than $f$ faults per round. The adaptation is almost a word-for-word re-analysis of Subsection~\ref{sec:exactly_k_edges_improved}. 

Let $t \geq \kDiam \cdot \log^{\bar{c}}{n}$ for some large enough constant $\bar{c}$ be an integer known to all nodes, and let $\tau = \frac{t}{2\log{n}}$.  Let $d =O(\frac{k}{\eta\log{n}})$be an integer known to all nodes, such that we are guaranteed $|\{(u,v)\in E(G) \stt m(u,v) \neq \widetilde{m}(u,v)\}| \leq d$.

\paragraph{Algorithm $\ModifiedECCSafeBroadcast$:} Let $\ell$ be an integer such that $k \geq c''\ell$ for a sufficiently large $c'' > 0$, and let $\tau$ be an input integer. The broadcast message held by the root $v_r$ consists of a list $[\alpha_1,\dots,\alpha_\ell] \in [q]$ where $q=2^p$ for some positive integer $p$ and $q \geq k$. Let $C$ be a $[\ell,k,\delta_C]_q$-code for $\delta_C=(k-\ell+1)/k$, known as the Reed Solomon Code.
The root encodes the broadcast message $[\alpha_1,\dots,\alpha_\ell]$ into a codeword $C([\alpha_1,\dots,\alpha_\ell]) =[\alpha'_1,\dots,\alpha'_k]$. Next, the algorithm runs $k$ RS-compiled $\kDiam$-hop broadcast algorithms, in parallel. That is, for every $T_j \in \mathcal{T}$, let $\Pi(T_j)$ be a $\kDiam$-hop broadcast algorithm in which the message $\alpha'_j$ starting from the root $v_r$, propagates over $T_j$ for $\kDiam$ hops (hence taking $O(\kDiam+\log q)$ rounds). Then the nodes wait until a total of $\tau$ rounds have elapsed since the start of the procedure. Let $\Pi_{RS}(T_j)$ be the RS-compilation of that algorithm, denoted hereafter as RS-broadcast algorithm. All $k$ RS-broadcast algorithms are implemented in parallel by using the scheduling scheme of Lemma \ref{lem:scheduler_security}.

Let $\alpha'_j(u)$ be the value that a node $u$ receives from $T_j$ (or $\alpha'_j(u) = 0$ if it received no value) at the end of this execution. To determine the broadcast message $[\alpha_1,\dots,\alpha_\ell]$, each $u$  calculates first the closest codeword $\alpha(u)$ to $\alpha'(u) = [\alpha'_{1}(u),\dots,\alpha'_{k}(u)]$. 
Its output is then given by $\widetilde{\alpha}(u)=[\widetilde{\alpha}_1(u),\dots,\widetilde{\alpha}_\ell(u)]=C^{-1}(\alpha(u))$. 

\begin{lemma}\label{lem:correct-BroadcastECC_budg}
Consider the execution of Alg. $\ModifiedECCSafeBroadcast$ in the presence of at most $td$ corrupted messages, with a given broadcast message $[\alpha_1,\dots,\alpha_\ell] \in [q]^\ell$, a distributed knowledge of a \emph{weak} $(k,\kDiam,\eta)$ tree packing for $k\geq \max\{c''\cdot \ell, c^*d\eta\log{n}\}$ for large constants $c'', c^*$, and assuming that $\tau = \Omega((\kDiam+\log{q})\eta)$. Then $\widetilde{\alpha}(u)=[\alpha_1,\dots,\alpha_\ell]$ for every node $u$. In addition, the round complexity is $\Theta(\tau)$  $1$-$\congest$ rounds. 
\end{lemma}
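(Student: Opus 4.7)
The plan is to carry out a budgeted-error analogue of the proof of Lemma~\ref{lem:correct-BroadcastECC}, where the only real difference is how we bound the number of trees on which the RS-compiled broadcast fails. Let $\mathcal{T}'\subseteq \mathcal{T}$ be the subcollection of $\kDiam$-depth spanning trees rooted at the common root $v_r$; by Definition~\ref{def:weakTP} of a weak tree packing, $|\mathcal{T}'|\geq 0.9k$.

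First, I verify the round budget. Each single uncompiled $\kDiam$-hop broadcast of a $q$-bit symbol requires $O(\kDiam+\log q)$ $1$-\congest\ rounds, so each RS-compiled broadcast $\Pi_{RS}(T_j)$ runs within $O(\kDiam+\log q)$ rounds by Theorem~\ref{thm:less_than_one_error}. Scheduling the $k$ such algorithms in parallel via Lemma~\ref{lem:scheduler_security} on the load-$\eta$ family $\mathcal{T}$ fits inside $O((\kDiam+\log q)\eta)=O(\tau)$ rounds, which matches the claimed round complexity.

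Next, I turn to the key deviation from Lemma~\ref{lem:correct-BroadcastECC}: bounding the number of failing RS-compiled trees under a total budget of $td$ corruptions rather than a per-round cap. The scheduler of Lemma~\ref{lem:scheduler_security} is actually stated for round-error-rate, so I apply it with the effective rate $d$ (since the total budget $td$ divided by the $\Theta(t)$ scheduled rounds yields an average of $\Theta(d)$ per round). This gives at most $\RStime\cdot\RSthresh\cdot d\cdot\eta=O(d\eta)$ RS-compiled broadcasts whose internal error tolerance is exceeded. Combined with the at most $0.1k$ subgraphs in $\mathcal{T}\setminus\mathcal{T}'$ that are not proper spanning trees, the number of trees on which $\alpha'_j(u)\neq \alpha'_j$ at some vertex $u$ is at most $0.1k+O(d\eta)\leq k/c'$ for some constant $c'$ as large as desired, using the hypothesis $k\geq c^*d\eta\log n$ with $c^*$ sufficiently large.

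Finally, I finish exactly as in Lemma~\ref{lem:correct-BroadcastECC}: for every node $u$, the received vector $\alpha'(u)$ has Hamming distance at most $k/c'$ from the codeword $C([\alpha_1,\ldots,\alpha_\ell])$. The Reed-Solomon code from Theorem~\ref{thm:rs} has relative distance $\delta_C=(k-\ell+1)/k\geq 1-1/c''$, since $k\geq c''\ell$, so for $c',c''$ large enough we have $1/c'<\delta_C/2$, and unique decoding recovers $[\alpha_1,\ldots,\alpha_\ell]$ at every $u$. The main subtlety to get right is propagating the budget through the scheduler correctly, and in particular making sure the constants $c',c'',c^*$ and the threshold inside $\tau$ are chosen with enough slack to absorb both the non-tree subgraphs of the weak packing and the $O(d\eta)$ RS-failed trees simultaneously; everything else follows verbatim from the proof of Lemma~\ref{lem:correct-BroadcastECC}.
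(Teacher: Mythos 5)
The overall plan is right, but the central quantitative step has a genuine $\log n$ slip. You compute the effective round-error rate fed into Lemma~\ref{lem:scheduler_security} as $\Theta(d)$, by dividing the total budget $td$ by ``the $\Theta(t)$ scheduled rounds.'' However, $\ModifiedECCSafeBroadcast$ (and hence the scheduled collection $\{\Pi_{RS}(T_j)\}$) runs for only $\Theta(\tau)$ rounds, where $\tau=t/(2\log n)$; the quantity $t$ is the budget horizon of the \emph{outer} $d$-message-correction procedure, not of this subroutine. The adversary can concentrate its entire $td$-corruption budget inside these $\Theta(\tau)$ rounds, so the effective per-round rate is $td/\Theta(\tau)=\Theta(d\log n)$. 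Plugging this into Lemma~\ref{lem:scheduler_security} gives at most $\RStime\cdot\RSthresh\cdot\Theta(d\log n)\cdot\eta=O(d\eta\log n)$ failed RS-compiled broadcasts — not $O(d\eta)$ as you claim. This $\log n$ factor is exactly why the hypothesis reads $k\geq c^* d\eta\log n$ rather than $k\geq c^* d\eta$; with your bound the $\log n$ in the hypothesis would appear superfluous, which should have been a red flag. The paper's own proof tracks this explicitly as $c\cdot d\cdot\eta\cdot(t/\tau)=\Theta(d\eta\log n)$.

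A secondary, smaller issue: you write $0.1k+O(d\eta)\leq k/c'$ ``for some constant $c'$ as large as desired.'' That cannot hold, since the $0.1k$ contribution from non-spanning subgraphs of the weak packing already forces $c'<10$. The right statement (as in the paper) is that at least $(1-1/c')k$ trees deliver the correct symbol, with $c'$ chosen in a fixed range (large enough that $1/c'<\delta_C/2$ once $c''$ is fixed, but small enough to absorb $0.1k$). The constants $c',c''$ are tuned jointly, not taken arbitrarily large; the phrasing ``as large as desired'' obscures the real constraint. Once both points are fixed, the rest of your argument (pipeline round count, Hamming-distance bound, unique decoding of the Reed-Solomon code) matches the paper's proof.
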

\begin{proof}
Let $\mathcal{T}'\subseteq \mathcal{T}$ be the collection of $\kDiam$-spanning trees rooted at a common root $v_r$. By the definition of weak tree-packing, we have that $|\mathcal{T}'|\geq 0.9k$. 

By \Cref{lem:scheduler_security}, all $k$ algorithms $\{\Pi_{RS}(T_j)\}_{j=1}^k$ can be performed in $\Theta(\tau)$ rounds, such that all but $c \cdot d\cdot \eta (t/\tau)$ end correctly, for some constant $c$.  Therefore, at least $|\mathcal{T}'|-c \cdot d\cdot \eta\log{n} \geq (1-1/c')k$ algorithms are valid, by taking $k \geq c^*\eta d \log{n}$ for a sufficiently large $c^*$. This implies that for at least $(1-1/c')k$ algorithms we have $\alpha'_{j}(u) = \alpha'_{j}$. (For that to happen, we have $T_j \in \mathcal{T}'$ and the RS-compiled algorithm $\Pi_{RS}(T_j)$ is valid. Or in other words,
	\[\frac{\Hamm(\widetilde{\alpha}(u),C(\alpha_1,\dots,\alpha_\ell))}{k} \leq \frac{1}{c'}.\]

\noindent On the other hand, since $k \geq c''\ell$, the relative distance of the code $C$ can be bounded by:
$$\delta_C =\frac{k-\ell+1}{k} \geq 1-\frac{\ell}{k} \geq 1-1/c''.$$  
Therefore, for any given point $x \in \mathbb{F}_{q}^k$, there is at most one codeword of relative distance less than $\delta_C/2$. As for sufficiently large $c',c''$, one obtains $(1-\frac{1}{c''})/2 \geq \frac{1}{c'}$, we get that: \[\frac{\Hamm(\widetilde{\alpha}(u),C(\alpha_1,\dots,\alpha_\ell))}{k} < \frac{\delta_C}{2}.\] 
We conclude that the decoding of every node is correct, i.e., that $\widetilde{\alpha}(u)=[\alpha_1,\dots,\alpha_\ell]$ for every $u$.
\end{proof}

\paragraph*{Message Correction Procedure}
Our algorithm works in iterations $j \in \{1,\dots,z\}$ for $z=\Theta(\log{n})$. At the beginning of every iteration $j$, each node $v$ holds for each neighbor $u$, a variable $m'_{j-1}(u,v)$, which represents its estimate for its received message from $u$ in round $i$ of Alg. $\cA$. Initially, $m'_{0}(u,v) = m'(u,v)$.  

Next, every node $v$ locally defines two multi-sets corresponding to its outgoing messages and and its $j$-estimated incoming messages: 
$$\mathrm{Out}(v) = \{m(v,u_1),\dots,,m(v,u_{\deg(v)})\} \mbox{~and~} \mathrm{In}_{j}(v) = \{m'_{j}(u_1,v),\dots,m'_{j}(u_{\deg(v)},v)\}.$$ 
Let $S_{j}(v)$ be a stream formed by inserting each element in $\mathrm{Out}(v)$ with frequency $1$, and each element in $\mathrm{In}_{j}(v)$ with frequency $-1$. Let $S_{j} = S_{j}(v_1) \cup \dots, \cup S_{j}(v_n)$ be the stream formed by concatenating all $n$ individual streams. 

Each subgraph $T \in \mathcal{T}$ runs an RS-compiled sub-procedure, $L0_{RS}(T,S_{j},\tau)$, which is defined by applying the RS-compiler of \Cref{thm:less_than_one_error} for the following (fault-free) $L0(T,S_{j},\tau)$ procedure, which is well defined when $T$ is a spanning tree. In the case where $T$ is an arbitrary subgraph, the execution of $L0(T,S_{j})$ which is restricted to $\widetilde{O}(\kDiam)$ rounds, will result in an arbitrary outcome.

\noindent \textbf{Procedure $L0(T,S_{j},\tau)$.} The node $v_r$ first broadcasts $\widetilde{O}(1)$ random bits $R_{j}(T)$ over the edges\footnote{In the case where $T$ is \emph{not} a spanning tree, $v_r$ might have no neighbors in $T$. Nevertheless, the correctness will be based on the $0.9k$ spanning trees in $\mathcal{T}$.} of $T$. Then, each node $v$ initializes $s=O(\log{n})$ mutually independent $\ell_0$-sampler sketches on $S_{j}(v)$ with randomness $R_{j}(T)$ using Theorem~\ref{thm:l_0_sampler}. Let $[\sigma_1(v),\ldots, \sigma_t(v)]$ be the $\ell_0$-sampling sketches obtained for $S_{j}$ with the randomness $R_{j}(T)$. The sum of sketches $\sigma_1(S_{j}),\dots,\sigma_{t}(S_{j})$ is then computed in a bottom-up manner on $T$ from the leaves to the root $v_r$.
Using these $s$ sketches, the root $v_r$ locally samples a list of values $A_{j}(T) = [a_1(T),\dots,a_{s}(T)]$. The nodes terminate after exactly $\tau$ rounds from the start of the protocol.
 
This concludes the description of $L0(T,S_{j},\tau)$ and hence also its RS-compilation $L0_{RS}(T,S_{j},\tau)$. Our algorithm implements the collection of the $k$ RS-compiled algorithms $\{L0_{RS}(T,S_{j},\tau)\}_{T \in \mathcal{T}}$, in \emph{parallel}, by employing the RS-scheduler of Lemma \ref{lem:scheduler_security}. 

\smallskip 
\noindent \textbf{Detecting Dominating Mismatches.} 
A \emph{positive} element $a \in A(T)$ is denoted as an \emph{observed-mismatch}. For every observed-mismatch $a \in \bigcup_{T \in \mathcal{T}} A_{j}(T)$, denote its \emph{support} in iteration $j$, by $\supp_{j}(a) = |\{(\ell,T) \stt a = a_\ell(T), T \in \mathcal{T}, \ell \in \{1,\ldots, t\}\}|$. The root $v_r$ then selects a sub-list $\DM_{j}$ of \emph{dominating} observed mismatches, i.e., mismatches that have a sufficiently large \emph{support} in $\bigcup_{T} A(T)$, based on the given threshold $\Delta_j$. Specifically, for a sufficiently large constant $c''$
\begin{equation}\label{eq:Mij_budg}
\Delta_j=0.4c''2^j\eta s \mbox{~and~} \DM_{j} = \{a \in \bigcup_T A_{j}(T) \stt a>0, \supp_{j}(a) \geq \Delta_j \}~.
\end{equation}
The remainder of the procedure is devoted to (safely) broadcasting the list $\DM_{j}$. 

\smallskip 
\noindent \textbf{Downcast of Dominating Mismatches.} 
To broadcast $\DM_{j}$, Alg. $\ModifiedECCSafeBroadcast$ is applied with parameters $q=2^p$ for $p=\lceil \max(\log{k},\log^5{n})\rceil$ and $\ell=|\DM_{j}|$. Upon receiving $\DM_{j}$, each node $v$ computes its $j$-estimate $m'_{j}(u,v)$ as follows. If there exists a message $m \in \DM_{j}$ with $\ID(m)=\ID(u) \circ \ID(v)$, then $v$ updates its estimate for the received message by setting $m'_{j}(u,v)=m$. Otherwise, the estimate is unchanged and $m'_{j}(u,v)\gets m'_{j-1}(u,v)$. This completes the description of the $j^{th}$ iteration. Within $z=O(\log n)$ iterations, each node $v$ sets $\widetilde{m}(u,v)=m'_{z}(u,v)$ for every neighbor $u$. In the analysis we show that, w.h.p., $\widetilde{m}(u,v)=m(u,v)$ for all $(u,v)\in E$. The pseudocode of Algorithm $\mathsf{MessageCorrectionProtocol}$ is described next.

\begin{mdframed}[hidealllines=false,backgroundcolor=gray!00]
\center \textbf{Algorithm $\mathsf{MessageCorrectionProtocol}$:}

\raggedright\textbf{Input:} Weak $(k,\kDiam,\eta)$ tree packing $\mathcal{T}=\{T_1\dots,T_k\}$.\\
\raggedright \textbf{Output:} Each node $v$ outputs $\{\widetilde{m}(u,v)\}_{u \in N(v)}$, estimation for $m(u,v)$.
	\begin{enumerate}
\item For $j = 1,\dots,z=O(\log{n})$ do:
\begin{itemize}
\item Employ protocol $L0_{RS}(T,S_{j},\tau)$ over each $T \in \mathcal{T}$, in parallel, using \Cref{lem:scheduler_security}.

\item Set $\DM_{j}$ as in Eq. (\ref{eq:Mij_budg}).
\item Broadcast $\DM_{j}$ by applying $\ModifiedECCSafeBroadcast(\tau)$.
\item For every $v \in V$ and $u \in N(v)$: 
\begin{itemize}
\item If $\exists m \in \DM_{j}$ with $\ID(m)=\ID(u)\circ \ID(v)$: $m'_{j}(u,v) \gets m$.
\item Otherwise, $m'_{j}(u,v) \gets m'_{j-1}(u,v)$. 
\end{itemize}
\end{itemize}

\item For every $v \in V$ and $u \in N(v)$: Set $\widetilde{m}(u,v)=m'_{z}(u,v)$.

\end{enumerate}
\end{mdframed}

\noindent \textbf{Analysis.} For simplicity of presentation, we assume that all $\ell_0$-sampler procedures succeed (by paying a $1/\poly(n)$ additive term in the failure probability of the protocol). Let $\mathcal{T}'\subseteq \mathcal{T}$ be the collection of spanning-trees of depth at most $\kDiam$, rooted at the common root $v_r$. 

We assume that $k \geq c' \cdot c'' \eta d \log{n}$, where $c''$ is the constant in the number of failing algorithms in \Cref{lem:scheduler_security}, and $c' > 1$ is a sufficiently large constant compared to $c''$. See Eq. (\ref{eq:Mij_budg}) for the definition of $\Delta_j$.

A tree $T_q$ is denoted as $j$-\emph{good} if (a) $T_q \in \mathcal{T}'$ and (b) $L0_{RS}(T_q,S_{j},\tau)$ ended correctly when using the scheduler of \Cref{lem:scheduler_security}. Let $\mathcal{T}^{(j)}_{\good}\subseteq \mathcal{T}'$ be the collection of good trees. Since the round complexity of each scheduler is $\Theta(\tau)$ and the total number of faults is $td = 2\tau d\log{n}$, at most $cd\log{n}$ scheduled algorithms fail for some constant $c$.
By \Cref{lem:scheduler_security}, it then holds that:

\begin{equation}\label{eq:good-trees-num_budg}
|\mathcal{T}^{j}_{\good}| \geq \left(0.9-\frac{c''d\eta\log{n}}{k}\right)|\mathcal{T}| \geq \left(0.9-\frac{1}{c'} \right)|\mathcal{T}|~.
\end{equation}
Finally, we say that a sent message $m(u,v)$ a $j$-\emph{mismatch} if $m(u,v) \neq m'_{j}(u,v)$. Let $B_{j}$ denote the number of $j$-mismatches.

\begin{lemma}
\label{lem:faulty_message_chernoff_budg}
For every $j \in \{0,\ldots, z\}$, if $B_{j-1} \leq d/2^{j-1}$, then for any $(j-1)$-mismatch $m$, $\supp_{j}(m) \geq \Delta_j$, w.h.p.
\end{lemma}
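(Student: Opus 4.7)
The plan is to mimic the proof of the analogous Lemma (\Cref{lem:faulty_message_chernoff}) from the $f$-mobile setting, adapting the constants to the budgeted parameters ($d$ in place of $2f$, $s = \Theta(\log n)$ sketches per tree, $k \geq c' c'' \eta d \log n$, and the threshold $\Delta_j = 0.4 c'' 2^j \eta s$).

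First, I would bound the number of non-zero frequency entries in the aggregated multi-set $S_j$. Since there are at most $d/2^{j-1}$ many $(j-1)$-mismatches, each contributes at most one sent copy and one received copy to $S_j$, so $|N(S_j)| \leq 2d/2^{j-1}$. Then, by \Cref{thm:l_0_sampler}, each independent sketch held at the root of a $j$-good tree samples any fixed $(j-1)$-mismatch $m$ with probability at least $2^{j-1}/(2d) - \epsilon(n)$ where $\epsilon(n) = 1/\poly(n)$. By \Cref{eq:good-trees-num_budg} there are at least $(0.9 - 1/c')k$ good trees, and each runs $s = \Theta(\log n)$ mutually independent sketches, so
\[
\mathbb{E}[\supp_j(m)] \;\geq\; (0.9 - 1/c')\bigl(2^{j-1}/(2d) - \epsilon(n)\bigr)\cdot k \cdot s \;\geq\; 0.8 \cdot \frac{2^{j-1}}{2d} \cdot k \cdot s.
\]

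Next, plugging in $k \geq c' c'' \eta d \log n$ yields $\mathbb{E}[\supp_j(m)] \geq 0.2\, c' c''\, 2^j \eta s \log n$, which is at least $(0.5 c' \log n)\cdot \Delta_j$ by the definition of $\Delta_j$ in Eq.~(\ref{eq:Mij_budg}). Since the $\ell_0$-sketches across distinct good trees use independent random seeds (and the independence is preserved inside each tree by the assumption that the sketches succeed), the indicator that $m$ is returned in a given sketch of a good tree is mutually independent across the $\geq (0.9-1/c')k \cdot s$ good (tree, sketch) pairs. Applying a standard multiplicative Chernoff bound (\Cref{lem:chernoff}) with deviation parameter $\delta = 1 - 2/(c'\log n)$ gives
\[
\Pr\bigl(\supp_j(m) < \Delta_j\bigr) \;\leq\; \exp\!\left(-\tfrac{1}{2}\bigl(1 - 2/(c'\log n)\bigr)^2 \cdot 0.5 c' \log n \cdot \Delta_j\right) \;\leq\; 1/n^{\Omega(c')},
\]
which is $1/\poly(n)$ for a sufficiently large constant $c'$. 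A union bound over all (at most $n^2$) candidate mismatches $m$ then gives the high-probability guarantee for the statement.

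The main obstacle, such as it is, lies in making sure the constants $c', c''$ chosen at the start of the analysis (recall $c^* = c' c''$ controls the bound $k \geq c' c'' \eta d \log n$) are compatible simultaneously with (i) the good-tree fraction guarantee in \Cref{eq:good-trees-num_budg}, (ii) the definition of $\Delta_j$, and (iii) the Chernoff exponent needed to beat the $\poly(n)$ union bound. This is a routine bookkeeping issue rather than a conceptual one: each of these requires only that $c'$ be taken sufficiently large in terms of $c''$ and the constant in the failing-scheduler bound of \Cref{lem:scheduler_security}, and this has already been arranged in the preamble to the analysis.
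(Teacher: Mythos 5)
Your proposal is correct and follows essentially the same route as the paper's proof: bound the number of non-zero entries of $S_j$ by $2d/2^{j-1}$, lower-bound the per-sketch sampling probability via \Cref{thm:l_0_sampler}, compute $\mathbb{E}[\supp_j(m)] = \Omega(c'\Delta_j)$ using \Cref{eq:good-trees-num_budg} and the choice of $k$, and finish with the multiplicative Chernoff bound over the independent (tree, sketch) pairs. The only differences are cosmetic constant bookkeeping (you keep the $\log n$ factor explicit where the paper absorbs it into $c'$) and the explicit final union bound, which the paper leaves implicit.
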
	
\begin{proof}
Let $m$ be a $(j-1)$-mismatch. Since there are at most $d/2^{j-1}$ mismatches, there are at most $2d/2^{j-1}$ non-zero entries in $S_{j}$. By \Cref{thm:l_0_sampler}, in each sketch of a $j$-good tree $T$ the root detects a given $(j-1)$-mismatch with probability at least $2^{j-1}/d - \epsilon(n)$ for $\epsilon(n) = O(1/\poly(n))$. Therefore, for a sufficiently large $c'$, by Equations (\ref{eq:Mij_budg},\ref{eq:good-trees-num_budg}) and the fact that we use $s=O(\log n)$ independent $\ell_0$ sketches, we get:

\[\mathbb{E}\left(\supp_{j}(m) \right) \geq \left(0.9-\frac{1}{c'}\right)\left(\frac{2^j}{2d}-\epsilon(n)\right)k\cdot s \geq 0.8\left(\frac{2^{j-1}}{2d}\right)k \geq 0.4c'' \cdot c'2^j\eta\cdot s = c' \Delta_j~.\]
	
Moreover, the observed mismatches sampled by the good trees are mutually independent of each other. Given $(j-1)$-mismatch $m'$, the probability that it is sampled by less than $\Delta_j$ many $j$-good trees can be bounded by a Chernoff bound (Lemma~\ref{lem:chernoff}), as follows:
	
	\[\Pr(\supp_{j}(m) \leq \Delta_j) \leq \Pr\left(\supp_{j}(m) \leq \mathbb{E}\left(\supp_{j}(m)\right)/c'\right)  \leq e^{-(1-1/c')^2 c'\Delta_j/2} \leq \frac{1}{\poly(n)},\]
	
	where the last inequality holds for a sufficiently large $c'$ (which can be chosen sufficiently large compared to $c''$).
\end{proof}

\begin{lemma}
\label{lem:faults_decrease_budg}
For every $j \in \{0,\ldots, z\}$, w.h.p. $B_{j} \leq d/2^{j}$.
\end{lemma}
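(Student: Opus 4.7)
The plan is to mirror the inductive argument of Lemma~\ref{lem:faults_decrease}, substituting the mobile bounds by their round-error-rate analogues from Eq.~(\ref{eq:good-trees-num_budg}) and Lemma~\ref{lem:faulty_message_chernoff_budg}. Induct on $j$. For the base case $j=0$, the bound $B_0 \leq d$ is exactly the input promise on the $d$-message correction procedure, i.e., $|\{(u,v)\in E \mid m(u,v)\neq \widetilde{m}(u,v)\}|\leq d$. So assume the claim holds up to $j-1$, i.e., $B_{j-1}\leq d/2^{j-1}$, and consider phase $j \geq 1$.

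As in the $f$-mobile proof, classify each observed-mismatch $a\in\bigcup_T A_j(T)$ as having $j$-\emph{high support} if $\supp_j(a)\geq \Delta_j$ and $j$-\emph{low support} otherwise, and call $a$ \emph{competing} with $m(u,v)$ if $\ID(a)=\ID(u)\circ\ID(v)$ but $a\neq m(u,v)$. Conditioned on all $\ell_0$-sketches on $j$-good trees being successful (which holds w.h.p. by Theorem~\ref{thm:l_0_sampler} plus a union bound), any competing observed-mismatch with a true message must originate from a $j$-bad tree, since a successful sketch on a good tree only ever samples a real nonzero-frequency element of $S_j$ and therefore can never produce a competing mismatch. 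A necessary condition for $m(u,v)$ to become a $j$-mismatch is then either (a) some $j$-high-support observed-mismatch is competing with $m(u,v)$, in which case the root enters it in $\DM_j$ and every node, w.h.p., learns it correctly via Lemma~\ref{lem:correct-BroadcastECC_budg}; or (b) $m(u,v)$ was already a $(j-1)$-mismatch but has $j$-low support, and so is not corrected.

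By the inductive hypothesis $B_{j-1}\leq d/2^{j-1}$, Lemma~\ref{lem:faulty_message_chernoff_budg} applies and rules out case (b) w.h.p.: every $(j-1)$-mismatch has $j$-high support. It remains to bound case (a). By Eq.~(\ref{eq:good-trees-num_budg}) and the accounting of failed scheduled algorithms, the number of $j$-bad trees is at most $c''\, d\eta\log n + 0.1k$. Each of them contributes at most $s=O(\log n)$ observed-mismatches, so the total mass of bogus observed-mismatches is at most $(c''d\eta\log n + 0.1 k)\cdot s$. Hence the number of competing observed-mismatches with $j$-high support is at most
\[
\frac{(c''d\eta\log n + 0.1 k)\, s}{\Delta_j} \;\leq\; \frac{d}{2^{j+1}},
\]
where the inequality uses the choice $\Delta_j = 0.4 c'' 2^j \eta s$ together with $k \geq c'\cdot c''\eta d\log n$ for a sufficiently large $c'$. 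Combining the contributions of cases (a) and (b) yields $B_j \leq d/2^{j+1} + 0 \leq d/2^{j}$, completing the induction.

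The main obstacle, and the only place where the analysis differs non-trivially from the $f$-mobile version, is verifying that the bad-tree count produced by Lemma~\ref{lem:scheduler_security} under the round-error-rate regime, namely $c''d\eta(t/\tau)=O(d\eta\log n)$, is absorbed by the threshold $\Delta_j$ once $k=\Omega(d\eta\log n)$ is taken large enough; this is precisely where the factor $\log n$ in $k$ (compared to the purely mobile setting) enters, and it is the reason the correction resilience in Lemma~\ref{lem:correction_budgeted} degrades by a $\log n$ factor to $d=O(k/(\eta\log n))$. Once this calibration is fixed, all the remaining inequalities are routine. Finally, taking a union bound over the $z=O(\log n)$ phases and over the $O(\log n)$ sketches per phase gives the desired high-probability guarantee.
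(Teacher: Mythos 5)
Your proof mirrors the paper's almost word for word: the same induction on $j$, the same high-/low-support case split, the same reduction of competing observed-mismatches to samples from $j$-bad trees, and the same bound $\frac{(c''d\eta\log n + 0.1k)s}{\Delta_j}\leq d/2^{j+1}$ on high-support competing mismatches, with Lemma~\ref{lem:faulty_message_chernoff_budg} ruling out case (b) (you cite the budgeted variant; the paper's in-proof citation to Lemma~\ref{lem:faulty_message_chernoff} appears to be a typo for the same). Your closing remarks on the $\log n$ calibration of $k$ and $\Delta_j$ restate the calibration already built into the paper's definitions and do not alter the substance.
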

\begin{proof}
We prove the claim by induction on $j$. For $j = 0$, the number of $j$-mismatches at the start of the protocol is assumed to be at most $d$. Assume that the claim holds up to $j-1$ and consider $j\geq 1$. 

We say that an observed-mismatch $a$ has $j$-\emph{high support} if $\supp_{j}(a) \geq \Delta_j$, and has $j$-\emph{low support} otherwise. An observed-mismatch $a$ is \emph{competing} with $m(u,v)$ if $\ID(a) = \ID(u) \circ \ID(v)$ but $a \neq m(u,v)$. Note that assuming all sketches on $j$-good trees are successful (which indeed holds w.h.p.), then all \emph{competing mismatches} with any message $m(u,v)$ must be sampled by $j$-\emph{bad} trees, since they are not real $(j-1)$ mismatches. 

A necessary condition for $m(u,v)$ to be a $j$-mismatch, is either that (a) there is an observed-mismatch with $j$-high support that is competing with $m(u,v)$, or (b) it is a $(j-1)$-mismatch and has $j$-low support. 
By \Cref{lem:faulty_message_chernoff}, all $(j-1)$-mismatches have high support w.h.p., and therefore, there are no $j$-mismatches due to condition (b).

Since the number of $j$-bad trees is at most $c''d\eta\log{n}+0.1k$ (see Eq. (\ref{eq:good-trees-num_budg})), at most $\frac{(c''\eta d\log{n}+0.1k)s}{\Delta_j} \leq d/2^{j+1}$ competing observed-mismatches have $j$-high support. Therefore, $B_{j} \leq d/2^{j}$ w.h.p.
\end{proof}

\begin{proof}[Proof of \Cref{lem:correction_budgeted}]
In the last iteration $z = {O(\log{n})}$, it holds that $B_z = 0$ (since $d \leq n^2$ w.l.o.g.). In particular, at the last iteration $z$, each estimated message is the correct message, i.e. $\widetilde{m}_{z}(u,v) = m(u,v)$. 
\end{proof}

\section{Distributed Computation of a Low Depth Tree Packing}
\label{sec:app_tree_packing}

In this section, we show to compute a low depth tree packing in $\congest$. The procedure and result is essentially the same as in \cite{Ghaffari15}, except we optimize for different parameters, and use a more direct approach to analyze the resulting packing using some modification to a well known greedy multiplicative weights analysis. We give the details of this procedure for completeness.

Assume a graph contains a packing $\{T^*_1,\dots,T^*_k\}$ of $d$-depth spanning trees with load at most $\eta$ . We describe a packing process in $\congest$ of $k$ many $O(d\log{n})$-depth spanning trees, which in the analysis we show has load $O(\eta\log^2{n})$. We use the following algorithm of \cite{Ghaffari15} as a sub-procedure of our protocol:
\begin{lemma}[\cite{Ghaffari15} Theorem 2]
	\label{lem:d_diam_find}
	Given parameter $d \geq D$, there is a $\congest$ algorithm that in $O(d\log{n})$ rounds computes a $O(d\log{n})$-depth spanning tree with cost within $O(\log{n})$ factor of the min-cost $d$-depth spanning tree.
\end{lemma}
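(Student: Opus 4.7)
The plan is to compute an $O(d\log n)$-depth rooted spanning tree whose total edge-cost is within an $O(\log n)$ factor of the optimal $d$-depth spanning tree, by adapting the classical shallow-light tree construction of Khuller--Raghavachari--Young to a hop-constrained distributed Bellman--Ford in \congest.

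First, I would elect a global root $r$: since $d\geq D$, a BFS from the minimum-ID vertex can be completed in $O(D)=O(d)$ rounds, and the identifier of $r$ broadcast to the whole network. After this preprocessing, every node knows $r$ and can (locally) reason about paths to $r$ of hop-length at most $O(d\log n)$.

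Next, I would rescale edge costs. Since costs are positive and in $\poly(n)$, only $O(\log n)$ geometric cost-scales are relevant; let $\widetilde{c}(e)$ be the cost of $e$ rounded up to the nearest power of $2$, which fits in $O(\log\log n)$ bits per edge. I would then run $O(d\log n)$ rounds of a hop-constrained Bellman--Ford from $r$: in each round, every node $v$ forwards to its neighbours its current pair $(\mathsf{cost}_v, h_v)$, representing the cheapest $\widetilde{c}$-weight of a $v$-to-$r$ path discovered so far of at most $h_v\leq O(d\log n)$ hops, updating its tentative parent whenever a lexicographically smaller key $(\mathsf{cost},h,\mathrm{parent\text{-}ID})$ is offered. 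Each transmitted record fits in $O(\log n)$ bits, so the procedure stays within the CONGEST bandwidth budget. After $O(d\log n)$ rounds, each $v\neq r$ commits to its parent edge; the union of these edges forms the output tree $T$, which is acyclic by the strict lexicographic tie-breaking rule.

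Finally, to prove the cost guarantee, I would invoke a shallow-light-tree--style charging argument: for any fixed root $r$ and any $d$-depth spanning tree $T^\star$ of cost $C^\star$, one can build a tree of depth $O(d\log n)$ and cost $O(\log n)\cdot C^\star$ by taking, for each vertex $v$, the cheaper of either (i) its path in $T^\star$ or (ii) a near-shortest $\widetilde{c}$-path to $r$, whenever the latter offers a multiplicative $O(\log n)$ improvement. Since the Bellman--Ford output is exactly the cheapest $\widetilde{c}$-weight hop-constrained path to $r$ up to the rounding of costs, $T$ dominates this hybrid, giving the desired $O(\log n)$ approximation. The main obstacle is the CONGEST bandwidth constraint: a naive approach would require $O(\log^2 n)$ bits per round (e.g., one guess of $C^\star$ per candidate scale), blowing the budget. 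I would circumvent this by avoiding explicit guessing of $C^\star$ altogether, performing a single Bellman--Ford sweep on the rescaled metric $\widetilde{c}$ whose relaxation operator is monotone; this lets the algorithm converge in $O(d\log n)$ rounds without parallel cost-guesses, at the price of a $2$-factor slack that is absorbed into the $O(\log n)$ approximation.
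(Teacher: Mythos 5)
This lemma is not proved in the paper at all --- it is imported as a black box (Theorem 2 of \cite{Ghaffari15}) --- so the only question is whether your argument stands on its own, and unfortunately it has a genuine gap in its final step. The tree you build is a hop-constrained \emph{shortest-path tree}: each vertex independently minimizes the cost of its own path to the root. But the lemma requires the \emph{total edge weight} of the tree to be within $O(\log n)$ of the cheapest $d$-depth spanning tree, and these two objectives can be polynomially far apart. Concretely, connect a root $r$ to each of $v_1,\dots,v_n$ by a cost-$1$ edge, and add the path $r,v_1,v_2,\dots,v_n$ with every edge of cost $\varepsilon=1/\sqrt{n}$. The diameter is $2$, and for $d=n$ the optimal $d$-depth spanning tree is the path, of total cost $\sqrt{n}$. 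In any hop-bounded shortest-path tree, however, every $v_i$ with $i>\sqrt{n}$ prefers its direct cost-$1$ edge to the path prefix of cost $i\varepsilon>1$, so $\Omega(n)$ vertices attach via cost-$1$ edges and the tree has total cost $\Omega(n)$ --- an approximation ratio of $\Omega(\sqrt{n})$, not $O(\log n)$. The sentence ``$T$ dominates this hybrid'' is exactly where the argument breaks: the Khuller--Raghavachari--Young hybrid is a \emph{modification of $T^\star$} whose aggregate weight is controlled by charging against $T^\star$, whereas your Bellman--Ford tree only dominates it path-by-path, which says nothing about the sum of parent-edge costs. (Implementing KRY directly is also not an option here: it presupposes knowledge of $T^\star$ and performs an inherently sequential DFS over it.)

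The actual proof in \cite{Ghaffari15} is a Boruvka-style cluster-merging argument rather than a single-source shortest-path computation. One maintains a partition of $V$ into clusters, each spanned by a low-depth tree. In each of $O(\log n)$ phases, every cluster runs an $O(d)$-round hop-bounded Bellman--Ford to locate its cheapest $O(d)$-hop connection to a \emph{different} cluster, and clusters merge along these connections (with a star-merging/symmetry-breaking step so that depth grows only additively by $O(d)$ per phase). The cost added in one phase is charged to the optimum: $T^\star$ itself certifies, for each cluster, an outgoing connection of at most $2d$ hops, and these certificates have total cost $O(C^\star)$, so the greedy choices cost at most that much (up to the approximation of the hop-bounded search). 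Since the number of clusters at least halves per phase, after $O(\log n)$ phases a single spanning tree remains, of depth $O(d\log n)$ and cost $O(\log n)\cdot C^\star$, computed in $O(d\log n)$ rounds. If you wish to salvage your write-up, this phase-based merging with a per-phase charging to $T^\star$ is the structure to aim for; a one-shot shortest-path computation cannot work.
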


In the remainder, denote $\alpha = O(\log(n))$ the approximation guarantee of \Cref{lem:d_diam_find}. Our procedure is almost exactly the same as in \cite{Ghaffari15}. Consider the following process in which $k$ many $O(d\log{n})$-depth spanning trees $T_1,\dots,T_k$ are added one after the other in a sequence of $k$ iterations into the packing. For iteration $i=1,\dots,k$, let $h^i_e = \{j < i \stt e \in T_i\}$ be the load of edge $e$ at the start of iteration $i$, and let $w_i(e) = a^{h^i_e+1}-a^{h^i_e}$, for $a = \log_2\frac{\alpha+2}{\alpha+1}$. In iteration $i$, we find using the procedure of \Cref{lem:d_diam_find} a $\widetilde{O}(\kDiam)$-depth spanning tree, whose weight is within an $O(\log{n})$ factor of the min-cost $\kDiam$-depth spanning tree according to weight function $w_i$. We add this tree to the packing, and update the weight of the edges according to their new load. This concludes the description of the algorithm.  Let $l_e$ be the load of edge $e$ after the final iteration.

\begin{theorem}
	\label{thm:packing_exponential}
	Let $T_1,\dots,T_k$ be a collection of trees obtained by a packing process described above, where $w_i(T_i) \leq \alpha w_i(T^*_i)$ for any $1 \leq i \leq k$. Then the maximum load is at most $\max_e l_e = O(\eta\alpha(\log{n}+\log{\alpha}))$.
\end{theorem}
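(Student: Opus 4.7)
The natural approach is a multiplicative-weights-style potential argument, where the potential tracks the exponentiated loads of all edges. I would define
\begin{equation*}
\Phi^i \;=\; \sum_{e \in E} a^{h^i_e},
\end{equation*}
so that $\Phi^1 = m$ (the number of edges, since initially every $h^1_e = 0$), and the per-iteration increase is exactly the weight of the chosen tree:
\begin{equation*}
\Phi^{i+1} - \Phi^i \;=\; \sum_{e \in T_i}\bigl(a^{h^i_e+1}-a^{h^i_e}\bigr) \;=\; w_i(T_i).
\end{equation*}
Using the hypothesis $w_i(T_i) \leq \alpha\, w_i(T^*_i)$ and summing over $i=1,\dots,k$ then gives
\begin{equation*}
\Phi^{k+1} \;\leq\; m + \alpha \sum_{i=1}^{k} w_i(T^*_i).
\end{equation*}

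The second step is to bound $\sum_i w_i(T^*_i)$ by swapping the order of summation and exploiting the fact that each edge belongs to at most $\eta$ of the optimal trees. Since $w_i(e) = (a-1)\,a^{h^i_e}$ and $h^i_e$ is monotonically nondecreasing in $i$ with final value $l_e$,
\begin{equation*}
\sum_{i=1}^{k} w_i(T^*_i) \;=\; (a-1)\sum_{e \in E}\sum_{i\,:\,e\in T^*_i} a^{h^i_e} \;\leq\; (a-1)\,\eta \sum_{e \in E} a^{l_e} \;=\; \eta(a-1)\,\Phi^{k+1}.
\end{equation*}
Combining the two displayed inequalities yields the self-referential bound $\Phi^{k+1}\bigl(1 - \alpha\eta(a-1)\bigr) \leq m$.

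The final step is the choice of $a$. If I pick $a$ so that $\alpha\eta(a-1)$ is bounded away from $1$ (say, setting $a-1 = \Theta(1/(\alpha\eta))$, which is consistent with the paper's small-$a-1$ regime), then $\Phi^{k+1} \leq O(m)$, and from $a^{\max_e l_e} \leq \Phi^{k+1}$ I conclude
\begin{equation*}
\max_e l_e \;\leq\; \frac{\ln \Phi^{k+1}}{\ln a} \;=\; O\!\left(\frac{\log n}{\ln a}\right).
\end{equation*}
With the paper's specific choice $a-1 = \Theta(1/(\alpha+1))$, we have $\ln a = \Theta(1/\alpha)$, so $1/\ln a = \Theta(\alpha)$; but now $\alpha\eta(a-1)$ is no longer bounded away from $1$, so the simple self-referential inequality is insufficient and must be refined.

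The main obstacle, which is where I expect the bulk of the technical work to lie, is precisely reconciling the paper's choice of $a$ (which is too ``aggressive'' for the one-shot rearrangement above to work) with the desired bound $O(\eta\alpha(\log n + \log \alpha))$. I would replace the one-shot bound by an iterated/geometric argument: use the stronger averaging form $w_i(T_i) \leq \alpha \min_j w_i(T^*_j) \leq \frac{\alpha}{k}\sum_j w_i(T^*_j) \leq \frac{\alpha\eta(a-1)}{k}\Phi^i$, so that $\Phi^{i+1} \leq (1 + \alpha\eta(a-1)/k)\,\Phi^i$, giving $\Phi^{k+1} \leq m \cdot \exp(\alpha\eta(a-1))$. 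Taking $\log_a$ then produces $\max_e l_e \leq \frac{\ln m + \alpha\eta(a-1)}{\ln a}$, and balancing the two terms in the numerator with the paper's $\ln a = \Theta(1/\alpha)$ yields the claimed $O(\eta\alpha(\log n + \log \alpha))$ bound.
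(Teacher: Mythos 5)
Your proposal is correct, and it reaches the stated bound by a genuinely different route than the paper. The paper resolves exactly the obstacle you identified (the coefficient $\alpha\eta(a-1)$ not being bounded away from $1$) by silently rescaling the exponent: its proof works with the potential $\sum_e a^{h^i_e/\eta}$ rather than $\sum_e a^{h^i_e}$, uses $a^{1/\eta}-1 \le (a-1)/\eta$ and $h^i_e \le l_e$, telescopes once over all iterations, and rearranges the single self-referential inequality $(1-\alpha(a-1))\sum_e a^{l_e/\eta} \le |E|$ — the point being that after the $1/\eta$ normalization the dangerous coefficient is $1-\alpha(a-1)=\tfrac{1}{\alpha+1}>0$ independently of $\eta$, and $\eta$ reappears only as the multiplier when converting $\max_e l_e/\eta$ back to $\max_e l_e$. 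You instead keep the unnormalized potential and control the growth per iteration, averaging over all $k$ trees of the reference packing to get $\Phi^{i+1}\le (1+\alpha\eta(a-1)/k)\Phi^i$ and hence $\Phi^{k+1}\le m\, e^{\alpha\eta(a-1)}$. Two remarks. First, this step uses $w_i(T_i)\le \alpha\min_j w_i(T^*_j)$, which is stronger than the theorem's literal hypothesis $w_i(T_i)\le\alpha w_i(T^*_i)$; it is, however, exactly what the approximation guarantee of the subroutine (competitiveness against the min-cost $d$-depth spanning tree, hence against every $T^*_j$) provides, so it is legitimate in context, whereas the paper's proof needs only the weaker per-index comparison. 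Second, your final bound is $O(\alpha(\log n+\eta))$ rather than merely $O(\eta\alpha(\log n+\log\alpha))$; since $\eta\ge 1$ this implies the claimed bound and is in fact slightly sharper (e.g., for $\alpha,\eta=O(\log n)$ it gives load $O(\log^2 n)$ versus the paper's $O(\log^3 n)$).
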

\begin{proof}
	For any step $i$, the following holds due to our choice of $T_i$:
	\begin{flalign*}
		\sum_{e \in T_i}(a^{h^{i+1}_e/\eta}-a^{h^i_e/\eta}) &= \sum_{e \in T_i}(a^{(h^i_e+1)/\eta}-a^{h^i_e/\eta}) \leq \alpha \sum_{e \in T^*_i}(a^{(h^i_e+1)/\eta}-a^{h^i_e/\eta}) \\&= \alpha \sum_{e \in T^*_i}a^{h^i_e/\eta}(a^{1/\eta}-1) \leq \alpha \sum_{e \in T^*_i}a^{l_e/\eta}(a-1)/\eta~.
	\end{flalign*}

	We sum on $i=1,\dots,k$:
	
	\[\sum_{i=1}^k\sum_{e \in T_i}(a^{(h^i_e+1)/\eta}-a^{h^i_e/\eta}) \leq \alpha \sum_{i=1}^k \sum_{e \in T^*_i}a^{l_e}(a-1)/\eta~,\]
	
where the last inequality holds since $a^x-1 \leq x(a-1)$ for any $0\leq x\leq 1$. By switching the order of summation, we get
	
	\begin{align*}
	\sum_{e \in E}\sum_{i:e \in T_i}(a^{(h^{i+1}_e)/\eta}-a^{h^i_e/\eta}) &= \sum_{e \in E}\sum_{i:e \in T_i}(a^{(h^i_e+1)/\eta}-a^{h^i_e/\eta})  \\&\leq \alpha (a-1)\sum_{e \in E}(a^{l_e/\eta} \cdot \sum_{i:e \in T^*_i}(1/\eta)) \\&\leq \alpha(a-1)\sum_{e \in E}a^{l_e/\eta}~.
	\end{align*}
	
As the left sum is a telescopic, we have:
	\begin{flalign*}
	\sum_{e \in E}(a^{l_e/\eta}-1) \leq \alpha (a-1)\sum_{e \in E}a^{l_e/\eta} &\implies \left(\sum_{e \in E}a^{l_e/\eta}\right)-|E| \leq \alpha (a-1)\sum_{e \in E}a^{l_e/\eta} \\&\implies
	(1+\alpha-\alpha a)\sum_{e \in E}a^{l_e/\eta} \leq |E| \implies (1+\alpha-\alpha a)\max_{e \in E}a^{l_e/\eta} \leq |E|~.
	\end{flalign*}
	
Finally, by taking $\log_a$ from both sides:
	\begin{flalign*}
		\max_{e \in E}(l_e/\eta) &\leq \log_a|E|-\log_a(1+\alpha-\alpha a) = (\log_2{|E|}-\log_2(1+\alpha-\alpha a))/\log_2{\frac{\alpha+2}{\alpha+1}} \\&= (\log_2{|E|}+\log_2(1+\alpha))/\log_2{\frac{\alpha+2}{\alpha+1}} \leq (\log_2{|E|} + \log_2(1+\alpha))/(1/(2+2\alpha))  \\&= O(\alpha(\log{n}+\log{\alpha})),
	\end{flalign*}
where the last inequality holds since $2\log_2(1+x) \geq x$ for any $x \leq 1$. It follows that $\max_{e \in E} l_e = O(\eta\alpha(\log{n}+\log{\alpha}))$, i.e. the load of the packing is $O(\eta\alpha\log{n}) = O(\eta\log^2{n})$.
	
\end{proof}
	
\end{document}